\documentclass[11pt]{article}
\usepackage[utf8]{inputenc}

\usepackage{microtype}
\usepackage{amsmath}
\usepackage{amssymb}
\usepackage{amsthm}
\usepackage{bm}
\usepackage{dsfont}
\usepackage{wrapfig}
\usepackage{authblk}
\usepackage{fullpage}
\usepackage{comment}
\usepackage{nicefrac}
\usepackage{tikz}
\usepackage[ruled,vlined,linesnumbered,noend]{algorithm2e}
\usetikzlibrary{positioning}
\usepackage{mathtools}
\usepackage{bbm}
\usepackage{complexity}
\usepackage{float}
\usepackage{array,booktabs}
\usepackage[
    backend=biber,
    style=alphabetic,
    maxcitenames=10,
    maxbibnames=99,
    natbib=true,
    url=false,
    doi=false, 
    isbn=false]{biblatex}
\usepackage[linktocpage=true]{hyperref}
\usepackage{mleftright}
\RestyleAlgo{boxruled}
\usepackage{thm-restate}
\usepackage[capitalize,noabbrev]{cleveref}
\usepackage[shortlabels]{enumitem}

\newcommand{\declarecolor}[2]{\definecolor{#1}{RGB}{#2}\expandafter\newcommand\csname #1\endcsname[1]{\textcolor{#1}{##1}}}
\declarecolor{White}{255, 255, 255}
\declarecolor{Black}{0, 0, 0}
\declarecolor{Maroon}{128, 0, 0}
\declarecolor{Coral}{255, 127, 80}
\declarecolor{Red}{236, 67, 84}
\declarecolor{Orange}{230, 142, 64}
\declarecolor{LightBlue}{40, 150, 220}
\declarecolor{LimeGreen}{50, 205, 50}
\declarecolor{DarkGreen}{0, 100, 0}
\declarecolor{Navy}{0, 0, 128}

\hypersetup{
     colorlinks = true,
     linkcolor = brown,
     anchorcolor = blue,
     citecolor = teal,
     filecolor = blue,
     urlcolor = black
     }

\theoremstyle{plain}
\newtheorem{theorem}{Theorem}[section]
\newtheorem{lemma}[theorem]{Lemma}
\newtheorem{corollary}[theorem]{Corollary}
\newtheorem{proposition}[theorem]{Proposition}
\newtheorem{fact}[theorem]{Fact}

\newtheorem{observation}[theorem]{Observation}
\newtheorem{claim}[theorem]{Claim}

\newtheorem{property}[theorem]{Property}
\newtheorem{remark}[theorem]{Remark}

\theoremstyle{definition}
\newtheorem{definition}[theorem]{Definition}

\DeclareMathOperator{\ppad}{\ComplexityFont{PPAD}}

\newcommand{\nakedcite}[1]{\citeauthor{#1}, \citeyear{#1}}

\newcommand*{\N}{{\mathbb{N}}}

\let\R\relax
\newcommand*{\R}{{\mathbb{R}}}

\let\C\relax
\newcommand*{\C}{{\mathbb{C}}}

\DeclareMathOperator{\gap}{gap}

\let\cL\relax

\newcommand*{\cX}{{\mathcal{X}}}
\newcommand*{\cA}{{\mathcal{A}}}
\newcommand*{\cY}{{\mathcal{Y}}}
\newcommand*{\cL}{{\mathcal{L}}}
\newcommand*{\cN}{{\mathcal{N}}}
\newcommand*{\cR}{{\mathcal{R}}}

\newcommand{\defeq}{\coloneqq}

\DeclareMathOperator*{\argmax}{arg\,max}
\DeclareMathOperator*{\argmin}{arg\,min}

\let\triangleq\defeq

\renewcommand{\vec}[1]{\bm{#1}}
\newcommand{\mat}[1]{\mathbf{#1}}

\let\E\relax

\DeclareMathOperator{\ball}{\mathcal{B}}
\DeclareMathOperator{\reg}{Reg}
\DeclareMathOperator{\phimax}{\Phi_{max}}

\DeclareMathOperator{\sw}{\textsc{SW}}
\newcommand{\rvu}{\texttt{RVU}\xspace}
\DeclareMathOperator{\opt}{\textsc{OPT}}

\DeclareMathOperator{\E}{\mathbb{E}}

\bibliography{main}

\title{On Last-Iterate Convergence Beyond Zero-Sum Games}

\author[1]{Ioannis Anagnostides}
\author[2]{Ioannis Panageas}
\author[3]{Gabriele Farina}
\author[4]{Tuomas Sandholm}

\affil[1,3,4]{Carnegie Mellon University}
\affil[2]{University of California Irvine}
\affil[4]{Strategy Robot, Inc.}
\affil[4]{Optimized Markets, Inc.}
\affil[4]{Strategic Machine, Inc.}
\affil[ ]{\texttt {\{ianagnos,gfarina,sandholm\}@cs.cmu.edu}, and \texttt{ipanagea@ics.uci.edu}}

\date{}

\begin{document}

\maketitle

\pagenumbering{gobble}
\begin{abstract}
        Most existing results about \emph{last-iterate convergence} of learning dynamics are limited to two-player zero-sum games, and only apply under rigid assumptions about what dynamics the players follow.
    In this paper we provide new results and techniques that apply to broader families of games and learning dynamics.
    %
    First, we use a regret-based analysis to show that in a class of games that includes constant-sum polymatrix and strategically zero-sum games, dynamics such as \emph{optimistic mirror descent (OMD)} have \emph{bounded second-order path lengths}, a property which holds even when players employ different algorithms and prediction mechanisms. This enables us to obtain $O(1/\sqrt{T})$ rates and optimal $O(1)$ regret bounds.
    Our analysis also reveals a surprising property: 
    OMD either reaches arbitrarily close to a Nash equilibrium, or it outperforms the \emph{robust price of anarchy} in efficiency.
    Moreover, for potential games we establish convergence to an $\epsilon$-equilibrium after $O(1/\epsilon^2)$ iterations for mirror descent under a broad class of regularizers, as well as optimal $O(1)$ regret bounds for OMD variants. Our framework also extends to near-potential games, and unifies known analyses for distributed learning in Fisher's market model. Finally, we analyze the convergence, efficiency, and robustness of \emph{optimistic gradient descent (OGD)} in general-sum continuous games.
\end{abstract}
\clearpage

  \tableofcontents
\clearpage

\pagenumbering{arabic}

\section{Introduction}

No-regret learning and game theory share an intricately connected history tracing back to Blackwell's seminal \emph{approachability theorem} \citep{Blackwell56:An,Abernethy11:Blackwell}, leading to fundamental connections between no-regret learning and game-theoretic equilibrium concepts. For example, it is folklore that if both players in a zero-sum game employ a no-regret algorithm, the \emph{time average} of their strategies will eventually converge to a \emph{Nash equilibrium (NE)}. However, typical guarantees within the no-regret framework provide no insights about the \emph{final} state of the system. This begs the question: \emph{Will the agents eventually play according to an equilibrium strategy?} In general, the answer to this question is \emph{no}: Broad families of regret minimization algorithms such as \emph{mirror descent (MD)} are known to exhibit recurrent or even chaotic behavior~\citep{Yuzuru02:Chaos, Sandholm10:Population, Mertikopoulos18:Cycles}.

In an attempt to stabilize the behavior of traditional no-regret learning algorithms and ameliorate the notoriously tedious training process of \emph{generative adversarial networks (GANs)}~\citep{Goodfellow14:Generative}, \citet{Daskalakis18:Training} discovered that \emph{optimistic gradient descent (OGD)}~\citep{Popov80:A} guarantees \emph{last-iterate} convergence in unconstrained bilinear ``games''.\footnote{We call them ``games'' with an abuse of terminology. They are not games in the game-theoretic sense.} Thereafter, their result has been substantially extended along several lines (\emph{e.g.}, \citep{Daskalakis19:Last,Mertikopoulos19:Optimistic, Wei21:Linear,Golowich20:Tight,Azizian21:The}). Last-iterate convergence is also central in economics~\citep{Milgrom90:Rationalizability}, and goes back to the fundamental question of what it really means to ``learn in a game''. Indeed, it is unclear how a time average guarantee is meaningful from an economic standpoint. Nevertheless, the known last-iterate guarantees only apply for restricted classes of games such as two-player zero-sum games. As a result, it is natural to ask whether last-iterate convergence could be a \emph{universal} phenomenon in games.


Unfortunately, this cannot be the case: if every player employs a no-regret algorithm and the individual strategies converge, this would necessarily mean that the limit point is a Nash equilibrium.\footnote{To see this, observe that the average product distribution of play---which is a \emph{coarse correlated equilibrium}---will converge to a product distribution, and hence, a Nash equilibrium.} But this is precluded by fundamental impossibility results \citep{Hart03:Uncoupled,Rubinstein16:Settling,Chen09:Settling,Babichenko14:Query,Daskalakis09:The}, at least in a polynomial number of iterations. This observation offers an additional crucial motivation for convergence since Nash equilibria states can be exponentially more efficient compared to \emph{correlated equilibria}~\citep{Blum08:Regret}. In light of this, we ask the following question: \emph{For which classes of games beyond two-player zero-sum games can we guarantee last-iterate convergence?} Characterizing dynamics beyond zero-sum games is a recognized important open question in the literature~\citep{Candogan13:Dynamics}. In fact, even in GANs many practitioners employ a \emph{general-sum} objective due to its practical superiority. However, many existing techniques are tailored to the min-max structure of the problem.

Another drawback of existing last-iterate guarantees is that they make restrictive assumptions about the dynamics the players follow; namely, they employ exactly the \emph{same learning algorithm}. However, we argue that in many settings such a premise is unrealistic under independent and decentralized players. Indeed, to quote from the work of \citet{Candogan13:Dynamics}: \emph{``The limit behavior of dynamic processes where players adhere to different update rules is [...] an open question, even for potential games''}. In this paper, we make progress towards addressing these fundamental questions.

\subsection{Overview of our Contributions} \label{subsection:contributions} The existing techniques employed to show last-iterate convergence are inherently different from the ones used to analyze regret. Indeed, it is tacitly accepted that these two require a different treatment; this viewpoint is reflected by \citet{Mertikopoulos18:Cycles}: \emph{``a regret-based analysis cannot distinguish between a self-stabilizing system, and one with recurrent cycles''.} 

Our first contribution is to challenge this conventional narrative. We show that for a broad class of games the \emph{regret bounded by variation in utilities} (\rvu) property established by \citet{Syrgkanis15:Fast} implies that the second-order path length is bounded (\Cref{theorem:bounded_traj}) for a broad family of dynamics including \emph{optimistic mirror descent (OMD)}~\citep{Rakhlin13:Online}. We use this bound to obtain optimal $O(1)$ individual regret bounds (\Cref{corollary:optreg}), and, more importantly, to show that $O(1/\epsilon^2)$ iterations suffice to reach an $\epsilon$-approximate Nash equilibrium (\Cref{theorem:rate-OMD}). This characterization holds for a class of games intricately connected to the \emph{minimax theorem}, and includes \emph{constant-sum polymatrix}~\citep{Kearns01:Graphical} and \emph{strategically zero-sum} games~\citep{Moulin78:Strategically}. 

Furthermore, our results hold even if players employ \emph{different OMD variants} (with smooth regularizers) and even with \emph{advanced predictions}, as long as an appropriate \rvu bound holds. Additionally, our techniques apply under arbitrary (convex and compact) constrained sets, thereby allowing for a direct extension, for example, to \emph{extensive-form games}. We also illustrate how our framework can be applied to smooth min-max optimization. 
Such results appear hard to obtain with prior techniques. Also, in cases where prior techniques applied, our proofs are considerably simpler. Overall, our framework inherits all of the robustness and the simplicity deriving from a regret-based analysis.

Our approach also reveals an intriguing additional result: OMD variants 
either converge arbitrarily close to a Nash equilibrium, or they outperform the \emph{robust price of anarchy} in terms of efficiency~\citep{Roughgarden15:Intrinsic} (see \Cref{theorem:smooth} for a formal statement). This is based on the observation that lack of last-iterate convergence can be leveraged to show that the sum of the players' regrets at a sufficiently large time $T$ will be at most $- C T$, for some parameter $C > 0$. This substantially refines the result of \citet{Syrgkanis15:Fast}---which showed that the sum of the players' regrets is always $O(1)$---using the rather underappreciated fact that \emph{regrets can be negative}. In fact, the further the dynamics are from a Nash equilibrium, the larger is the improvement compared to the robust price of anarchy.

Next, we study the convergence of \emph{mirror descent (MD)} learning dynamics in (weighted) potential games. We give a new potential argument applicable even if players employ different regularizers (\Cref{theorem:monotone}), thereby addressing an open question of \citet{Candogan13:Dynamics} regarding heterogeneous dynamics in potential games. Such results for no-regret algorithms were known when all players employed variants of \emph{multiplicative weights update}~\citep{Palaiopanos17:Multiplicative,HeliouCM17}, though in \citep{HeliouCM17} the analysis requires vanishing learning rates. 

Our potential argument implies a boundedness property for the trajectories, (similarly to \Cref{theorem:bounded_traj} we previously discussed); we use this property to show that $O(1/\epsilon^2)$ iterations suffice to reach an $\epsilon$-Nash equilibrium. We also show a similar boundedness property for \emph{optimistic} variants, which is then used---along with the \rvu property---to show optimal $O(1)$ regret for each individual player (\Cref{theorem:optregpot}). To our knowledge, this is the first result showing optimal $O(1)$ individual regret in potential games, and it is based on the observation that last-iterate convergence can be leveraged to show improved regret bounds. In a sense, this is the ``converse'' of the technique employed in \Cref{section:optimistic}. Our potential argument also extends to \emph{near-potential games}~\citep{Candogan13:Dynamics}, implying convergence to approximate Nash equilibria (\Cref{theorem:nearpot}). Importantly, our framework is general enough to unify results from distributed learning in Fisher's market model as well~\citep{Birnbaum11:Distributed}.

Finally, motivated by applications such as GANs, we study the convergence properties of \emph{optimistic gradient descent (OGD)} in \emph{continuous games}. We characterize the class of two-player games for which OGD converges, extending the known prior results (\Cref{theorem:two_player-convergence}). 
Also, we show that OGD can be arbitrarily inefficient beyond zero-sum games (\Cref{proposition:inefficiency}). More precisely, under OGD, players can fail to coordinate even if the objective presents a ``clear'' coordination aspect. Finally, in \Cref{theorem:one-many-convergence} we use our techniques to characterize convergence in multiplayer settings as well. 

\subsection{Further Related Work}



The related literature on the subject is too vast to cover in its entirety; below we emphasize certain key contributions.

For \emph{constrained} zero-sum games \citet{Daskalakis19:Last} established asymptotic last-iterate convergence for \emph{optimistic multiplicative weights update}, although they had to assume the existence of a \emph{unique} Nash equilibrium. Our techniques do not require a uniqueness assumption, which is crucial for \emph{equilibrium refinements}~\citep{VanDamme87:Stability}. \citet{Wei21:Linear} improved several aspects of the result of \citet{Daskalakis19:Last}, showing a surprising linear rate of convergence, albeit with a dependence on condition number-like quantities. Some of the latter results were extended for the substantially more complex class of extensive-form games by \citet{Lee2021:Last}. While the aforementioned results apply under a \emph{time-invariant} learning rate, there has also been a substantial amount of work considering a \emph{vanishing} scheduling (\emph{e.g.}, \citep{Mertikopoulos19:Optimistic, Mertikopoulos19:Learning,Zhou17:Mirror,Zhou18:Learning,Hsieh21:Adaptive}). Our results apply under a constant learning rate, which has been extensively argued to be desirable in the literature~\citep{Bailey19:Fast, Golowich20:Tight}. Last-iterate convergence has also been recently documented in certain settings from \emph{reinforcement learning}~\citep{Daskalakis20:Independent,Wei21:Last, Leonardos21}.

Beyond zero-sum games, our knowledge remains much more limited. \citet{Cheung20:Chaos,Cheung21:Chaos} made progress by establishing chaotic behavior for instances of OMD in \emph{bimatrix games}. Last-iterate convergence under \emph{follow the regularizer leader (FTRL)} is known to be rather elusive~\citep{Vlatakis-Gkaragkounis20:No}, occurring only under \emph{strict Nash equilibria}~\citep{Giannou21:Survival}. On the positive side, in \citep{Hsieh21:Adaptive}, last-iterate convergence for a variant of OMD with adaptive learning rates was established for \emph{variationally stable games}, a class of games that includes zero-sum convex-concave.

In the unconstrained setting, the behavior of OGD is by now well-understood in bilinear zero-sum games~\citep{Liang19:Interaction,Mokhtari20:A,Zhang20:Convergence}. Various results have also been shown for \emph{convex-concave} landscapes (\emph{e.g.}, \citep{Mertikopoulos19:Optimistic}), and \emph{ cocoercive games} (\emph{e.g.}, \citep{LinZMJ20}), but covering this literature goes beyond our scope. For \emph{monotone variational inequalities (VIs)} problems~\citep{Harker:Finite}, \citet{Golowich20:Tight} showed last-iterate rates of $O(1/\sqrt{T})$, which is also tight for their considered setting. Note that this rate is slower than that of the time average for the \emph{extra-gradient (EG)} method~\citep{Golowich20:Last}. Lastly, last-iterates rates for variants of OMD have been obtained by \citet{Azizian21:The} in a certain stochastic VI setup.


\section{Preliminaries}
\label{section:prel}

\paragraph{Conventions} In the main body we use the $O(\cdot)$ notation to suppress game-dependent parameters \emph{polynomial} in the game; precise statements are given in the Appendix. We use subscripts to indicate players, and superscripts for time indices. The $j$-th coordinate of $\Vec{x} \in \R^d$ is denoted by $\Vec{x}(j)$.

\paragraph{Normal-Form Games} In this paper we primarily focus on \emph{normal-form games (NFGs)}. We let $[n] \defeq \{1, \dots, n\}$ be the set of players. Each player $i \in [n]$ has a set of available \emph{actions} $\cA_i$. The \emph{joint action profile} is denoted by $\vec{a} \defeq (a_1, \dots, a_n) \in \prod_{i \in [n]} \cA_i$. For a given $\vec{a}$ each player $i$ receives some (normalized) \emph{utility} $u_i(\vec{a})$, where $u_i : \prod_{i \in [n]} \cA_i \to [-1,1]$. Players are allowed to randomize, and we denote by $\vec{x}_i(a_i)$ the probability that player $i$ will chose action $a_i \in \cA_i$, so that $\vec{x}_i \in \Delta(\cA_i) \triangleq \left\{ \Vec{x} \in \R^{|\cA_i|}_{\geq 0} : \sum_{a_i \in \cA_i} \Vec{x}(a_i) = 1 \right\}$. The \emph{expected utility} of player $i$ can be expressed as $\E_{\vec{a} \sim \vec{x}}[u_i(\vec{a})]$, where $\vec{x} \defeq (\vec{x}_1, \dots, \vec{x}_n)$ is the \emph{joint (mixed) strategy profile}. The \emph{social welfare} for a given $\vec{a}$ is defined as the sum of the players' utilities: $\sw(\vec{a}) \defeq \sum_{i = 1}^n u_i(\vec{a})$. As usual, we overload notation to denote $\sw(\vec{x}) \defeq \E_{\vec{a} \sim \vec{x}} \sw(\vec{a})$. We also let $\opt \triangleq \max_{\vec{a} \in \cA} \sw(\vec{a})$ be the maximum social welfare.

\begin{definition}[Approximate Nash equilibrium]
    \label{def:Nash}
    A joint strategy profile $(\vec{x}^*_1, \dots, \vec{x}^*_n) \in \prod_{i \in [n]} \Delta(\cA_i)$ is an \emph{$\epsilon$-approximate (mixed) Nash equilibrium} if for any player $i \in [n]$ and any unilateral deviation $\vec{x}_i \in \Delta(\cA_i)$,
    \begin{equation*}
        u_i(\vec{x}_i, \vec{x}^*_{-i}) \leq u_i(\vec{x}^*) + \epsilon.
    \end{equation*}
\end{definition}

\paragraph{Regret} Let $\cX \subseteq \R^d$ be a convex and compact set. A \emph{regret minimization algorithm} produces at every time $t \in \N$ a strategy $\vec{x}^{(t)} \in \cX$, and receives as feedback from the environment a (linear) utility function $u^{(t)}(\vec{x}) : \vec{x} \mapsto \langle \vec{x}, \vec{u}^{(t)} \rangle$. The \emph{cumulative regret} (or simply regret) $\reg^T$ up to time $T$ measures the performance of the regret minimization algorithm compared to the optimal \emph{fixed strategy in hindsight}:
\begin{equation*}
    \reg^T \triangleq \max_{\vec{x}^* \in \cX} \left\{ \sum_{t=1}^T \langle \vec{x}^*, \vec{u}^{(t)} \rangle \right\} - \sum_{t=1}^T \langle \Vec{x}^{(t)}, \vec{u}^{(t)} \rangle.
\end{equation*}

\paragraph{Optimistic Dynamics} Following a recent trend in online learning \citep{Rakhlin13:Online}, we study \emph{optimistic} (or \emph{predictive}) algorithms. Let $\mathcal{R}$ be a $1$-strongly convex \emph{regularizer} (DGF) with respect to some norm $\| \cdot \|$, continuously differentiable on the relative interior of the feasible set \citep{Rockafellar70:Convex}. We denote by $D_{\mathcal{R}}$ the \emph{Bregman divergence} generated by $\mathcal{R}$; that is, $D_{\mathcal{R}}(\vec{x}, \vec{y}) \defeq \mathcal{R}(\vec{x}) - \mathcal{R}(\vec{y}) - \langle \nabla \mathcal{R}(\vec{y}), \vec{x} - \vec{y} \rangle$. Canonical DGFs include the \emph{squared Euclidean} $\mathcal{R}(\vec{x}) \defeq \frac{1}{2} \|\vec{x}\|_2^2$ which generates the \emph{squared Euclidean distance}, as well as the \emph{negative entropy} $\mathcal{R}(\vec{x}) = \sum_{r=1}^d \vec{x}(r) (\log \vec{x}(r) - 1)$ which generates the \emph{Kulllback-Leibler divergence}. \emph{Optimistic mirror descent (OMD)} has an internal prediction $\vec{m}^{(t)}$ at every time $t \in \N$, so that the update rule takes the following form for $t \geq 0$:
\begin{equation}
    \tag{OMD}
    \label{eq:OMD}
    \begin{split}
        \vec{x}^{(t+1)} \defeq \argmax_{\vec{x} \in \cX} \left\{ \langle \vec{x}, \vec{m}^{(t+1)} \rangle - \frac{1}{\eta} D_{\cR} (\vec{x}, \widehat{\vec{x}}^{(t)}) \right\}; \\
        \widehat{\vec{x}}^{(t+1)} \defeq \argmax_{\widehat{\vec{x}} \in \cX} \left\{ \langle \widehat{\vec{x}}, \vec{u}^{(t+1)} \rangle - \frac{1}{\eta} D_{\cR} ( \widehat{\vec{x}}, \widehat{\vec{x}}^{(t)}) \right\},
    \end{split}
\end{equation}
where $\widehat{\Vec{x}}^{(0)} \defeq \argmin_{\widehat{\Vec{x}} \in \cX} \cR(\widehat{\Vec{x}}) \eqqcolon \vec{x}^{(0)}$. We also consider the \emph{optimistic follow the regularizer leader (OFTRL)} algorithm of \citet{Syrgkanis15:Fast}, defined as follows.
\begin{equation}
    \label{eq:OFTRL}
    \tag{OFTRL}
    \Vec{x}^{(t+1)} \defeq \argmax_{\Vec{x} \in \cX} \left\{ \left\langle \Vec{x}, \Vec{m}^{(t+1)} + \sum_{\tau = 1}^{t} \Vec{u}_i^{(\tau)} \right\rangle - \frac{1}{\eta} \cR(\Vec{x}) \right\}.
\end{equation}
Unless specified otherwise, it will be assumed that $\vec{m}^{(t)} \defeq \vec{u}^{(t-1)}$ for $t \in \N$. (To make $\vec{u}^{(0)}$ well-defined in games, each player receives at the beginning the utility corresponding to the players' strategies at $t = 0$; this is only made for convenience in the analysis.)  \citet{Syrgkanis15:Fast} established that OMD and OFTRL have the so-called \rvu property, which we recall below.

\begin{definition}[\rvu Property]
    \label{definition:rvu}
    A regret minimizing algorithm satisfies the $(\alpha, \beta, \gamma)$-\rvu property under a pair of dual norms\footnote{The dual of norm $\|\cdot\|$ is defined as $\|\vec{v}\|_* \triangleq \sup_{\|\vec{u}\| \leq 1} \langle \vec{u}, \vec{v} \rangle$.} $(\|\cdot\|, \|\cdot\|_*)$ if for any sequence of utility vectors $\vec{u}^{(1)}, \dots, \vec{u}^{(T)}$ its regret $\reg^T$ can be bounded as
    \begin{equation*}
        \label{eq:rvu}
        \reg^T \!\leq\! \alpha + \beta \sum_{t=1}^T \|\vec{u}^{(t)} - \vec{u}^{(t-1)}\|_*^2 - \gamma \sum_{t=1}^T \| \vec{x}^{(t)} - \vec{x}^{(t-1)}\|^2\!,
    \end{equation*}
    where $\alpha, \beta, \gamma > 0$ are time-independent parameters, and $\vec{x}^{(1)}, \dots, \vec{x}^{(T)}$ is the sequence of produced iterates. 
\end{definition}

\begin{proposition}[\citep{Syrgkanis15:Fast}]
    \label{proposition:rvu}
    \eqref{eq:OFTRL} with learning rate $\eta > 0$ and $\vec{m}^{(t)} = \vec{u}^{(t-1)}$ satisfies the \rvu property with $(\alpha, \beta, \gamma) = (\frac{\Omega}{\eta}, \eta, \frac{1}{4\eta})$, where $\Omega \defeq \sup_{\vec{x}} \cR(\vec{x}) - \inf_{\Vec{x}} \cR(\vec{x})$. Moreover, \eqref{eq:OMD} with learning rate $\eta > 0$ and $\vec{m}^{(t)} = \vec{u}^{(t-1)}$ satisfies the \rvu property with $(\alpha, \beta, \gamma) = (\frac{\Omega}{\eta}, \eta, \frac{1}{8\eta}) $, where $\Omega \defeq \sup_{\vec{x}} D_{\cR}(\vec{x}, \widehat{\Vec{x}}^{(0)})$.
\end{proposition}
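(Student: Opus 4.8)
The plan is to treat both algorithms through a common two-stage template. For each, I would first isolate a \emph{primitive} optimistic regret inequality that already exhibits a cross term of the form $\langle \vec{u}^{(t)} - \vec{m}^{(t)}, \cdot \rangle$ together with a strictly negative ``stability'' term measuring the movement of the iterates, and only afterwards substitute the prediction $\vec{m}^{(t)} = \vec{u}^{(t-1)}$ and apply Young's inequality $\langle \vec{a}, \vec{b} \rangle \leq \eta \|\vec{a}\|_*^2 + \frac{1}{4\eta}\|\vec{b}\|^2$ to the cross term. This cleanly produces the $\beta = \eta$ contribution $\eta \sum_t \|\vec{u}^{(t)} - \vec{u}^{(t-1)}\|_*^2$, while what survives of the negative stability mass furnishes $\gamma$. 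Both bounds will share $\alpha = \Omega/\eta$ and $\beta = \eta$, so the entire content of the proposition is in tracking how much negative mass remains, which is where the gap between $\gamma = \frac{1}{4\eta}$ and $\gamma = \frac{1}{8\eta}$ arises.

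For \eqref{eq:OFTRL}, I would obtain the primitive bound by a standard ``be-the-leader'' telescoping against the non-predictive leader sequence. Since $\cR$ is $1$-strongly convex, each regularized-leader objective is $\frac{1}{\eta}$-strongly concave with respect to $\|\cdot\|$, so its argmax is stable; combining the first-order (variational) optimality conditions of consecutive iterates yields
\[ \reg^T \leq \frac{\Omega}{\eta} + \sum_{t=1}^T \langle \vec{u}^{(t)} - \vec{m}^{(t)}, \vec{x}^{(t)} - \vec{x}^{(t+1)} \rangle - \frac{1}{2\eta} \sum_{t=1}^T \| \vec{x}^{(t)} - \vec{x}^{(t+1)} \|^2 . \]
Young's inequality converts the cross term into $\eta \sum_t \|\vec{u}^{(t)} - \vec{u}^{(t-1)}\|_*^2 + \frac{1}{4\eta}\sum_t \|\vec{x}^{(t)} - \vec{x}^{(t+1)}\|^2$; subtracting this from the stability term leaves exactly $-\frac{1}{4\eta}\sum_t \|\vec{x}^{(t)} - \vec{x}^{(t+1)}\|^2$, and reindexing the consecutive-difference sum gives $\gamma = \frac{1}{4\eta}$.

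For \eqref{eq:OMD}, I would instead combine the optimality conditions of the two interleaved argmax problems with the three-point identity for Bregman divergences, producing the predictive-mirror-descent bound
\[ \reg^T \leq \frac{D_{\cR}(\vec{x}^*, \widehat{\vec{x}}^{(0)})}{\eta} + \sum_{t=1}^T \langle \vec{u}^{(t)} - \vec{m}^{(t)}, \vec{x}^{(t)} - \widehat{\vec{x}}^{(t)} \rangle - \frac{1}{\eta} \sum_{t=1}^T \big( D_{\cR}(\widehat{\vec{x}}^{(t)}, \vec{x}^{(t)}) + D_{\cR}(\vec{x}^{(t)}, \widehat{\vec{x}}^{(t-1)}) \big) , \]
where $D_{\cR}(\vec{x}^*, \widehat{\vec{x}}^{(0)}) \leq \Omega$ and $1$-strong convexity lower-bounds each divergence by $\frac{1}{2}\|\cdot\|^2$. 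Young's inequality again handles the cross term, consuming half of the $\frac{1}{2\eta}\|\vec{x}^{(t)} - \widehat{\vec{x}}^{(t)}\|^2$ term and leaving $-\frac{1}{4\eta}\|\vec{x}^{(t)} - \widehat{\vec{x}}^{(t)}\|^2$ alongside $-\frac{1}{2\eta}\|\vec{x}^{(t)} - \widehat{\vec{x}}^{(t-1)}\|^2$. The closing step is the triangle inequality $\|\vec{x}^{(t)} - \vec{x}^{(t-1)}\|^2 \leq 2\|\vec{x}^{(t)} - \widehat{\vec{x}}^{(t-1)}\|^2 + 2\|\vec{x}^{(t-1)} - \widehat{\vec{x}}^{(t-1)}\|^2$, which lets me charge $\frac{1}{8\eta}\|\vec{x}^{(t)} - \vec{x}^{(t-1)}\|^2$ against the negative mass by pairing the $\widehat{\vec{x}}^{(t-1)}$-anchored terms across consecutive indices, yielding $\gamma = \frac{1}{8\eta}$.

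I expect the main obstacle to be twofold. First, deriving the two primitive bounds over an arbitrary convex and compact $\cX$ requires using the variational form of optimality (rather than $\nabla = 0$) and arranging the telescoping and three-point manipulations so that every cross term cancels except the single $\langle \vec{u}^{(t)} - \vec{m}^{(t)}, \cdot \rangle$ discrepancy. Second, and more delicately for OMD, is the bookkeeping of the two interleaved sequences: the \rvu property penalizes the movement of the \emph{played} iterates $\vec{x}^{(t)}$, yet the naturally arising negative terms are anchored at the auxiliary $\widehat{\vec{x}}$ points, and it is precisely the lossy triangle-inequality passage between them that halves the surviving constant. This is exactly what explains the factor-of-two discrepancy between $\frac{1}{4\eta}$ for \eqref{eq:OFTRL} and $\frac{1}{8\eta}$ for \eqref{eq:OMD} in the statement.
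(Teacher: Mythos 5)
The paper never proves this proposition—it is imported wholesale by citation from \citet{Syrgkanis15:Fast}—so the only meaningful comparison is with the cited source, and your two-stage derivation is essentially that argument: for \eqref{eq:OMD} the Rakhlin--Sridharan predictive mirror-descent bound followed by Young's inequality and the lossy triangle-inequality passage from the $\widehat{\vec{x}}$-anchored terms to $\|\vec{x}^{(t)} - \vec{x}^{(t-1)}\|^2$, and for \eqref{eq:OFTRL} the optimistic-FTRL bound whose stability term is already expressed in consecutive \emph{played} iterates, which is the only route that can yield $\frac{1}{4\eta}$ (any argument routed through an auxiliary leader sequence caps out below that constant because of the unavoidable triangle-inequality loss). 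Your constants check out, your diagnosis of the $\frac{1}{4\eta}$ versus $\frac{1}{8\eta}$ discrepancy is exactly the right structural point, and the only detail left implicit—the $t=1$ boundary term created by reindexing, handled via the convention $\vec{x}^{(0)} = \widehat{\vec{x}}^{(0)}$—is glossed over in the original as well.
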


\section{Optimistic Learning in Games}
\label{section:optimistic}

In this section we study optimistic learning dynamics. We first show that the \rvu bound implies a bounded second-order path length for dynamics such as \eqref{eq:OMD} under a broad class of games.

\begin{restatable}{theorem}{boundedtraj}
    \label{theorem:bounded_traj}
    Suppose that every player $i \in [n]$ employs a regret minimizing algorithm satisfying the $(\alpha_i, \beta_i, \gamma_i)$-\rvu property with $\gamma_i \geq 2 (n-1) \sum_{j \neq i} \beta_j$, for all $i \in [n]$, under the pair of dual norms $(\|\cdot\|_1, \|\cdot\|_{\infty})$. If the game is such that $\sum_{i=1}^n \reg_i^T \geq 0$, for any $T \in \N$, then
    \begin{equation}
        \label{eq:boundtraj}
        \sum_{i=1}^n \sum_{t=1}^T \gamma_i \| \vec{x}_i^{(t)} - \vec{x}_i^{(t-1)}\|_1^2 \leq 2 \sum_{i=1}^n \alpha_i.
    \end{equation}
\end{restatable}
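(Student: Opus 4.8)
The plan is to sum the per-player \rvu guarantees, invoke the hypothesis $\sum_i \reg_i^T \ge 0$ to drop the left-hand side, and then show that the aggregate ``utility-variation'' term is dominated by \emph{half} of the aggregate ``strategy-variation'' term. Concretely, summing the inequality of \Cref{definition:rvu} over $i \in [n]$ and using the nonnegativity of total regret immediately yields
\[
  \sum_{i=1}^n \gamma_i \sum_{t=1}^T \|\vec{x}_i^{(t)} - \vec{x}_i^{(t-1)}\|_1^2 \le \sum_{i=1}^n \alpha_i + \sum_{i=1}^n \beta_i \sum_{t=1}^T \|\vec{u}_i^{(t)} - \vec{u}_i^{(t-1)}\|_\infty^2,
\]
so the entire argument reduces to bounding the final sum in terms of the iterate movements.

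The crux is a stability estimate for the observed utilities. In a normal-form game the utility vector of player $i$ has entries $\vec{u}_i^{(t)}(a_i) = \E_{\vec{a}_{-i} \sim \vec{x}_{-i}^{(t)}}[u_i(a_i, \vec{a}_{-i})]$, which is \emph{multilinear} in the opponents' strategies $(\vec{x}_j^{(t)})_{j \ne i}$. I would expand the difference $\vec{u}_i^{(t)}(a_i) - \vec{u}_i^{(t-1)}(a_i)$ and telescope the difference of products $\prod_{j \ne i} \vec{x}_j^{(t)}(a_j) - \prod_{j \ne i} \vec{x}_j^{(t-1)}(a_j)$ one coordinate at a time. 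Using $|u_i| \le 1$, and the fact that marginalizing each untouched opponent factor contributes a $1$ (since probabilities sum to one), this collapses to the clean bound
\[
  \|\vec{u}_i^{(t)} - \vec{u}_i^{(t-1)}\|_\infty \le \sum_{j \ne i} \|\vec{x}_j^{(t)} - \vec{x}_j^{(t-1)}\|_1 .
\]
Squaring and applying the power-mean (Cauchy--Schwarz) inequality across the $n-1$ summands then produces exactly the factor appearing in the hypothesis: $\|\vec{u}_i^{(t)} - \vec{u}_i^{(t-1)}\|_\infty^2 \le (n-1) \sum_{j \ne i} \|\vec{x}_j^{(t)} - \vec{x}_j^{(t-1)}\|_1^2$.

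Plugging this into the utility-variation sum and swapping the order of summation, the coefficient multiplying $\sum_t \|\vec{x}_k^{(t)} - \vec{x}_k^{(t-1)}\|_1^2$ becomes $(n-1)\sum_{i \ne k} \beta_i$. The assumption $\gamma_k \ge 2(n-1)\sum_{j \ne k}\beta_j$ is precisely what forces this coefficient to be at most $\tfrac{1}{2}\gamma_k$, so the whole utility-variation term is at most $\tfrac{1}{2}\sum_k \gamma_k \sum_t \|\vec{x}_k^{(t)} - \vec{x}_k^{(t-1)}\|_1^2$. Substituting back into the displayed inequality of the first paragraph and moving this term to the left absorbs half of the strategy-variation sum, leaving $\tfrac{1}{2}\sum_i \gamma_i \sum_t \|\vec{x}_i^{(t)} - \vec{x}_i^{(t-1)}\|_1^2 \le \sum_i \alpha_i$, which is \eqref{eq:boundtraj}. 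I expect the main obstacle to be the multilinear stability estimate of the second paragraph: for two players the utility map is linear and the bound is immediate, but for $n > 2$ one must handle the difference of products carefully, and it is exactly the telescoping-plus-Cauchy--Schwarz bookkeeping that both generates the $(n-1)$ factors and dictates the quantitative form of the \rvu hypothesis.
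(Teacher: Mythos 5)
Your proposal is correct and follows essentially the same route as the paper's proof: the multilinear stability estimate $\|\vec{u}_i^{(t)} - \vec{u}_i^{(t-1)}\|_\infty \leq \sum_{j \neq i} \|\vec{x}_j^{(t)} - \vec{x}_j^{(t-1)}\|_1$ (which the paper derives via the total-variation bound for product distributions, i.e., the same telescoping argument you describe), the squaring step with the $(n-1)$ factor, the swap of summation order, and the absorption of the utility-variation term into $\tfrac{1}{2}\gamma_i$ via the hypothesis $\gamma_i \geq 2(n-1)\sum_{j \neq i}\beta_j$. The only cosmetic difference is that you invoke $\sum_i \reg_i^T \geq 0$ at the start rather than at the end, which changes nothing.
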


The $\ell_1$-norm in \Cref{theorem:bounded_traj} is only used for convenience; \eqref{eq:boundtraj} immediately extends under any equivalent norm. Now the requirement of \Cref{theorem:bounded_traj} related to the \rvu property can be satisfied by OMD and OFTRL, as implied by \Cref{proposition:rvu}. Concretely, if all players employ \eqref{eq:OMD} with the same $\eta > 0$, it would suffice to take $\eta \leq \frac{1}{4(n-1)}$. In light of this, applying \Cref{theorem:bounded_traj} only requires that the game is such that $\sum_{i=1}^n \reg_i^T \geq 0$. Next, we show that this is indeed the case for certain important classes of games.

\subsection{Games with Nonnegative Sum of Regrets}
In \emph{polymatrix games} there is an underlying graph so that every node is associated with a given player and every edge corresponds to a two-person game. The utility of a player is the sum of the utilities obtained from each individual game played with its neighbors on the graph~\citep{Kearns01:Graphical}. In a \emph{polymatrix zero-sum game}, every two-person game is zero-sum. More generally, in a \emph{zero-sum polymatrix game} the sum of \emph{all} players' utilities has to be $0$.

\emph{Strategically zero-sum games}, introduced by \citet{Moulin78:Strategically}, is the subclass of bimatrix games which are \emph{strategically equivalent} to a zero-sum game (see \Cref{def:strat-zero_sum} for a formal statement). For this class of games, \citet{Moulin78:Strategically} showed that \emph{fictitious play} converges to a Nash equilibrium. We will extend their result under a broad class of no-regret learning dynamics. An important feature of strategically zero-sum games is that it constitutes the \emph{exact class of games} whose fully-mixed Nash equilibria cannot be improved by, \emph{e.g.}, a \emph{correlation scheme}~\citep{Aumann74:Subjectivity}.

\begin{proposition}[Abridged; Full Version in \Cref{proposition:positive_regret-full}]
    \label{proposition:positive_regret}
    For the following classes of games it holds that $\sum_{i=1}^n \reg_i^T \geq 0$:
    \begin{enumerate}[nosep]
        \item Two-player zero-sum games; 
        \item Polymatrix zero-sum games; 
        \item Constant-sum polymatrix games; 
        \item Strategically zero-sum games; 
        \item Polymatrix strategically zero-sum games;\footnote{See \Cref{remark:scale}.} 
    \end{enumerate}
\end{proposition}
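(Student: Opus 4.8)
The plan is to reduce all five cases to a single elementary inequality: since deviating to the time-average is always feasible, for every player $i$ we have $\max_{\vec{x}_i^*} u_i(\vec{x}_i^*, \bar{\vec{x}}_{-i}) \geq u_i(\bar{\vec{x}})$, where $\bar{\vec{x}}_i \defeq \frac{1}{T}\sum_{t=1}^T \vec{x}_i^{(t)}$. The first step is to exploit the additive (poly)matrix structure shared by all five classes. In each case player $i$'s utility vector at time $t$ is $\vec{u}_i^{(t)} = \sum_{j \sim i} \mat{A}_{ij}\vec{x}_j^{(t)}$, which is \emph{linear} in each opponent's strategy; hence the hindsight-optimal term depends on the opponents only through their average, $\max_{\vec{x}_i^*}\sum_{t=1}^T \langle \vec{x}_i^*, \vec{u}_i^{(t)}\rangle = T\max_{\vec{x}_i^*} u_i(\vec{x}_i^*, \bar{\vec{x}}_{-i})$. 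Summing over all players yields the master identity
\begin{equation*}
    \sum_{i=1}^n \reg_i^T = T\sum_{i=1}^n \max_{\vec{x}_i^*} u_i(\vec{x}_i^*, \bar{\vec{x}}_{-i}) - \sum_{t=1}^T \sw(\vec{x}^{(t)}),
\end{equation*}
which converts the claim into a statement about the average play $\bar{\vec{x}}$.

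For the (constant-)zero-sum cases (items 1--3), the social welfare $\sw(\vec{x})$ is identically a constant $c$ (with $c = 0$ in the zero-sum cases), so the second term equals $cT$. The first term is bounded below by the deviation inequality: $\sum_i \max_{\vec{x}_i^*} u_i(\vec{x}_i^*, \bar{\vec{x}}_{-i}) \geq \sum_i u_i(\bar{\vec{x}}) = \sw(\bar{\vec{x}}) = c$. Hence $\sum_i \reg_i^T \geq cT - cT = 0$, and two-player zero-sum is just the single-edge instance.

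For the strategically zero-sum cases (items 4--5) the social welfare is no longer constant, so I would instead work with the \emph{weighted} sum $\sum_i c_i \reg_i^T$, where the $c_i > 0$ are the scalars witnessing strategic equivalence. The defining property of a strategically zero-sum bimatrix game is that $c_1 \mat{A} + c_2 \mat{B}$ has entries of the separable form $r_i + s_j$; consequently $\sum_i c_i u_i(\vec{x}) = \langle \vec{r}, \vec{x}_1\rangle + \langle \vec{s}, \vec{x}_2\rangle$ is linear in each player's \emph{own} strategy. The key consequence is that the weighted realized welfare again collapses to the average, $\sum_{t}\sum_i c_i u_i(\vec{x}^{(t)}) = T(\langle \vec{r}, \bar{\vec{x}}_1\rangle + \langle \vec{s}, \bar{\vec{x}}_2\rangle)$, which exactly matches the lower bound $T\sum_i c_i\max_{\vec{x}_i^*} u_i(\vec{x}_i^*, \bar{\vec{x}}_{-i}) \geq T\sum_i c_i u_i(\bar{\vec{x}})$ coming from the same deviation inequality. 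Therefore $\sum_i c_i\reg_i^T \geq 0$. The polymatrix strategically zero-sum case (item 5) follows the same template, applying the separability property edge-by-edge with a consistent per-player scalar.

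The main obstacle is precisely the appearance of the scalars $c_1, c_2$ (resp.\ per-player scalars in the polymatrix case): the argument naturally produces $\sum_i c_i\reg_i^T \geq 0$, whereas the statement asks for the \emph{unweighted} sum. Since positive rescaling of a player's payoffs leaves best responses, Nash equilibria, and the \rvu property intact (it only reparametrizes the learning rate), I would handle this by normalizing each player's utilities by $c_i$ up front, as recorded in \Cref{remark:scale}; on the resulting game the weighted and unweighted sums coincide and $\sum_i \reg_i^T \geq 0$ holds. Ensuring that a single global per-player scaling simultaneously witnesses strategic equivalence on every edge is the one genuinely delicate point in the polymatrix case.
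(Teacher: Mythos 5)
Your proposal is correct and takes essentially the same route as the paper's proof: in every case the hindsight-optimal terms are collapsed to the opponents' time-averages by linearity, lower-bounded by deviating to one's own average, and combined with (weighted) cancellation of the realized welfare, and your up-front rescaling for items 4--5 is precisely the paper's normalization assumption $\lambda = \mu$ in \Cref{remark:scale}, including the caveat that a consistent per-player scaling is the delicate point in the polymatrix case. The only differences are organizational: you unify items 1--3 under a single master identity and invoke the equivalent separable-sum form of the Moulin characterization, whereas the paper verifies each case by direct computation from the decomposition $\mat{A} = \lambda \mat{C} + \vec{v}_a \vec{1}_m^\top$, $\mat{B} = -\lambda \mat{C} + \vec{1}_n \vec{v}_b^\top$ of \Cref{theorem:strat-zero_sum}.
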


As suggested in the full version (\Cref{proposition:positive_regret-full}), this class of games appears to be intricately connected with the admission of a \emph{minimax theorem}. As such, it also captures certain nonconvex-nonconcave landscapes such as \emph{quasiconvex-quasiconcave games}~\citep{Sion:On} and \emph{stochastic games}~\citep{Shapley1095}, but establishing last-iterate convergence for those settings goes beyond the scope of this paper. Note that Von Neumann's minimax theorem for two-player zero-sum games can be generalized to the polymatrix games we are considering~\citep{Daskalakis09:On, Cai11:On,Cai:Zero}. Interestingly, our framework also has implications if the \emph{duality gap} is small; see \Cref{remark:duality_gap}.

\subsection{Implications} Having established the richness of the class of games we are capturing, we present implications deriving from \Cref{theorem:bounded_traj}, and extensions thereof. First, we show that it implies optimal $O(1)$ \emph{individual regret}.

\begin{restatable}[Optimal Regret Bound]{corollary}{optreg}
    \label{corollary:optreg}
    In the setting of \Cref{theorem:bounded_traj} with $\alpha_i = \alpha$, $\beta_i = \beta$, and $\gamma_i = \gamma$, the individual regret of each player $i \in [n]$ can be bounded as
    \begin{equation*}
        \reg_i^T \leq \alpha + \frac{2 n (n-1) \alpha \beta}{\gamma}.
    \end{equation*}
\end{restatable}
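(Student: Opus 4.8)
The plan is to apply the $(\alpha,\beta,\gamma)$-\rvu property to a single player $i$, immediately discard the nonpositive second-order path-length term $-\gamma \sum_{t=1}^T \|\vec{x}_i^{(t)} - \vec{x}_i^{(t-1)}\|_1^2$, and thereby reduce the entire task to controlling the cumulative utility variation. Concretely, \Cref{definition:rvu} yields $\reg_i^T \leq \alpha + \beta \sum_{t=1}^T \|\vec{u}_i^{(t)} - \vec{u}_i^{(t-1)}\|_\infty^2$, so everything hinges on bounding $\sum_{t=1}^T \|\vec{u}_i^{(t)} - \vec{u}_i^{(t-1)}\|_\infty^2$ using the bounded-trajectory guarantee of \Cref{theorem:bounded_traj}.

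The crucial step, and the one place where the game structure genuinely enters, is to relate the change in player $i$'s utility vector to the changes in the \emph{other} players' strategies. Since each coordinate $\vec{u}_i^{(t)}(a_i) = \E_{\vec{a}_{-i} \sim \vec{x}_{-i}^{(t)}}[u_i(a_i, \vec{a}_{-i})]$ is multilinear in $\vec{x}_{-i}^{(t)}$ and every payoff lies in $[-1,1]$, I would run a hybrid (telescoping) argument that switches one opponent $j \neq i$ at a time from $\vec{x}_j^{(t-1)}$ to $\vec{x}_j^{(t)}$: each switch perturbs the expected utility by at most $\|\vec{x}_j^{(t)} - \vec{x}_j^{(t-1)}\|_1$, so summing and taking the max over $a_i$ gives the coordinate-wise bound $\|\vec{u}_i^{(t)} - \vec{u}_i^{(t-1)}\|_\infty \leq \sum_{j \neq i} \|\vec{x}_j^{(t)} - \vec{x}_j^{(t-1)}\|_1$. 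I expect this to be the main obstacle, though it is routine once the multilinearity and payoff-boundedness are used; every remaining step is mechanical.

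With that inequality in hand I would square it and apply Cauchy--Schwarz (power-mean) to obtain $\|\vec{u}_i^{(t)} - \vec{u}_i^{(t-1)}\|_\infty^2 \leq (n-1) \sum_{j \neq i} \|\vec{x}_j^{(t)} - \vec{x}_j^{(t-1)}\|_1^2 \leq (n-1) \sum_{j=1}^n \|\vec{x}_j^{(t)} - \vec{x}_j^{(t-1)}\|_1^2$. Summing over $t \in [T]$ and invoking \Cref{theorem:bounded_traj} in the symmetric regime $\gamma_i = \gamma$ — where the right-hand side of \eqref{eq:boundtraj} is $2 n \alpha$, so the aggregate second-order path length satisfies $\sum_{t=1}^T \sum_{j=1}^n \|\vec{x}_j^{(t)} - \vec{x}_j^{(t-1)}\|_1^2 \leq 2 n \alpha / \gamma$ — yields $\sum_{t=1}^T \|\vec{u}_i^{(t)} - \vec{u}_i^{(t-1)}\|_\infty^2 \leq 2 n (n-1) \alpha / \gamma$.

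Substituting back into the \rvu bound then gives $\reg_i^T \leq \alpha + \beta \cdot 2 n (n-1) \alpha / \gamma = \alpha + \frac{2 n (n-1) \alpha \beta}{\gamma}$, which is exactly the claimed estimate. The only bookkeeping worth double-checking is that the hypotheses of \Cref{theorem:bounded_traj} hold in the symmetric setting — in particular that $\gamma_i \geq 2(n-1) \sum_{j \neq i} \beta_j$ reduces to $\gamma \geq 2(n-1)^2 \beta$ when all parameters coincide — so that the path-length bound being fed in is legitimately valid.
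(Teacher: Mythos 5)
Your proposal is correct and follows essentially the same route as the paper's proof: apply the \rvu{} bound to player $i$, drop the nonpositive path-length term, bound $\|\vec{u}_i^{(t)} - \vec{u}_i^{(t-1)}\|_\infty$ by $\sum_{j \neq i} \|\vec{x}_j^{(t)} - \vec{x}_j^{(t-1)}\|_1$ (your hybrid argument is exactly the paper's \Cref{claim:util-infty}, proved there via the sum-product inequality for product distributions), square via Young/Cauchy--Schwarz, and plug in the aggregate path-length bound $2n\alpha/\gamma$ from \Cref{theorem:bounded_traj}. The arithmetic and the final constant match the paper exactly.
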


For example, under \eqref{eq:OMD} with $\eta = \frac{1}{4(n-1)}$ this corollary implies that $\reg_i^T \leq 8 n \Omega_i \leq 8 n \log |\cA_i|$, for any $i \in [n]$. Moreover, we also obtain an $O(1/\epsilon^2)$ bound on the number of iterations so that the last iterate is an $\epsilon$-Nash equilibrium. 

\begin{theorem}[Abridged; Full Version in \Cref{theorem:rate-OMD-full}]
    \label{theorem:rate-OMD}
    Suppose that each player employs \eqref{eq:OMD} with a suitable regularizer. If $\sum_{i=1}^n \reg_i^T \geq 0$, then for any $\epsilon > 0$ and after $T = O(1/\epsilon^2)$ iterations there is joint strategy $\Vec{x}^{(t)}$, with $t \in [T]$, which is an $\epsilon$-approximate Nash equilibrium.
\end{theorem}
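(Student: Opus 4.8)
The plan is to combine the bounded second-order path length of \Cref{theorem:bounded_traj} with the first-order optimality conditions of the \eqref{eq:OMD} updates. \Cref{theorem:bounded_traj} guarantees that $\sum_{i}\sum_{t=1}^T \gamma_i \|\vec{x}_i^{(t)} - \vec{x}_i^{(t-1)}\|_1^2 = O(1)$ independently of $T$, so an averaging (pigeonhole) argument produces a time step at which the dynamics are nearly stationary. The crux is then to show that near-stationarity of the iterates forces the corresponding profile to be an approximate Nash equilibrium, with the approximation error scaling like the per-step movement; setting that movement to $\approx \epsilon$ yields $T = O(1/\epsilon^2)$.

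Concretely, I would first select a good time $t^\star \in [T]$. Since the weighted sum of squared increments is $O(1)$ over $T$ steps, choosing $t^\star$ to minimize $\sum_i \big(\|\vec{x}_i^{(t)} - \vec{x}_i^{(t-1)}\|_1^2 + \|\vec{x}_i^{(t-1)} - \vec{x}_i^{(t-2)}\|_1^2\big)$ yields, by averaging, a step where \emph{two consecutive} increments are each of order $O(1/\sqrt{T})$ in $\ell_1$. Both increments are needed because the \eqref{eq:OMD} update couples the played iterate $\vec{x}^{(t)}$ and the auxiliary iterate $\widehat{\vec{x}}^{(t)}$ through the shared anchor $\widehat{\vec{x}}^{(t-1)}$.

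Next I translate stationarity into an approximate equilibrium. The proximal steps defining $\vec{x}_i^{(t^\star)}$ and $\widehat{\vec{x}}_i^{(t^\star)}$ share the anchor $\widehat{\vec{x}}_i^{(t^\star-1)}$ and differ only in their linear terms $\vec{m}_i^{(t^\star)} = \vec{u}_i^{(t^\star-1)}$ and $\vec{u}_i^{(t^\star)}$; by $1$-strong convexity of $\cR$ the mirror step is $\eta$-Lipschitz, so $\|\vec{x}_i^{(t^\star)} - \widehat{\vec{x}}_i^{(t^\star)}\|_1 \le \eta \|\vec{u}_i^{(t^\star)} - \vec{u}_i^{(t^\star-1)}\|_\infty$. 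Because the utilities are multilinear, $\|\vec{u}_i^{(t^\star)} - \vec{u}_i^{(t^\star-1)}\|_\infty$ is controlled by $\sum_j \|\vec{x}_j^{(t^\star)} - \vec{x}_j^{(t^\star-1)}\|_1 = O(1/\sqrt{T})$, and the same reasoning at time $t^\star - 1$ gives that all of $\|\vec{x}_i^{(t^\star)} - \widehat{\vec{x}}_i^{(t^\star)}\|_1$, $\|\vec{x}_i^{(t^\star-1)} - \widehat{\vec{x}}_i^{(t^\star-1)}\|_1$, and hence $\|\widehat{\vec{x}}_i^{(t^\star)} - \widehat{\vec{x}}_i^{(t^\star-1)}\|_1$, are $O(1/\sqrt{T})$. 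Writing the variational inequality for the $\widehat{\vec{x}}$-update, for every $\vec{x}_i \in \Delta(\cA_i)$,
\[
\langle \vec{x}_i - \widehat{\vec{x}}_i^{(t^\star)},\, \vec{u}_i^{(t^\star)} \rangle \;\le\; \frac{1}{\eta}\,\langle \vec{x}_i - \widehat{\vec{x}}_i^{(t^\star)},\, \nabla\cR(\widehat{\vec{x}}_i^{(t^\star)}) - \nabla\cR(\widehat{\vec{x}}_i^{(t^\star-1)}) \rangle,
\]
and the right-hand side is $O(1/\sqrt{T})$ provided $\nabla\cR$ is Lipschitz. Finally, since $\vec{u}_i^{(t^\star)}$ is the utility gradient at the played profile $\vec{x}^{(t^\star)}$ while the equilibrium condition requires the gradient at $\widehat{\vec{x}}^{(t^\star)}$, I replace one by the other at an additional $O(1/\sqrt{T})$ cost using $\|\vec{x}_j^{(t^\star)} - \widehat{\vec{x}}_j^{(t^\star)}\|_1 = O(1/\sqrt{T})$ and the Lipschitzness of the utilities. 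This certifies that $\widehat{\vec{x}}^{(t^\star)}$ --- and, up to another $O(1/\sqrt{T})$ term, the played iterate $\vec{x}^{(t^\star)}$ --- is an $O(1/\sqrt{T})$-approximate Nash equilibrium, so taking $T = O(1/\epsilon^2)$ finishes the proof.

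The main obstacle is this last translation step, which is where the phrase ``suitable regularizer'' does its work: bounding the Bregman/dual term requires $\nabla\cR$ to be Lipschitz on the region visited by the iterates, which is immediate for the squared Euclidean DGF but fails near the boundary for negative entropy, whose gradient blows up. I would therefore either restrict to smooth regularizers or argue that the iterates remain bounded away from the boundary. The remaining effort is careful bookkeeping: propagating the single $O(1/\sqrt{T})$ increment bound from \Cref{theorem:bounded_traj} through the several quantities $\|\vec{x}_i - \widehat{\vec{x}}_i\|_1$, $\|\widehat{\vec{x}}_i^{(t^\star)} - \widehat{\vec{x}}_i^{(t^\star-1)}\|_1$, and the utility-gradient differences, while tracking the game-dependent Lipschitz constants suppressed by the $O(\cdot)$ notation.
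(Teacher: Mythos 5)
Your proposal is correct, but it reaches the key intermediate fact by a genuinely different route than the paper. The paper's proof of \Cref{theorem:rate-OMD-full} does not go through \Cref{theorem:bounded_traj} at all: it invokes a \emph{refined} \rvu inequality (\Cref{proposition:refined-rvu}, extracted from the proof of Theorem 18 of \citet{Syrgkanis15:Fast}) whose negative terms are $\| \vec{x}_i^{(t)} - \widehat{\vec{x}}_i^{(t)}\|^2 + \| \vec{x}_i^{(t)} - \widehat{\vec{x}}_i^{(t-1)}\|^2$ rather than $\| \vec{x}_i^{(t)} - \vec{x}_i^{(t-1)}\|^2$; summing over players and using $\sum_{i=1}^n \reg_i^T \geq 0$ then directly bounds $\sum_{t=1}^T \sum_{i=1}^n \left( \| \vec{x}_i^{(t)} - \widehat{\vec{x}}_i^{(t)}\|^2 + \| \vec{x}_i^{(t)} - \widehat{\vec{x}}_i^{(t-1)}\|^2 \right)$ by a constant, and a single pigeonhole yields a step at which both quantities are $O(1/\sqrt{T})$. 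You instead take \Cref{theorem:bounded_traj} as a black box (whose negative terms involve only consecutive \emph{played} iterates), pigeonhole over \emph{pairs} of consecutive steps, and recover the two needed quantities via the $\eta$-Lipschitzness of the prox map in its linear argument---exploiting that the two proximal problems at time $t^\star$ share the anchor $\widehat{\vec{x}}^{(t^\star-1)}$---combined with the multilinearity bound of \Cref{claim:util-infty}; this is a correct stability argument that the paper never needs to make. From that point on the two proofs coincide: your variational-inequality step, the requirement that $\nabla \cR_i$ be Lipschitz, and the final $O(1/\sqrt{T})$ swap between $\widehat{\vec{x}}_i^{(t^\star)}$ and $\vec{x}_i^{(t^\star)}$ are exactly the paper's \Cref{claim:close-Nash}, including your correct diagnosis that negative entropy is excluded while the Euclidean DGF works. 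As for what each route buys: the paper's is shorter once the refined \rvu bound is available, but that bound must be dug out of the proof (not the statement) of \citet{Syrgkanis15:Fast}; yours is self-contained given \Cref{theorem:bounded_traj} and makes literal the logical dependence suggested by the abridged statement, at the cost of the extra prox-stability lemma and the two-step pigeonhole. One minor redundancy: your detour certifying $\widehat{\vec{x}}^{(t^\star)}$ as an equilibrium before the played iterate is unnecessary, since $\vec{u}_i^{(t^\star)}$ is already the utility gradient at the played profile $\vec{x}_{-i}^{(t^\star)}$, so approximate optimality of $\vec{x}_i^{(t^\star)}$ against $\vec{u}_i^{(t^\star)}$ is by itself the Nash condition of \Cref{def:Nash}.
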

This theorem requires smoothness of the regularizer used by each player (which could be different), satisfied by a broad family that includes the Euclidean DGF. Interestingly, the argument does not appear to hold under non-smooth regularizers such as the negative entropy DGF, even if one works with common local norms. We also remark that while the bound in \Cref{theorem:rate-OMD} applies for the best iterate, it is always possible to terminate once a desired precision has been reached. Asymptotic last-iterate convergence follows immediately from our techniques; see \Cref{remark:last-iterate}.

\paragraph{Advanced Predictions} Our last-iterate guarantees are the first to apply even if players employ more advanced prediction mechanisms. Indeed, \citet{Syrgkanis15:Fast} showed that a qualitatively similar \rvu bound can be derived under
\begin{enumerate}[nosep]
    \item $H$-\emph{step recency bias}: Given a window of size $H$, we define $\vec{m}^{(t)} \defeq \sum_{\tau = t - H}^{t-1} \vec{u}^{(\tau)}/H $; \label{item:H-step}
    \item \emph{Geoemetrically discounted recency bias}: For $\delta \in (0, 1)$, we define $\vec{m}^{(t)} \defeq \frac{1}{\sum_{\tau = 0}^{t-1} \delta^{-\tau} } \sum_{\tau = 0}^{t-1} \delta^{-\tau} \vec{u}^{(\tau)}$.\label{item:discounting}
\end{enumerate}
In \Cref{proposition:advanced} we show the boundedness property for the trajectories under such prediction mechanisms, and we experimentally evaluate their performance in \Cref{appendix:experiments}.

\paragraph{Bilinear Saddle-Point Problems} Our framework also has direct implications for \emph{extensive-form games (EFGs)} where the strategy space of each player is no longer a probability simplex. 
In particular, computing a Nash equilibrium in zero-sum EFGs can be formulated as a \emph{bilinear saddle-point problem (BSPP)}
\begin{equation}
    \label{eq:BSPP}
    \min_{\vec{x} \in \cX} \max_{\Vec{y} \in \cY} \Vec{x}^\top \mat{A} \Vec{y},
\end{equation}
where $\cX$ and $\cY$ are convex and compact sets associated with the sequential decision process faced by each player~\citep{Romanovskii62:Reduction,VonStengel96:Efficient,Koller96:Efficient}.

\begin{proposition}[Abdriged; Full Version in \Cref{proposition:rate-EFGs-full}]
    \label{proposition:rate-EFGs}
    Suppose that both players in a BSPP employ \eqref{eq:OMD} with $\eta \leq \frac{1}{4\|\mat{A} \|_2}$, where $\| \cdot\|_2$ is the spectral norm. Then, after $T = O(1/\epsilon^2)$ iterations there is a pair $(\vec{x}^{(t)}, \vec{y}^{(t)})$ with $t \in [T]$ which is an $O(\epsilon)$-Nash equilibrium.
\end{proposition}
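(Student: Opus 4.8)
The plan is to recognize \eqref{eq:BSPP} as a two-player zero-sum game over the convex-compact sets $\cX,\cY$ and then invoke the general (non-simplex) form of the machinery behind \Cref{theorem:bounded_traj} and \Cref{theorem:rate-OMD}. Concretely, identify the $\vec{x}$-player as player $1$ with feedback $\vec{u}_1^{(t)} = -\mat{A}\vec{y}^{(t)}$ and the $\vec{y}$-player as player $2$ with feedback $\vec{u}_2^{(t)} = \mat{A}^\top\vec{x}^{(t)}$, so that the instantaneous utilities sum to zero. I would first verify the hypothesis $\sum_{i}\reg_i^T \geq 0$ directly: a short calculation shows that $\reg_1^T + \reg_2^T$ equals $T$ times the duality gap of the time-averaged iterates $\bigl(\frac1T\sum_t\vec{x}^{(t)}, \frac1T\sum_t\vec{y}^{(t)}\bigr)$, which is nonnegative by weak duality. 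This is exactly the first item of \Cref{proposition:positive_regret} transported to the sequence-form polytopes.

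Next I would re-run the argument of \Cref{theorem:bounded_traj} for $n=2$ while tracking the spectral norm. The game enters only through the utility variation: since $\vec{u}_1^{(t)} - \vec{u}_1^{(t-1)} = -\mat{A}(\vec{y}^{(t)} - \vec{y}^{(t-1)})$ and symmetrically, we have $\|\vec{u}_1^{(t)} - \vec{u}_1^{(t-1)}\|_2 \leq \|\mat{A}\|_2 \|\vec{y}^{(t)} - \vec{y}^{(t-1)}\|_2$ and likewise $\|\vec{u}_2^{(t)} - \vec{u}_2^{(t-1)}\|_2 \leq \|\mat{A}\|_2 \|\vec{x}^{(t)} - \vec{x}^{(t-1)}\|_2$. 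Feeding the Euclidean-DGF \rvu parameters $(\alpha,\beta,\gamma)=(\Omega/\eta,\eta,1/(8\eta))$ from \Cref{proposition:rvu} into the sum of the two \rvu bounds, the cross-variation terms carry a factor $\|\mat{A}\|_2^2$, so the coefficient in front of $\sum_t\|\vec{x}^{(t)} - \vec{x}^{(t-1)}\|_2^2$ becomes $\gamma - \eta\|\mat{A}\|_2^2$ (and symmetrically for $\vec{y}$). The hypothesis $\eta \leq 1/(4\|\mat{A}\|_2)$ makes this coefficient at least $\|\mat{A}\|_2/4>0$, so together with $\sum_i\reg_i^T \geq 0$ it yields the bounded second-order path length $\sum_{t}\bigl(\|\vec{x}^{(t)} - \vec{x}^{(t-1)}\|_2^2 + \|\vec{y}^{(t)} - \vec{y}^{(t-1)}\|_2^2\bigr) = O(1)$.

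Finally I would convert this into a best-iterate guarantee as in \Cref{theorem:rate-OMD}. Averaging the path-length bound produces some $t\in[T]$ whose consecutive movement is $O(1/\sqrt{T})$, and it remains to show the duality gap at $(\vec{x}^{(t)}, \vec{y}^{(t)})$ is controlled by this movement. For the $\vec{y}$-player, writing the best-response gain $\max_{\vec{y}\in\cY}\langle\vec{y} - \vec{y}^{(t)}, \vec{u}_2^{(t)}\rangle$ and inserting the prediction $\vec{m}_2^{(t)} = \vec{u}_2^{(t-1)}$, I would bound the prediction-error part by $\|\mat{A}\|_2\|\vec{x}^{(t)} - \vec{x}^{(t-1)}\|_2$ times the diameter of $\cY$, and the remaining part through the first-order optimality condition of the \eqref{eq:OMD} update combined with smoothness of $\cR$, which turns $\nabla\cR(\vec{y}^{(t)}) - \nabla\cR(\widehat{\vec{y}}^{(t-1)})$ into a multiple of $\|\vec{y}^{(t)} - \widehat{\vec{y}}^{(t-1)}\|_2$. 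Summing the two players' gaps shows the exploitability is $O(1/\sqrt{T})$ at the best iterate, so $T = O(1/\epsilon^2)$ delivers an $O(\epsilon)$-approximate saddle point, i.e.\ an $O(\epsilon)$-Nash equilibrium.

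The main obstacle is this last step: relating the gap between \eqref{eq:OMD}'s two interleaved sequences, $\|\vec{y}^{(t)} - \widehat{\vec{y}}^{(t-1)}\|_2$, back to the summable consecutive movements, and doing so over general polytopes $\cX,\cY$ rather than a simplex. This is exactly where smoothness of the regularizer is required (so the Euclidean DGF is admissible while the negative entropy appears not to be), and it demands threading the $\|\mat{A}\|_2$ factor consistently so that it cancels against the learning-rate condition instead of inflating the final rate.
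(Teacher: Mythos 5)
Your first two steps coincide with the paper's own route: the zero-sum structure gives $\reg_{\cX}^T + \reg_{\cY}^T \geq 0$ (item 1 of \Cref{proposition:positive_regret}, whose proof is already written for general convex compact sets), and summing the two \rvu bounds with the estimate $\|\mat{A}(\vec{y}^{(t)} - \vec{y}^{(t-1)})\|_2 \leq \|\mat{A}\|_2 \|\vec{y}^{(t)} - \vec{y}^{(t-1)}\|_2$ under $\eta \leq \frac{1}{4\|\mat{A}\|_2}$ yields the bounded second-order path length of the primary iterates; this is exactly \Cref{proposition:EFGs}, up to immaterial constants.

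The gap is the step you yourself flag as ``the main obstacle,'' and it is genuine: the quantities your Nash-gap argument needs, namely $\|\vec{y}^{(t)} - \widehat{\vec{y}}^{(t-1)}\|_2$ (and $\|\vec{x}_i^{(t)} - \widehat{\vec{x}}_i^{(t)}\|_2$, to transfer the guarantee from the hat iterate to the played iterate), are \emph{not} controlled by $\sum_t \|\vec{x}^{(t)} - \vec{x}^{(t-1)}\|_2^2 = O(1)$. Bounded consecutive movement of the primary sequence implies nothing vanishing about the offset between \eqref{eq:OMD}'s two interleaved sequences; the only bound available there is the stability estimate $\|\vec{y}^{(t)} - \widehat{\vec{y}}^{(t-1)}\|_2 = O(\eta)$, which is a constant and hence useless for an $O(1/\sqrt{T})$ exploitability claim. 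The paper avoids this dead end by never routing the final step through the plain \rvu inequality: in \Cref{theorem:rate-OMD-full} (which \Cref{proposition:rate-EFGs-full} invokes), the summation is redone with the \emph{refined} \rvu bound of \Cref{proposition:refined-rvu}, whose negative terms are $\frac{1}{4\eta}\sum_{t} \left( \| \vec{x}_i^{(t)} - \widehat{\vec{x}}_i^{(t)}\|^2 + \| \vec{x}_i^{(t)} - \widehat{\vec{x}}_i^{(t-1)}\|^2 \right)$ rather than consecutive movements. Nonnegativity of the sum of regrets then bounds $\sum_{t}\sum_i \left( \| \vec{x}_i^{(t)} - \widehat{\vec{x}}_i^{(t)}\|^2 + \| \vec{x}_i^{(t)} - \widehat{\vec{x}}_i^{(t-1)}\|^2 \right) \leq 8 \sum_i \Omega_i$, pigeonhole produces an iterate where both quantities are $O(1/\sqrt{T})$, the triangle inequality controls $\|\widehat{\vec{x}}_i^{(t)} - \widehat{\vec{x}}_i^{(t-1)}\|$, and \Cref{claim:close-Nash} (the variational-inequality optimality condition of the \emph{hat} update plus $G$-smoothness of the regularizer) delivers the approximate Nash equilibrium, with the $\|\mat{A}\|_2$ and diameter factors absorbed into the $O(\cdot)$ as in \Cref{proposition:rate-EFGs-full}. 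With \Cref{proposition:refined-rvu} in hand your outline closes essentially verbatim; without it, the last step does not go through.
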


We are not aware of any prior \emph{polynomial-time} guarantees for the last iterate of no-regret learning dynamics in EFGs. Asymptotic last-iterate convergence also follows immediately. \Cref{proposition:rate-EFGs} is verified through experiments on benchmark EFGs in \Cref{section:experiments} (\Cref{fig:experiments}).

\begin{remark}[Convex-Concave Games]
    \label{remark:convex-concave}
    Our framework also applies to smooth min-max optimization via a well-known linearization trick; see \Cref{appendix:convex-concave}. Implications for unconstrained setups are also immediate; see \Cref{remark:unconstrained}.
\end{remark}


\subsection{Convergence of the Social Welfare}

We conclude this section with a surprising new result presented in \Cref{theorem:smooth}. Let us first recall the concept of a \emph{smooth game}.

\begin{definition}[Smooth Game; \citep{Roughgarden15:Intrinsic}]
    A utility-maximizing game is $(\lambda, \mu)$-\emph{smooth} if for any two action profiles $\vec{a}, \Vec{a}^* \in \cA$,
    \begin{equation*}
        \sum_{i=1}^n u_i(a_i^*, \vec{a}_{-i}) \geq \lambda \sw(\Vec{a}^*) - \mu \sw(\vec{a}).
    \end{equation*}
\end{definition}
The smoothness condition imposes a bound on the ``externality'' of any unilateral deviation. The \emph{robust price of anarchy (PoA)} of a game $\Gamma$ is defined as $\rho \defeq \sup_{\lambda, \mu} \lambda/(1 + \mu)$, where the supremum is taken over all possible $\lambda, \mu$ for which $\Gamma$ is $(\lambda, \mu)$-smooth. Smooth games with favorable smoothness parameters include simultaneous second-price auctions \citep{Christodoulou16:Bayesian}, valid utility games \citep{Vetta02:Nash}, and congestion games with affine costs \citep{Awerbuch13:The,Christodoulou05:The}. We refer to \citep{Roughgarden15:Intrinsic} for an additional discussion.

The importance of Roughgarden's smoothness framework is that it guarantees convergence (in a time average sense) of no-regret learning to outcomes with \emph{approximately optimal welfare}, where the approximation depends on the robust PoA. In particular, the convergence is controlled by the sum of the players' regrets \citep[Proposition 2]{Syrgkanis15:Fast}. We use this property to obtain the following result.
\begin{theorem}[Abridged; Full Version in \Cref{theorem:smooth-full}]
    \label{theorem:smooth}
    Suppose that each player in a game $\Gamma$ employs \eqref{eq:OMD} with a suitable regularizer and learning rate $\eta > 0$. For any $\gamma > 0$ and a sufficiently large number of iterations $T = O(1/\gamma^2)$, either of the following occurs:
    \begin{enumerate}[nosep]
        \item There is an iterate $\vec{x}^{(t)}$ with $t \in [T]$ which is an $O(\gamma)$-Nash equilibrium;
        \item Otherwise,
              \begin{equation*}
                  \frac{1}{T} \sum_{t=1}^T \sw(\vec{x}^{(t)}) \geq \frac{\lambda}{1 + \mu} \opt + \frac{1}{1 + \mu} \frac{\gamma^2}{16\eta},
              \end{equation*}
              where $(\lambda, \mu)$ are the smoothness parameters of $\Gamma$.
    \end{enumerate}
\end{theorem}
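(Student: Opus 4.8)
The plan is to combine the standard smoothness (price-of-anarchy) welfare inequality with a \emph{refined} bound on the aggregate regret that exploits the (underappreciated) fact that regrets can be negative. First I would invoke the $(\lambda,\mu)$-smoothness of $\Gamma$ exactly as in \citep[Proposition~2]{Syrgkanis15:Fast}: taking the welfare-maximizing profile $\vec{a}^*$ as the fixed comparator in each player's regret and averaging the smoothness inequality over $\vec{a}\sim\vec{x}^{(t)}$, then summing over players and rearranging, gives
\begin{equation*}
    \frac{1}{T}\sum_{t=1}^T\sw(\vec{x}^{(t)}) \;\geq\; \frac{\lambda}{1+\mu}\opt \;-\; \frac{1}{(1+\mu)T}\sum_{i=1}^n\reg_i^T.
\end{equation*}
Thus it suffices to show that, \emph{unless} some iterate is an $O(\gamma)$-Nash equilibrium, the aggregate regret is not merely $O(1)$ but in fact at most $-\tfrac{\gamma^2 T}{16\eta}$; substituting this negative bound into the inequality above immediately produces the second alternative.

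To obtain a negative aggregate-regret bound I would reuse the computation underlying \Cref{theorem:bounded_traj}: summing the \rvu bounds of \Cref{proposition:rvu} over all players, using the Lipschitz estimate $\|\vec{u}_i^{(t)}-\vec{u}_i^{(t-1)}\|_\infty \le \sum_{j\neq i}\|\vec{x}_j^{(t)}-\vec{x}_j^{(t-1)}\|_1$ together with Cauchy--Schwarz, and choosing $\eta$ small enough (the condition $\gamma_i\ge 2(n-1)\sum_{j\neq i}\beta_j$ of \Cref{theorem:bounded_traj}) that the positive variation terms are dominated by half of the negative stability terms. For \eqref{eq:OMD} this yields
\begin{equation*}
    \sum_{i=1}^n\reg_i^T \;\le\; \sum_{i=1}^n\alpha_i \;-\; \frac{1}{16\eta}\sum_{i=1}^n\sum_{t=1}^T\|\vec{x}_i^{(t)}-\vec{x}_i^{(t-1)}\|_1^2,
\end{equation*}
so that a large total second-order path length forces the aggregate regret to be very negative. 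Crucially, unlike in \Cref{theorem:bounded_traj}, here I do \emph{not} assume $\sum_i\reg_i^T\ge 0$; I instead use this inequality in the opposite direction.

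The crux is a \emph{stability lemma} converting ``far from equilibrium'' into ``large movement''. Using the first-order optimality condition of the \eqref{eq:OMD} update $\vec{x}_i^{(t+1)}=\argmax_{\vec{x}}\{\langle\vec{x},\vec{u}_i^{(t)}\rangle-\tfrac1\eta D_{\cR}(\vec{x},\widehat{\vec{x}}_i^{(t)})\}$ together with the gradient-Lipschitzness of a \emph{smooth} regularizer, I would bound player $i$'s instantaneous best-response gap $g_i^{(t)} \defeq \max_{\vec{x}_i'}\langle\vec{u}_i^{(t)},\vec{x}_i'-\vec{x}_i^{(t)}\rangle$ by a constant multiple of the local iterate movement around time $t$ (the quantities $\|\vec{x}_i^{(t+1)}-\vec{x}_i^{(t)}\|$ and $\|\vec{x}_i^{(t+1)}-\widehat{\vec{x}}_i^{(t)}\|$, which themselves reduce to consecutive path lengths and prediction errors). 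Consequently, if $\vec{x}^{(t)}$ fails to be an $O(\gamma)$-Nash equilibrium, then $\max_i g_i^{(t)}=\Omega(\gamma)$, and the corresponding step satisfies $\|\vec{x}^{(t)}-\vec{x}^{(t-1)}\|=\Omega(\gamma)$.

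The dichotomy then follows. If some step has movement below a threshold of order $\gamma$, the stability lemma certifies that the corresponding iterate is an $O(\gamma)$-Nash equilibrium, giving the first alternative. Otherwise every step moves by $\Omega(\gamma)$, so the total path length in the second display is $\Omega(\gamma^2 T)$, whence $\sum_i\reg_i^T \le \sum_i\alpha_i - \Omega(\gamma^2 T/\eta)$; choosing the movement threshold a suitable constant multiple of $\gamma$ and $T=\Theta(1/\gamma^2)$ large enough to absorb the $O(1)$ term $\sum_i\alpha_i$ leaves $\sum_i\reg_i^T\le-\tfrac{\gamma^2T}{16\eta}$, which through the welfare inequality yields the second alternative. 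I expect the main obstacle to be the stability lemma: carefully handling the two-iterate ($\vec{x}^{(t)}$ versus $\widehat{\vec{x}}^{(t)}$) structure of \eqref{eq:OMD} and the prediction terms, and—crucially—this is precisely where \emph{smoothness of the regularizer} enters (to pass from dual-norm gradient differences to primal movement), mirroring the smoothness requirement already flagged for \Cref{theorem:rate-OMD}.
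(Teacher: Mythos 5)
Your proposal is correct in its overall architecture and matches the paper's proof skeleton: combine \Cref{proposition:smooth} (the smoothness/welfare inequality) with a \emph{negative} aggregate-regret bound derived from the \rvu machinery, and certify the near-Nash alternative through the variational-inequality optimality condition of \eqref{eq:OMD} together with smoothness of the regularizer. The one genuine difference is \emph{which quantity drives the dichotomy}. The paper (\Cref{theorem:smooth-full}) does not use the standard \rvu bound in terms of consecutive primary differences $\|\vec{x}_i^{(t)}-\vec{x}_i^{(t-1)}\|_1^2$; it instead invokes the refined bound of \Cref{proposition:refined-rvu}, whose negative terms are the primary--secondary gaps $\|\vec{x}_i^{(t)}-\widehat{\vec{x}}_i^{(t)}\|^2+\|\vec{x}_i^{(t)}-\widehat{\vec{x}}_i^{(t-1)}\|^2$. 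These are \emph{exactly} the quantities consumed by the Nash-certification step (\Cref{claim:close-Nash}), so the case split ``some $t$ has small gaps'' versus ``all $t$ have gaps $>\gamma^2$'' plugs directly into both branches with no further work. Your route, based on consecutive primary differences, needs the extra bridging step you flag as the ``stability lemma'': small $\|\vec{x}_i^{(t)}-\vec{x}_i^{(t-1)}\|$ for a single player does \emph{not} by itself control $\|\vec{x}_i^{(t)}-\widehat{\vec{x}}_i^{(t)}\|$; one must use a prox-perturbation (non-expansiveness) argument, $\|\vec{x}_i^{(t)}-\widehat{\vec{x}}_i^{(t)}\|\le \eta\,\|\vec{u}_i^{(t)}-\vec{u}_i^{(t-1)}\|_*$, combined with \Cref{claim:util-infty}, so that the certification at time $t$ requires \emph{all} players' movements at steps $t$ \emph{and} $t-1$ to be $O(\gamma)$. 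Consequently, in the ``otherwise'' branch you only get that at least every other step moves by $\Omega(\gamma)$, which degrades constants but still yields $\sum_i \reg_i^T \le -\Omega(\gamma^2 T/\eta)$ and hence the welfare improvement; since the statement is abridged (with $O(\gamma)$-Nash in the first alternative), this is acceptable. In short: your proof can be completed, but the bridging lemma is real work that the paper's choice of the refined \rvu bound sidesteps entirely, and it is the main thing you would need to write out carefully.
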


In words, the dynamics either approach arbitrarily close to a Nash equilibrium, or the time average of the social welfare outperforms the robust price of anarchy. Either of these implications is remarkable. 

\section{Convergence with the Potential Method}
\label{section:potential}

In this section we show optimal regret bounds (\Cref{theorem:optregpot}) and last-iterate rates (\Cref{theorem:ratepotential}) for the fundamental class of \emph{potential games} \cite{Monderer96:Potential,Rosenthal73:A} under mirror descent (MD) with suitable regularizers. In fact, our approach is general enough to capture under a unifying framework distributed learning in Fisher's market model~\citep{Birnbaum11:Distributed}, while we expect that further applications will be identified in the future. Finally, we show that our approach can also be applied in \emph{near-potential} games, in the precise sense of \citep{Candogan13:Dynamics}, showing convergence to approximate Nash equilibria.


We commence by formally describing the class of games we are considering in this section; in \Cref{proposition:potential} we show that it incorporates typical potential games.


\begin{definition}
    \label{def:weighted_potential}
    Let $\Phi: \prod_{i=1}^n \Delta(\cA_i) \ni (\vec{x}_1, \dots, \vec{x}_n) \mapsto \R$ be a bounded function, with $\max_{\vec{x}} |\Phi(\vec{x})| \leq \phimax$, for which there exists $L > 0$ such that for any $\Vec{x}, \widetilde{\Vec{x}}$,
    \begin{equation}
        \label{eq:onesided-smooth}
        \Phi(\vec{x})  \leq \Phi(\widetilde{\vec{x}})  - \langle \nabla_{\vec{x}} \Phi(\vec{x}), \widetilde{\vec{x}} - \vec{x} \rangle + L \| \widetilde{\vec{x}} - \vec{x} \|_2^2.
    \end{equation}
    Moreover, let $g_i$ be a strictly increasing function for each $i \in [n]$. A game $\Gamma$ is $(g_1, \dots, g_n)$-\emph{potential} if for all $i \in [n]$, $a_i \in \cA_i$, and $\vec{x}_{-i} \in \prod_{j \neq i} \Delta(\cA_j)$ we have that
    \begin{equation}
        \label{eq:global-align}
        \frac{\partial \Phi(\vec{x})}{\partial \vec{x}_i(a_i)} = g_i ( u_i(a_i, \vec{x}_{-i})).
    \end{equation}
\end{definition}
A few remarks are in order. First, \eqref{eq:onesided-smooth} imposes a ``one-sided'' smoothness condition. This relaxation turns out to be crucial to encompass settings such as linear Fisher markets \citep{Birnbaum11:Distributed}. Moreover, \eqref{eq:global-align} prescribes applying a monotone transformation to the utility. While the identity mapping $g_i : x \mapsto x$ suffices to cover typical potential games, a logarithmic transformation is required to capture the celebrated \emph{proportional response} dynamics \citep{Birnbaum11:Distributed, Wu07:Proportional} in markets.

\begin{restatable}{proposition}{potential}
    \label{proposition:potential}
    Let $\Gamma$ be a game for which there exists a function $\Phi : \prod_{i=1}^n \cA_i \to \R$ with
    \begin{equation*}
        \Phi(\vec{a}) - \Phi(a_i', \vec{a}_{-i}) = \vec{w}_i( u_i(\vec{a}) - u_i(a_i', \vec{a}_{-i})),
    \end{equation*}
    where $\vec{w} \in \R^n_{> 0}$. Then, $\Gamma$ is a potential game in the sense of \Cref{def:weighted_potential} with $L = \frac{1}{2} \phimax \sum_{i=1}^n |\cA_i|$.
\end{restatable}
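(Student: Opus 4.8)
The plan is to take the mixed-strategy potential to be the \emph{multilinear extension} of the given pure-profile potential, namely $\Phi(\vec{x}) \defeq \E_{\vec{a} \sim \vec{x}}[\Phi(\vec{a})] = \sum_{\vec{a}} \Phi(\vec{a}) \prod_{i=1}^n \vec{x}_i(a_i)$, and to verify the three requirements of \Cref{def:weighted_potential} for this object: boundedness, the gradient-alignment identity \eqref{eq:global-align}, and the one-sided smoothness \eqref{eq:onesided-smooth}. Boundedness is immediate, since $\Phi(\vec{x})$ is a convex combination of the pure values $\Phi(\vec{a})$, each of magnitude at most $\phimax$, so $\max_{\vec{x}} |\Phi(\vec{x})| \leq \phimax$.

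For the alignment identity, first I would record that multilinearity gives $\frac{\partial \Phi(\vec{x})}{\partial \vec{x}_i(a_i)} = \E_{\vec{a}_{-i} \sim \vec{x}_{-i}}[\Phi(a_i, \vec{a}_{-i})]$. The weighted-potential hypothesis states that, for every fixed $\vec{a}_{-i}$, the map $a_i \mapsto \Phi(a_i, \vec{a}_{-i}) - \vec{w}_i u_i(a_i, \vec{a}_{-i})$ is constant in $a_i$; call this constant $c_i(\vec{a}_{-i})$. Substituting $\Phi(a_i, \vec{a}_{-i}) = \vec{w}_i u_i(a_i, \vec{a}_{-i}) + c_i(\vec{a}_{-i})$ and taking the expectation yields $\frac{\partial \Phi(\vec{x})}{\partial \vec{x}_i(a_i)} = \vec{w}_i\, u_i(a_i, \vec{x}_{-i}) + C_i(\vec{x}_{-i})$, where $C_i(\vec{x}_{-i}) \defeq \E_{\vec{a}_{-i} \sim \vec{x}_{-i}}[c_i(\vec{a}_{-i})]$ is independent of $a_i$. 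I would then set $g_i : u \mapsto \vec{w}_i u$, which is strictly increasing because $\vec{w}_i > 0$; this matches \eqref{eq:global-align} up to the additive term $C_i(\vec{x}_{-i})$.

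The main obstacle is precisely this residual term $C_i(\vec{x}_{-i})$: it depends on $\vec{x}_{-i}$ (one can exhibit $2 \times 2$ potential games where it is non-constant), so \eqref{eq:global-align} does not hold as a literal pointwise equality, and in fact the field $\vec{x} \mapsto (\vec{w}_i u_i(a_i, \vec{x}_{-i}))_{i, a_i}$ is not conservative, so no choice of mixed potential makes it exact. The resolution is that $C_i(\vec{x}_{-i})$ is constant across the coordinates $a_i \in \cA_i$, and hence lies along the all-ones direction annihilated by the simplex constraint; it therefore leaves every mirror-descent update invariant and is immaterial to the dynamics. Thus \eqref{eq:global-align} should be read modulo such per-player additive constants, and under this reading it holds with $g_i(u) = \vec{w}_i u$. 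I would flag this as the one delicate point of the argument.

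Finally, for the smoothness bound I would exploit that the multilinear extension has no quadratic terms within any single player's block, so the diagonal blocks of $\Hess{\Phi}$ vanish, while each off-diagonal entry equals $\frac{\partial^2 \Phi}{\partial \vec{x}_i(a_i) \partial \vec{x}_j(a_j)} = \E_{\vec{a}_{-\{i,j\}}}[\Phi(a_i, a_j, \vec{a}_{-\{i,j\}})]$ and is bounded in absolute value by $\phimax$. Bounding the operator norm by the Frobenius norm over the at most $(\sum_i |\cA_i|)^2$ entries gives $\|\Hess{\Phi}(\xi)\|_2 \leq \|\Hess{\Phi}(\xi)\|_F \leq \phimax \sum_{i=1}^n |\cA_i| = 2L$ at every $\xi$. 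A second-order Taylor expansion of $\Phi(\widetilde{\vec{x}})$ about $\vec{x}$ then gives $\Phi(\vec{x}) = \Phi(\widetilde{\vec{x}}) - \langle \nabla \Phi(\vec{x}), \widetilde{\vec{x}} - \vec{x} \rangle - \tfrac{1}{2}(\widetilde{\vec{x}} - \vec{x})^\top \Hess{\Phi}(\xi)(\widetilde{\vec{x}} - \vec{x})$, and bounding the quadratic form by $\tfrac{1}{2}\|\Hess{\Phi}(\xi)\|_2 \|\widetilde{\vec{x}} - \vec{x}\|_2^2 \leq L \|\widetilde{\vec{x}} - \vec{x}\|_2^2$ establishes \eqref{eq:onesided-smooth} with the claimed $L = \tfrac{1}{2}\phimax \sum_{i=1}^n |\cA_i|$.
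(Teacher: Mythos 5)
Your proposal follows essentially the same route as the paper's proof: take the multilinear extension $\Phi(\vec{x}) \defeq \E_{\vec{a}\sim\vec{x}}[\Phi(\vec{a})]$, verify boundedness trivially, identify the partial derivatives with the (scaled) utilities, and establish \eqref{eq:onesided-smooth} by noting that the Hessian of the multilinear extension has vanishing diagonal blocks and off-diagonal entries bounded by $\phimax$, giving $\|\Hess{\Phi}\|_2 \leq \phimax \sum_{i=1}^n |\cA_i| = 2L$ (the paper bounds the operator norm directly via Jensen rather than through the Frobenius norm, and invokes a Lipschitz-gradient fact rather than a mean-value Taylor expansion, but these are immaterial differences). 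The one place where you genuinely diverge is the point you flagged as delicate, and you are right to flag it: the paper's proof first shows (in \Cref{claim:mixedpotential}) that the weighted-potential identity lifts to mixed profiles, and then asserts $\frac{\partial \Phi(\vec{x})}{\partial \vec{x}_i(a_i)} = \vec{w}_i \frac{\partial u_i(\vec{x})}{\partial \vec{x}_i(a_i)}$ as an immediate consequence; but a function that is linear in $\vec{x}_i$ and constant on $\Delta(\cA_i)$ need not have zero ambient gradient---its gradient is $C_i(\vec{x}_{-i})\vec{1}$, exactly the residual you isolate (your implicit counterexample, shifting $u_i$ by a function of $\vec{a}_{-i}$ alone, shows $C_i$ can indeed be nonzero and $\vec{x}_{-i}$-dependent). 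So the literal pointwise reading of \eqref{eq:global-align} fails in general, and the paper's proof glosses over this. Your resolution is the correct one: the residual is a multiple of the all-ones vector, hence is annihilated by inner products with differences of simplex points, so it leaves the \eqref{eq:MD} argmax and the potential-increase argument of \Cref{theorem:monotone} (which only ever uses $\langle \nabla_{\vec{x}_i}\Phi, \vec{x}_i - \vec{x}_i^{(t)}\rangle$ with both points in $\Delta(\cA_i)$) unchanged. In short: your proof is correct, matches the paper's strategy, and is in fact more careful than the paper's own treatment of the alignment identity.
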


In this context, we will assume that all players update their strategies using mirror descent (MD) for $t \geq 0$:
\begin{equation}
    \tag{MD}
    \label{eq:MD}
    \vec{x}_i^{(t+1)} \defeq \argmax_{\vec{x} \in \Delta(\cA_i)} \eta \langle g_i(\vec{u}_i^{(t)}), \vec{x} \rangle - D_{\mathcal{R}_i}(\vec{x}, \vec{x}_i^{(t)}).
\end{equation}
In accordance to \Cref{def:weighted_potential}, players apply (coordinate-wise) the transformation $g_i$ to the observed utility. Importantly, we will allow players to employ different regularizers, as long as $\cR_i$ is $1$-strongly convex with respect to $\|\cdot\|_2$. This trivially holds under the Euclidean DGF, while it also holds under negative entropy (Pinsker's inequality). 
We are now ready to establish that the potential function increases along non-stationary orbits:

\begin{restatable}{theorem}{potentialmonotone}
    \label{theorem:monotone}
    Suppose that each player employs \eqref{eq:MD} with a $1$-strongly convex regularizer with respect to $\|\cdot\|_2$, and learning rate $\eta = \frac{1}{2L}$, where $L$ is defined as in \Cref{proposition:potential}. Then, for any $t \geq 1$,
    \begin{equation}
        \label{eq:monotone}
        \Phi(\vec{x}^{(t+1)}) - \Phi(\vec{x}^{(t)}) \geq \frac{1}{2\eta} \sum_{i=1}^n \| \vec{x}_i^{(t+1)} - \vec{x}_i^{(t)} \|_2^2 \geq 0.
    \end{equation}
\end{restatable}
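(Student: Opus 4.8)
The plan is to run a one-step ``descent lemma'' argument linking the change in potential to the per-player progress of the proximal step. The starting point is the one-sided smoothness bound \eqref{eq:onesided-smooth}: instantiating it with $\vec{x} = \vec{x}^{(t)}$ and $\widetilde{\vec{x}} = \vec{x}^{(t+1)}$ and rearranging yields
\[
\Phi(\vec{x}^{(t+1)}) - \Phi(\vec{x}^{(t)}) \geq \langle \nabla_{\vec{x}} \Phi(\vec{x}^{(t)}), \vec{x}^{(t+1)} - \vec{x}^{(t)} \rangle - L \| \vec{x}^{(t+1)} - \vec{x}^{(t)} \|_2^2 .
\]
So it suffices to lower bound the first-order term by a positive multiple of $\sum_{i} \|\vec{x}_i^{(t+1)} - \vec{x}_i^{(t)}\|_2^2$ large enough to dominate the quadratic penalty.

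Next I would identify the gradient of $\Phi$ with the feedback driving \eqref{eq:MD}. By \eqref{eq:global-align}, the block of $\nabla_{\vec{x}}\Phi(\vec{x}^{(t)})$ corresponding to player $i$ is exactly $g_i(\vec{u}_i^{(t)})$, the transformed utility vector that player $i$ feeds into its mirror-descent step; hence the gradient decomposes across players and
\[
\langle \nabla_{\vec{x}} \Phi(\vec{x}^{(t)}), \vec{x}^{(t+1)} - \vec{x}^{(t)} \rangle = \sum_{i=1}^n \langle g_i(\vec{u}_i^{(t)}), \vec{x}_i^{(t+1)} - \vec{x}_i^{(t)} \rangle .
\]
I would then bound each summand using the first-order optimality (variational inequality) of the proximal update: since $\vec{x}_i^{(t+1)}$ maximizes $\eta \langle g_i(\vec{u}_i^{(t)}), \cdot \rangle - D_{\cR_i}(\cdot, \vec{x}_i^{(t)})$ over the simplex, testing the optimality condition against the feasible comparison point $\vec{x}_i^{(t)}$ gives
\[
\eta \langle g_i(\vec{u}_i^{(t)}), \vec{x}_i^{(t+1)} - \vec{x}_i^{(t)} \rangle \geq \langle \nabla \cR_i(\vec{x}_i^{(t+1)}) - \nabla \cR_i(\vec{x}_i^{(t)}),\, \vec{x}_i^{(t+1)} - \vec{x}_i^{(t)} \rangle \geq \| \vec{x}_i^{(t+1)} - \vec{x}_i^{(t)} \|_2^2,
\]
where the last inequality is exactly the $1$-strong convexity of $\cR_i$ with respect to $\|\cdot\|_2$.

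Finally I would assemble the pieces. Dividing the previous display by $\eta$, summing over $i$, and substituting into the descent inequality gives
\[
\Phi(\vec{x}^{(t+1)}) - \Phi(\vec{x}^{(t)}) \geq \left( \frac{1}{\eta} - L \right) \sum_{i=1}^n \| \vec{x}_i^{(t+1)} - \vec{x}_i^{(t)} \|_2^2,
\]
using that the joint displacement splits as $\|\vec{x}^{(t+1)} - \vec{x}^{(t)}\|_2^2 = \sum_{i} \|\vec{x}_i^{(t+1)} - \vec{x}_i^{(t)}\|_2^2$. Plugging in $\eta = \frac{1}{2L}$ makes $\frac{1}{\eta} - L = L = \frac{1}{2\eta}$, which is precisely \eqref{eq:monotone}, and the right-hand side is manifestly nonnegative. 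The main thing to get right is the orientation of the optimality condition of the \eqref{eq:MD} step and matching it against the gradient direction coming from \eqref{eq:onesided-smooth}: both must be anchored at the base point $\vec{x}^{(t)}$ for the first-order term to telescope against strong convexity, and the choice $\eta = \frac{1}{2L}$ is calibrated exactly so the proximal progress strictly beats the smoothness penalty rather than merely tying it.
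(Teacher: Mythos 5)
Your proof is correct and follows the same skeleton as the paper's: anchor the one-sided smoothness inequality \eqref{eq:onesided-smooth} at $\vec{x}^{(t)}$, lower bound the first-order term player-by-player via the optimality of the \eqref{eq:MD} step (using \eqref{eq:global-align} to identify $g_i(\vec{u}_i^{(t)})$ with the gradient block), and calibrate $\eta = \frac{1}{2L}$. The only real difference is how strong convexity enters. You test the variational-inequality form of optimality at $\vec{x}_i^{(t)}$ and invoke strong monotonicity of $\nabla \cR_i$, yielding $\eta \langle g_i(\vec{u}_i^{(t)}), \vec{x}_i^{(t+1)} - \vec{x}_i^{(t)} \rangle \geq \|\vec{x}_i^{(t+1)} - \vec{x}_i^{(t)}\|_2^2$, so the $L\|\cdot\|_2^2$ penalty is cancelled directly in norm form, with $\frac{1}{\eta} - L = \frac{1}{2\eta}$. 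The paper instead first upgrades the smoothness penalty to Bregman divergences, $L\|\vec{x}_i^{(t+1)} - \vec{x}_i^{(t)}\|_2^2 \leq 2L\, D_{\cR_i}(\vec{x}_i^{(t+1)}, \vec{x}_i^{(t)})$, and then uses quadratic growth of the $\frac{1}{\eta}$-strongly concave proximal objective to get
\begin{equation*}
    \left\langle \nabla_{\vec{x}_i} \Phi(\vec{x}^{(t)}), \vec{x}_i^{(t+1)} - \vec{x}_i^{(t)} \right\rangle - \frac{1}{\eta} D_{\cR_i}(\vec{x}_i^{(t+1)}, \vec{x}_i^{(t)}) \geq \frac{1}{2\eta} \| \vec{x}_i^{(t+1)} - \vec{x}_i^{(t)} \|_2^2,
\end{equation*}
so that Bregman terms cancel against Bregman terms. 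Both arguments are valid and give identical constants; yours is marginally more elementary, but note two things. First, it requires $\nabla \cR_i$ to exist at the iterates—acceptable under the paper's standing assumptions, and the paper itself uses the same variational-inequality form elsewhere (\emph{e.g.}, \Cref{claim:approx-nash}), while quadratic growth avoids differentiating $\cR_i$ at the maximizer altogether. Second, the paper's Bregman-form intermediate inequality \eqref{eq:new-onesided} is the version that ports to settings where one-sided smoothness holds with respect to a Bregman divergence rather than $\|\cdot\|_2^2$ (\emph{e.g.}, the KL divergence in the Fisher-market application of \Cref{appendix:Fisher}) and is reused in the OMWU and near-potential proofs, which is presumably why the paper routes through it.
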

We recall that for large values of learning rate $\eta$ variants of \eqref{eq:MD} are known to exhibit chaotic behavior in potential games~\citep{Bielawski21:Follow,Palaiopanos17:Multiplicative}. \Cref{theorem:monotone} also implies the following boundedness property for the trajectories from a direct telescopic summation of \eqref{eq:monotone}:
\begin{corollary}
    \label{corollary:boundedtraj-pot}
    In the setting of \Cref{theorem:monotone},
    \begin{equation*}
        \frac{1}{2\eta} \sum_{t=1}^{T-1} \| \vec{x}^{(t+1)} - \vec{x}^{(t)} \|_2^2 \leq \Phi(\vec{x}^{(T)}) - \Phi(\vec{x}^{(1)}) \leq 2 \phimax.
    \end{equation*}
\end{corollary}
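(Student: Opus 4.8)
The plan is to obtain the claim by a direct telescoping of the per-step inequality \eqref{eq:monotone} established in \Cref{theorem:monotone}, followed by invoking the uniform boundedness of the potential $\Phi$ from \Cref{def:weighted_potential}.

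First I would fix any horizon $T \in \N$ and sum the left inequality of \eqref{eq:monotone} over $t = 1, \dots, T-1$. Since each summand $\Phi(\vec{x}^{(t+1)}) - \Phi(\vec{x}^{(t)})$ is a consecutive difference, the left-hand side collapses telescopically to $\Phi(\vec{x}^{(T)}) - \Phi(\vec{x}^{(1)})$. On the right-hand side I would use the one observation that requires any thought at all: because the joint strategy $\vec{x}^{(t)} = (\vec{x}_1^{(t)}, \dots, \vec{x}_n^{(t)})$ is the concatenation of the players' strategies, the squared Euclidean norm of the joint displacement decomposes coordinate-wise across players, so that $\| \vec{x}^{(t+1)} - \vec{x}^{(t)} \|_2^2 = \sum_{i=1}^n \| \vec{x}_i^{(t+1)} - \vec{x}_i^{(t)} \|_2^2$. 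This identifies the per-player sum appearing in \eqref{eq:monotone} with the joint displacement in the corollary statement, yielding
\begin{equation*}
    \frac{1}{2\eta} \sum_{t=1}^{T-1} \| \vec{x}^{(t+1)} - \vec{x}^{(t)} \|_2^2 \leq \Phi(\vec{x}^{(T)}) - \Phi(\vec{x}^{(1)}).
\end{equation*}

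For the final inequality I would simply apply the boundedness hypothesis $\max_{\vec{x}} |\Phi(\vec{x})| \leq \phimax$ from \Cref{def:weighted_potential}: this gives $\Phi(\vec{x}^{(T)}) \leq \phimax$ and $-\Phi(\vec{x}^{(1)}) \leq \phimax$, hence $\Phi(\vec{x}^{(T)}) - \Phi(\vec{x}^{(1)}) \leq 2\phimax$, completing the chain of inequalities.

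There is essentially no hard step here; the corollary is a routine consequence of the monotonicity result. The only point worth stating explicitly—so that the normalization is unambiguous—is the norm decomposition identifying $\sum_i \| \vec{x}_i^{(t+1)} - \vec{x}_i^{(t)} \|_2^2$ with $\| \vec{x}^{(t+1)} - \vec{x}^{(t)} \|_2^2$ on the product space, which is exactly why the factor $\frac{1}{2\eta}$ carries over unchanged. Everything else is telescoping and the $a$ priori bound on $|\Phi|$.
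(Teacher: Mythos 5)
Your proposal is correct and matches the paper's proof exactly: the paper derives this corollary by a direct telescopic summation of \eqref{eq:monotone} together with the bound $\max_{\vec{x}}|\Phi(\vec{x})| \leq \phimax$, which is precisely your argument. Your explicit remark that $\|\vec{x}^{(t+1)} - \vec{x}^{(t)}\|_2^2 = \sum_{i=1}^n \|\vec{x}_i^{(t+1)} - \vec{x}_i^{(t)}\|_2^2$ on the product space is a detail the paper leaves implicit, and stating it is harmless and correct.
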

In the sequel, to argue about the regret incurred by each player and the convergence to Nash equilibria, we tacitly assume that $g_i$ is the identity map. The first implication of \Cref{corollary:boundedtraj-pot} is an $O(\sqrt{T})$ bound on the individual regrets.
\begin{restatable}{corollary}{regpot}
    \label{corollary:optreg-potential}
    In the setting of \Cref{theorem:monotone}, with $\cR_i \defeq \frac{1}{2} \|\vec{x}\|_2^2$, it holds that the regret of each player $i \in [n]$ is such that $\reg_i^T = O(\sqrt{T})$.
\end{restatable}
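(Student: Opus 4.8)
The plan is to combine the standard regret guarantee for mirror descent under the Euclidean regularizer with the uniform second-order path-length bound supplied by \Cref{corollary:boundedtraj-pot}. The first step is to invoke the textbook three-point (``stability'') analysis of \eqref{eq:MD}: for player $i$ with $\cR_i = \frac12\|\cdot\|_2^2$ and any comparator $\vec{x}^* \in \Delta(\cA_i)$, the per-step inequality
\[
  \eta \langle \vec{u}_i^{(t)}, \vec{x}^* - \vec{x}_i^{(t)} \rangle \le D_{\cR_i}(\vec{x}^*, \vec{x}_i^{(t)}) - D_{\cR_i}(\vec{x}^*, \vec{x}_i^{(t+1)}) + \eta \langle \vec{u}_i^{(t)}, \vec{x}_i^{(t+1)} - \vec{x}_i^{(t)} \rangle - D_{\cR_i}(\vec{x}_i^{(t+1)}, \vec{x}_i^{(t)})
\]
telescopes after summation to
\[
  \reg_i^T \le \frac{D_{\cR_i}(\vec{x}^*, \vec{x}_i^{(1)})}{\eta} + \sum_{t=1}^{T} \langle \vec{u}_i^{(t)}, \vec{x}_i^{(t+1)} - \vec{x}_i^{(t)} \rangle - \frac{1}{2\eta}\sum_{t=1}^{T} \|\vec{x}_i^{(t+1)} - \vec{x}_i^{(t)}\|_2^2 .
\]
Since $\Delta(\cA_i)$ is bounded and $\eta = \frac{1}{2L}$ is a fixed game-dependent constant, the first term is $O(1)$ and the last term is nonpositive, so the whole analysis reduces to controlling the stability sum $\sum_t \langle \vec{u}_i^{(t)}, \vec{x}_i^{(t+1)} - \vec{x}_i^{(t)} \rangle$.

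The naive route — pairing each inner product against $\frac{1}{2\eta}\|\vec{x}_i^{(t+1)}-\vec{x}_i^{(t)}\|_2^2$ via Young's inequality — only yields $\reg_i^T = O(\eta\sum_t\|\vec{u}_i^{(t)}\|_2^2) = O(T)$, because the learning rate is constant. This is precisely the crux of the difficulty, and it is exactly where the potential structure must enter. Instead, I would apply Cauchy--Schwarz twice. Coordinate-wise, $\langle \vec{u}_i^{(t)}, \vec{x}_i^{(t+1)} - \vec{x}_i^{(t)} \rangle \le \|\vec{u}_i^{(t)}\|_2\,\|\vec{x}_i^{(t+1)} - \vec{x}_i^{(t)}\|_2$, and since every coordinate of $\vec{u}_i^{(t)}$ is a utility in $[-1,1]$ we have $\|\vec{u}_i^{(t)}\|_2 \le \sqrt{|\cA_i|} =: G$. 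Then Cauchy--Schwarz over the time index gives
\[
  \sum_{t=1}^T \langle \vec{u}_i^{(t)}, \vec{x}_i^{(t+1)} - \vec{x}_i^{(t)} \rangle
  \le G \sqrt{T} \left( \sum_{t=1}^{T} \|\vec{x}_i^{(t+1)} - \vec{x}_i^{(t)}\|_2^2 \right)^{1/2}.
\]

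The final step feeds in the bounded second-order path length. Since player $i$'s block is part of the joint iterate, $\|\vec{x}_i^{(t+1)} - \vec{x}_i^{(t)}\|_2^2 \le \|\vec{x}^{(t+1)} - \vec{x}^{(t)}\|_2^2$, and \Cref{corollary:boundedtraj-pot} bounds the total squared movement of the joint profile by $4\eta\phimax = O(1)$, independently of $T$. Substituting turns the stability sum into $G\sqrt{T}\cdot O(1) = O(\sqrt{T})$, whence $\reg_i^T = O(1) + O(\sqrt{T}) = O(\sqrt{T})$. The key conceptual point — and the only genuine obstacle — is recognizing that with a constant learning rate the usual regret analysis is vacuous, so one must exploit the consequence of \Cref{theorem:monotone} that the \emph{sum of squared step sizes is uniformly bounded}; once that budget is available, a single Cauchy--Schwarz over time converts the constant path-length bound into the $O(\sqrt{T})$ rate.
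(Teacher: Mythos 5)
Your proposal is correct and follows essentially the same route as the paper: the paper likewise starts from the standard mirror-descent regret bound in its stability form ($\reg^T \leq \Omega/\eta + \sum_t \|\vec{u}^{(t)}\|_* \|\vec{x}^{(t)} - \vec{x}^{(t-1)}\|$, cited as well known rather than rederived via the three-point inequality), then applies Cauchy--Schwarz over time together with the $O(1)$ second-order path-length bound of \Cref{corollary:boundedtraj-pot} to conclude $\reg_i^T = O(\sqrt{T})$. The only cosmetic differences are that you derive the per-step inequality from first principles and retain the (discarded) negative Bregman term, neither of which changes the argument.
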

Note that we establish vanishing $O(1/\sqrt{T})$ average regret under \emph{constant} learning rate, thereby deviating from the traditional regret analysis of \eqref{eq:MD}. 
More importantly, with a more involved argument we show optimal individual regret under \emph{optimistic multiplcative weights update (OMWU)}:
\begin{restatable}[Optimal Regret for Potential Games]{theorem}{optregpot}
    \label{theorem:optregpot}
    Suppose that each player employs OMWU with a sufficiently small learning rate $\eta > 0$. Then, the regret of each player $i \in [n]$ is such that $\reg_i^T = O(1)$.
\end{restatable}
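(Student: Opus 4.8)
The plan is to combine three ingredients: the optimistic \rvu property of OMWU (\Cref{proposition:rvu}), the potential-based boundedness of the second-order path length (the analogue of \Cref{corollary:boundedtraj-pot} for the optimistic variant), and the structure specific to potential games, namely that the sum of utilities aligns with the increase of the potential $\Phi$. The key conceptual point, flagged in the text as the ``converse'' of the technique in \Cref{section:optimistic}, is that here we already know the iterates converge (the path length is summable), and we want to \emph{leverage} that convergence to upgrade the individual regret from $O(\sqrt T)$ to $O(1)$.

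\medskip

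First I would establish the boundedness of the second-order path length under the optimistic dynamics. OMWU is OMD instantiated with the negative-entropy regularizer and prediction $\vec m^{(t)}=\vec u^{(t-1)}$, so it inherits an \rvu bound. For potential games one should be able to mimic the potential argument of \Cref{theorem:monotone}: show that $\Phi$ is (approximately) monotone along the optimistic trajectory, up to lower-order correction terms coming from the prediction mismatch $\|\vec u^{(t)}-\vec u^{(t-1)}\|^2$, and then telescope to conclude
\begin{equation*}
    \sum_{t=1}^{T-1}\sum_{i=1}^n \|\vec x_i^{(t+1)}-\vec x_i^{(t)}\|_2^2 \;=\; O(1).
\end{equation*}
The gradients/utility vectors are Lipschitz in the joint strategy (the game is a fixed multilinear object), so the utility-variation terms $\|\vec u_i^{(t)}-\vec u_i^{(t-1)}\|_\infty^2$ are controlled by the path length $\sum_j\|\vec x_j^{(t)}-\vec x_j^{(t-1)}\|_2^2$. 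This bootstraps: a summable path length forces the utility variation to be summable as well.

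\medskip

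Second, I would plug this back into the \rvu inequality for a single player $i$. By \Cref{proposition:rvu}, OMWU (as an OMD variant) gives
\begin{equation*}
    \reg_i^T \;\leq\; \frac{\Omega_i}{\eta} + \eta\sum_{t=1}^T \|\vec u_i^{(t)}-\vec u_i^{(t-1)}\|_\infty^2 - \frac{1}{8\eta}\sum_{t=1}^T\|\vec x_i^{(t)}-\vec x_i^{(t-1)}\|^2.
\end{equation*}
The first term is a constant. For the middle term, using the Lipschitz bound on utilities in terms of the joint path length and then invoking the summability established in the first step, $\sum_t\|\vec u_i^{(t)}-\vec u_i^{(t-1)}\|_\infty^2 = O(1)$, so the whole right-hand side is $O(1)$ uniformly in $T$. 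The negative quadratic term only helps. This yields $\reg_i^T=O(1)$ as claimed.

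\medskip

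The main obstacle I anticipate is the first step: making the potential-monotonicity argument go through for the \emph{optimistic} update rather than plain MD. The prediction step means $\vec x_i^{(t+1)}$ is computed from $\vec m^{(t+1)}=\vec u^{(t)}$ while the auxiliary iterate $\widehat{\vec x}_i^{(t+1)}$ uses $\vec u^{(t+1)}$, so the clean descent-lemma telescoping of \Cref{theorem:monotone} no longer applies verbatim; one must carefully track the discrepancy between $\vec x^{(t)}$ and $\widehat{\vec x}^{(t)}$ and absorb the resulting cross terms into the negative path-length terms available from the \rvu bound. Concretely, the ``sufficiently small $\eta$'' hypothesis is what lets the error terms (scaling like $\eta\,\|\vec u^{(t)}-\vec u^{(t-1)}\|^2$ and the $\widehat{\vec x}$-vs-$\vec x$ gap) be dominated by the guaranteed decrease $\tfrac{1}{8\eta}\|\vec x^{(t)}-\vec x^{(t-1)}\|^2$, closing the self-referential bound on the path length. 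Working over the simplex with negative entropy also requires some care with local-norm arguments (Pinsker's inequality to pass between KL and $\|\cdot\|_1^2$, and boundedness of iterates away from the boundary or a careful handling of the divergence terms) to keep all constants game-dependent but time-independent.
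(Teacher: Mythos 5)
Your proposal matches the paper's proof in both structure and key ideas: the paper first proves an optimistic analogue of the potential-monotonicity argument (its Theorem~\ref{theorem:bounded_traj-opt}), writing OMWU as mirror descent with the optimistic gradient $2\nabla_{\vec{x}_i}\Phi(\vec{x}^{(t)})-\nabla_{\vec{x}_i}\Phi(\vec{x}^{(t-1)})$ and absorbing the prediction-mismatch cross terms (via Cauchy--Schwarz, \Cref{claim:util-infty}, and Young's inequality) into the negative path-length terms for sufficiently small $\eta$, exactly the obstacle you flag and resolve the same way; it then telescopes to a bounded second-order path length and plugs this, together with \Cref{claim:util-infty}, back into the \rvu bound of \Cref{proposition:rvu} to get $O(1)$ individual regret. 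The only cosmetic difference is that the paper works with the one-line OMWU update rather than the two-sequence $(\vec{x},\widehat{\vec{x}})$ OMD form, which sidesteps the auxiliary-iterate bookkeeping you anticipated.
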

This theorem is based on showing a suitable boundedness property (\Cref{theorem:bounded_traj-opt}), which is then appropriately combined with \Cref{proposition:rvu}. While \Cref{theorem:optregpot} is stated for OMWU, the proof readily extends well-beyond. In this way, when the underlying game is potential, we substantially strengthen and simplify the result of \citet{Daskalakis21:Near}. Moreover, we also obtain a bound on the number of iterations required to reach an approximate Nash equilibrium.
\begin{theorem}[Abridged; Full Version in \Cref{theorem:ratepotential-full}]
    \label{theorem:ratepotential}
    Suppose that each player employs \eqref{eq:MD} with a suitable regularizer. Then, after $O(1/\epsilon^2)$ iterations there is a strategy $\vec{x}^{(t)}$ which is an $\epsilon$-Nash equilibrium.
\end{theorem}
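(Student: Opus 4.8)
The plan is to combine the second-order boundedness of the trajectory from \Cref{corollary:boundedtraj-pot} with a first-order (variational) characterization of the \eqref{eq:MD} update, turning a \emph{slow} step into a certificate of approximate equilibrium. Concretely, \Cref{corollary:boundedtraj-pot} gives $\sum_{t=1}^{T-1} \|\vec{x}^{(t+1)} - \vec{x}^{(t)}\|_2^2 \leq 4\eta \phimax$, so by averaging (pigeonhole) there is an index $t^\star \in [T-1]$ at which the step is small, namely $\|\vec{x}^{(t^\star+1)} - \vec{x}^{(t^\star)}\|_2 \leq \sqrt{4\eta\phimax/(T-1)} = O(1/\sqrt{T})$. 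I will show that whenever one \eqref{eq:MD} step is this small, the iterate $\vec{x}^{(t^\star+1)}$ is an $O(\|\vec{x}^{(t^\star+1)}-\vec{x}^{(t^\star)}\|_2)$-approximate Nash equilibrium; taking $T = O(1/\epsilon^2)$ then yields the claim.

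For the core implication, first I would write the first-order optimality condition of the update at $\vec{x}_i^{(t^\star+1)}$: for every deviation $\vec{x}_i \in \Delta(\cA_i)$,
\begin{equation*}
    \eta \langle \vec{u}_i^{(t^\star)}, \vec{x}_i - \vec{x}_i^{(t^\star+1)} \rangle \leq \langle \nabla \cR_i(\vec{x}_i^{(t^\star+1)}) - \nabla \cR_i(\vec{x}_i^{(t^\star)}), \vec{x}_i - \vec{x}_i^{(t^\star+1)} \rangle.
\end{equation*}
Since $\langle \vec{u}_i^{(t^\star)}, \cdot \rangle$ encodes $u_i(\cdot, \vec{x}_{-i}^{(t^\star)})$, the left-hand side is exactly $\eta$ times the benefit of deviating to $\vec{x}_i$ \emph{against the opponents' time-$t^\star$ strategies}. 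Using smoothness (Lipschitz gradient) of the regularizer together with the bounded simplex diameter, the right-hand side is $O(\|\vec{x}_i^{(t^\star+1)} - \vec{x}_i^{(t^\star)}\|_2)$, so the deviation benefit at $\vec{x}_{-i}^{(t^\star)}$ is $O(\|\vec{x}^{(t^\star+1)} - \vec{x}^{(t^\star)}\|_2/\eta)$.

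The remaining step is to reconcile an index mismatch: the optimality condition bounds the deviation benefit against $\vec{x}_{-i}^{(t^\star)}$, whereas \Cref{def:Nash} must be stated at a single consistent profile, say $\vec{x}^{(t^\star+1)}$. Here I would exploit multilinearity of the utilities, which makes $\vec{x}_{-i} \mapsto u_i(\cdot, \vec{x}_{-i})$ Lipschitz (with constant polynomial in the game), so that replacing $\vec{x}_{-i}^{(t^\star)}$ by $\vec{x}_{-i}^{(t^\star+1)}$ in both the deviation term and the on-path term costs only an extra $O(\|\vec{x}_{-i}^{(t^\star+1)} - \vec{x}_{-i}^{(t^\star)}\|_2)$. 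Summing the three contributions shows $\vec{x}^{(t^\star+1)}$ is an $\epsilon$-Nash equilibrium with $\epsilon = O(1/\sqrt{T})$, whence $T = O(1/\epsilon^2)$.

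The main obstacle is the smoothness requirement on the regularizer, which is exactly why the statement asks for a ``suitable'' DGF and why the Euclidean choice is the clean one. For the squared-Euclidean regularizer $\nabla\cR_i$ is the identity, so $\nabla\cR_i(\vec{x}_i^{(t^\star+1)}) - \nabla\cR_i(\vec{x}_i^{(t^\star)}) = \vec{x}_i^{(t^\star+1)} - \vec{x}_i^{(t^\star)}$ and the gradient-difference bound is immediate; but for the negative entropy the gradient $\log(\cdot)$ blows up near the boundary of the simplex, so this term cannot be controlled by the step size uniformly, and the argument breaks down. I would therefore state the result for regularizers with globally Lipschitz gradient, and carry out the index-mismatch bookkeeping carefully to track the exact game-dependent polynomial constants hidden in the $O(\cdot)$.
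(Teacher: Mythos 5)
Your proposal is correct and follows essentially the same route as the paper: pigeonhole on the bounded second-order path length from \Cref{corollary:boundedtraj-pot}, then the variational-inequality characterization of an \eqref{eq:MD} step showing that one small step certifies an approximate Nash equilibrium under a Lipschitz-gradient regularizer (and you correctly identify why the argument fails for negative entropy). The only cosmetic difference is that the paper certifies the \emph{earlier} iterate $\vec{x}^{(t^\star)}$ --- so the opponents' strategies never need to be swapped, and a single Cauchy--Schwarz bound $|\langle \vec{u}_i^{(t^\star)}, \vec{x}_i^{(t^\star+1)} - \vec{x}_i^{(t^\star)}\rangle| \leq \sqrt{|\cA_i|}\,\epsilon$ on the player's own strategy suffices --- whereas you certify $\vec{x}^{(t^\star+1)}$ and therefore need the additional (valid, by multilinearity) Lipschitz-in-$\vec{x}_{-i}$ step, which only affects the game-dependent constants.
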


This theorem has a similar flavour to \Cref{theorem:rate-OMD}, and it is the first of its kind for the class of potential games. Finally, in settings where the potential function is concave we show a rate of convergence of $O(1/T)$.

\begin{restatable}{proposition}{concaverate}
    \label{proposition:concaverate}
    In the setting of \Cref{theorem:monotone}, if the potential function $\Phi$ is also concave, then
    \begin{equation*}
        \Phi(\vec{x}^{*}) - \Phi(\vec{x}^{(T+1)}) \leq \frac{2L}{T} \sum_{i=1}^n D_{\cR_i}(\vec{x}_i^*, \vec{x}_i^{(1)}).
    \end{equation*}
\end{restatable}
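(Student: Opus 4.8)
The plan is to recognize that, under the potential relation \eqref{eq:global-align}, the joint \eqref{eq:MD} update is nothing but a single block mirror-ascent step on $\Phi$. Indeed, the coordinate-wise transformed utility satisfies $g_i(\vec{u}_i^{(t)}) = \nabla_{\vec{x}_i} \Phi(\vec{x}^{(t)})$ by \eqref{eq:global-align}, so stacking the players' updates and setting $\cR \defeq \sum_{i=1}^n \cR_i$ (which is $1$-strongly convex with respect to $\|\cdot\|_2$ on $\prod_i \Delta(\cA_i)$, since $\|\vec{x}\|_2^2 = \sum_i \|\vec{x}_i\|_2^2$ and $D_{\cR}(\cdot,\cdot) = \sum_i D_{\cR_i}(\cdot,\cdot)$) yields
\begin{equation*}
    \vec{x}^{(t+1)} = \argmax_{\vec{x} \in \prod_i \Delta(\cA_i)} \left\{ \eta \langle \nabla \Phi(\vec{x}^{(t)}), \vec{x} \rangle - D_{\cR}(\vec{x}, \vec{x}^{(t)}) \right\}.
\end{equation*}
This reduces the proposition to a standard last-iterate analysis of mirror ascent for a concave, one-sided-smooth objective, with the competitor $\vec{x}^*$ being any fixed point (in particular, a maximizer of $\Phi$).

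The core of the argument is a per-step inequality. First I would write the first-order optimality (variational) inequality for the argmax against $\vec{u} = \vec{x}^*$, namely $\langle \eta \nabla \Phi(\vec{x}^{(t)}) - \nabla \cR(\vec{x}^{(t+1)}) + \nabla \cR(\vec{x}^{(t)}), \vec{x}^* - \vec{x}^{(t+1)} \rangle \leq 0$, and convert the Bregman terms via the three-point identity $\langle \nabla \cR(\vec{x}^{(t+1)}) - \nabla \cR(\vec{x}^{(t)}), \vec{x}^* - \vec{x}^{(t+1)} \rangle = D_{\cR}(\vec{x}^*, \vec{x}^{(t)}) - D_{\cR}(\vec{x}^*, \vec{x}^{(t+1)}) - D_{\cR}(\vec{x}^{(t+1)}, \vec{x}^{(t)})$. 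Next I would lower-bound $\langle \nabla \Phi(\vec{x}^{(t)}), \vec{x}^* - \vec{x}^{(t+1)} \rangle$ by splitting it as $\langle \nabla \Phi(\vec{x}^{(t)}), \vec{x}^* - \vec{x}^{(t)} \rangle + \langle \nabla \Phi(\vec{x}^{(t)}), \vec{x}^{(t)} - \vec{x}^{(t+1)} \rangle$: concavity handles the first piece, giving $\geq \Phi(\vec{x}^*) - \Phi(\vec{x}^{(t)})$, while the one-sided smoothness \eqref{eq:onesided-smooth} applied with $\vec{x} = \vec{x}^{(t)}$, $\widetilde{\vec{x}} = \vec{x}^{(t+1)}$ handles the second, giving $\geq \Phi(\vec{x}^{(t)}) - \Phi(\vec{x}^{(t+1)}) - L \| \vec{x}^{(t+1)} - \vec{x}^{(t)} \|_2^2$. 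Combining, the inner product is at least $\Phi(\vec{x}^*) - \Phi(\vec{x}^{(t+1)}) - L \| \vec{x}^{(t+1)} - \vec{x}^{(t)} \|_2^2$.

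Finally I would exploit the $1$-strong convexity bound $D_{\cR}(\vec{x}^{(t+1)}, \vec{x}^{(t)}) \geq \frac{1}{2} \| \vec{x}^{(t+1)} - \vec{x}^{(t)} \|_2^2$ and the choice $\eta = \frac{1}{2L}$, so that the $+\eta L \| \vec{x}^{(t+1)} - \vec{x}^{(t)} \|_2^2$ term coming from smoothness exactly cancels the $-\frac{1}{2}\| \vec{x}^{(t+1)} - \vec{x}^{(t)} \|_2^2$ term; this leaves the clean telescoping inequality $\eta (\Phi(\vec{x}^*) - \Phi(\vec{x}^{(t+1)})) \leq D_{\cR}(\vec{x}^*, \vec{x}^{(t)}) - D_{\cR}(\vec{x}^*, \vec{x}^{(t+1)})$. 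Summing over $t = 1, \dots, T$ collapses the right-hand side to at most $D_{\cR}(\vec{x}^*, \vec{x}^{(1)})$, and invoking \Cref{theorem:monotone}, which guarantees $\Phi(\vec{x}^{(t+1)})$ is nondecreasing, lets me lower-bound each summand $\Phi(\vec{x}^*) - \Phi(\vec{x}^{(t+1)})$ by the last-iterate gap $\Phi(\vec{x}^*) - \Phi(\vec{x}^{(T+1)})$. Dividing by $\eta T = T/(2L)$ and using $D_{\cR} = \sum_i D_{\cR_i}$ produces exactly the claimed bound.

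The step I expect to require the most care is the exact cancellation: the one-sided smoothness contributes a quadratic error with the ``wrong'' sign, and it is only because strong convexity of $\cR$ supplies a matching negative quadratic term, calibrated precisely at $\eta = \frac{1}{2L}$, that the per-step inequality becomes purely telescoping. The other subtlety is that the displayed statement is about the \emph{last} iterate $\vec{x}^{(T+1)}$ rather than an average, which is recovered solely because \Cref{theorem:monotone} makes $\Phi(\vec{x}^{(t)})$ monotone along the trajectory; without that monotonicity the same computation would only yield a best-iterate or averaged guarantee.
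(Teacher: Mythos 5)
Your proposal is correct and follows essentially the same route as the paper: a prox-step inequality for the MD update (your variational-inequality-plus-three-point-identity derivation is exactly the standard proof of the auxiliary lemma the paper cites), combined with concavity for the $\vec{x}^* - \vec{x}^{(t)}$ term, one-sided smoothness with the cancellation calibrated at $\eta = \frac{1}{2L}$, telescoping, and finally the monotonicity from \Cref{theorem:monotone} to convert the averaged bound into a last-iterate bound. The only cosmetic difference is that you stack the players into a single joint mirror step with $\cR = \sum_i \cR_i$, whereas the paper argues per player and sums over $i$.
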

This result is stronger than the standard convergence guarantee in smooth convex optimization since \Cref{def:weighted_potential} only imposes a one-sided condition. While concavity does not hold in typical potential games, it applies to games such as Fisher's market model; see our remark below.

\begin{remark}
    \label{remark:Fisher}
    In \Cref{appendix:Fisher} we explain how our framework naturally captures the analysis of \citet{Birnbaum11:Distributed} for distributed dynamics in Fisher's market model. 
\end{remark}

\subsection{Near-Potential Games}

Finally, we illustrate the robustness of our framework by extending our results to \emph{near-potential} games \citep{Candogan13:Dynamics}. Roughly speaking, a game is near-potential if it is close---in terms of the maximum possible utility improvement through unilateral deviations (MPD)---to a potential game; we defer the precise definition to \Cref{appendix:nearpotential}. It is also worth noting that our result immediately extends under different distance measures. In this context, we prove the following theorem.
\begin{theorem}[Abridged; Full Version in \Cref{theorem:nearpot-full}]
    \label{theorem:nearpot}
    Consider a $\delta$-near-potential game where each player employs \eqref{eq:MD} with suitable regularizer. Then, there exists a potential function $\Phi$ which increases as long as $\vec{x}^{(t)}$ is not an $O(\sqrt{\delta})$-Nash equilibrium.
\end{theorem}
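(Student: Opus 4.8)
The plan is to transplant the potential-based argument behind \Cref{theorem:monotone} onto the \emph{exact} potential game $\widehat{\Gamma}$ that certifies $\delta$-near-potentiality, and then carefully track the error incurred because the \eqref{eq:MD} dynamics are driven by the utilities of $\Gamma$ rather than those of $\widehat{\Gamma}$. First I would fix the nearby exact potential game $\widehat{\Gamma}$, with potential $\Phi$, whose unilateral-deviation differences agree with those of $\Gamma$ up to the near-potential distance $\delta$. The key point is that $\nabla_{\vec{x}_i}\Phi(\vec{x}^{(t)})$ equals the utility vector $\widehat{\vec{u}}_i^{(t)}$ of $\widehat{\Gamma}$ (by the alignment condition \eqref{eq:global-align}), whereas the mirror step uses the actual utility $\vec{u}_i^{(t)}$ of $\Gamma$. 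Since each increment $\vec{x}_i^{(t+1)} - \vec{x}_i^{(t)}$ lies in the tangent space of the simplex, per-coordinate constant shifts are irrelevant; hence $\langle \widehat{\vec{u}}_i^{(t)} - \vec{u}_i^{(t)}, \vec{x}_i^{(t+1)} - \vec{x}_i^{(t)}\rangle$ is controlled by $\delta\,\|\vec{x}_i^{(t+1)} - \vec{x}_i^{(t)}\|_1$, which is precisely where the near-potential distance enters.

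Next I would replay the proof of \Cref{theorem:monotone}. Applying the one-sided smoothness \eqref{eq:onesided-smooth} of $\Phi$, together with the first-order optimality of the mirror step and $1$-strong convexity of $\cR_i$ — which yields $\langle \vec{u}_i^{(t)}, \vec{x}_i^{(t+1)} - \vec{x}_i^{(t)}\rangle \geq \frac{1}{\eta}\|\vec{x}_i^{(t+1)} - \vec{x}_i^{(t)}\|_2^2$ — and writing $\widehat{\vec{u}}_i^{(t)} = \vec{u}_i^{(t)} + (\widehat{\vec{u}}_i^{(t)} - \vec{u}_i^{(t)})$, the mismatch bound above gives, with $\eta = \frac{1}{2L}$,
\[
\Phi(\vec{x}^{(t+1)}) - \Phi(\vec{x}^{(t)}) \geq \frac{1}{2\eta}\sum_{i=1}^n \|\vec{x}_i^{(t+1)} - \vec{x}_i^{(t)}\|_2^2 - O(\delta)\sum_{i=1}^n \|\vec{x}_i^{(t+1)} - \vec{x}_i^{(t)}\|_1 .
\]
The decisive feature is the competition between a \emph{quadratic} gain term and a \emph{linear} (in the displacement) error term.

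Finally, I would relate the mirror-step displacement to the Nash gap of $\vec{x}^{(t)}$ \emph{in $\Gamma$}: comparing the optimal \eqref{eq:MD} objective value at $\vec{x}_i^{(t+1)}$ against a small move toward a best response, and using strong convexity of $\cR_i$ together with the boundedness of the domain, shows that a profitable deviation of magnitude $\mathrm{gap}_i$ forces $\|\vec{x}_i^{(t+1)} - \vec{x}_i^{(t)}\|_2 \gtrsim \mathrm{gap}_i^2$; equivalently, a small displacement certifies a small gap. Combining the two ingredients: if $\Phi$ fails to increase at step $t$, the quadratic term is dominated by the linear error, which after the $\ell_1/\ell_2$ conversions forces every displacement to be $O(\delta)$, and hence every $\mathrm{gap}_i = O(\sqrt{\delta})$ — that is, $\vec{x}^{(t)}$ is an $O(\sqrt{\delta})$-Nash equilibrium. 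The contrapositive is exactly the claim.

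The main obstacle I anticipate is establishing the displacement--gap inequality with the correct square exponent, and threading the norm conversions so that the quadratic-versus-linear balance cleanly produces the $\sqrt{\delta}$ scale rather than a worse power of $\delta$. A secondary subtlety is verifying that the near-potential distance genuinely bounds $\widehat{\vec{u}}_i^{(t)} - \vec{u}_i^{(t)}$ in the tangent-space sense used above (after quotienting out constant shifts), since the MPD measures deviation differences rather than raw utilities; once that identification is made, the rest is a routine combination of \Cref{theorem:monotone}'s estimates with the error tracking.
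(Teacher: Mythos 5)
Your proposal shares its backbone with the paper's proof of \Cref{theorem:nearpot-full}: replay \Cref{theorem:monotone} with the potential $\Phi$ of the nearby exact potential game, and track the single new term, the mismatch $\langle \nabla_{\vec{x}_i}\Phi(\vec{x}^{(t)}) - \vec{u}_i^{(t)},\, \vec{x}_i^{(t+1)} - \vec{x}_i^{(t)}\rangle$, which the MPD bounds by $O(\delta)\,\|\vec{x}_i^{(t+1)} - \vec{x}_i^{(t)}\|_1$ once constant shifts are quotiented out against the simplex-tangent increment — your explicit treatment of that shift is precisely what makes the paper's \Cref{claim:close-gradient} legitimate, and is stated more carefully than in the paper. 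Where you genuinely differ is in how you close. The paper bounds the mismatch crudely by the $\ell_1$-diameter (so a flat $O(\delta)$ per player per step), concludes that $\Phi$ increases unless $\|\vec{x}^{(t+1)} - \vec{x}^{(t)}\|_2 = O(\sqrt{\delta})$, and converts displacement to Nash gap \emph{linearly} via \Cref{claim:approx-nash}. You instead keep the mismatch linear in the displacement, so that failure of $\Phi$ to increase forces displacement $O(\delta)$ (a stronger intermediate conclusion), and then pay back the savings through a \emph{quadratic} displacement-to-gap inequality (gap $\lesssim \sqrt{\text{displacement}}$), which is weaker. Both routes land on $O(\sqrt{\delta})$; in fact, pairing your sharper displacement bound with the paper's linear conversion would yield an $O(\delta)$-Nash equilibrium, a conclusion stronger than the theorem actually claims.

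One caveat on your displacement--gap step: it does not follow from strong convexity of $\cR_i$ plus boundedness of the domain, as you assert. Comparing the proximal objective at its maximizer $\vec{x}_i^{(t+1)}$ against the $\lambda$-interpolation toward a best response requires the \emph{upper} bound $D_{\cR_i}(\vec{x}_\lambda, \vec{x}_i^{(t)}) = O(\lambda^2)$, i.e., smoothness of $\cR_i$. With negative entropy the inequality is simply false: at $\vec{x}_i^{(t)} = (1-\nu, \nu)$ with $\vec{u}_i^{(t)} = (0,1)$, the Nash gap is $\approx 1$ while the \eqref{eq:MD} displacement is only $O(\nu)$. Since the paper's \Cref{claim:approx-nash} needs the same $G$-Lipschitz-gradient hypothesis, and both versions of the theorem restrict to ``suitable'' (smooth) regularizers, this is an imprecision in your justification rather than a hole in the plan — but the property you must invoke there is smoothness of the regularizer, not its strong convexity.
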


\section{Continuous Games}
\label{section:continuous}


\Cref{section:optimistic,section:potential} primarily focused on classes of games stemming from applications in game theory. In this section we shift our attention to \emph{continuous ``games''}, strategic interactions motivated by applications such as GANs. Specifically, we study the convergence properties of \emph{optimistic gradient descent (OGD)} beyond two-player zero-sum settings. 
Recall that OGD in unconstrained settings can be expressed using the following update rule for $t \geq 1$:
\begin{equation}
    \label{eq:OGD}
    \tag{OGD}
    \!\vec{x}_i^{(t+1)} = \vec{x}_i^{(t)} + 2\eta \nabla_{\vec{x}_i} u_i(\vec{x}^{(t)}) - \eta \nabla_{\vec{x}_i} u_i(\vec{x}^{(t-1)}),\!\!
\end{equation}
for any player $i \in [n]$, where $u_i: \prod_{i \in [n]} \cX_i \to \R$ is assumed to be continuously differentiable.

\subsection{Two-Player Games}

We first study two-player games. Let $\mat{A}, \mat{B} \in \R^{d \times d}$ be matrices so that under strategies $(\vec{x}, \vec{y}) \in \mathcal{X} \times \mathcal{Y}$ the utilities of players $\cX$ and $\cY$ are given by the bilinear form $\vec{x}^\top \mat{A} \vec{y}$ and $\vec{x}^\top \mat{B} \vec{y}$ respectively. As is common in this line of work, the matrices are assumed to be square and non-singular. In line with the application of interest, we mostly focus on the \emph{unconstrained setting}, where $\mathcal{X} = \R^d$ and $\mathcal{X} = \R^d$; some of our results are also applicable when $\cX$ and $\cY$ are balls in $\R^d$, as we make clear in the sequel. A point $(\vec{x}^*, \vec{y}^*)$ is an equilibrium if
\begin{equation}
    \label{eq:equilibria-bi}
    \begin{split}
        \vec{x}^\top \mat{A} \vec{y}^* \leq (\vec{x}^*)^\top \mat{A} \vec{y}^*, \quad \forall \vec{x} \in \cX; \\
        (\vec{x}^*)^\top \mat{B} \vec{y} \leq (\vec{x}^*)^\top \mat{B} \vec{y}^*, \quad \forall \vec{y} \in \cY.
    \end{split}
\end{equation}

Even when $\mat{B} = - \mat{A}$, this seemingly simple saddle-point problem is---with the addition of appropriate regularization---powerful enough to capture problems such as linear regression, empirical risk minimization, and robust optimization~\citep{Du19:Linear}. Further, studying such games relates to the \emph{local convergence} of complex games encountered in practical applications \citep{Liang19:Interaction}. 
We also point out that, although not explicitly formalized, our techniques can address the addition of quadratic regularization. In this regime, our first contribution is to extend the known regime for which OGD retains stability:

\begin{theorem}[Abridged; Full Version in \Cref{theorem:two_player-convergence-full}]
    \label{theorem:two_player-convergence}
    Suppose that the matrix $\mat{A}^\top \mat{B}$ has strictly negative (real) eigenvalues. Then, for a sufficiently small learning rate $\eta > 0$, \eqref{eq:OGD} converges linearly to an equilibrium.
\end{theorem}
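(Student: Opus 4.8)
The plan is to linearize the dynamics around the unique equilibrium and reduce the statement to a spectral-radius computation. Since $\cX=\cY=\R^d$ and $\mat{A},\mat{B}$ are non-singular, the equilibrium conditions in \eqref{eq:equilibria-bi} force $\mat{A}\vec{y}^*=\vec{0}$ and $\mat{B}^\top\vec{x}^*=\vec{0}$, so the only equilibrium is the origin. Writing out \eqref{eq:OGD} for the bilinear utilities gives $\vec{x}^{(t+1)}=\vec{x}^{(t)}+2\eta\mat{A}\vec{y}^{(t)}-\eta\mat{A}\vec{y}^{(t-1)}$ and $\vec{y}^{(t+1)}=\vec{y}^{(t)}+2\eta\mat{B}^\top\vec{x}^{(t)}-\eta\mat{B}^\top\vec{x}^{(t-1)}$. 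Because each update depends on the two previous iterates, I would stack the state as $\vec{z}^{(t)}\defeq(\vec{x}^{(t)},\vec{y}^{(t)},\vec{x}^{(t-1)},\vec{y}^{(t-1)})$ and obtain an exact linear recursion $\vec{z}^{(t+1)}=\mat{M}\vec{z}^{(t)}$ for an explicit $4d\times 4d$ matrix $\mat{M}=\mat{M}(\eta)$. Linear convergence to the origin is then equivalent to $\rho(\mat{M})<1$, so the whole theorem becomes an eigenvalue estimate.

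The second step is to decouple the problem one eigendirection at a time. If $\lambda$ is an eigenvalue of $\mat{M}$ with eigenvector $(\vec{u},\vec{v},\vec{p},\vec{q})$, the bottom two blocks give $\vec{p}=\vec{u}/\lambda$ and $\vec{q}=\vec{v}/\lambda$, and substituting into the top two blocks yields the coupled system $(1-\lambda)\vec{u}+\eta\frac{2\lambda-1}{\lambda}\mat{A}\vec{v}=\vec{0}$ and $(1-\lambda)\vec{v}+\eta\frac{2\lambda-1}{\lambda}\mat{B}^\top\vec{u}=\vec{0}$. Eliminating $\vec{u}$ shows that $\vec{v}$ must be an eigenvector of $\mat{B}^\top\mat{A}$, and if $\sigma$ is the corresponding eigenvalue (equivalently an eigenvalue of $\mat{A}^\top\mat{B}$, since the two matrices are transposes of each other), then $\lambda$ solves the scalar quartic $\lambda^2(1-\lambda)^2=\eta^2\sigma(2\lambda-1)^2$. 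Thus the characteristic polynomial of $\mat{M}$ factors (over the Schur form of $\mat{B}^\top\mat{A}$) into one such quartic per eigenvalue $\sigma$, and it suffices to prove that every root of every quartic lies strictly inside the unit disk when $\sigma<0$ and $\eta$ is small.

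The heart of the argument is a perturbation analysis of this quartic in $\eta$. At $\eta=0$ the roots are $\lambda\in\{0,0,1,1\}$; the two roots near $0$ behave like $\lambda\approx\pm\eta\sqrt{\sigma}$, which have modulus $O(\eta)$ and are harmless. The delicate roots are those near $\lambda=1$. Writing $\lambda=1+\delta$ with $\delta=a\eta+b\eta^2+O(\eta^3)$ and matching powers of $\eta$ gives $a^2=\sigma$ and $b=\sigma$, so with $\sigma=-s$ for $s>0$ one obtains $\delta=\pm i\sqrt{s}\,\eta-s\eta^2+O(\eta^3)$ and hence
\begin{equation*}
    |\lambda|^2=\bigl(1-s\eta^2\bigr)^2+s\eta^2+O(\eta^3)=1-s\eta^2+O(\eta^3)<1
\end{equation*}
for $\eta$ sufficiently small. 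Taking $\eta$ below a threshold that works simultaneously for the finitely many eigenvalues $\sigma_k$ (controlled by $\min_k(-\sigma_k)>0$) yields $\rho(\mat{M})<1$, so $\|\vec{z}^{(t)}\|$ decays geometrically, which is exactly the claimed linear convergence.

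I expect the main obstacle to be precisely this second-order expansion near $\lambda=1$. The first-order coefficient $a=\pm i\sqrt{s}$ is purely imaginary, so to leading order the two critical eigenvalues merely rotate and remain on the unit circle, neither contracting nor expanding; it is only the second-order coefficient $b=\sigma=-s$—negative exactly because the hypothesis forces $\sigma<0$—that drags $|\lambda|$ strictly below $1$. One therefore cannot stop at leading order, and must check that the $O(\eta^3)$ remainder is genuinely lower order and uniform over the eigenvalues. A secondary technical point is that $\mat{B}^\top\mat{A}$ need not be diagonalizable even though its eigenvalues are real and negative; I would handle this by passing to a Schur form, so that the characteristic polynomial of $\mat{M}$ still factors into the scalar quartics above with the $\sigma_k$ appearing according to multiplicity, leaving the root-location analysis unaffected.
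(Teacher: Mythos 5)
Your proposal is correct, and its first half is the paper's argument in different clothing: stacking the two most recent iterates into a $4d$-dimensional companion system and studying $\rho(\mat{M})$ is equivalent to the paper's $Z$-transform derivation in \Cref{theorem:two_player-convergence-full}, and both routes land on the identical per-eigenvalue quartic $z^2(z-1)^2 = \eta^2\sigma(2z-1)^2$ (the paper's \Cref{proposition:characteristic}). The genuine difference is the final root-location step. The paper never perturbs: since $\sigma = -s < 0$, the quartic factors into two conjugate quadratics $\frac{z(z-1)}{2z-1} = \pm i\eta\sqrt{s}$, which are solved in closed form, $z_{\pm} = \frac{1}{2}\left(1 + 2i\eta\sqrt{s} \pm \sqrt{1-4\eta^2 s}\right)$, giving exactly $|z_{\pm}|^2 = \frac{1}{2}\left(1 \pm \sqrt{1-4\eta^2 s}\right) < 1$ whenever $4\eta^2 s \le 1$. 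That exactness buys the full version of the theorem: an explicit threshold $\eta \le \frac{1}{2\sqrt{\gamma}}$ with $\gamma = \rho(\mat{A}^\top\mat{B})$, and an explicit contraction factor, i.e., the linear rate. It also shows that the admissible step size depends only on the spectral radius; your uniformity discussion phrased the threshold in terms of $\min_k(-\sigma_k)$, but the smallest eigenvalue only weakens the rate, it never threatens stability. Your second-order expansion is nonetheless correct---$a^2=\sigma$, $b=\sigma$, hence $|\lambda|^2 = 1 - s\eta^2 + O(\eta^3)$, matching the exact formula to that order---and it does prove the abridged ``sufficiently small $\eta$'' statement. Finally, both technical obstacles you flag evaporate under the paper's route: the Puiseux/analyticity worry at the double root $\lambda = 1$ disappears because each branch is a root of an explicit quadratic and hence a closed-form algebraic function of $\eta$, and the Schur-form detour is unnecessary because the identity $\det\left(z\mat{I}_{4d} - \mat{M}\right) = \det\left(z^2(z-1)^2\mat{I}_d - \eta^2(2z-1)^2\mat{A}\mat{B}^\top\right)$ follows from a block Schur-complement computation with no diagonalizability assumption. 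In short: same reduction, but exact root-finding versus asymptotic perturbation, and here the exact computation is both simpler and strictly stronger.
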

The proof of this theorem is based on transforming the dynamics of \eqref{eq:OGD} to the \emph{frequency domain} via the $Z$-transform, in order to derive the characteristic equation of the dynamical system. When the game is zero-sum, the condition of the theorem holds since the matrix $- \mat{A}^\top \mat{A}$ is symmetric and negative definite. As such, \Cref{theorem:two_player-convergence} extends the known results in the literature. The technique we employ can also reveal the rate of convergence in terms of the eigenvalues of $\mat{A}^\top \mat{B}$. The first question stemming from \Cref{theorem:two_player-convergence} is whether the condition of stability only captures games which are, in some sense, fully competitive. Our next result answers this question in the negative.
%
%

Indeed, it turns out that the condition of stability in \Cref{theorem:two_player-convergence} also includes games with a coordination aspect, but under \eqref{eq:OGD} the players will fail to coordinate. In turn, we show that this implies that \eqref{eq:OGD} can be \emph{arbitrarily inefficient}. More precisely, for strategies $\vec{x} \in \cX$ and $\vec{y} \in \cY$, we define the social welfare as $\sw(\vec{x}, \vec{y}) \triangleq \vec{x}^\top \mat{A} \vec{y} + \vec{x}^\top \mat{B} \vec{y}$. We show the following result (we refer to \Cref{fig:efficiency} for an illustration).

\begin{restatable}{proposition}{inefficiency}
    \label{proposition:inefficiency}
    For any sufficiently large $R > 0$, there exist games such that \eqref{eq:OGD} converges under any initialization to an equilibrium $(\vec{x}^{(\infty)}, \vec{y}^{(\infty)})$ such that $\sw(\vec{x}^{(\infty)}, \vec{y}^{(\infty)}) = 0$, while there exist an equilibrium $(\vec{x}^*, \vec{y}^*)$ with $\sw(\vec{x}^*, \vec{y}^*) \geq 2R^2$.
\end{restatable}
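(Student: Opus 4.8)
The plan is to construct an explicit pair of matrices $\mat{A}, \mat{B} \in \R^{d \times d}$, with $\cX = \cY$ taken to be the ball of radius $R$, for which two things hold simultaneously: \eqref{eq:OGD} is provably driven to a point of zero welfare, yet a far more efficient equilibrium nonetheless exists. The guiding design principle is that the trajectory of \eqref{eq:OGD} is governed entirely by the \emph{product} $\mat{A}^\top \mat{B}$---the object appearing in \Cref{theorem:two_player-convergence}---whereas the social welfare $\sw(\vec{x}, \vec{y}) = \vec{x}^\top (\mat{A} + \mat{B}) \vec{y}$ is governed by the \emph{sum} $\mat{A} + \mat{B}$. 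I would engineer the game so that the product is ``competitive''---behaving like a zero-sum interaction that pulls the dynamics inward---while the sum retains a pronounced coordination component that a static equilibrium can exploit but the dynamics cannot.

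First I would pin down the convergence behavior. Choosing $\mat{A}, \mat{B}$ so that $\mat{A}^\top \mat{B}$ has strictly negative real eigenvalues, \Cref{theorem:two_player-convergence} guarantees that for a sufficiently small $\eta > 0$ the iterates of \eqref{eq:OGD} converge linearly, from \emph{any} initialization, to an equilibrium $(\vec{x}^{(\infty)}, \vec{y}^{(\infty)})$; the rate can be read off from the eigenvalues of $\mat{A}^\top \mat{B}$ exactly as in the proof of that theorem. I would then identify this limit explicitly and verify by direct substitution that $\sw(\vec{x}^{(\infty)}, \vec{y}^{(\infty)}) = (\vec{x}^{(\infty)})^\top(\mat{A} + \mat{B})\vec{y}^{(\infty)} = 0$, so that the learning process stabilizes at a socially worthless outcome.

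Next I would exhibit a much more efficient equilibrium of the constrained game. I would write down a candidate profile $(\vec{x}^*, \vec{y}^*)$ on the sphere of radius $R$ and verify the fixed-point conditions \eqref{eq:equilibria-bi} directly: since each player maximizes a \emph{linear} objective over a ball, it suffices to check that $\vec{x}^* = R\,\mat{A}\vec{y}^*/\|\mat{A}\vec{y}^*\|$ and, symmetrically, $\vec{y}^* = R\,\mat{B}^\top\vec{x}^*/\|\mat{B}^\top\vec{x}^*\|$. At such a profile the two utilities equal $R\|\mat{A}\vec{y}^*\|$ and $R\|\mat{B}^\top\vec{x}^*\|$, so $\sw(\vec{x}^*, \vec{y}^*) = R\bigl(\|\mat{A}\vec{y}^*\| + \|\mat{B}^\top\vec{x}^*\|\bigr)$; the construction makes this quantity at least $2R^2$, and scaling the coordination entries of $\mat{A} + \mat{B}$ lets the efficiency gap between the reached equilibrium and this one grow without bound, which is the content of ``for any sufficiently large $R$''.

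The hard part is that the two requirements pull in opposite directions and must be met within a single game. The spectral condition that makes \eqref{eq:OGD} stable suggests, at first glance, a purely adversarial interaction with no profitable joint play, whereas the efficient equilibrium demands a genuine coordination opportunity; the delicate step is the explicit construction that satisfies the negative-eigenvalue condition on $\mat{A}^\top \mat{B}$ while keeping $\mat{A} + \mat{B}$ coordinative enough to support the claimed equilibrium, together with the verification that \eqref{eq:OGD} provably bypasses that equilibrium. This is precisely the counterintuitive message of the result: the dynamics perceive only the competitive product structure and are drawn to a zero-welfare point, remaining blind to the coordination encoded in the sum, so the players fail to coordinate even though a clear coordination aspect is present.
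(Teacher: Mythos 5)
Your high-level strategy coincides with the paper's---choose $(\mat{A},\mat{B})$ so that $\mat{A}^\top\mat{B}$ has strictly negative real eigenvalues, invoke \Cref{theorem:two_player-convergence} to drive \eqref{eq:OGD} to a zero-welfare limit, and exhibit a welfare-$2R^2$ equilibrium on the boundary of the constraint set---but your proposal defers the one step that carries all the content (the explicit game), and the verification recipe you give for the efficient equilibrium cannot be carried out for \emph{any} matrices satisfying the convergence hypothesis. Over Euclidean balls, your own equilibrium conditions $\vec{x}^* = R\,\mat{A}\vec{y}^*/\|\mat{A}\vec{y}^*\|_2$ and $\vec{y}^* = R\,\mat{B}^\top\vec{x}^*/\|\mat{B}^\top\vec{x}^*\|_2$ compose to give
\begin{equation*}
    \mat{A}\mat{B}^\top\vec{x}^* = \frac{\|\mat{A}\vec{y}^*\|_2\,\|\mat{B}^\top\vec{x}^*\|_2}{R^2}\,\vec{x}^*,
\end{equation*}
so $\vec{x}^*$ must be an eigenvector of $\mat{A}\mat{B}^\top$ with a \emph{positive} eigenvalue. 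But $\mat{A}\mat{B}^\top$, $\mat{B}^\top\mat{A}$, and $\mat{A}^\top\mat{B}$ all share the same spectrum, which under the convergence hypothesis consists of strictly negative reals---a contradiction. Moreover, since \Cref{theorem:two_player-convergence} requires $\mat{A}$ and $\mat{B}$ to be full rank, the degenerate alternative ($\mat{A}\vec{y}^*=\vec{0}$ or $\mat{B}^\top\vec{x}^*=\vec{0}$) forces $(\vec{x}^*,\vec{y}^*)=(\vec{0},\vec{0})$. Hence over $\ell_2$ balls the origin is the \emph{unique} equilibrium of any game to which the convergence theorem applies, and the tension you flag as ``delicate'' is not merely delicate: the plan is impossible as stated.

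The paper's proof evades exactly this obstruction by taking $\cX=\cY=\ball_1(\vec{0},R)$ to be $\ell_1$-balls, over which the maximizer of a linear objective sits on a face of the cross-polytope and need not be aligned with the gradient direction. Concretely, for the game \eqref{eq:game-inefficiency} the product $\mat{A}^\top\mat{B}$ has eigenvalues $-1$ and $-2$, so the unconstrained \eqref{eq:OGD} dynamics converge; the limit satisfies $\mat{A}\vec{y}^{(\infty)}=\vec{0}$ and $\mat{B}^\top\vec{x}^{(\infty)}=\vec{0}$, hence equals $(\vec{0},\vec{0})$ by full rank, and for $R$ sufficiently large the projected (constrained) dynamics coincide with the unconstrained ones---a reconciliation step your sketch also omits, since the theorem you invoke is an unconstrained statement while the claimed game is constrained. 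Yet $(\vec{x}^*,\vec{y}^*)=((R,0),(R,0))$ is an equilibrium: by H\"older's inequality, $\vec{x}^\top\mat{A}\vec{y}^*\le\|\vec{x}\|_1\|\mat{A}\vec{y}^*\|_\infty = R^2$ with equality at $\vec{x}^*$, and symmetrically for player $\cY$, giving $\sw(\vec{x}^*,\vec{y}^*)=2R^2$. The coordination-supporting equilibrium therefore exists only because of the polyhedral geometry of the constraint sets; any completion of your argument must either adopt such geometry (abandoning the gradient-alignment characterization) or find a different certificate of equilibrium.
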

This holds when $\cX$ and $\cY$ are compact balls on $\R^d$ and parameter $R > 0$ controls their radius. \Cref{proposition:inefficiency} seems to suggest that the stabilizing properties of \eqref{eq:OGD} come at a dramatic loss of efficiency beyond zero-sum games.


Another interesting implication is that an arbitrarily small perturbation from a zero-sum game can destabilize \eqref{eq:OGD}: 

\begin{restatable}{proposition}{robustness}
    \label{proposition:robustness}
    For any $\epsilon > 0$ there exists a game $(\mat{A}, \mat{B})$ with $\|\mat{A} + \mat{B}\|_F \leq \epsilon$ for which \eqref{eq:OGD} diverges, while the dynamics converge for the game $(\mat{A},-\mat{A})$.
\end{restatable}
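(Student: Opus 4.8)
The plan is to exhibit a single family of $2\times 2$ games that sits exactly on the boundary of the stability region identified in \Cref{theorem:two_player-convergence}, and to perturb it so as to cross that boundary while keeping the perturbation arbitrarily small. The convergence half of the statement is immediate: for the zero-sum game $(\mat{A}, -\mat{A})$ with $\mat{A}$ nonsingular, the matrix $\mat{A}^\top(-\mat{A}) = -\mat{A}^\top \mat{A}$ is symmetric and negative definite, so all of its eigenvalues are strictly negative reals, and \Cref{theorem:two_player-convergence} gives linear convergence of \eqref{eq:OGD} for small $\eta$. The entire difficulty therefore lies in engineering a nearby non-zero-sum game for which \eqref{eq:OGD} diverges.

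For the construction I would take $\mat{A} = c\,\mat{I}$ (with $c \neq 0$, say $c = 1$) and $\mat{B} = -\mat{A} + \mat{E}$, where $\mat{E} = \tfrac{\epsilon}{\sqrt 2}\,\mat{J}$ and $\mat{J} = \bigl(\begin{smallmatrix} 0 & -1 \\ 1 & 0\end{smallmatrix}\bigr)$ is the rotation generator, so that $\|\mat{A} + \mat{B}\|_F = \|\mat{E}\|_F = \epsilon$ has the required size. The point of taking $\mat{A} = c\,\mat{I}$ is that $\mat{A}^\top\mat{A} = c^2\mat{I}$ has a \emph{repeated} eigenvalue, so that an arbitrarily small non-symmetric perturbation can split it into a genuinely complex conjugate pair; indeed $\mat{A}^\top\mat{B} = -c^2\mat{I} + c\,\mat{E}$ has eigenvalues $-c^2 \pm i\,c\,\tfrac{\epsilon}{\sqrt2}$, which lie off the negative real axis. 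Next I would linearize \eqref{eq:OGD} about its unique equilibrium (the origin, since $\mat{A},\mat{B}$ are nonsingular) and derive the characteristic equation of the resulting second-order recurrence exactly as in the proof of \Cref{theorem:two_player-convergence}: seeking modes $(\vec{x}^{(t)},\vec{y}^{(t)}) = \zeta^t(\vec{v}_x,\vec{v}_y)$ and eliminating $\vec{v}_y$ shows that $\vec{v}_x$ must be an eigenvector of $\mat{A}\mat{B}^\top$ (equivalently $\mat{A}^\top\mat{B}$, which shares its nonzero spectrum) with some eigenvalue $\mu$, and that $\zeta$ solves
\begin{equation*}
    (\zeta^2 - \zeta)^2 = \eta^2\,(2\zeta - 1)^2\,\mu .
\end{equation*}

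The crux is to show that for the complex eigenvalue $\mu = -c^2 + i\,c\,\tfrac{\epsilon}{\sqrt2}$ this equation has a root with $|\zeta| > 1$. I would argue perturbatively in $\eta$: near the unperturbed root $\zeta = 1$ of $\zeta^2-\zeta$, writing $\zeta = 1 + w$ gives $w^2(1+w)^2 = \eta^2(1+2w)^2\mu$, so to leading order $w^2 \approx \eta^2\mu$ and hence $\zeta \approx 1 \pm \eta\sqrt{\mu}$, whence $|\zeta|^2 = 1 \pm 2\eta\,\Re(\sqrt\mu) + O(\eta^2)$. The main obstacle, and the conceptual heart of the result, is precisely this square-root structure. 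Since $\mu$ is not a nonpositive real, exactly one of its square roots has \emph{strictly positive} real part, so $\Re(\sqrt\mu)\neq 0$ and the first-order term $2\eta\,\Re(\sqrt\mu)$ strictly dominates the $O(\eta^2)$ correction for small $\eta$; choosing the sign of $w$ accordingly yields a characteristic root with $|\zeta|>1$, i.e. instability.

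This also explains why the zero-sum game is fragile: there every $\mu$ is real and negative, so $\sqrt\mu$ is purely imaginary, the first-order term vanishes, and stability is decided only at second order (the delicate regime handled by \Cref{theorem:two_player-convergence}); any perturbation that knocks $\mu$ off the negative real axis destroys this cancellation. I would then conclude divergence by observing that an unstable mode of the linear recurrence forces $\|(\vec{x}^{(t)},\vec{y}^{(t)})\| \to \infty$ for every initialization with nonzero component along that mode, which is the generic case. The only technical care needed is to verify that the relevant root indeed sits near $\zeta=1$ (where $2\zeta-1=1\neq 0$, so the perturbation is non-degenerate) and that the conjugate eigenvalue $\bar\mu$ merely contributes the conjugate root of the same modulus, so it neither helps nor restores stability.
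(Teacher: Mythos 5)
Your proposal is correct, but it takes a genuinely different route from the paper. The paper perturbs a \emph{diagonal} zero-sum game, taking $\mat{A} = \mathrm{diag}(1, \epsilon/2)$ and $\mat{B} = \mathrm{diag}(-1, \epsilon/2)$, so that $\mat{A}^\top \mat{B} = \mathrm{diag}(-1, \epsilon^2/4)$ acquires a \emph{positive real} eigenvalue $\lambda = \epsilon^2/4$; divergence then follows immediately from \Cref{proposition:characteristic}, since the quadratic $z^2 - (1+2\eta\sqrt{\lambda})z + \eta\sqrt{\lambda}$ is negative at $z=1$ and positive at infinity, producing a real characteristic root strictly greater than $1$ for \emph{every} $\eta > 0$ (this is the same mechanism as in \Cref{theorem:impossibility}). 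You instead perturb $c\,\mat{I}$ by a rotation generator so that the eigenvalues of $\mat{A}^\top\mat{B}$ move \emph{off the real axis} to $-c^2 \pm i\,c\epsilon/\sqrt{2}$, and you extract instability from the first-order term $2\eta\,\Re(\sqrt{\mu})>0$ in the expansion of the root near $z=1$. Your perturbative step can in fact be made exact: the characteristic equation for a single eigenvalue $\mu$ reduces to $z^2 - (1+2s)z + s = 0$ with $s = \pm\eta\sqrt{\mu}$, whose roots are $z = \tfrac{1}{2}\bigl(1 + 2s \pm \sqrt{1+4s^2}\bigr)$, so $\Re(z_+) = 1 + \Re(s) + \Re(s^2) + O(|s|^4) > 1$ once $\eta < \Re(\sqrt{\mu})/|\mu|$ (up to constants), confirming your claim rigorously. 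The trade-off between the two constructions is instructive: the paper's positive-eigenvalue perturbation destabilizes OGD at \emph{all} learning rates, whereas your complex-eigenvalue perturbation destabilizes it only for $\eta$ sufficiently small relative to $\epsilon$ (the window shrinks as $\epsilon \to 0$); since the zero-sum convergence guarantee of \Cref{theorem:two_player-convergence} also requires small $\eta$, both constructions satisfy the proposition, but the paper's counterexample is stronger in this respect. On the other hand, your argument yields a sharper structural insight that the paper's does not: it shows that the condition of \Cref{theorem:two_player-convergence} is essentially necessary in a neighborhood of the negative real axis as well---any eigenvalue pushed off the nonpositive reals, not just onto the positive reals, destroys stability at small step sizes---which complements \Cref{theorem:impossibility}. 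Finally, your phrasing ``every initialization with nonzero component along the unstable mode'' is actually more careful than the paper's ``any non-trivial initialization,'' since in both constructions an initialization confined to the stable eigenspace would not diverge.
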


Thus, even if a game $(\mat{A}, \mat{B})$ is arbitrarily close to a zero-sum in the Frobenius norm, \eqref{eq:OGD} may still diverge. To put it differently, a small noise in one of the payoff matrices can dramatically alter the behavior of the system. This phenomenon is illustrated and further discussed in \Cref{fig:robustness}.

\subsection{Multiplayer Games}

Moreover, we also characterize the \eqref{eq:OGD} dynamics in polymatrix games. Such multiplayer interactions have already received considerable attention in the literature on GANs (\emph{e.g.}, see \citep{Hoang17:Multi}, and references therein), but the behavior of the dynamics is poorly understood even in structured games~\citep{Kalogiannis21:Teamwork}. Our next theorem characterizes the important case where a single player $1$ plays against $n-1$ different players, numbered from $2$ to $n$. In particular, the utility of player $1$ under strategies $\Vec{x} = (\Vec{x}_1, \dots, \Vec{x}_n)$ is given by $\sum_{j \neq i} \Vec{x}_i^\top \mat{A}_{1, j} \Vec{x}_j$, while the utility of player $j$ is given by $\Vec{x}_j^\top \mat{A}_{j, 1} \Vec{x}_1$. Our next result extends \Cref{theorem:two_player-convergence}.

\begin{theorem}[Abridged; Full Version in \Cref{theorem:one-many-convergence-full}]
    \label{theorem:one-many-convergence}
    If $\mat{M} \defeq \sum_{j \neq 1} \mat{A}_{1, j} \mat{A}_{j, 1}$ has strictly negative (real) eigenvalues, there exists a sufficiently small learning rate $\eta > 0$ depending only on the spectrum of the matrix $\mat{M}$ such that \eqref{eq:OGD} converges with linear rate to an equilibrium.
\end{theorem}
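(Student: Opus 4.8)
The plan is to reduce the multiplayer dynamics to an effective two-player bilinear saddle system and then reuse the frequency-domain machinery behind \Cref{theorem:two_player-convergence}. First I would collect the opponents $2,\dots,n$ into a single meta-player with strategy $\vec{y} \defeq (\vec{x}_2;\dots;\vec{x}_n)$, and define the concatenated coupling matrices $\mat{A}_1 \defeq [\mat{A}_{1,2}\mid\dots\mid\mat{A}_{1,n}]$ (so that $\nabla_{\vec{x}_1} u_1(\vec{x}) = \mat{A}_1 \vec{y}$) and $\mat{A}_2 \defeq [\mat{A}_{2,1};\dots;\mat{A}_{n,1}]$ (so that the stacked gradient of the opponents is $\mat{A}_2 \vec{x}_1$). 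Writing $\vec{z} \defeq (\vec{x}_1;\vec{y})$, the joint \eqref{eq:OGD} update becomes a single linear recurrence $\vec{z}^{(t+1)} = \vec{z}^{(t)} + 2\eta \mat{J}\vec{z}^{(t)} - \eta \mat{J}\vec{z}^{(t-1)}$ governed by the game Jacobian
\begin{equation*}
    \mat{J} \defeq \begin{pmatrix} \mat{0} & \mat{A}_1 \\ \mat{A}_2 & \mat{0} \end{pmatrix}, \qquad \mat{J}^2 = \begin{pmatrix} \mat{A}_1 \mat{A}_2 & \mat{0} \\ \mat{0} & \mat{A}_2 \mat{A}_1 \end{pmatrix},
\end{equation*}
and the crucial observation is that $\mat{A}_1\mat{A}_2 = \sum_{j\neq 1}\mat{A}_{1,j}\mat{A}_{j,1} = \mat{M}$.

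\textbf{Spectral reduction.} Next I would relate the spectrum of $\mat{J}$ to that of $\mat{M}$. Since $\mat{A}_1\mat{A}_2$ and $\mat{A}_2\mat{A}_1$ share the same nonzero eigenvalues, any nonzero eigenvalue $\mu$ of $\mat{J}$ satisfies $\mu^2 \in \mathrm{spec}(\mat{M})\setminus\{0\}$. By hypothesis these are strictly negative reals, so every nonzero eigenvalue of $\mat{J}$ is purely imaginary, $\mu = \pm i\omega$ with $\omega \neq 0$. Applying the $Z$-transform to the recurrence (equivalently, substituting the modal ansatz $\vec{z}^{(t)} = \lambda^t \vec{v}$ along an eigenvector of $\mat{J}$) decouples the system across eigenmodes and yields, for each eigenvalue $\mu$ of $\mat{J}$, exactly the characteristic quadratic that appears in the two-player analysis,
\begin{equation*}
    \lambda^2 - (1 + 2\eta\mu)\lambda + \eta\mu = 0.
\end{equation*}
Thus the one-against-many problem inherits the two-player characteristic equation, with the role of $\mat{A}^\top\mat{B}$ played by $\mat{M}$.

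\textbf{Stability of the oscillatory modes.} For a purely imaginary $\mu = i\omega$, the root of the quadratic near the boundary can be tracked perturbatively in $\eta$: writing $\lambda = 1 + \delta$ and expanding gives $\delta = i\eta\omega - \eta^2\omega^2 + O(\eta^3)$, hence $|\lambda|^2 = 1 - \eta^2\omega^2 + O(\eta^3) < 1$ for sufficiently small $\eta>0$, while the companion root stays near $0$; the conjugate mode $\mu=-i\omega$ behaves identically. Because $\omega^2 = |\mu|$ ranges over $\{|\mu_k| : \mu_k \in \mathrm{spec}(\mat{M})\}$, the admissible $\eta$-threshold is determined by (and only by) the spectrum of $\mat{M}$, and strict negativity of the eigenvalues guarantees $\omega \neq 0$, so the contraction factor $-\eta^2\omega^2$ is genuinely negative. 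On the generalized eigenspaces associated with nonzero $\mu$ one therefore obtains geometric decay; since the OGD update on the augmented state $(\vec{z}^{(t)};\vec{z}^{(t-1)})$ is a fixed polynomial in $\mat{J}$, these spectral computations remain valid even if $\mat{J}$ fails to be diagonalizable, and the uniform threshold follows from finiteness of the spectrum.

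\textbf{Main obstacle: the kernel.} The one feature absent from the square, non-singular two-player setting is that the concatenated matrices $\mat{A}_1,\mat{A}_2$ are generically rectangular, so $\mat{J}$ carries a nontrivial kernel; this is exactly the subspace of equilibria, namely the stationary points $\{\vec{z}: \mat{A}_1\vec{y}=\vec{0},\ \mat{A}_2\vec{x}_1=\vec{0}\}$. I expect the handling of this kernel to be the delicate part. The plan is to show that $\ker\mat{J}$ is invariant and static under \eqref{eq:OGD} — directly, $\mat{J}\vec{z}^{(t)}=\mat{J}\vec{z}^{(t-1)}=\vec{0}$ forces $\vec{z}^{(t+1)}=\vec{z}^{(t)}$ — and, crucially, that the corresponding eigenvalue $\lambda=1$ of the augmented map is \emph{semisimple} (the $2\times 2$ block $\left(\begin{smallmatrix}1&0\\1&0\end{smallmatrix}\right)$ has distinct eigenvalues $1,0$), so no Jordan block produces polynomial-in-$t$ growth. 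Decomposing the state into its $\ker\mat{J}$ component (frozen after the first step) and its complementary component (geometrically contracting) then yields linear convergence of $\vec{z}^{(t)}$ to a point of $\ker\mat{J}$, i.e.\ to an equilibrium, completing the argument.
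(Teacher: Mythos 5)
Your route is, in substance, the paper's own proof of \Cref{theorem:one-many-convergence-full} recast in state space: the paper also works in the frequency domain, but instead of stacking a Jacobian it eliminates the opponents' transforms from the coupled $Z$-domain equations, arriving at the characteristic equation $\det\left((z(z-1))^2\mat{I}_d-\eta^2(2z-1)^2\mat{M}\right)=0$, whose scalar factors are exactly your modal quadratics $\lambda^2-(1+2\eta\mu)\lambda+\eta\mu=0$ with $\mu^2\in\mathrm{spec}(\mat{M})$. The equilibrium subspace is handled there by observing that $(z^2-z)\vec{X}_j(z)=\eta(2z-1)\mat{A}_{j,1}\vec{X}_1(z)$ allows each opponent at most a \emph{simple} pole at $z=1$ beyond the strictly stable poles of $\vec{X}_1(z)$ --- the $Z$-domain counterpart of your ``frozen kernel component.'' (The paper also solves the quadratic exactly, giving the explicit threshold $\eta\le 1/(2\sqrt{\gamma})$ with $\gamma=\rho(\mat{M})$ rather than a perturbative one; that difference is cosmetic.)

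There is, however, one genuine gap, and it sits exactly where the hypothesis that the eigenvalues of $\mat{M}$ are nonzero (equivalently $\det(\mat{M})\neq 0$) must be invoked. Your justification that $\lambda=1$ is semisimple --- the block $\left(\begin{smallmatrix}1&0\\1&0\end{smallmatrix}\right)$ --- only shows semisimplicity of the augmented map $\mat{T}$ \emph{restricted to} $\ker\mat{J}\times\ker\mat{J}$; it does not exclude generalized eigenvectors at $\lambda=1$ outside that subspace. These arise precisely when $\mat{J}$ has a nontrivial Jordan block at the eigenvalue $0$: since $\det(\lambda\mat{I}-\mat{T})=\det\left((\lambda^2-\lambda)\mat{I}-\eta(2\lambda-1)\mat{J}\right)=\prod_{\mu\in\mathrm{spec}(\mat{J})}\left(\lambda^2-(1+2\eta\mu)\lambda+\eta\mu\right)$, the algebraic multiplicity of $\lambda=1$ in $\mat{T}$ equals the algebraic multiplicity of $\mu=0$ in $\mat{J}$, while its geometric multiplicity equals $\dim\ker\mat{J}$; a nilpotent part of $\mat{J}$ at $0$ would therefore produce a Jordan block of $\mat{T}$ at $\lambda=1$ and linear-in-$t$ divergence. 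For the same reason, your direct-sum decomposition into $\ker\mat{J}$ and a ``complementary component'' presupposes that $0$ is a semisimple eigenvalue of $\mat{J}$, which is false for arbitrary rectangular blocks. The fix is short and uses nonsingularity of $\mat{M}$: if $\mat{J}^2(\vec{x}_1;\vec{y})=\vec{0}$, then $\mat{M}\vec{x}_1=\vec{0}$ forces $\vec{x}_1=\vec{0}$, and $\mat{A}_2\mat{A}_1\vec{y}=\vec{0}$ implies $\mat{M}\left(\mat{A}_1\vec{y}\right)=\mat{A}_1\mat{A}_2\mat{A}_1\vec{y}=\vec{0}$, hence $\mat{A}_1\vec{y}=\vec{0}$; thus $\ker\mat{J}^2=\ker\mat{J}$, so the generalized $0$-eigenspace of $\mat{J}$ is just $\ker\mat{J}$, and your decomposition and semisimplicity claims follow. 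With this two-line lemma inserted, your proof is complete.
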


This condition is again trivially satisfied in polymatrix zero-sum games. In fact, \Cref{theorem:one-many-convergence} is an instance of a more general characterization for polymatrix unconstrained games, as we further explain in \Cref{remark:game-matrix}.

\begin{remark}[Beyond OGD: Stability of First-Order Methods]
    In light of the result established in \Cref{theorem:two_player-convergence}, it is natural to ask whether its condition is necessary. In \Cref{theorem:impossibility} we show that the presence of positive eigenvalues in the spectrum of matrix $\mat{A}^\top \mat{B}$ necessarily implies instability even under a \emph{generic class} of first-order methods.
\end{remark}

\section{Experiments}
\label{section:experiments}

In this section we numerically investigated the last-iterate convergence of \eqref{eq:OMD} in two benchmark zero-sum extensive-form games (EFGs). Recall that computing a Nash equilibrium in this setting can be expressed as the bilinear saddle-point problem of \eqref{eq:BSPP}. When both players employ \eqref{eq:OMD} with Euclidean regularization and $\eta \leq \frac{1}{4\|\mat{A}\|_2}$, where $\|\mat{A}\|_2$ is the spectral norm of $\mat{A}$, our results guarantee last-iterate convergence in terms of the saddle-point gap. 
\begin{wrapfigure}{r}{9cm}
    \centering
    \begin{tikzpicture}
        \draw[gray,line width=.08mm] (-4.0123456789, 1.9) -- (4.0123456789, 1.9);
        \node[anchor=center,fill=white] at (.4,  1.9) {\small Kuhn poker};
        \node at (0,  0) {\includegraphics[scale=0.7]{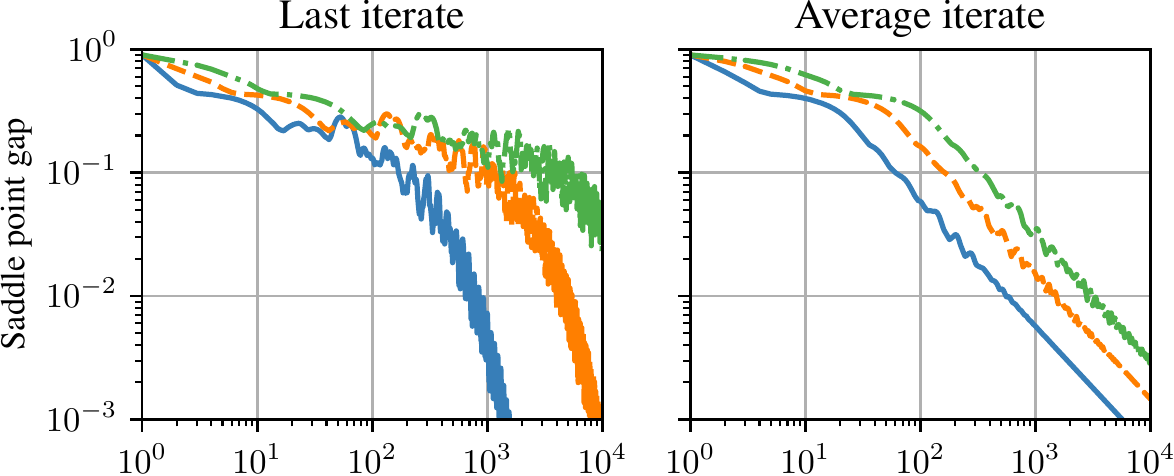}};
        \draw[gray,line width=.08mm] (-4.0123456789, -2.1) -- (4.0123456789, -2.1);
        \node[anchor=center,fill=white] at (.4,  -2.1) {\small Leduc poker};
        \node at (0, -4.4) {\includegraphics[scale=0.7]{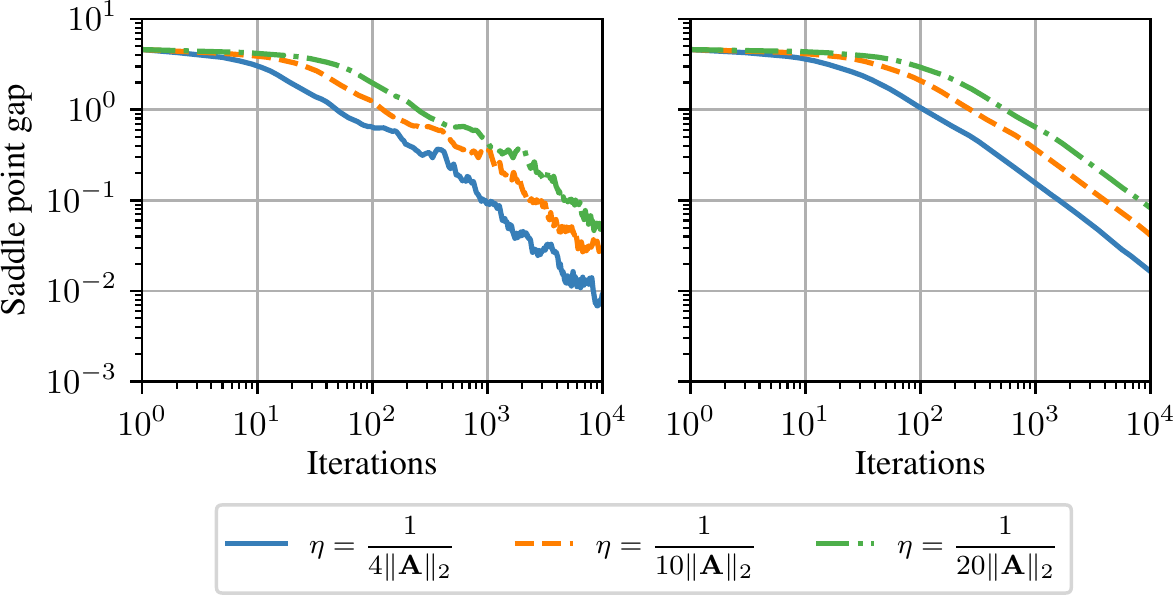}};
    \end{tikzpicture}
    \caption{The saddle-point gap of the last and average iterates of \eqref{eq:OMD} in Kuhn and Leduc poker.}
    \label{fig:experiments}
\end{wrapfigure}
We experimented on two standard poker benchmarks known as \emph{Kuhn poker}~\citep{Kuhn50:Simplified} and \emph{Leduc poker}~\citep{Southey05:Bayes}. For each benchmark game, we ran \eqref{eq:OMD} with Euclidean regularization and three different values of $\eta$ for $10000$ iterations. After each iteration $t$, we computed the saddle-point gap corresponding to the iterates at time $t$, as well as to the average iterates up to time $t$. The results are shown in \Cref{fig:experiments}.
As expected, we observe that both the average and the last iterate converge in terms of the saddle-point gap. Moreover, we see that larger values of learning rate lead to substantially faster convergence, illustrating the importance of obtaining sharp theoretical bounds.

Finally, additional experiments related to our theoretical results have been included in \Cref{appendix:experiments}.

%


\vspace{-65pt}

\begin{center}
\end{center}
\begin{center}
\end{center}
\begin{center}
\end{center}

\section*{Acknowledgements}

Ioannis Panageas is supported by a start-up grant. Part of this work was done while Ioannis Panageas was visiting the Simons Institute for the Theory of Computing. Tuomas Sandholm is supported by the National Science Foundation under grants IIS-1901403 and CCF-1733556.

\printbibliography

\clearpage
\appendix

\section{Proofs from \texorpdfstring{\Cref{section:optimistic}}{Section 3}}

In this section we give complete proofs for our results in \Cref{section:optimistic}. For the convenience of the reader we will restate the claims before proceeding with their proofs. We begin with \Cref{theorem:bounded_traj}.

\boundedtraj*

\begin{proof}
    The proof commences similarly to \citep[Theorem 4]{Syrgkanis15:Fast}. By assumption, we know that for any player $i \in [n]$ it holds that
    \begin{equation}
        \label{eq:rvu-i}
        \reg_i^T \leq \alpha_i + \beta_i \sum_{t=1}^T \| \vec{u}_i^{(t)} - \vec{u}_i^{(t-1)} \|_{\infty}^2 - \gamma_i \sum_{t=1}^T \| \vec{x}_i^{(t)} - \vec{x}_i^{(t-1)}\|_1^2.
    \end{equation}
    Furthermore, the middle term in the \rvu property can be bounded using the following simple claim.

    \begin{claim}
        \label{claim:util-infty}
        For any player $i \in [n]$ it holds that
        \begin{equation*}
            \| \vec{u}_i^{(t)} - \vec{u}_i^{(t-1)} \|_{\infty} \leq \sum_{j \neq i} \|\vec{x}_j^{(t)} - \vec{x}_j^{(t-1)}\|_1.
        \end{equation*}
    \end{claim}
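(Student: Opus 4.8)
The plan is to exploit the multilinear structure of the utilities. First I would make explicit what the feedback vector $\vec{u}_i^{(t)}$ actually is: in a normal-form game the expected utility of player $i$ is multilinear in the strategy profile, so the utility vector received at time $t$ has coordinates
\begin{equation*}
    \vec{u}_i^{(t)}(a_i) = \E_{\vec{a}_{-i} \sim \vec{x}_{-i}^{(t)}}[u_i(a_i, \vec{a}_{-i})] = \sum_{\vec{a}_{-i}} u_i(a_i, \vec{a}_{-i}) \prod_{j \neq i} \vec{x}_j^{(t)}(a_j).
\end{equation*}
The two structural facts I would isolate are: (i) for each fixed action $a_i \in \cA_i$, this coordinate is a \emph{multilinear} function of the opponents' strategies $(\vec{x}_j)_{j \neq i}$; and (ii) since $u_i$ takes values in $[-1,1]$ and each $\vec{x}_j$ is a probability distribution, every such coordinate—and more generally every conditional expectation of $u_i$ obtained by marginalizing over any subset of the opponents—lies in $[-1,1]$.

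Next I would fix a coordinate $a_i$ and bound $|\vec{u}_i^{(t)}(a_i) - \vec{u}_i^{(t-1)}(a_i)|$ via a telescoping (hybrid) argument that swaps the opponents' strategies from their time-$(t-1)$ values to their time-$t$ values \emph{one player at a time}. Enumerating the opponents $j \neq i$, I define intermediate profiles in which the first $k$ opponents already play their time-$t$ strategies while the remaining opponents still play their time-$(t-1)$ strategies, so that consecutive profiles differ in exactly one opponent's strategy and the two endpoints are $\vec{x}_{-i}^{(t-1)}$ and $\vec{x}_{-i}^{(t)}$. Writing the difference as the telescoping sum of these consecutive gaps reduces everything to single-player changes.

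Because the coordinate map is multilinear, the gap between two consecutive hybrids is \emph{linear} in the lone differing opponent $j$'s increment $\vec{x}_j^{(t)} - \vec{x}_j^{(t-1)}$, with coefficient vector whose entries are exactly conditional expectations of $u_i$ over the other (fixed) opponents; by fact (ii) this coefficient vector has $\ell_\infty$-norm at most $1$. Hölder's inequality then bounds each telescoping term by $\|\vec{x}_j^{(t)} - \vec{x}_j^{(t-1)}\|_1$, and summing over $j \neq i$ gives the desired bound for the fixed coordinate $a_i$. Since the resulting right-hand side is independent of $a_i$, taking the maximum over $a_i \in \cA_i$ yields the claim. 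The only step requiring genuine care is the multilinear telescoping together with the verification that each coefficient vector is bounded by $1$ in $\ell_\infty$; this hinges on the boundedness of $u_i$ and on the fact that marginalizing over a subset of opponents is a convex combination and hence preserves the $[-1,1]$ range. Everything else is a routine application of Hölder's inequality.
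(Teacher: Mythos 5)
Your proof is correct and is essentially the paper's own argument: the paper bounds the coordinate-wise difference by the total-variation distance between the two product distributions (using $|u_i(\cdot)| \le 1$ and the triangle inequality) and then invokes the standard sum-product inequality for product measures, whose textbook proof is exactly your one-opponent-at-a-time telescoping combined with H\"older's inequality. You have merely inlined that lemma, keeping the bounded utilities as coefficient vectors (conditional expectations in $[-1,1]$) rather than first pulling them out via the triangle inequality.
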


    \begin{proof}
        By the normalization assumption on the utilities we know that $|\vec{u}_i(\cdot)| \leq 1$. Thus, from the triangle inequality it follows that
        \begin{equation*}
            \| \vec{u}_i^{(t)} - \vec{u}_i^{(t-1)} \|_{\infty} \leq \sum_{\vec{a}_{-i} \in \cA_{-i} } \left| \prod_{j \neq i} \vec{x}_j^{(t)}(a_j) - \prod_{j \neq i} \vec{x}_j^{(t-1)}(a_j) \right|.
        \end{equation*}
        The induced term can be recognized as the total variation distance of two product distributions. Hence, a standard application of the sum-product inequality implies that
        \begin{equation*}
            \sum_{\vec{a}_{-i} \in \cA_{-i} } \left| \prod_{j \neq i} \vec{x}_j^{(t)}(a_j) - \prod_{j \neq i} \vec{x}_j^{(t-1)}(a_j) \right| \leq \sum_{j \neq i} \|\vec{x}_j^{(t)} - \vec{x}_j^{(t-1)}\|_1.
        \end{equation*}
    \end{proof}
    Moreover, a direct application of Young's inequality to the bound of the previous claim gives us that
    \begin{equation}
        \label{eq:util-bound}
        \|\vec{u}_i^{(t)} - \vec{u}_i^{(t-1)}\|^2_{\infty} \leq (n-1) \sum_{j \neq i} \| \vec{x}_j^{(t)} - \vec{x}_j^{(t-1)}\|^2_1.
    \end{equation}
    As a result, if we plug-in this bound to \eqref{eq:rvu-i} we may conclude that for all players $i \in [n]$ it holds that
    \begin{equation*}
        \reg_i^T \leq \alpha_i + (n-1) \beta_i \sum_{t=1}^T \sum_{j \neq i} \| \vec{x}_j^{(t)} - \vec{x}_j^{(t-1)}\|_1^2 - \gamma_i \sum_{t=1}^T \| \vec{x}_i^{(t)} - \vec{x}_i^{(t-1)}\|_1^2.
    \end{equation*}
    Summing these inequalities for all players $i \in [n]$ yields that
    \begin{align*}
        \sum_{i=1}^n \reg_i^T &\leq \sum_{i=1}^n \alpha_i + \sum_{i=1}^n \left( (n-1) \sum_{j \neq i} \beta_j - \gamma_i \right) \sum_{t=1}^T \|\vec{x}_i^{(t)} - \vec{x}_i^{(t-1)}\|_1^2 \\ 
        &\leq \sum_{i=1}^{n} \alpha_i - \frac{1}{2} \sum_{i=1}^n \gamma_i \sum_{t=1}^T \|\vec{x}_i^{(t)} - \vec{x}_i^{(t-1)}\|_1^2,
    \end{align*}
    where we used the assumption that $\gamma_i \geq 2(n-1) \sum_{j \neq i} \beta_j$ for any $i \in [n]$. Finally, the theorem follows from a rearrangement of the final bound using the assumption that $\sum_{i=1}^n \reg_i^T \geq 0$.
\end{proof}

\begin{proposition}
    \label{proposition:advanced}
    Suppose that each player employs \eqref{eq:OFTRL} with the same learning rate $\eta > 0$. Moreover, suppose that the game is such that $\sum_{i=1}^n \reg_i^T \geq 0$. Then,
    \begin{enumerate}
        \item \label{item:prop-H} If all players use $H$-step recency bias (\Cref{item:H-step}) and $\eta \leq \frac{1}{4(n-1)H}$,
              \begin{equation*}
                  \sum_{t=1}^T \sum_{i=1}^n \| \Vec{x}_i^{(t)} - \Vec{x}_i^{(t-1)}\|_1^2 \leq 8 \sum_{i=1}^n \Omega_i.
              \end{equation*}
        \item \label{item:prop-disc} If all players use geometrically $\delta$-discounted recency bias (\Cref{item:discounting}) and $\eta \leq \frac{(1 - \delta)^{3/2}}{4(n-1)}$,
              \begin{equation*}
                  \sum_{t=1}^T \sum_{i=1}^n \| \Vec{x}_i^{(t)} - \Vec{x}_i^{(t-1)}\|_1^2 \leq 16 \sum_{i=1}^n \Omega_i.
              \end{equation*}
    \end{enumerate}
\end{proposition}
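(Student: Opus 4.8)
The plan is to run essentially the same argument as in \Cref{theorem:bounded_traj}, the only new ingredient being that the prediction $\vec{m}_i^{(t)}$ is no longer the single previous utility, so the variation term in the \rvu bound becomes $\sum_t \|\vec{u}_i^{(t)} - \vec{m}_i^{(t)}\|_\infty^2$ in place of $\sum_t \|\vec{u}_i^{(t)} - \vec{u}_i^{(t-1)}\|_\infty^2$. Concretely, \eqref{eq:OFTRL} admits the general predictive regret bound $\reg_i^T \le \frac{\Omega_i}{\eta} + \eta \sum_{t} \|\vec{u}_i^{(t)} - \vec{m}_i^{(t)}\|_\infty^2 - \frac{1}{4\eta} \sum_{t} \|\vec{x}_i^{(t)} - \vec{x}_i^{(t-1)}\|_1^2$, of which the instantiation $\vec{m}_i^{(t)} = \vec{u}_i^{(t-1)}$ recovers \Cref{proposition:rvu}. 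Thus the whole task reduces to controlling the prediction-error term $\sum_t \|\vec{u}_i^{(t)} - \vec{m}_i^{(t)}\|_\infty^2$ by the \emph{opponents'} path lengths, after which the summation over players and the learning-rate calibration proceed exactly as before.

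First I would rewrite each prediction error as a telescoped, weighted combination of consecutive utility increments. Writing $\vec{d}_i^{(s)} \defeq \vec{u}_i^{(s)} - \vec{u}_i^{(s-1)}$ and using $\vec{u}_i^{(t)} - \vec{u}_i^{(\tau)} = \sum_{s=\tau+1}^{t} \vec{d}_i^{(s)}$, the fact that each $\vec{m}_i^{(t)}$ is a convex combination $\sum_\tau w_\tau^{(t)} \vec{u}_i^{(\tau)}$ of past utilities (uniform over the last $H$ steps in \Cref{item:H-step}; with normalized geometric weights proportional to $\delta^{-\tau}$ in \Cref{item:discounting}) gives $\vec{u}_i^{(t)} - \vec{m}_i^{(t)} = \sum_\tau w_\tau^{(t)} \sum_{s=\tau+1}^t \vec{d}_i^{(s)}$. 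Swapping the order of summation re-expresses this as $\sum_{s\le t} c_s^{(t)} \vec{d}_i^{(s)}$ with nonnegative coefficients I would read off explicitly: in the uniform case $c_s^{(t)} \le 1$, supported on the window $s \in [t-H+1, t]$; in the geometric case $c_s^{(t)} = W_s/W_t \le \delta^{\,t-s}$, where $W_t \defeq \sum_{\tau=0}^{t-1}\delta^{-\tau}$.

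Next I would apply Cauchy--Schwarz to pass to squared increments, $\|\vec{u}_i^{(t)} - \vec{m}_i^{(t)}\|_\infty^2 \le \big(\sum_s c_s^{(t)}\big)\big(\sum_s c_s^{(t)} (D_i^{(s)})^2\big)$ with $D_i^{(s)} \defeq \|\vec{d}_i^{(s)}\|_\infty$, then sum over $t$ and swap once more so that each $(D_i^{(s)})^2$ is charged by $\sum_{t \ge s} c_s^{(t)}$. Since both the normalization $\sum_s c_s^{(t)}$ and the charge $\sum_{t\ge s} c_s^{(t)}$ are $O(H)$ in the uniform case and $O((1-\delta)^{-1})$ in the geometric case, this yields $\sum_t \|\vec{u}_i^{(t)} - \vec{m}_i^{(t)}\|_\infty^2 \le \kappa \sum_s (D_i^{(s)})^2$ with $\kappa = O(H^2)$ and $\kappa = O((1-\delta)^{-2})$ respectively. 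Invoking \eqref{eq:util-bound} to bound $(D_i^{(s)})^2 \le (n-1)\sum_{j\ne i} \|\vec{x}_j^{(s)} - \vec{x}_j^{(s-1)}\|_1^2$ converts this into opponents' path lengths, as in \Cref{theorem:bounded_traj}.

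Finally I would sum the per-player bounds over $i$, collecting the path-length terms via the double-counting identity $\sum_i \sum_{j\ne i}(\cdot)_j = (n-1)\sum_j (\cdot)_j$. Writing $P \defeq \sum_j \sum_s \|\vec{x}_j^{(s)} - \vec{x}_j^{(s-1)}\|_1^2$, the sum of regrets is at most $\frac{1}{\eta}\sum_i \Omega_i + \big(\eta (n-1)^2 \kappa - \frac{1}{4\eta}\big) P$; the stated thresholds ($\eta \le \frac{1}{4(n-1)H}$ and $\eta \le \frac{(1-\delta)^{3/2}}{4(n-1)}$) are exactly calibrated so that $\eta(n-1)^2 \kappa \le \frac{1}{16\eta}$, making the coefficient of $P$ at most $-\frac{3}{16\eta} < 0$. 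Combining with the hypothesis $\sum_i \reg_i^T \ge 0$ and rearranging gives $P \le \frac{16}{3}\sum_i \Omega_i \le 8 \sum_i \Omega_i$ for \Cref{item:prop-H}, and the looser constant claimed in \Cref{item:prop-disc}. The main obstacle I anticipate is the geometric case: aligning the powers of $(1-\delta)$ with the prescribed threshold requires care in tracking the normalization $W_t$ and in bounding the charge $\sum_{t\ge s} c_s^{(t)}$, since a crude estimate of the coefficients $c_s^{(t)}$ costs additional factors of $(1-\delta)^{-1}$.
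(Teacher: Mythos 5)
Your proposal is correct, and it follows the same outer skeleton as the paper's proof: a per-player \rvu-type bound with advanced predictions, conversion of the utility-variation term into opponents' path lengths via \Cref{claim:util-infty} and Young's inequality, summation over players with the double-counting identity, calibration of $\eta$ to make the path-length coefficient negative, and finally the hypothesis $\sum_{i=1}^n \reg_i^T \geq 0$ to rearrange. The genuine difference lies in the inner ingredient: the paper simply cites \citet[Propositions 9 and 10]{Syrgkanis15:Fast} as black boxes, which state that \eqref{eq:OFTRL} with $H$-step recency bias satisfies the \rvu property with $\beta = \eta H^2$, and with geometric $\delta$-discounting with $\beta = \frac{\eta}{(1-\delta)^3}$, whereas you re-derive these parameters from the general predictive bound $\reg_i^T \le \frac{\Omega_i}{\eta} + \eta \sum_t \|\vec{u}_i^{(t)} - \vec{m}_i^{(t)}\|_\infty^2 - \frac{1}{4\eta}\sum_t \|\vec{x}_i^{(t)}-\vec{x}_i^{(t-1)}\|_1^2$ via telescoping, Cauchy--Schwarz, and a charge-per-increment double counting. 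Your derivation checks out: in the uniform case the coefficients $c_s^{(t)} = (s-t+H)/H \le 1$ give $\kappa \le H^2$, recovering exactly the cited constant; in the geometric case your bound $c_s^{(t)} = W_s/W_t \le \delta^{t-s}$ is valid (cross-multiplying $(\delta^{-s}-1)\delta^{-t} \le \delta^{-s}(\delta^{-t}-1)$ for $t \ge s$), and it yields $\kappa = O((1-\delta)^{-2})$, which is actually \emph{sharper} than the $(1-\delta)^{-3}$ dependence in the cited proposition; since the stated threshold $\eta \le \frac{(1-\delta)^{3/2}}{4(n-1)}$ is smaller than the $\frac{1-\delta}{4(n-1)}$ your bound would require, the conclusion follows with room to spare, and your final constants ($\frac{16}{3}\sum_i \Omega_i$ in both cases) sit comfortably inside the claimed $8\sum_i\Omega_i$ and $16\sum_i\Omega_i$. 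What your route buys is a self-contained proof and a tighter learning-rate regime for the discounted case; what the paper's route buys is brevity. The only loose end in your sketch is the boundary behavior of the $H$-step window at early times $t < H$ (where the window reaches before $t=0$), but this is a convention issue the paper also inherits from the citation and does not affect the argument.
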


\begin{proof}
    Let us first establish \Cref{item:H-step}. We will use the following \rvu bound shown by \citet[Proposition 9]{Syrgkanis15:Fast}.

    \begin{proposition}[\citep{Syrgkanis15:Fast}]
        \eqref{eq:OFTRL} with learning rate $\eta >0$ and $H$-step recency bias (\Cref{item:H-step}) satisfies the \rvu property with $(\alpha, \beta, \gamma) = (\frac{\Omega}{\eta}, \eta H^2, \frac{1}{4\eta})$, where $\Omega \defeq \sup_{\Vec{x} \in \cX} \cR(\Vec{x}) - \inf_{\Vec{x} \in \cX} \cR(\Vec{x})$.
    \end{proposition}
    As a result, we may conclude from this bound that the regret of each player $i \in [n]$ can be bounded as
    \begin{equation*}
        \reg_i^T \leq \frac{\Omega_i}{\eta} + \eta H^2 \sum_{t=1}^T \| \Vec{u}_i^{(t)} - \Vec{u}_i^{(t-1)}\|_{\infty}^2 - \frac{1}{4\eta} \sum_{t=1}^T \| \Vec{x}_i^{(t)} - \Vec{x}_i^{(t-1)}\|_1^2,
    \end{equation*}
    where $\Omega_i \defeq \sup_{\Vec{x} \in \cX_i} \cR(\Vec{x}) - \inf_{\Vec{x} \in \cX} \cR(\Vec{x})$. Next,
    using \Cref{claim:util-infty} the previous bound implies that
    \begin{equation*}
        \reg_i^T \leq \frac{\Omega_i}{\eta} + \eta (n-1) H^2 \sum_{t=1}^T \sum_{j \neq i} \| \Vec{x}_j^{(t)} - \Vec{x}_j^{(t-1)}\|_{1}^2 - \frac{1}{4\eta} \sum_{t=1}^T \| \Vec{x}_i^{(t)} - \Vec{x}_i^{(t-1)}\|_1^2.
    \end{equation*}
    Summing these bounds for all players $i \in [n]$ gives us that
    \begin{align*}
        \sum_{i=1}^n \reg_i^T &\leq \frac{1}{\eta} \sum_{i=1}^n \Omega_i + \sum_{i=1}^n \left( \eta (n-1)^2 H^2 - \frac{1}{4\eta} \right) \sum_{t=1}^T \| \Vec{x}_i^{(t)} - \Vec{x}_i^{(t-1)}\|_1^2 \\
        &\leq \frac{1}{\eta} \sum_{i=1}^n \Omega_i - \frac{1}{8\eta} \sum_{t=1}^T \sum_{i=1}^n \| \Vec{x}_i^{(t)} - \Vec{x}_i^{(t-1)}\|_1^2,
    \end{align*}
    for $\eta \leq \frac{1}{4(n-1)H}$. Thus, \Cref{item:prop-H} follows immediately under the assumption that $\sum_{i=1}^n \reg_i^T \geq 0$. Next, we proceed with the proof of \Cref{item:prop-disc}. We use the following \rvu bound shown by \citet[Proposition 10]{Syrgkanis15:Fast}.

    \begin{proposition}[\citep{Syrgkanis15:Fast}]
        \eqref{eq:OFTRL} with learning rate $\eta >0$ and geometrically $\delta$-discounted recency bias (\Cref{item:discounting}) satisfies the \rvu property with $(\alpha, \beta, \gamma) = (\frac{\Omega}{\eta}, \frac{\eta}{(1-\delta)^3}, \frac{1}{8\eta})$, where $\Omega \defeq \sup_{\Vec{x} \in \cX} \cR(\Vec{x}) - \inf_{\Vec{x} \in \cX} \cR(\Vec{x})$.
    \end{proposition}
    Thus, if player $i \in [n]$ uses geometrically $\delta$-discounted recency bias it follows that its regret can be bounded as
    \begin{equation*}
        \reg_i^T \leq \frac{\Omega_i}{\eta} + \frac{\eta}{(1 - \delta)^3} \sum_{t=1}^T \| \Vec{u}_i^{(t)} - \Vec{u}_i^{(t-1)}\|_{\infty}^2 - \frac{1}{8\eta} \sum_{t=1}^T \| \Vec{x}_i^{(t)} - \Vec{x}_i^{(t-1)}\|_1^2.
    \end{equation*}
    Combining this bound with \Cref{claim:util-infty} yields that
    \begin{equation*}
        \reg_i^T \leq \frac{\Omega_i}{\eta} + \frac{\eta(n-1)}{(1 - \delta)^3} \sum_{j \neq i} \sum_{t=1}^T \| \Vec{x}_i^{(t)} - \Vec{x}_i^{(t-1)}\|_{1}^2 - \frac{1}{8\eta} \sum_{t=1}^T \| \Vec{x}_i^{(t)} - \Vec{x}_i^{(t-1)}\|_1^2.
    \end{equation*}
    Summing these bounds for all players $i \in [n]$ gives us that
    \begin{align*}
        \sum_{i=1}^n \reg_i^T &\leq \frac{1}{\eta} \sum_{i=1}^n \Omega_i + \sum_{i=1}^n \left( \frac{\eta (n-1)^2}{(1-\delta)^3} - \frac{1}{8\eta} \right) \sum_{t=1}^T \| \Vec{x}_i^{(t)} - \Vec{x}_i^{(t-1)}\|_1^2 \\
        &\leq \frac{1}{\eta} \sum_{i=1}^n \Omega_i - \frac{1}{16\eta} \sum_{t=1}^T \sum_{i=1}^n \| \Vec{x}_i^{(t)} - \Vec{x}_i^{(t-1)}\|_1^2,
    \end{align*}
    for $\eta \leq \frac{(1 - \delta)^{3/2}}{4(n-1)}$. Finally, the fact that $\sum_{i=1}^n \reg_i^T \geq 0$ along with a rearrangement of the previous bound concludes the proof.
\end{proof}

Next, we proceed with the proof of \Cref{proposition:positive_regret}. First, we state some preliminary facts about strategically zero-sum games, a subclass of bimatrix games. It should be stressed that bimatrix games is already a very general class of games. For example, \citet{Daskalakis06:The} have shown that for \emph{succinctly representable} multiplayer games, the problem of computing Nash equilibria can be reduced to the two-player case. Below we give the formal definition of a strategically zero-sum game.

\begin{definition}[Strategically Zero-Sum Games~\citep{Moulin78:Strategically}]
    \label{def:strat-zero_sum}
    The bimatrix games $(\mat{A}, \mat{B})$ and $(\mat{C}, \mat{D})$, defined on the same action space, are \emph{strategically equivalent} if for all $\vec{x}, \vec{x}' \in \cX$ and $\vec{y}, \vec{y}' \in \cY$,
    \begin{equation*}
        \begin{split}
            \vec{x}^\top \mat{A} \vec{y} \geq (\vec{x}')^\top \mat{A} \vec{y} &\iff \vec{x}^\top \mat{C} \vec{y} \geq (\vec{x}')^\top \mat{C} \vec{y}; \\
            \vec{x}^\top \mat{B} \vec{y} \geq \vec{x}^\top \mat{B} \vec{y}' &\iff \vec{x}^\top \mat{D} \vec{y} \geq \vec{x}^\top \mat{D} \vec{y}'.
        \end{split}
    \end{equation*}
    A game is \emph{strategically zero-sum} if it is strategically equivalent to some zero-sum game.
\end{definition}
In words, consider a pair of strategies $(\vec{x}, \vec{y}) \in \mathcal{X} \times \mathcal{Y}$. Player $\cX$ can order her strategy space based on the (expected) payoff if player $\cY$ was to play $\vec{y}$, and similarly, player $\cY$ can order her strategy space under the assumption that player $\cX$ will play $\vec{x}$. In strategically equivalent games this ordering is preserved.

Without any loss, we will consider \emph{non-trivial games} in the sense of \citep[Definition 2]{Moulin78:Strategically}. The following characterization \citep[Theorem 2]{Moulin78:Strategically} will be crucial for the proof of \Cref{proposition:positive_regret}.

\begin{theorem}[\citep{Moulin78:Strategically}]
    \label{theorem:strat-zero_sum}
    Let $(\mat{A}, \mat{B})$ be a non-trivial $n \times m$\footnote{Here we overload $n$ (which typically stands for the number of players) in order to remain consistent with the prelevant notation used in this setting.} strategically zero-sum game. Then, there exist $\lambda > 0, \mu > 0$ and a compatible matrix $\mat{C}$ such that
    \begin{itemize}
        \item $\mat{A} = \lambda \mat{C} + \vec{v}_a \vec{1}_m^\top $, for some $\vec{v}_a \in \R^n $;
        \item $\mat{B} = - \mu \mat{C} + \vec{1}_n \vec{v}_b^\top $, for some $\vec{v}_b \in \R^m$.
    \end{itemize}
\end{theorem}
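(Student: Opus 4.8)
The plan is to prove the two structural identities separately — one for $\mat{A}$ (player $\cX$'s matrix) against the zero-sum matrix $\mat{C}$, and one for $\mat{B}$ against $-\mat{C}$ — since the definition of strategic equivalence in \Cref{def:strat-zero_sum} decouples into one condition per player. Fix the witnessing zero-sum game $(\mat{C}, -\mat{C})$ to which $(\mat{A}, \mat{B})$ is strategically equivalent. For player $\cX$, the equivalence says that for every fixed $\vec{y} \in \cY$ the two linear functionals $\vec{x} \mapsto \langle \vec{x}, \mat{A}\vec{y}\rangle$ and $\vec{x} \mapsto \langle \vec{x}, \mat{C}\vec{y}\rangle$ induce the \emph{same} weak preference order on the simplex of player $\cX$. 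My first step is the elementary order-theoretic lemma: two vectors $\vec{p}, \vec{q} \in \R^n$ induce the same ordering of $\Delta(\cA_{\cX})$ if and only if their projections onto the hyperplane $\vec{1}_n^\perp$ are positive multiples of one another, equivalently $\vec{p} = \kappa \vec{q} + c\,\vec{1}_n$ for some $\kappa > 0$ and $c \in \R$. (The reverse direction is immediate; the forward direction holds because preference differences see $\vec{p}$ only through its component orthogonal to $\vec{1}_n$, and two functionals ordering a full-dimensional simplex identically must have proportional, equidirected such components.)

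Applying this lemma pointwise gives, for each $\vec{y}$ in the relatively open set where $\mat{C}\vec{y}$ is not a multiple of $\vec{1}_n$, scalars $\lambda(\vec{y}) > 0$ and $c(\vec{y}) \in \R$ with $\mat{A}\vec{y} = \lambda(\vec{y})\,\mat{C}\vec{y} + c(\vec{y})\,\vec{1}_n$. The crux of the argument — and the step I expect to be the main obstacle — is a \emph{rigidity} claim: $\lambda(\vec{y})$ is in fact a single constant $\lambda$, independent of $\vec{y}$. To establish this I would exploit linearity in $\vec{y}$. Pick $\vec{y}_1, \vec{y}_2 \in \cY$ whose images $\mat{C}\vec{y}_1, \mat{C}\vec{y}_2$ are linearly independent modulo $\vec{1}_n$; such a pair exists precisely because the game is \emph{non-trivial} in the sense of \citet{Moulin78:Strategically}, triviality being exactly the degeneracy in which $\mat{C}\vec{y}$ collapses into $\mathrm{span}(\vec{1}_n)$ and the scaling becomes undefined. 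For a convex combination $\vec{y}_\theta = \theta\vec{y}_1 + (1-\theta)\vec{y}_2$, I would expand $\mat{A}\vec{y}_\theta$ in two ways — once by linearity of $\mat{A}$ applied to the pointwise identities at $\vec{y}_1, \vec{y}_2$, and once by the pointwise identity at $\vec{y}_\theta$ — and project both expressions onto $\vec{1}_n^\perp$ to annihilate the $\vec{1}_n$ terms. Matching the coefficients of the independent vectors $\overline{\mat{C}\vec{y}_1}, \overline{\mat{C}\vec{y}_2}$ forces $\lambda(\vec{y}_1) = \lambda(\vec{y}_\theta) = \lambda(\vec{y}_2)$, and letting the pair range over the non-degenerate set yields a global constant $\lambda > 0$.

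With $\lambda$ constant, the residual $\mat{A}\vec{y} - \lambda\mat{C}\vec{y} = c(\vec{y})\,\vec{1}_n$ is a linear function of $\vec{y}$ valued in $\mathrm{span}(\vec{1}_n)$, so $c(\vec{y}) = \langle \vec{v}, \vec{y}\rangle$ for a fixed $\vec{v}$; the identity $\mat{A}\vec{y} = \lambda\mat{C}\vec{y} + \vec{1}_n\langle\vec{v},\vec{y}\rangle$ holding for all $\vec{y}$ exhibits $\mat{A} - \lambda\mat{C}$ as the rank-one correction required by the statement (the precise placement of $\vec{1}$ matching the row/column labelling of the players in \citet{Moulin78:Strategically}, so that it reads $\mat{A} = \lambda\mat{C} + \vec{v}_a\vec{1}_m^\top$). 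I would then repeat the entire argument verbatim for player $\cY$, comparing $\mat{B}$ against $-\mat{C}$: the same order-preservation and rigidity reasoning produce a constant $\mu > 0$ with $\mat{B} = -\mu\mat{C} + \vec{1}_n\vec{v}_b^\top$, the minus sign arising simply because player $\cY$'s opponent matrix in the witnessing game is $-\mat{C}$. Positivity of $\lambda$ and $\mu$ is inherited from order-\emph{preservation} (rather than reversal), and it is exactly non-triviality that guarantees these scalars are well-defined and that the rigidity step goes through; everything else is linear algebra on the simplex.
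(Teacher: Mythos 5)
First, a framing remark: the paper offers no proof of this statement---it is imported with a citation from \citet{Moulin78:Strategically}---so your proposal can only be compared against the original argument. Your skeleton is essentially the classical one: the vNM-style lemma that two linear functionals ordering a full-dimensional simplex identically differ by a positive scaling plus a multiple of $\vec{1}$ (correct as you state it, and it does crucially use that \Cref{def:strat-zero_sum} quantifies over \emph{mixed} strategies), followed by a rigidity step that makes the scale factor constant via bilinearity, with non-triviality invoked exactly where it is needed to produce a pair $\vec{y}_1, \vec{y}_2$ with images independent modulo $\vec{1}_n$. Those two steps are sound.

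There is, however, a genuine gap in the final assembly, and it is not the ``row/column labelling'' convention you wave it off as. Run with the witnessing zero-sum game $(\mat{C}, -\mat{C})$, your argument lands the rank-one corrections on the \emph{opposite} sides from the statement: for player $\cX$ you derive $\mat{A}\vec{y} - \lambda \mat{C}\vec{y} \in \mathrm{span}(\vec{1}_n)$, i.e.\ $\mat{A} = \lambda \mat{C} + \vec{1}_n \vec{w}^\top$ with $\vec{w} \in \R^m$ (column shifts, which indeed leave $\cX$'s ordering intact since $\langle \vec{x}, \vec{1}_n\rangle = 1$), and the verbatim repetition for player $\cY$ gives $\mat{B} = -\mu \mat{C} + \vec{v}\vec{1}_m^\top$ with $\vec{v} \in \R^n$---\emph{not} $\vec{1}_n \vec{v}_b^\top$ as you wrote. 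The statement's form cannot hold for the witness $\mat{C}$ itself: adding $\vec{v}_a \vec{1}_m^\top$ perturbs $\cX$'s utility by $\langle \vec{x}, \vec{v}_a\rangle$, which \emph{does} change her ordering unless $\vec{v}_a \in \mathrm{span}(\vec{1}_n)$; the $\mat{C}$ in the theorem is therefore in general a different zero-sum matrix from the strategic-equivalence witness. The missing step is to merge your two identities by redefining the zero-sum matrix so as to absorb the innocuous shifts and expose the cross shifts:
\begin{equation*}
    \mat{C}' \defeq \mat{C} + \frac{1}{\lambda} \vec{1}_n \vec{w}^\top - \frac{1}{\mu} \vec{v} \vec{1}_m^\top, \qquad \text{whence} \qquad \mat{A} = \lambda \mat{C}' + \frac{\lambda}{\mu} \vec{v} \vec{1}_m^\top, \quad \mat{B} = -\mu \mat{C}' + \frac{\mu}{\lambda} \vec{1}_n \vec{w}^\top,
\end{equation*}
so that the statement holds with $\vec{v}_a \defeq \frac{\lambda}{\mu}\vec{v}$ and $\vec{v}_b \defeq \frac{\mu}{\lambda}\vec{w}$. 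This is a two-line fix, but without it the identities you claim do not follow from what you derived; note also that it requires combining the two players' identities with a \emph{common} $\mat{C}$, which is why treating the two structural identities ``separately,'' as your opening sentence proposes, cannot quite be carried through to the end. Two smaller points you should still patch: extending $\mat{A}\vec{y} - \lambda\mat{C}\vec{y} \in \mathrm{span}(\vec{1}_n)$ from the simplex to all of $\R^m$ (immediate, since $\Delta^m$ contains the standard basis), and handling the degenerate $\vec{y}$ with $\mat{C}\vec{y} \in \mathrm{span}(\vec{1}_n)$, where the biconditional in \Cref{def:strat-zero_sum} forces $\mat{A}\vec{y} \in \mathrm{span}(\vec{1}_n)$ as well, so the global $\lambda$ works there too.
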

Here we used the notation $\Vec{1}_n$ to denote the all-ones vector in $\R^n$. The converse of \Cref{theorem:strat-zero_sum} also holds. It is also worth pointing out strategically zero-sum games can be ``embedded'' in a zero-sum game with imperfect information \citep{Moulin78:Strategically}.

\begin{remark}
    \label{remark:scale}
    For the purpose of \Cref{proposition:positive_regret} (and \Cref{proposition:positive_regret-full}) it will be assumed that $\lambda = \mu$, neutralizing any ``scale imbalances'' in the payoff matrices of the two players. We remark that if this is not the case, our result for strategically zero-sum games (\Cref{item:stra-zero_sum}) still holds by considering an appropriately \emph{weighted} sum of regrets. In this case one should select different learning rates for the two players in order to cancel out the underlying scale imbalance. Nonetheless, such an extension does not appear to work for polymatrix strategically zero-sum games (\Cref{item:polymatrix-rank}). The latter fact could be partly justified by the surprising result that even polymatrix \emph{strictly competitive games} are $\ppad$-hard \citep[Theorem 1.2]{Cai11:On}.
\end{remark}

\begin{remark}[Strictly Competitive Games]
    \label{remark:strictly_competitive}
    Strictly competitive games form a subclass of strategically zero-sum games. More precisely, a two-person game is \emph{strictly competitive} if it has the following property: if both players change their mixed strategies, then either there is no change in the expected payoffs, or one of the two expected payoffs increases and the other decreases. It was formally shown by \citet{Adler09:A} that a game $(\mat{A}, \mat{B})$ is strictly competitive if and only if $\mat{B}$ is an \emph{affine variant} of matrix $\mat{A}$; that is, $\mat{B} = \lambda \mat{A} + \mu \mat{1}$, where $\lambda > 0$, $\mu$ is unrestricted, and $\mat{1}$ denotes the all-ones matrix.
\end{remark}

Before we proceed with the proof of \Cref{proposition:positive_regret}, let us first make the following simple observation which derives from \Cref{theorem:strat-zero_sum}.

\begin{observation}
    FTRL and MD, and optimistic variants thereof, employed on a strategically zero-sum game with suitable learning rates are equivalent to the dynamics in a ``hidden'' zero-sum game.
\end{observation}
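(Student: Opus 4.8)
The plan is to reduce the claim to two elementary invariances of the (optimistic) FTRL and MD updates over the simplex, and then to observe that Moulin's decomposition (\Cref{theorem:strat-zero_sum}) alters the utilities each player observes only through those invariances.

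First I would isolate the two invariances. (i) \emph{Shift invariance}: replacing a player's observed utility $\vec{u}^{(t)}$ by $\vec{u}^{(t)} + c\,\vec{1}$ for any scalar $c$ leaves every produced iterate unchanged, because over $\Delta(\cA_i)$ we have $\langle \vec{x}, c\,\vec{1}\rangle = c$, a constant that does not affect the $\argmax$ in \eqref{eq:MD}, \eqref{eq:OMD}, or \eqref{eq:OFTRL}. For the cumulative rules this persists since $\sum_\tau (\vec{u}^{(\tau)} + c_\tau\vec{1}) = \sum_\tau \vec{u}^{(\tau)} + (\sum_\tau c_\tau)\vec{1}$ equals the original sum up to a multiple of $\vec{1}$, and for the predictor $\vec{m}^{(t)} = \vec{u}^{(t-1)}$ the same shift is simply inherited. (ii) \emph{Scale invariance}: replacing $\vec{u}^{(t)}$ by $\lambda\,\vec{u}^{(t)}$ with $\lambda > 0$ amounts to rescaling the learning rate, since the utility enters each update only through its product with $\eta$ (equivalently, $\frac{1}{\eta}$ multiplies the regularizer/divergence) and the $\argmax$ is invariant under a positive rescaling of the objective.

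Next I would compute the utilities observed in the strategically zero-sum game $(\mat{A}, \mat{B})$: player $\cX$ sees $\mat{A}\vec{y}^{(t)}$ and player $\cY$ sees $\mat{B}^\top \vec{x}^{(t)}$. Substituting the decomposition of \Cref{theorem:strat-zero_sum} and contracting the rank-one correction against the opponent's \emph{normalized} strategy (so that $\vec{1}^\top\vec{x}^{(t)} = \vec{1}^\top\vec{y}^{(t)} = 1$) collapses each correction to a scalar multiple of the all-ones vector on the relevant player's own action set; explicitly $\mat{A}\vec{y}^{(t)} = \lambda\,\mat{C}\vec{y}^{(t)} + c_x^{(t)}\vec{1}$ and $\mat{B}^\top\vec{x}^{(t)} = -\mu\,\mat{C}^\top\vec{x}^{(t)} + c_y^{(t)}\vec{1}$ for scalars $c_x^{(t)}, c_y^{(t)}$ depending on the iterates. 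The vectors $\mat{C}\vec{y}^{(t)}$ and $-\mat{C}^\top\vec{x}^{(t)}$ are precisely the utility gradients faced by the two players in the zero-sum game $(\mat{C}, -\mat{C})$.

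I would then conclude by combining the invariances. Shift invariance lets us drop the terms $c_x^{(t)}\vec{1}$ and $c_y^{(t)}\vec{1}$ without altering any iterate, so each player behaves exactly as if facing $\lambda\,\mat{C}\vec{y}^{(t)}$ and $-\mu\,\mat{C}^\top\vec{x}^{(t)}$; scale invariance then absorbs $\lambda$ and $\mu$ into the learning rates. Under the normalization $\lambda = \mu$ of \Cref{remark:scale} a single rescaled learning rate $\lambda\eta$ applied in the hidden game $(\mat{C}, -\mat{C})$ reproduces the trajectory verbatim, while for $\lambda \neq \mu$ one assigns the player-dependent rates $\lambda\eta$ and $\mu\eta$ exactly as flagged there. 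The delicate part—and the step I would treat most carefully—is verifying invariance (i) for each update rule separately in the optimistic/cumulative setting: one must confirm that in \eqref{eq:OFTRL} both the running sum $\sum_\tau \vec{u}^{(\tau)}$ and the prediction accumulate only all-ones shifts, and that in the two-step \eqref{eq:OMD} both the $\vec{m}$-step and the $\vec{u}$-step inherit the shift consistently, so that the equivalence is an exact iterate-by-iterate identity rather than merely an asymptotic or fixed-point correspondence.
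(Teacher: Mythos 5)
Your two invariances are the right tools, and your overall plan (decompose, discard all-ones shifts, absorb positive scalings into the learning rates) is exactly the argument the paper intends---the paper states this observation without proof as an immediate consequence of \Cref{theorem:strat-zero_sum}. However, your contraction step is algebraically wrong for the decomposition you cite, and this is a genuine gap. \Cref{theorem:strat-zero_sum} writes $\mat{A} = \lambda \mat{C} + \vec{v}_a \vec{1}_m^\top$ with $\vec{v}_a \in \R^n$ and $\mat{B} = -\mu \mat{C} + \vec{1}_n \vec{v}_b^\top$ with $\vec{v}_b \in \R^m$. Hence what player $\cX$ actually observes is $\mat{A}\vec{y}^{(t)} = \lambda \mat{C}\vec{y}^{(t)} + \vec{v}_a(\vec{1}_m^\top \vec{y}^{(t)}) = \lambda \mat{C}\vec{y}^{(t)} + \vec{v}_a$, and similarly $\mat{B}^\top \vec{x}^{(t)} = -\mu \mat{C}^\top \vec{x}^{(t)} + \vec{v}_b$: the corrections are the \emph{fixed vectors} $\vec{v}_a$ and $\vec{v}_b$, not scalar multiples $c_x^{(t)}\vec{1}$, $c_y^{(t)}\vec{1}$ of the all-ones vector as you assert (you have implicitly transposed the rank-one terms, i.e., computed as if they were $\vec{1}_n \vec{v}_a^\top$ and $\vec{v}_b \vec{1}_m^\top$). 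Shift invariance therefore does not apply, and these terms cannot simply be dropped: a fixed non-constant offset genuinely changes the iterates of \eqref{eq:MD}, \eqref{eq:OMD}, and \eqref{eq:OFTRL}. To see the step fail concretely, take $\mat{C} = \mat{0}$, so that $(\mat{A}, \mat{B}) = (\vec{v}_a \vec{1}_m^\top, \vec{1}_n \vec{v}_b^\top)$; your argument would conclude that the iterates match those of the all-zero game, where every iterate stays at its initialization, yet under $(\mat{A}, \mat{B})$ mirror descent drifts toward $\argmax_{a_i} \vec{v}_a(a_i)$.

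The conclusion of the observation is still true, but the offsets must be absorbed into the hidden game rather than discarded. Define $\tilde{\mat{C}} \defeq \mat{C} + \frac{1}{\lambda}\vec{v}_a\vec{1}_m^\top - \frac{1}{\mu}\vec{1}_n\vec{v}_b^\top$. Then $\mat{A}\vec{y}^{(t)} = \lambda \tilde{\mat{C}} \vec{y}^{(t)} + \frac{\lambda}{\mu}\bigl(\vec{v}_b^\top \vec{y}^{(t)}\bigr)\vec{1}_n$ and $\mat{B}^\top \vec{x}^{(t)} = -\mu \tilde{\mat{C}}^\top \vec{x}^{(t)} + \frac{\mu}{\lambda}\bigl(\vec{v}_a^\top \vec{x}^{(t)}\bigr)\vec{1}_m$, so the residuals now genuinely are all-ones multiples; your invariances (i) and (ii) then apply verbatim and yield an exact iterate-by-iterate identification with the dynamics in the zero-sum game $(\tilde{\mat{C}}, -\tilde{\mat{C}})$ under learning rates $\lambda\eta$ and $\mu\eta$ (a single common rate when $\lambda = \mu$, as in \Cref{remark:scale}). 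Equivalently, you may first pass from the stated decomposition to its transposed variant $\mat{A} = \lambda\tilde{\mat{C}} + \vec{1}_n(\tfrac{\lambda}{\mu}\vec{v}_b)^\top$, $\mat{B} = -\mu\tilde{\mat{C}} + (\tfrac{\mu}{\lambda}\vec{v}_a)\vec{1}_m^\top$---which describes the same game with a different hidden matrix---and then run your argument unchanged. Either repair is short, but without it your proof identifies the dynamics with those of the wrong zero-sum game.
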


Thus, this observation reassures us that \eqref{eq:OFTRL} and \eqref{eq:OMD} dynamics (with suitable learning rates) in strategically zero-sum games inherit all of the favorable last-iterate convergence guarantees known for zero-sum games. We are now ready to prove \Cref{proposition:positive_regret}, the extended version of which is included below.

\begin{proposition}{Full Version of \Cref{proposition:positive_regret}}
    \label{proposition:positive_regret-full}
    For the following classes of games it holds that $\sum_{i=1}^n \reg_i^T \geq 0$:
    \begin{enumerate}[nosep]
        \item Two-player zero-sum games; \label{item:zero-sum}
        \item Polymatrix zero-sum games; \label{item:polymatrix}
        \item Constant-sum Polymatrix games; \label{item:global}
        \item Strategically zero-sum games; \label{item:stra-zero_sum}
        \item Polymatrix strategically zero-sum games;\footnote{See \Cref{remark:scale}.} \label{item:polymatrix-rank}
        \item Convex-concave (zero-sum) games; \label{item:convex-concave}
        \item Zero-sum games with objective $f(\Vec{x}, \Vec{y})$ such that \label{item:minimax}
              \begin{equation*}
                  \min_{\vec{x} \in \cX} \max_{\Vec{y} \in \cY} f(\Vec{x}, \Vec{y}) = \max_{\Vec{y} \in \cY} \min_{\vec{x} \in \cX} f(\Vec{x}, \Vec{y});
              \end{equation*}
        \item Quasiconvex-quasiconcave (zero-sum) games; \label{item:quasiconvex-quasiconcave}
        \item Zero-sum stochastic games. \label{item:stochastic}
    \end{enumerate}
\end{proposition}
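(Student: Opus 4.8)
The plan is to observe that $\sum_{i=1}^n \reg_i^T$ is determined entirely by the realized trajectory $(\vec{x}^{(t)})_{t=1}^T$ and the game, independently of which algorithm produced it, so that one may argue purely about the utilities. Writing $\langle \vec{x}_i^{(t)}, \vec{u}_i^{(t)} \rangle = u_i(\vec{x}^{(t)})$, I would decompose
\[
  \sum_{i=1}^n \reg_i^T = \underbrace{\sum_{i=1}^n \max_{\vec{x}_i^* \in \Delta(\cA_i)} \sum_{t=1}^T u_i(\vec{x}_i^*, \vec{x}_{-i}^{(t)})}_{\eqqcolon F} - \underbrace{\sum_{t=1}^T \sw(\vec{x}^{(t)})}_{\eqqcolon S},
\]
so the whole claim reduces to showing $F \geq S$ in each class. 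Two regimes arise, depending on whether the utilities are multilinear.

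For the multilinear classes (items 1--3), each $u_i$ is linear in $\vec{x}_i$ and, in the polymatrix case, additively separable and linear across the neighbors' strategies. Hence for every fixed $\vec{x}_i^*$ one has $\sum_{t=1}^T u_i(\vec{x}_i^*, \vec{x}_{-i}^{(t)}) = T\, u_i(\vec{x}_i^*, \bar{\vec{x}}_{-i})$, where $\bar{\vec{x}}_j \defeq \frac{1}{T}\sum_{t=1}^T \vec{x}_j^{(t)}$. Therefore $F = T \sum_{i=1}^n \max_{\vec{x}_i^*} u_i(\vec{x}_i^*, \bar{\vec{x}}_{-i}) \geq T \sum_{i=1}^n u_i(\bar{\vec{x}}) = T\, \sw(\bar{\vec{x}})$, where the inequality merely evaluates the best response at the feasible point $\vec{x}_i^* = \bar{\vec{x}}_i$. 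Since these games are constant-sum, $\sw \equiv c$, whence $S = cT = T\,\sw(\bar{\vec{x}}) \leq F$. Notably, no minimax theorem is needed here; the content is only multilinearity together with the constant-sum structure.

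For the strategically zero-sum classes (items 4--5) the games are no longer constant-sum, and the plan is to first strip off the strategic equivalence. Using the Moulin--Vial characterization (\Cref{theorem:strat-zero_sum}), and under the scale-matching assumption $\lambda = \mu$ of \Cref{remark:scale}, each player's utility decomposes as $u_i(\vec{x}) = u_i^0(\vec{x}) + \ell_i(\vec{x}_i)$, where the ``hidden'' part forms a genuine (polymatrix) zero-sum game, $\sum_i u_i^0 \equiv 0$, and $\ell_i$ is affine and depends only on player $i$'s own strategy. Running the same averaging step gives $F = T\sum_i \max_{\vec{x}_i^*}\bigl(u_i^0(\vec{x}_i^*,\bar{\vec{x}}_{-i}) + \ell_i(\vec{x}_i^*)\bigr)$, while $S = T\sum_i \ell_i(\bar{\vec{x}}_i)$ because the hidden part cancels in each realized $\sw(\vec{x}^{(t)})$. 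Lower-bounding each max at $\vec{x}_i^* = \bar{\vec{x}}_i$, the own-strategy terms $\ell_i(\bar{\vec{x}}_i)$ exactly cancel against $S$, leaving $\sum_i \reg_i^T \geq T\sum_i u_i^0(\bar{\vec{x}}) = 0$. I expect this to be the main obstacle: verifying that the per-edge Moulin--Vial decomposition assembles into a globally zero-sum hidden game with only own-strategy leftovers is precisely where $\lambda = \mu$ is essential, and (as the remark indicates) the polymatrix version genuinely fails without it.

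Finally, for the non-multilinear zero-sum classes (items 6--9) I would use a value-sandwich argument. Here $S = 0$ because $u_{\cY} = -u_{\cX}$, and writing $f$ for player $\cX$'s objective, $F = \max_{\vec{x}^*}\sum_t f(\vec{x}^*,\vec{y}^{(t)}) - \min_{\vec{y}^*}\sum_t f(\vec{x}^{(t)},\vec{y}^*)$. Let $v = \max_{\vec{x}}\min_{\vec{y}} f = \min_{\vec{y}}\max_{\vec{x}} f$ be the common value granted by the relevant minimax theorem, and fix a maximin/minimax pair $\vec{x}^\sharp, \vec{y}^\sharp$. Evaluating the first best response at $\vec{x}^\sharp$ gives $\max_{\vec{x}^*}\sum_t f(\vec{x}^*,\vec{y}^{(t)}) \geq \sum_t f(\vec{x}^\sharp, \vec{y}^{(t)}) \geq T\min_{\vec{y}} f(\vec{x}^\sharp,\vec{y}) = Tv$, and symmetrically $\min_{\vec{y}^*}\sum_t f(\vec{x}^{(t)},\vec{y}^*) \leq Tv$; subtracting yields $F \geq 0 = S$. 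The only class-specific ingredient is existence of $v$: von Neumann's theorem and its polymatrix generalization for the bilinear cases, Sion's minimax theorem for the (quasi)convex--(quasi)concave cases (items 6 and 8), the hypothesis itself for item 7, and Shapley's theorem for zero-sum stochastic games (item 9).
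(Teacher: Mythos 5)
Your proposal is correct and follows essentially the same route as the paper's proof: evaluate each player's best response at the time-averaged strategy, exploit bilinearity/additive separability together with the zero- or constant-sum structure for items 1--3, strip off the Moulin--Vial rank-one terms (using $\lambda = \mu$, so that only own-strategy affine parts remain and cancel) for items 4--5, and reduce items 7--9 to the existence of a minimax value, exactly as the paper does. The one minor deviation is item 6: you route convex--concave games through Sion's theorem and the value-sandwich argument, whereas the paper handles that case without any minimax theorem, applying Jensen's inequality directly (convexity in $\vec{x}$ and concavity in $\vec{y}$ evaluated at the averaged iterates); both are valid here, with the paper's variant being slightly more elementary and not requiring the compactness/semicontinuity hypotheses that Sion's theorem needs.
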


\begin{proof}
    We proceed separately for each class of games.
    \begin{itemize}[leftmargin=*]
        \item For \Cref{item:zero-sum} let us denoted by $\mat{A}$ the matrix associated with the zero-sum game, so that player $\cX$ is the ``minimizer'' and player $\cY$ is the ``maximizer''. Further, let us denote by $\Bar{\vec{x}} \defeq (\vec{x}^{(1)} + \dots \vec{x}^{(T)})/T$ and $\Bar{\vec{y}} \defeq (\vec{y}^{(1)} + \dots \vec{y}^{(T)})/T$ the time average of the strategies employed by the two players respectively up to time $T \in \N$. Then, we have that $\reg_{\cX}^T + \reg_{\cY}^T$ is equal to
              \begin{align*}
                  &\max_{\vec{x}^* \in \cX} \left\langle \vec{x}^*, - \mat{A} \sum_{t=1}^T \vec{y}^{(t)} \right\rangle + \sum_{t=1}^T \langle \vec{x}^{(t)}, \mat{A} \vec{y}^{(t)} \rangle + \max_{\vec{y}^* \in \cY} \left\langle \vec{y}^*, \mat{A}^\top \sum_{t=1}^T \vec{x}^{(t)} \right\rangle - \sum_{t=1}^T \langle \vec{y}^{(t)}, \mat{A}^\top \vec{x}^{(t)} \rangle \\
                                              & = T \left( \max_{\vec{y}^* \in \cY} \langle \vec{y}^*, \mat{A}^\top \Bar{\vec{x}} \rangle - \min_{\vec{x}^* \in \cX} \left\langle \vec{x}^*, \mat{A} \Bar{\vec{y}} \right\rangle \right) \geq T \left( \langle \Bar{\vec{y}}, \mat{A}^\top \Bar{\vec{x}} \rangle - \langle \Bar{\vec{x}}, \mat{A} \Bar{\vec{y}} \rangle \right) = 0.
              \end{align*}
        \item For \Cref{item:polymatrix} let us denoted by $\cN_i \subseteq [n]$ the neighborhood of the node associated with each player $i \in [n]$ so that the (expected) utility of player $i \in [n]$ under a joint (mixed) strategy vector $\Vec{x} \defeq (\Vec{x}_1, \dots, \Vec{x}_n)$ can be expressed as $\sum_{j \in \cN_i} \vec{x}_i^\top \mat{A}_{i, j} \vec{x}_j$, where $\mat{A}_{i, j}$ is the payoff matrix associated with the edge $i \to j$. Observe that since every edge corresponds to a zero-sum game, it holds that $\mat{A}_{i, j} = - \mat{A}_{j, i}^\top$. Moreover, we let $\Bar{\vec{x}}_i \defeq (\vec{x}_i^{(1)} + \dots + \vec{x}_i^{(T)})/T$ be the time average of player $i$'s strategies up to time $T$. We have that
              \begin{align}
                  \frac{1}{T} \sum_{i=1}^n \reg_i^T & = \frac{1}{T} \sum_{i=1}^n \left( \max_{\vec{x}_i^* \in \cX_i} \left\{ \left\langle \vec{x}_i^*, \sum_{t=1}^T  \sum_{j \in \cN_i} \mat{A}_{i, j} \vec{x}_j^{(t)} \right\rangle \right\} - \sum_{t=1}^T \left\langle \vec{x}_i^{(t)}, \sum_{j \in \cN_i} \mat{A}_{i, j} \vec{x}_j^{(t)} \right\rangle \right) \notag \\
                                                    & = \sum_{i=1}^n \max_{\vec{x}_i^* \in \cX_i} \left\{ \left\langle \vec{x}_i^*, \sum_{j \in \cN_i} \mat{A}_{i, j} \Bar{\vec{x}}_j \right\rangle \right\} - \frac{1}{T} \sum_{i=1}^n \sum_{t=1}^T \left\langle \vec{x}_i^{(t)}, \sum_{j \in \cN_i} \mat{A}_{i, j} \vec{x}_j^{(t)} \right\rangle \notag                 \\
                                                    & = \sum_{i=1}^n \max_{\vec{x}_i^* \in \cX_i} \left\{ \left\langle \vec{x}_i^*, \sum_{j \in \cN_i} \mat{A}_{i, j} \Bar{\vec{x}}_j \right\rangle \right\} \label{align:zero_sum-canc}                                                                                                                                  \\
                                                    & \geq \sum_{i=1}^n \left\langle \Bar{\vec{x}}_i, \sum_{j \in \cN_i} \mat{A}_{i, j} \Bar{\vec{x}}_j \right \rangle = 0, \label{align:zero_sum-canc-2}
              \end{align}
              where \eqref{align:zero_sum-canc} and \eqref{align:zero_sum-canc-2} follow from the fact that $\mat{A}_{i, j} = - \mat{A}_{j, i}^\top$ for any $i \neq j$.
        \item For \Cref{item:global} the proof is analogous to \Cref{item:polymatrix} using the assumption that
              \begin{equation*}
                  \sum_{i=1}^n u_i(\Vec{x}) = \sum_{i=1}^n \left\langle \Vec{x}_i, \sum_{j \in \cN_i} \mat{A}_{i, j} \Vec{x}_j \right\rangle = C,
              \end{equation*}
              for any $\Vec{x} = (\Vec{x}_1, \dots, \Vec{x}_n) \in \prod_{i \in [n]} \Delta(\cA_i)$, where $C$ is some arbitrary constant.

        \item For \Cref{item:stra-zero_sum} we let $(\mat{A}, \mat{B})$ be the underlying (bimatrix) strategically zero-sum game, with $\mat{A}, \mat{B} \in \R^{n \times m}$. We will use \Cref{theorem:strat-zero_sum} under the assumption that $\lambda = \mu$; see \Cref{remark:scale}. That is, we know that there exists a (compatible) matrix $\mat{C}$ and $\lambda > 0$ such that $\mat{A} = \lambda \mat{C} + \vec{v}_a \vec{1}_m^\top$ and $\mat{B} = - \lambda \mat{C} + \vec{1}_n \vec{v}_b^\top$, where $\vec{v}_a \in \R^n$ and $\vec{v}_b \in \R^m$. Similarly to the proof for \Cref{item:zero-sum}, we have that $\reg_{\cX}^T + \reg_{\cY}^T$ is equal to
              \begin{align}
                  &\max_{\vec{x}^* \in \Delta^n} \left\langle \vec{x}^*, \mat{A} \sum_{t=1}^T \vec{y}^{(t)} \right\rangle - \sum_{t=1}^T \langle \vec{x}^{(t)}, \mat{A} \vec{y}^{(t)} \rangle + \max_{\vec{y}^* \in \Delta^m} \left\langle \vec{y}^*, \mat{B}^\top \sum_{t=1}^T \vec{x}^{(t)} \right\rangle - \sum_{t=1}^T \langle \vec{y}^{(t)}, \mat{B}^\top \vec{x}^{(t)} \rangle \notag                                                                                                                \\
                                              & = \max_{\vec{x}^* \in \Delta^n} \left\{ \lambda \left\langle \vec{x}^*, \mat{C} \sum_{t=1}^T \vec{y}^{(t)}  \right\rangle + T \left\langle \vec{x}^*, \vec{v}_a \right\rangle \right\} - \lambda \sum_{t=1}^T \left\langle \vec{x}^{(t)}, \mat{C} \vec{y}^{(t)}  \right\rangle - \sum_{t=1}^T \left\langle \vec{x}^{(t)}, \vec{v}_a \right\rangle \label{algin:vec1-sim}                                                                                                                  \\
                                              & + \max_{\vec{y}^* \in \Delta^m} \left\{ -\lambda \left\langle \vec{y}^*, \mat{C}^\top \sum_{t=1}^T \vec{x}^{(t)}  \right\rangle + T \left\langle \vec{y}^*, \vec{v}_b \right\rangle \right\} + \lambda \sum_{t=1}^T \left\langle \vec{y}^{(t)}, \mat{C}^\top \vec{x}^{(t)}  \right\rangle - \sum_{t=1}^T \left\langle \vec{y}^{(t)}, \vec{v}_b \right\rangle \label{algin:vec1-sim2}                                                                                                      \\
                                              & = T \left( \max_{\vec{x}^* \in \Delta^n} \left\{ \lambda \left\langle \vec{x}^*, \mat{C} \bar{\vec{y}}  \right\rangle + \left\langle \vec{x}^*, \vec{v}_a \right\rangle \right\} - \left\langle \bar{\vec{x}}, \vec{v}_a \right\rangle + \max_{\vec{y}^* \in \Delta^m} \left\{ -\lambda \left\langle \vec{y}^*, \mat{C}^\top \bar{\vec{x}}  \right\rangle + \left\langle \vec{y}^*, \vec{v}_b \right\rangle \right\} - \left\langle \bar{\vec{y}}, \vec{v}_b \right\rangle \right) \notag \\
                                              & \geq T (\lambda \langle \bar{\vec{x}}, \mat{C} \bar{\vec{y}} \rangle - \lambda \langle \bar{\vec{y}}, \mat{C}^\top \bar{\vec{x}}\rangle ) = 0. \notag
              \end{align}
              where in \eqref{algin:vec1-sim} we used the fact that $\mat{A} = \lambda \mat{C} + \vec{v}_a \vec{1}_m^\top$ and that $\vec{1}^\top_m \vec{y}^{(t)} = 1$ since $\vec{y}^{(t)} \in \Delta^m$, and \eqref{algin:vec1-sim2} follows given that $\mat{B} = -\lambda \mat{C} + \vec{v}_b \vec{1}_m^\top$ and that $\vec{1}^\top_n \vec{x}^{(t)} = 1$ since $\vec{x}^{(t)} \in \Delta^n$.
        \item For \Cref{item:polymatrix-rank} the proof follows analogously to \Cref{item:polymatrix} and \Cref{item:stra-zero_sum}; see \Cref{remark:scale}.
        \item For \Cref{item:convex-concave} let $f(\vec{x}, \vec{y})$ be any convex-concave function; that is, $f(\Vec{x}, \Vec{y})$ is convex with respect to $\Vec{x} \in \cX$ for any fixed $\Vec{y} \in \cY$ and concave with respect to $\Vec{y} \in \cY$ for any fixed $\Vec{x} \in \cX$. Moreover, let us assume that player $\cX$ is the ``minimizer'' and player $\cX$ the ``maximizer''. We have that $\reg_{\cX}^T + \reg_{\cY}^T $ is equal to
              \begin{align}
                  &- \min_{\vec{x}^* \in \cX} \left\{ \sum_{t=1}^T f(\vec{x}^*, \vec{y}^{(t)}) \right\} + \sum_{t=1}^T f(\vec{x}^{(t)}, \vec{y}^{(t)}) + \max_{\vec{y}^* \in \cY} \left\{ \sum_{t=1}^T f(\vec{x}^{(t)}, \vec{y}^*) \right\} - \sum_{t=1}^T f(\vec{x}^{(t)}, \vec{y}^{(t)}) \notag \\
                                              & \geq - \sum_{t=1}^T f(\bar{\vec{x}}, \vec{y}^{(t)}) + \sum_{t=1}^T f(\vec{x}^{(t)}, \bar{\vec{y}}) \geq - T f(\bar{\vec{x}}, \bar{\vec{y}}) + T f(\bar{\vec{x}}, \bar{\vec{y}}) = 0, \notag
              \end{align}
              where the last inequality follows from Jensen's inequality, applicable since $f(\Vec{x}, \Vec{y})$ is convex-concave.
        \item For \Cref{item:minimax} we assume that $\cX$ and $\cY$ are convex and compact sets. Then, we have that
              \begin{align}
                  \frac{\reg_{\cX}^T + \reg_{\cY}^T}{T} & = \max_{\Vec{y}^* \in \cY} \left\{ \frac{1}{T} \sum_{t=1}^T f(\Vec{x}^{(t)}, \Vec{y}^*) \right\} - \min_{\Vec{x} \in \cX} \left\{ \frac{1}{T} \sum_{t=1}^T f(\Vec{x}^*, \Vec{y}^{(t)}) \right\} \notag \\
                                                        & \geq \max_{\vec{y}^* \in \cY} \min_{\Vec{x}^* \in \cX} f(\Vec{x}^*, \Vec{y}^*) - \min_{\Vec{x}^* \in \cX} \max_{\Vec{y} \in \cY} f(\Vec{x}^*, \Vec{y}^*) = 0, \label{align:minimax}
              \end{align}
              where \eqref{align:minimax} uses the following inequalities:
              \begin{equation*}
                  \frac{f(\Vec{x}^{(1)}, \Vec{y}^*) + \dots + f(\Vec{x}^{(T)}, \Vec{y}^*)}{T} \geq \min_{\Vec{x}^* \in \cX} f(\Vec{x}^*, \Vec{y}^*);
              \end{equation*}
              \begin{equation*}
                  \frac{f(\Vec{x}^*, \Vec{y}^{(1)}) + \dots + f(\Vec{x}^{*}, \Vec{y}^{(T)})}{T} \leq \max_{\Vec{y}^* \in \cY} f(\Vec{x}^*, \Vec{y}^*).
              \end{equation*}
        \item For \Cref{item:quasiconvex-quasiconcave} it is assumed that $f(\Vec{x}, \Vec{y})$ is lower semicontinuous and quasiconvex with respect to $\Vec{x} \in \cX$ for any fixed $\Vec{y} \in \cY$, and upper semicontinuous and quasiconcave with respect to $\Vec{y} \in \cY$ for any fixed $\Vec{x} \in \cX$, where $\cX$ and $\cY$ are convex and compact sets. By Sion's minimax theorem~\citep{Sion:On} we know that
              \begin{equation*}
                  \max_{\vec{y}^* \in \cY} \min_{\Vec{x}^* \in \cX} f(\Vec{x}^*, \Vec{y}^*) = \min_{\Vec{x}^* \in \cX} \max_{\Vec{y} \in \cY} f(\Vec{x}^*, \Vec{y}^*),
              \end{equation*}
              and the conclusion follows from \Cref{item:minimax}.
        \item For \Cref{item:stochastic} the claim follows directly by \Cref{item:minimax} by virtue of Shapley's theorem~\citep{Shapley1095}.
    \end{itemize}
\end{proof}

\begin{remark}[Duality Gap]
    \label{remark:duality_gap}
    Consider a function $f : \cX \times \cY \to \R$. Weak duality implies that
    \begin{equation*}
        \min_{\Vec{x}^* \in \cX} \max_{\Vec{y} \in \cY} f(\Vec{x}^*, \Vec{y}^*) \geq \max_{\vec{y}^* \in \cY} \min_{\Vec{x}^* \in \cX} f(\Vec{x}^*, \Vec{y}^*).
    \end{equation*}
    The value $\gap \defeq \min_{\Vec{x}^* \in \cX} \max_{\Vec{y} \in \cY} f(\Vec{x}^*, \Vec{y}^*) - \max_{\vec{y}^* \in \cY} \min_{\Vec{x}^* \in \cX} f(\Vec{x}^*, \Vec{y}^*) \geq 0$ is called the \emph{duality gap} of $f$. Analogously to the proof of \Cref{proposition:positive_regret} (\Cref{item:minimax}) we may conclude that
    \begin{equation*}
        \reg_{\cX}^T + \reg_{\cY}^T \geq - \gap  T.
    \end{equation*}
    Now observe that if we relax the condition of \Cref{theorem:bounded_traj} so that $\sum_{i=1}^n \reg_i^T \geq - \gap T$, then for $\alpha_i = \alpha$, $\beta_i = \beta$, and $\gamma_i = \gamma$ it follows that
    \begin{equation*}
        \sum_{t=1}^T \sum_{i=1}^n \| \Vec{x}_i^{(t)} - \Vec{x}_i^{(t-1)}\|_1^2 \leq \frac{2n \alpha}{\gamma} + \frac{2 \gap}{\gamma} T.
    \end{equation*}
    This observation, along with the argument of \Cref{theorem:rate-OMD}, imply that for suitable regularizers and for a sufficiently large $T$ there will exist a joint strategy vector $\Vec{x}^{(t)}$ with $t \in [T]$ which is an $O(\sqrt{\gap})$-Nash equilibrium, as long as $\sum_{i=1}^n \reg_i^T \geq -\gap T$, where $\gap > 0$. In fact, this argument is analogous to that we provide for near-potential games in \Cref{theorem:nearpot}.
\end{remark}

\optreg*

\begin{proof}
    First of all, when $\gamma_i = \gamma$ and $\alpha_i = \alpha$ for all $i \in [n]$, the bound we obtained in \Cref{theorem:bounded_traj} can be simplified as
    \begin{equation}
        \label{eq:simple-bounded_traj}
        \sum_{i=1}^n \sum_{t=1}^T \| \vec{x}_i^{(t)} - \vec{x}_i^{(t-1)} \|_1^2 \leq \frac{2n \alpha}{\gamma}.
    \end{equation}
    Moreover, the \rvu property implies that the regret of each individual player $i \in [n]$ can be bounded as
    \begin{equation}
        \label{eq:beta-rvu}
        \reg_i^T \leq \alpha + \beta \sum_{t=1}^T \| \vec{u}_i^{(t)} - \vec{u}_i^{(t-1)}\|^2_{\infty} - \gamma \sum_{t=1}^T \| \vec{x}_i^{(t)} - \vec{x}_i^{(t-1)}\|_1^2 \leq \alpha + (n-1) \beta \sum_{j \neq i} \sum_{t=1}^T \| \vec{x}_i^{(t)} - \vec{x}_i^{(t-1)}\|^2_1,
    \end{equation}
    where we used \Cref{claim:util-infty} and the fact that $\gamma > 0$. As a result, plugging-in bound \eqref{eq:simple-bounded_traj} (which follows from \Cref{theorem:bounded_traj}) to \eqref{eq:beta-rvu} concludes the proof.
\end{proof}

Next, we proceed with the proof of \Cref{theorem:rate-OMD}, the detailed version of which is given below.

\begin{theorem}[Full Version of \Cref{theorem:rate-OMD}]
    \label{theorem:rate-OMD-full}
    Suppose that each player employs \eqref{eq:OMD} with (i) pair of norms $(\|\cdot\|, \|\cdot\|_*)$ such that $\|\Vec{x}\| \geq C \|\Vec{x}\|_1$ and $\|\Vec{x}\|_* \leq C_* \|\Vec{x}\|_\infty$ for any $\Vec{x} \in \cX_i$; (ii) $G_i$-smooth regularizer $\cR_i$; and (iii) learning rate $\eta \leq \frac{C}{4C_* (n-1)}$. Moreover, suppose that the game is such that $\sum_{i=1}^n \reg_i^T \geq 0$ for any $T \in \N$. Then, for any $\epsilon > 0$, after $T > \left\lceil \frac{8}{\epsilon^2} \sum_{i=1}^n \Omega_i \right\rceil$ iterations there exists an iterate $\Vec{x}^{(t)}$ with $t \in [T]$ which is an
    \begin{equation*}
        \epsilon \left( C_* + 2 \frac{\max_{i \in [n]} \left\{ G_i \Omega_i'\right\}}{\eta} \right)
    \end{equation*}
    approximate Nash equilibrium, where $\Omega_i \defeq \sup_{\Vec{x}, \Vec{y} \in \cX_i} D_{\cR_i}(\Vec{x},\Vec{y})$ and $\Omega_i' \defeq \sup_{\Vec{x}, \Vec{y} \in \cX_i} \| \Vec{x} - \Vec{y} \|$.
\end{theorem}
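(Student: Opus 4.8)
The plan is to combine the bounded second-order path length of \Cref{theorem:bounded_traj} with a best-iterate (pigeonhole) argument and a \emph{gap-versus-movement} inequality extracted from the first-order optimality conditions of the \eqref{eq:OMD} update. First I would recast the players' \rvu guarantees into the $(\|\cdot\|_1,\|\cdot\|_\infty)$ frame required by \Cref{theorem:bounded_traj}. By \Cref{proposition:rvu}, \eqref{eq:OMD} satisfies the \rvu property with $(\alpha_i,\beta_i,\gamma_i)=(\Omega_i/\eta,\eta,\tfrac{1}{8\eta})$ in the norm $\|\cdot\|$ for which $\cR_i$ is strongly convex (note $\sup_{\vec x}D_{\cR_i}(\vec x,\widehat{\vec x}^{(0)})\le\Omega_i$). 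Using the hypotheses $\|\vec x\|\ge C\|\vec x\|_1$ and $\|\vec x\|_*\le C_*\|\vec x\|_\infty$, these transfer to an \rvu bound in $(\|\cdot\|_1,\|\cdot\|_\infty)$ with parameters $(\Omega_i/\eta,\eta C_*^2,\tfrac{C^2}{8\eta})$, and a direct computation shows that the requirement $\gamma_i\ge 2(n-1)\sum_{j\ne i}\beta_j$ of \Cref{theorem:bounded_traj} reduces exactly to $\eta\le\frac{C}{4C_*(n-1)}$, our assumption. Since $\sum_i\reg_i^T\ge 0$, \Cref{theorem:bounded_traj} then yields $\sum_{t=1}^T\sum_i\|\vec x_i^{(t)}-\vec x_i^{(t-1)}\|_1^2=O(\sum_i\Omega_i)$; averaging over $t\in[T]$ and invoking $T>\lceil\tfrac{8}{\epsilon^2}\sum_i\Omega_i\rceil$ produces an index $t^\star\in[T]$ at which the dynamics are nearly stationary, i.e.\ $\sum_i\|\vec x_i^{(t^\star)}-\vec x_i^{(t^\star-1)}\|_1^2\le\epsilon^2$.

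The crux is converting near-stationarity at $t^\star$ into a small Nash gap for $\vec x^{(t^\star)}$. For each player $i$ and deviation $\vec x_i\in\Delta(\cA_i)$ I would split
\begin{equation*}
\langle\vec u_i^{(t^\star)},\vec x_i-\vec x_i^{(t^\star)}\rangle=\langle\vec u_i^{(t^\star-1)},\vec x_i-\vec x_i^{(t^\star)}\rangle+\langle\vec u_i^{(t^\star)}-\vec u_i^{(t^\star-1)},\vec x_i-\vec x_i^{(t^\star)}\rangle.
\end{equation*}
For the first (prediction) term I invoke the first-order optimality of the step defining $\vec x_i^{(t^\star)}$, namely $\langle\vec u_i^{(t^\star-1)}-\tfrac1\eta(\nabla\cR_i(\vec x_i^{(t^\star)})-\nabla\cR_i(\widehat{\vec x}_i^{(t^\star-1)})),\vec x_i-\vec x_i^{(t^\star)}\rangle\le 0$; combining this with $G_i$-smoothness of $\cR_i$ and H\"older's inequality bounds it by $\tfrac{G_i\Omega_i'}{\eta}\|\vec x_i^{(t^\star)}-\widehat{\vec x}_i^{(t^\star-1)}\|$. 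For the second (mismatch) term, \Cref{claim:util-infty} together with $\|\cdot\|_*\le C_*\|\cdot\|_\infty$ bounds $\|\vec u_i^{(t^\star)}-\vec u_i^{(t^\star-1)}\|$ by the opponents' movement at $t^\star$, contributing the term proportional to $C_*$. Taking the worst player and assembling the two contributions would give the stated bound $\epsilon\bigl(C_*+2\max_i\{G_i\Omega_i'\}/\eta\bigr)$.

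The main obstacle is the first term above: after applying smoothness one is left with the displacement $\|\vec x_i^{(t^\star)}-\widehat{\vec x}_i^{(t^\star-1)}\|$ between the \emph{played} iterate and the \emph{auxiliary} \eqref{eq:OMD} sequence, which is \emph{not} itself a consecutive difference of played iterates and is not directly controlled by \Cref{theorem:bounded_traj}. I would reconcile the two sequences through prox-stability, $\|\vec x_i^{(t)}-\widehat{\vec x}_i^{(t)}\|\le\eta\|\vec u_i^{(t-1)}-\vec u_i^{(t)}\|_*$, and a triangle inequality, so that this displacement is bounded by $\|\vec x_i^{(t^\star)}-\vec x_i^{(t^\star-1)}\|$ plus utility-difference terms that are again controlled via \Cref{claim:util-infty}; this is where the factor $2$ originates, and it also demands careful bookkeeping of the norm-equivalence constants $C,C_*$ between $\|\cdot\|_1$, $\|\cdot\|$, and their duals. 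It is exactly here that $G_i$-smoothness is indispensable: one cannot pass from differences of $\nabla\cR_i$ to differences of points for non-smooth regularizers, which is precisely why the argument breaks for the negative entropy DGF, as flagged after \Cref{theorem:rate-OMD}.
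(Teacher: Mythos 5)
Your high-level plan (bounded second-order path length, pigeonhole, then converting near-stationarity into a small Nash gap via the \eqref{eq:OMD} optimality conditions and smoothness of $\cR_i$) is sound, but the execution breaks exactly at the step you flag as the main obstacle. Your pigeonhole selects a single time $t^\star$ at which $\sum_i\|\vec{x}_i^{(t^\star)}-\vec{x}_i^{(t^\star-1)}\|_1^2\le\epsilon^2$. To control $\|\vec{x}_i^{(t^\star)}-\widehat{\vec{x}}_i^{(t^\star-1)}\|$ you then apply the triangle inequality and prox-stability, $\|\vec{x}_i^{(t^\star-1)}-\widehat{\vec{x}}_i^{(t^\star-1)}\|\le\eta\|\vec{u}_i^{(t^\star-1)}-\vec{u}_i^{(t^\star-2)}\|_*$, and claim the resulting utility difference is ``again controlled via \Cref{claim:util-infty}.'' It is not: \Cref{claim:util-infty} converts it into the opponents' movement $\sum_{j\neq i}\|\vec{x}_j^{(t^\star-1)}-\vec{x}_j^{(t^\star-2)}\|_1$, i.e.\ movement at time $t^\star-1$, about which your pigeonhole at $t^\star$ says nothing. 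This is repairable (pigeonhole over sums of two consecutive movements, which remain summable up to a factor $2$), but as written the argument does not close. A second, structural mismatch: after the smoothness/H\"older step your displacement terms are measured in $\|\cdot\|$, whereas \Cref{theorem:bounded_traj}, as you invoke it, only yields $\ell_1$ control; the hypothesis $\|\vec{x}\|\ge C\|\vec{x}\|_1$ points the \emph{wrong} way to pass from $\|\cdot\|$ back to $\|\cdot\|_1$, so you would have to rerun the path-length argument directly in $\|\cdot\|$ (which does work under the same restriction $\eta\le\frac{C}{4C_*(n-1)}$) instead of citing the theorem in its $(\|\cdot\|_1,\|\cdot\|_\infty)$ form. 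Even after both repairs, your mismatch term $\langle\vec{u}_i^{(t^\star)}-\vec{u}_i^{(t^\star-1)},\vec{x}_i-\vec{x}_i^{(t^\star)}\rangle$ picks up a factor like $C_*\Omega_i'\sqrt{n-1}$, so you land on a bound of the form $\epsilon\cdot\poly(\text{game})$ rather than the exact constant $\epsilon\bigl(C_*+2\max_i\{G_i\Omega_i'\}/\eta\bigr)$ in the statement.

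The paper's proof avoids all three issues by a different decomposition. It never uses \Cref{theorem:bounded_traj}; instead it starts from the refined \rvu bound of \Cref{proposition:refined-rvu}, whose negative terms are $\|\vec{x}_i^{(t)}-\widehat{\vec{x}}_i^{(t)}\|^2+\|\vec{x}_i^{(t)}-\widehat{\vec{x}}_i^{(t-1)}\|^2$ --- played-to-auxiliary distances at a \emph{single} time index, already in the norm $\|\cdot\|$. Summing over players and using $\sum_i\reg_i^T\ge0$ gives $\sum_t\sum_i\bigl(\|\vec{x}_i^{(t)}-\widehat{\vec{x}}_i^{(t)}\|^2+\|\vec{x}_i^{(t)}-\widehat{\vec{x}}_i^{(t-1)}\|^2\bigr)\le 8\sum_i\Omega_i$, so one pigeonhole instant controls everything needed; the paper's factor $2$ comes from Young's inequality, $\|\widehat{\vec{x}}_i^{(t)}-\widehat{\vec{x}}_i^{(t-1)}\|\le2\epsilon$, not from your triangle inequality. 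Crucially, the Nash-gap conversion (\Cref{claim:close-Nash}) uses the optimality condition of the \emph{auxiliary} step $\widehat{\vec{x}}_i^{(t)}$, which involves the realized utility $\vec{u}_i^{(t)}$, so there is no prediction-versus-truth mismatch to handle at all; the $C_*$ term arises simply from $|\langle\vec{u}_i^{(t)},\vec{x}_i^{(t)}-\widehat{\vec{x}}_i^{(t)}\rangle|\le C_*\epsilon$ together with $\|\vec{u}_i^{(t)}\|_\infty\le1$. If you want the theorem with its stated constants, switch to that refined bound; your route, once patched with the two-step pigeonhole and the $\|\cdot\|$-norm path-length argument, proves only a qualitatively similar statement with worse constants and more bookkeeping.
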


\begin{proof}
    We will use the following refined \rvu bound, extracted from \citep[Proof of Theorem 18]{Syrgkanis15:Fast}:
    \begin{proposition}[\nakedcite{Syrgkanis15:Fast}]
        \label{proposition:refined-rvu}
        If a player $i$ employs \eqref{eq:OMD}, it holds that
        \begin{equation*}
            \reg_i^T \leq \frac{\Omega_i}{\eta} + \eta \sum_{t=1}^{T} \| \vec{u}_i^{(t)} - \vec{u}_i^{(t-1)} \|_*^2 - \frac{1}{4\eta} \sum_{t=1}^{T} \left( \| \vec{x}_i^{(t)} - \widehat{\vec{x}}_i^{(t)}\|^2 + \| \vec{x}_i^{(t)} - \widehat{\vec{x}}_i^{(t-1)}\|^2 \right).
        \end{equation*}
    \end{proposition}
    Using the norm-equivalence bounds $\|\Vec{x}\| \geq C \|\Vec{x}\|_1$ and $\|\Vec{x}\|_* \leq C_* \|\Vec{x}\|_\infty$ to the bound of \Cref{proposition:refined-rvu} yields that
    \begin{align*}
        \reg_i^T \leq \frac{\Omega_i}{\eta} + \eta C_*^2 \sum_{t=1}^{T} \| \vec{u}_i^{(t)} - \vec{u}_i^{(t-1)} \|_{\infty}^2 & - \frac{1}{8\eta} C^2 \sum_{t=1}^{T} \| \vec{x}_i^{(t)} - \widehat{\vec{x}}_i^{(t)}\|_1^2 + \| \vec{x}_i^{(t)} - \widehat{\vec{x}}_i^{(t-1)}\|_1^2 \\
                                                                                                                             & - \frac{1}{8\eta} \sum_{t=1}^T \| \Vec{x}_i^{(t)} - \widehat{\Vec{x}}_i^{(t)} \|^2 + \| \Vec{x}_i^{(t)} - \widehat{\Vec{x}}_i^{(t-1)} \|^2
    \end{align*}

    Moreover, combining this bound with \Cref{claim:util-infty} implies that

    \begin{align*}
        \reg_i^T & \leq \frac{\Omega_i}{\eta} + \eta C_*^2 (n-1) \sum_{t=1}^{T} \sum_{j \neq i} \| \vec{x}_j^{(t)} - \vec{x}_j^{(t-1)} \|_{1}^2 - \frac{1}{8\eta} C^2 \sum_{t=1}^{T} \left( \| \vec{x}_i^{(t)} - \widehat{\vec{x}}_i^{(t)}\|_1^2 + \| \vec{x}_i^{(t)} - \widehat{\vec{x}}_i^{(t-1)}\|_1^2 \right) \\
                 & - \frac{1}{8\eta} \sum_{i=1}^T \left( \| \Vec{x}_i^{(t)} - \widehat{\Vec{x}}_i^{(t)} \|^2 + \| \Vec{x}_i^{(t)} - \widehat{\Vec{x}}_i^{(t-1)} \|^2 \right)                                                                                                                                      \\
                 & \leq \frac{\Omega_i}{\eta} + \eta C_*^2 (n-1) \sum_{t=1}^{T} \sum_{j \neq i} \| \vec{x}_j^{(t)} - \vec{x}_j^{(t-1)} \|_{1}^2 - \frac{C^2}{16 \eta} \sum_{t=1}^{T} \|\Vec{x}_i^{(t)} - \Vec{x}_i^{(t-1)}\|_1^2                                                                                  \\
                 & - \frac{1}{8\eta} \sum_{i=1}^T \left( \| \Vec{x}_i^{(t)} - \widehat{\Vec{x}}_i^{(t)} \|^2 + \| \Vec{x}_i^{(t)} - \widehat{\Vec{x}}_i^{(t-1)} \|^2 \right),
    \end{align*}
    where we used the fact that
    \begin{equation*}
        \sum_{t=1}^{T} \|\Vec{x}_i^{(t)} - \Vec{x}_i^{(t-1)}\|^2 \leq 2 \sum_{t=1}^T \|\Vec{x}_i^{(t)} - \widehat{\Vec{x}}_i^{(t)}\|^2 + 2 \sum_{t=1}^T \|\Vec{x}_i^{(t)} - \widehat{\Vec{x}}_i^{(t-1)} \|_1^2,
    \end{equation*}
    which in turn follows from Young's inequality. As a result, we may conclude that
    \begin{align*}
        \sum_{i=1}^n \reg_i^T & \leq \sum_{i=1}^n \frac{\Omega_i}{\eta} + \left( \eta C_*^2(n-1)^2 - \frac{C^2}{16\eta} \right) \sum_{i=1}^n \sum_{t=1}^{T} \|\Vec{x}_i^{(t)} - \Vec{x}_i^{(t-1)} \|_1^2 \\
                              & - \frac{1}{8\eta} \sum_{i=1}^n \sum_{i=1}^T \left( \| \Vec{x}_i^{(t)} - \widehat{\Vec{x}}_i^{(t)} \|^2 + \| \Vec{x}_i^{(t)} - \widehat{\Vec{x}}_i^{(t-1)} \|^2 \right).
    \end{align*}
    Thus, for learning rate $\eta \leq \frac{C}{4 C_* (n-1)}$ it follows that
    \begin{equation*}
        0 \leq \sum_{i=1}^n \reg_i^T \leq \frac{1}{\eta} \sum_{i=1}^n \Omega_i - \frac{1}{8\eta} \sum_{i=1}^n \sum_{i=1}^T \left( \| \Vec{x}_i^{(t)} - \widehat{\Vec{x}}_i^{(t)} \|^2 + \| \Vec{x}_i^{(t)} - \widehat{\Vec{x}}_i^{(t-1)} \|^2 \right),
    \end{equation*}
    implying that
    \begin{equation}
        \label{eq:bounded-traj-rvu}
        \sum_{t=1}^{T} \sum_{i=1}^n  \left( \| \vec{x}_i^{(t)} - \widehat{\vec{x}}_i^{(t)}\|^2 + \| \vec{x}_i^{(t)} - \widehat{\vec{x}}_i^{(t-1)}\|^2 \right) \leq 8 \sum_{i=1}^n \Omega_i.
    \end{equation}
    Now assume that for all $t \in [T]$ it holds that $\sum_{i=1}^n \left( \| \vec{x}_i^{(t)} - \widehat{\vec{x}}_i^{(t)}\|^2 + \| \vec{x}_i^{(t)} - \widehat{\vec{x}}_i^{(t-1)}\|^2 \right) > \epsilon^2$. In this case, it follows from \eqref{eq:bounded-traj-rvu} that
    \begin{equation*}
        \epsilon^2 T \leq 8 \sum_{i=1}^n \Omega_i \implies T \leq \frac{8}{\epsilon^2} \sum_{i=1}^n \Omega_i.
    \end{equation*}
    Thus, for $T > \left\lceil \frac{8}{\epsilon^2} \sum_{i=1}^n \Omega_i \right\rceil$ it must be the case that there exists $t \in [T]$ such that
    \begin{equation*}
        \sum_{i=1}^n \left( \| \vec{x}_i^{(t)} - \widehat{\vec{x}}_i^{(t)}\|^2 + \| \vec{x}_i^{(t)} - \widehat{\vec{x}}_i^{(t-1)}\|^2 \right) \leq \epsilon^2.
    \end{equation*}

    In turn, this implies that for any $i \in [n]$,
    \begin{itemize}
        \item[(i)] $\| \vec{x}_i^{(t)} - \widehat{\vec{x}}_i^{(t)}\| \leq \epsilon$;
        \item[(ii)] $\| \widehat{\Vec{x}}_i^{(t)} - \widehat{\Vec{x}}_i^{(t-1)} \|^2 \leq 2 \| \vec{x}_i^{(t)} - \widehat{\vec{x}}_i^{(t)}\|^2 + 2 \| \vec{x}_i^{(t)} - \widehat{\vec{x}}_i^{(t-1)}\|^2 \leq 2 \epsilon^2 \implies \| \widehat{\Vec{x}}_i^{(t)} - \widehat{\Vec{x}}_i^{(t-1)} \| \leq 2 \epsilon$.
    \end{itemize}
    Finally, we show the following claim which will conclude the proof.

    \begin{claim}
        \label{claim:close-Nash}
        In the setting of \Cref{theorem:rate-OMD-full}, suppose that
        \begin{equation*}
            \sqrt{\sum_{i=1}^n \left( \| \vec{x}_i^{(t)} - \widehat{\vec{x}}_i^{(t)}\|^2 + \| \vec{x}_i^{(t)} - \widehat{\vec{x}}_i^{(t-1)}\|^2 \right)} \leq \epsilon.
        \end{equation*}
        Then, it follows that  $\Vec{x}^{(t)}$ is an
        \begin{equation*}
            \epsilon \left( C_* + 2 \frac{\max_{i \in [n]} \{ G_i \Omega_i' \}}{\eta} \right)
        \end{equation*}
        approximate Nash equilibrium.
    \end{claim}

    \begin{proof}
        Observe that the maximization problem associated with \eqref{eq:OMD} can be expressed in the following variational inequality form:
        \begin{equation*}
            \left\langle \vec{u}_i^{(t)} - \frac{1}{\eta} \left(\nabla \cR_i ( \widehat{\vec{x}}^{(t)}_i) - \nabla \cR_i ( \widehat{\vec{x}}_i^{(t-1)})\right), \widehat{\vec{x}}_i - \widehat{\vec{x}}_i^{(t)} \right\rangle \leq 0, \quad \forall \widehat{\vec{x}}_i \in \cX_i,
        \end{equation*}
        for any $i \in [n]$. Thus, it follows that
        \begin{align}
            \langle \vec{u}_i^{(t)}, \widehat{\vec{x}}_i - \widehat{\vec{x}}_i^{(t)} \rangle & \leq \frac{1}{\eta} \langle \nabla \cR_i ( \widehat{\vec{x}}^{(t)}_i) - \nabla \cR_i ( \widehat{\vec{x}}_i^{(t-1)}), \widehat{\vec{x}}_i - \widehat{\vec{x}}_i^{(t)} \rangle \notag                     \\
                                                                                             & \leq \frac{1}{\eta} \| \nabla \cR_i ( \widehat{\vec{x}}^{(t)}_i) - \nabla \cR_i ( \widehat{\vec{x}}_i^{(t-1)}) \|_* \| \widehat{\vec{x}}_i - \widehat{\vec{x}}_i^{(t)} \| \label{align:cauchyschartz-1} \\
                                                                                             & \leq 2 \epsilon \frac{G_i \Omega_i'}{\eta}, \label{align:nash-approx-1}
        \end{align}
        where \eqref{align:cauchyschartz-1} follows from the Cauchy-Schwarz inequality, and \eqref{align:nash-approx-1} uses the fact that $\|\nabla \cR_i ( \widehat{\vec{x}}^{(t)}_i) - \nabla \cR_i ( \widehat{\vec{x}}_i^{(t-1)}) \|_* \leq G_i \|\widehat{\vec{x}}_i^{(t)} - \widehat{\vec{x}}_i^{(t-1)}\| \leq 2 \epsilon G_i$, which follows from the smoothness assumption on the regularizer $\cR_i$. \eqref{align:nash-approx-1} also uses the notation $\Omega_i' \defeq \sup_{\Vec{x}, \Vec{y} \in \cX_i} \| \Vec{x} - \Vec{y} \|$. As a result, we have established that for any player $i \in [n]$ it holds that for any $\widehat{\vec{x}}_i \in \cX_i$,
        \begin{equation}
            \label{eq:Nash-approx}
            \langle \vec{u}_i^{(t)}, \widehat{\vec{x}}_i^{(t)} \rangle \geq \langle \vec{u}_i^{(t)}, \widehat{\vec{x}}_i \rangle - 2 \epsilon \frac{G_i \Omega_i'}{\eta}.
        \end{equation}
        Moreover, we also have that
        \begin{equation*}
            \left| \langle \Vec{u}_i^{(t)}, \Vec{x}_i^{(t)} - \widehat{\Vec{x}}_i^{(t)} \rangle \right| \leq \| \Vec{u}_i^{(t)} \|_* \| \Vec{x}_i^{(t)} - \widehat{\Vec{x}}_i^{(t)}\| \leq \epsilon C_*,
        \end{equation*}
        where we used the fact that $\| \Vec{x}_i^{(t)} - \widehat{\Vec{x}}_i^{(t)}\| \leq \epsilon$, and that $\| \Vec{u}_i^{(t)}\|_{\infty} \leq 1$ (by the normalization hypothesis). Plugging-in the last bound to \eqref{eq:Nash-approx} gives us that
        \begin{equation*}
            \langle \vec{x}_i^{(t)}, \vec{u}_i^{(t)} \rangle \geq \langle \widehat{\vec{x}}_i^{(t)}, \vec{u}_i^{(t)} \rangle - \epsilon C_* \geq \langle \widehat{\vec{x}}_i, \vec{u}_i^{(t)} \rangle - \epsilon C_* - 2 \epsilon \frac{G_i \Omega_i'}{\eta},
        \end{equation*}
        for any $\widehat{\vec{x}}_i \in \cX_i$ and player $i \in [n]$. As a result, the proof follows by definition of approximate Nash equilibria (\Cref{def:Nash}).
    \end{proof}
\end{proof}

\begin{remark}[Asymptotic Last-Iterate Convergence]
    \label{remark:last-iterate}
    Leveraging our argument in the proof of \Cref{theorem:rate-OMD-full} it follows that for any $\epsilon > 0$ there exists a sufficiently large $T = T(\epsilon)$ so that $\| \Vec{x}_i^{(t)} - \widehat{\Vec{x}}_i^{(t)}\|_1 \leq \epsilon$ and $\| \Vec{x}_i^{(t)} - \widehat{\Vec{x}}_i^{(t-1)}\| \leq \epsilon$ , for any $i \in [n]$ and $t \geq T$. In turn, under smooth regularizers this implies that any $\Vec{x}^{(t)}$ with $t \geq T = T(\epsilon)$ will be an $O(\epsilon)$-Nash equilibrium by virtue of \Cref{claim:close-Nash}, establishing asymptotic last-iterate convergence. On the other hand, our techniques do not seem to imply \emph{pointwise convergence}.
\end{remark}

Next, we give the proof of \Cref{theorem:smooth}, the detailed version of which is given in \Cref{theorem:smooth-full}. To this end, we will require the following proposition shown by \citet{Syrgkanis15:Fast}, which is a slight refinement of a result due to \citet{Roughgarden15:Intrinsic}.

\begin{proposition}[\citep{Syrgkanis15:Fast}]
    \label{proposition:smooth}
    Consider a $(\lambda, \mu)$-smooth game. If each player $i \in [n]$ incurs regret at most $\reg_i^T$, then
    \begin{equation*}
        \frac{1}{T} \sum_{t=1}^T \sw(\vec{x}^{(t)}) \geq \frac{\lambda}{1 + \mu} \opt - \frac{1}{1 + \mu} \frac{1}{T} \sum_{i=1}^n \reg_i^T.
    \end{equation*}
\end{proposition}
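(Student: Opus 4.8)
The plan is to chain together two ingredients: the no-regret guarantee of each player against the fixed deviation prescribed by the welfare-optimal profile, and the $(\lambda,\mu)$-smoothness inequality applied to realized pure outcomes. Fix a pure action profile $\vec{a}^*$ achieving the optimum, so that $\sw(\vec{a}^*) = \opt$. First I would recall that $\langle \vec{x}_i^{(t)}, \vec{u}_i^{(t)} \rangle = u_i(\vec{x}^{(t)})$, since the utility vector observed by player $i$ at time $t$ has $a_i$-th coordinate equal to the expected utility $u_i(a_i, \vec{x}_{-i}^{(t)})$; summing this identity over $i$ recovers $\sw(\vec{x}^{(t)})$.

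Next I would invoke the definition of regret with the single fixed deviation $a_i^*$ (the marginal of $\vec{a}^*$), which is a vertex of $\Delta(\cA_i)$ and hence a feasible comparator, so it is dominated by the maximizer defining $\reg_i^T$. This yields, for each player $i$,
\[
    \sum_{t=1}^T \langle \vec{x}_i^{(t)}, \vec{u}_i^{(t)} \rangle \geq \sum_{t=1}^T u_i(a_i^*, \vec{x}_{-i}^{(t)}) - \reg_i^T.
\]
Summing over all players and using the identity from the previous paragraph turns the left-hand side into $\sum_{t=1}^T \sw(\vec{x}^{(t)})$.

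The crux is then to lower-bound the double sum $\sum_{t=1}^T \sum_{i=1}^n u_i(a_i^*, \vec{x}_{-i}^{(t)})$ via smoothness. Since the smoothness condition is stated for \emph{pure} profiles, I would apply it to a random realization $\vec{a}^{(t)} \sim \vec{x}^{(t)}$: for every realization, $\sum_{i=1}^n u_i(a_i^*, \vec{a}_{-i}^{(t)}) \geq \lambda \opt - \mu \sw(\vec{a}^{(t)})$. Taking expectations over $\vec{a}^{(t)} \sim \vec{x}^{(t)}$ and using the overloaded definition $\sw(\vec{x}^{(t)}) = \E_{\vec{a}^{(t)} \sim \vec{x}^{(t)}}[\sw(\vec{a}^{(t)})]$ gives, for each $t$,
\[
    \sum_{i=1}^n u_i(a_i^*, \vec{x}_{-i}^{(t)}) \geq \lambda \opt - \mu \sw(\vec{x}^{(t)}).
\]

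Finally, I would combine the two displays: summing the smoothness bound over $t$ and substituting into the regret inequality produces $\sum_{t=1}^T \sw(\vec{x}^{(t)}) \geq \lambda T \opt - \mu \sum_{t=1}^T \sw(\vec{x}^{(t)}) - \sum_{i=1}^n \reg_i^T$; collecting the welfare terms, dividing by $(1+\mu)T$, and recognizing $\frac{1}{T}\sum_{i} \reg_i^T$ on the right yields the claim. The only real subtlety is the passage from mixed to pure strategies, which is handled cleanly by linearity of expectation, since both the smoothness inequality and the social welfare are affine in the realized outcome; in particular no convexity of the utilities is required.
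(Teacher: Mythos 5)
Your proposal is correct: the regret bound against the fixed vertex comparator $a_i^*$, the pointwise application of smoothness to realizations $\vec{a}^{(t)} \sim \vec{x}^{(t)}$ followed by linearity of expectation, and the final rearrangement after dividing by $(1+\mu)T$ together constitute a complete proof. Note that the paper itself states this proposition without proof, importing it from \citet{Syrgkanis15:Fast} (itself a refinement of Roughgarden's argument), and your argument is essentially the canonical one given in that prior work, so there is no divergence to report.
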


\begin{theorem}[Full Version of \Cref{theorem:smooth}]
    \label{theorem:smooth-full}
    Suppose that each player employs \eqref{eq:OMD} with (i) pair of norms $(\|\cdot\|, \|\cdot\|_*)$ such that $\|\Vec{x}\| \geq C \|\Vec{x}\|_1$ and $\|\Vec{x}\|_* \leq C_* \|\Vec{x}\|_\infty$ for any $\Vec{x} \in \cX_i$; (ii) $G_i$-smooth regularizer $\cR_i$; and (iii) learning rate $\eta \leq \frac{C}{4C_* (n-1)}$. Moreover, fix any $\gamma > 0$ and consider $T$ iterations of the dynamics with $T \geq \frac{16 \sum_{i=1}^n \Omega_i}{\gamma^2}$, where $\Omega_i \defeq \sup_{\Vec{x}, \Vec{y} \in \cX_i} D_{\cR_i}(\Vec{x}, \Vec{y})$. Then, either of the following occurs:
    \begin{enumerate}
        \item There exists an iterate $\Vec{x}^{(t)}$ with $t \in [T]$ which is a
              \begin{equation*}
                  \gamma \left( C_* + 2 \frac{\max_{i \in [n]} \{ G_i \Omega_i'\}}{\eta} \right)
              \end{equation*}
              approximate Nash equilibrium;
        \item Otherwise, it holds that
              \begin{equation*}
                  \frac{1}{T} \sum_{t=1}^T \sw(\Vec{x}^{(t)}) \geq \frac{\lambda}{1 + \mu} \opt + \frac{\gamma^2}{16\eta}.
              \end{equation*}
    \end{enumerate}
\end{theorem}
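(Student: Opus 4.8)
The plan is to reuse the \emph{unconditional} portion of the proof of \Cref{theorem:rate-OMD-full} and then exploit the ``underappreciated'' fact that a player's regret can be negative. The starting point is the refined \rvu bound of \Cref{proposition:refined-rvu}, which holds for \eqref{eq:OMD} without any assumption on the sign of the regrets. Applying the norm equivalences $\|\vec{x}\| \geq C\|\vec{x}\|_1$ and $\|\vec{x}\|_* \leq C_*\|\vec{x}\|_\infty$ together with \Cref{claim:util-infty}, and summing over all players exactly as in \Cref{theorem:rate-OMD-full}, I would arrive at
\begin{equation*}
\sum_{i=1}^n \reg_i^T \leq \frac{1}{\eta}\sum_{i=1}^n \Omega_i - \frac{1}{8\eta}\, S, \qquad S \defeq \sum_{t=1}^T\sum_{i=1}^n\left(\|\vec{x}_i^{(t)} - \widehat{\vec{x}}_i^{(t)}\|^2 + \|\vec{x}_i^{(t)} - \widehat{\vec{x}}_i^{(t-1)}\|^2\right),
\end{equation*}
valid whenever $\eta \leq \tfrac{C}{4C_*(n-1)}$. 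The only departure from the earlier proof is that here I do \emph{not} append the inequality $\sum_i\reg_i^T \geq 0$; instead I retain this as a genuine (possibly negative) upper estimate on the aggregate regret in terms of the aggregate second-order path length $S$.

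Next I would branch on the dichotomy. Suppose Item~1 fails, i.e.\ no iterate $\vec{x}^{(t)}$, $t\in[T]$, is a $\gamma\bigl(C_* + 2\max_i\{G_i\Omega_i'\}/\eta\bigr)$-approximate Nash equilibrium. By the contrapositive of \Cref{claim:close-Nash} applied with $\epsilon = \gamma$, this forces $\sum_{i=1}^n\bigl(\|\vec{x}_i^{(t)} - \widehat{\vec{x}}_i^{(t)}\|^2 + \|\vec{x}_i^{(t)} - \widehat{\vec{x}}_i^{(t-1)}\|^2\bigr) > \gamma^2$ for \emph{every} $t\in[T]$, and summing over $t$ gives $S > \gamma^2 T$. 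Substituting this into the displayed bound and using the hypothesis $T \geq 16\sum_i\Omega_i/\gamma^2$ (equivalently $\sum_i\Omega_i \leq \gamma^2 T/16$), I obtain
\begin{equation*}
\frac{1}{T}\sum_{i=1}^n \reg_i^T \leq \frac{1}{\eta T}\sum_{i=1}^n\Omega_i - \frac{\gamma^2}{8\eta} \leq \frac{\gamma^2}{16\eta} - \frac{\gamma^2}{8\eta} = -\frac{\gamma^2}{16\eta}.
\end{equation*}
Thus the failure of last-iterate closeness converts directly into a strictly negative average regret bounded away from zero.

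Finally I would feed this negative regret into the smoothness machinery of \Cref{proposition:smooth}, which gives $\tfrac1T\sum_t\sw(\vec{x}^{(t)}) \geq \tfrac{\lambda}{1+\mu}\opt - \tfrac{1}{1+\mu}\tfrac1T\sum_i\reg_i^T$; plugging in $\tfrac1T\sum_i\reg_i^T \leq -\gamma^2/(16\eta)$ yields $\tfrac1T\sum_t\sw(\vec{x}^{(t)}) \geq \tfrac{\lambda}{1+\mu}\opt + \tfrac{1}{1+\mu}\tfrac{\gamma^2}{16\eta}$, establishing Item~2. (The derivation naturally produces the factor $1/(1+\mu)$ from \Cref{proposition:smooth}, matching the abridged \Cref{theorem:smooth}.)

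I expect the main difficulty to be bookkeeping rather than conceptual. The one point demanding care is that the per-round quantity controlled by the \rvu telescoping — namely $\sum_i\bigl(\|\vec{x}_i^{(t)}-\widehat{\vec{x}}_i^{(t)}\|^2 + \|\vec{x}_i^{(t)}-\widehat{\vec{x}}_i^{(t-1)}\|^2\bigr)$ — must be \emph{exactly} the quantity whose smallness \Cref{claim:close-Nash} uses to certify an approximate equilibrium, so that the contrapositive cleanly delivers a per-round lower bound of $\gamma^2$ and hence $S > \gamma^2 T$. The conceptual heart, which introduces no new estimate, is simply the recognition that the \rvu bound already controls $\sum_i\reg_i^T$ from above by $\tfrac{1}{\eta}\sum_i\Omega_i - \tfrac{1}{8\eta}S$ unconditionally, so that a large path length $S$ is logically the same as a large negative aggregate regret, which by Roughgarden smoothness is in turn the same as welfare strictly exceeding the robust price-of-anarchy guarantee.
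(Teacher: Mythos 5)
Your proposal is correct and follows essentially the same route as the paper's own proof: the unconditional \rvu-derived bound $\sum_i \reg_i^T \leq \frac{1}{\eta}\sum_i \Omega_i - \frac{1}{8\eta}S$, the dichotomy via \Cref{claim:close-Nash}, and \Cref{proposition:smooth} to convert the resulting negative aggregate regret $\leq -\gamma^2 T/(16\eta)$ into the welfare guarantee. Your parenthetical about the $1/(1+\mu)$ factor is also right: the paper's own derivation produces $\frac{1}{1+\mu}\frac{\gamma^2}{16\eta}$ exactly as yours does, even though the factor is dropped in the statement of the full version.
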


\begin{proof}
    Similarly to the proof of \Cref{theorem:rate-OMD-full}, we may conclude that for $\eta \leq \frac{C}{4C_* (n-1)}$ it holds that
    \begin{equation}
        \label{eq:sum-reg}
        \sum_{i=1}^n \reg_i^T \leq \frac{1}{\eta} \sum_{i=1}^n \Omega_i - \frac{1}{8\eta} \sum_{i=1}^T \sum_{i=1}^n \left( \| \Vec{x}_i^{(t)} - \widehat{\Vec{x}}_i^{(t)} \|^2 + \| \Vec{x}_i^{(t)} - \widehat{\Vec{x}}_i^{(t-1)} \|^2 \right).
    \end{equation}
    Now if there exists $t \in [T]$ such that
    \begin{equation*}
        \sum_{i=1}^n \left( \| \Vec{x}_i^{(t)} - \widehat{\Vec{x}}_i^{(t)} \|^2 + \| \Vec{x}_i^{(t)} - \widehat{\Vec{x}}_i^{(t-1)} \|^2 \right) \leq \gamma^2,
    \end{equation*}
    it follows from \Cref{claim:close-Nash} that $\Vec{x}^{(t)}$ is a
    \begin{equation*}
        \gamma \left( C_* + 2 \frac{\max_{i \in [n]} \{ G_i \Omega_i'\}}{\eta} \right)
    \end{equation*}
    approximate Nash equilibrium. In the contrary case, it follows from \eqref{eq:sum-reg} that\footnote{Cf., see \citep[Theorem 9]{Hsieh21:Adaptive}; that result is only asymptotic.}

    \begin{equation*}
        \sum_{i=1}^n \reg_i^T \leq \frac{1}{\eta} \sum_{i=1}^n \Omega_i - \frac{1}{8\eta} \sum_{t=1}^T \gamma^2 \leq - \frac{1}{16\eta} \gamma^2 T,
    \end{equation*}
    as long as $T \geq \frac{16 \sum_{i=1}^n \Omega_i}{\gamma^2}$. Thus, we may conclude from \Cref{proposition:smooth} that
    \begin{equation*}
        \frac{1}{T} \sum_{t=1}^T \sw(\Vec{x}^{(t)}) \geq \frac{\lambda}{1 + \mu} \opt + \frac{1}{1 + \mu} \frac{\gamma^2}{16\eta}.
    \end{equation*}
\end{proof}

\subsection{Smooth Convex-Concave Games}
\label{appendix:convex-concave}

In this subsection we explain how our framework can also be applied in the context of \emph{smooth min-max optimization}. To be precise, let $f(\vec{x}, \vec{y}) : \cX \times \cY \to \R$ be a continuously differentiable \emph{convex-concave} function; that is, $f(\vec{x}, \vec{y})$ is convex with respect to $\vec{x} \in \cX$ for any fixed $\vec{y} \in \cY$ and concave with respect to $\vec{y} \in \cY$ for any fixed $\vec{x} \in \cX$. We will also make the following standard $\ell_2$-smoothness assumptions:
\begin{equation}
    \label{eq:smooth-x}
    \begin{split}
        \| \nabla_{\vec{x}}f(\vec{x}, \vec{y}) - \nabla_{\vec{x}}f(\vec{x}, \vec{y}') \|_2 \leq L \| \vec{y} - \vec{y}'\|_2, \quad \forall \vec{x} \in \cX, \vec{y}, \vec{y}' \in \cY; \\
        \| \nabla_{\vec{x}}f(\vec{x}, \vec{y}) - \nabla_{\vec{x}}f(\vec{x}
        ', \vec{y}) \|_2 \leq L \| \vec{x} - \vec{x}'\|_2, \quad \forall \vec{x}, \vec{x}' \in \cX, \vec{y} \in \cY;
    \end{split}
\end{equation}
and
\begin{equation}
    \label{eq:smooth-y}
    \begin{split}
        \| \nabla_{\vec{y}}f(\vec{x}, \vec{y}) - \nabla_{\vec{y}}f(\vec{x}, \vec{y}') \|_2 \leq L \| \vec{y} - \vec{y}'\|_2, \quad \forall \vec{x} \in \cX, \vec{y}, \vec{y}' \in \cY; \\
        \| \nabla_{\vec{y}}f(\vec{x}, \vec{y}) - \nabla_{\vec{y}}f(\vec{x}
        ', \vec{y}) \|_2 \leq L \| \vec{x} - \vec{x}'\|_2, \quad \forall \vec{x}, \vec{x}' \in \cX, \vec{y} \in \cY.
    \end{split}
\end{equation}

For notational simplicity we are using a common smoothness parameter $L$ in \eqref{eq:smooth-x} and \eqref{eq:smooth-y}, but a more refined analysis follows directly from our techniques.

The key idea is the well-known fact that one can construct a regret minimizer for convex utility functions via a regret minimizer for linear utilities (\emph{e.g.}, see \citep{McMahan11:Follow}). Indeed, we claim that the incurred regret $\reg^T$ under convex utility functions can be bounded by the regret $\reg_{\cL}$ of an algorithm observing the tangent plane of the utility function at every decision point. To see this, let $\nabla_{\vec{x}} u^{(t)}(\vec{x}^{(t)})$ be the gradient of a convex and continuously differentiable utility function $u^{(t)}$ on some convex domain $\cX$.\footnote{The same technique applies more generally using any \emph{subgradient} at the decision point.} Then, by convexity, we have that
\begin{equation*}
    u^{(t)}(\vec{x}) \geq \vec{u}^{(t)}(\vec{x}^{(t)}) + \langle \nabla_{\vec{x}} u^{(t)}(\vec{x}^{(t)}), \vec{x} - \vec{x}^{(t)} \rangle, \quad \forall \vec{x} \in \cX.
\end{equation*}
From this inequality it is easy to conclude that
\begin{equation}
    \label{eq:convex-linear}
    \reg^T \leq \reg_{\cL}^T,
\end{equation}
as we claimed.

Now let us assume, for concreteness, that each player employs \eqref{eq:OMD} with Euclidean regularization, while observing the \emph{linearized utility functions} based on the aforementioned scheme. Then, \Cref{proposition:rvu} implies that the individual regret of each player can be bounded as

\begin{equation}
    \label{eq:rvu-unconstrained}
    \begin{split}
        \reg_{\cX, \cL} &\leq \frac{\Omega_{\cX}}{\eta} + \eta \sum_{t=1}^T \| \nabla_{\vec{x}}f(\vec{x}^{(t)}, \vec{y}^{(t)}) - \nabla_{\vec{x}}f(\vec{x}^{(t-1)}, \vec{y}^{(t-1)}) \|_2^2 - \frac{1}{8\eta} \sum_{t=1}^T \| \vec{x}^{(t)} - \vec{x}^{(t-1)}\|_2^2; \\
        \reg_{\cY, \cL} &\leq \frac{\Omega_{\cY}}{\eta} + \eta \sum_{t=1}^T \| \nabla_{\vec{y}}f(\vec{x}^{(t)}, \vec{y}^{(t)}) - \nabla_{\vec{y}}f(\vec{x}^{(t-1)}, \vec{y}^{(t-1)}) \|_2^2 - \frac{1}{8\eta} \sum_{t=1}^T \| \vec{y}^{(t)} - \vec{y}^{(t-1)}\|_2^2,
    \end{split}
\end{equation}
where $\Omega_{\cX}$ and $\Omega_{\cY}$ are defined as in \Cref{proposition:rvu}. Now it follows from \eqref{eq:smooth-x} and Young's inequality that
\begin{align}
    \| \nabla_{\vec{x}}f(\vec{x}^{(t)}, \vec{y}^{(t)}) - \nabla_{\vec{x}}f(\vec{x}^{(t-1)}, \vec{y}^{(t-1)}) \|_2^2 \leq & 2 \| \nabla_{\vec{x}}f(\vec{x}^{(t)}, \vec{y}^{(t)}) - \nabla_{\vec{x}}f(\vec{x}^{(t)}, \vec{y}^{(t-1)}) \|_2^2 + \notag   \\
                                                                                                                         & 2 \| \nabla_{\vec{x}}f(\vec{x}^{(t)}, \vec{y}^{(t-1)}) - \nabla_{\vec{x}}f(\vec{x}^{(t-1)}, \vec{y}^{(t-1)}) \|_2^2 \notag \\
    \leq                                                                                                                 & 2L^2 \| \vec{x}^{(t)} - \vec{x}^{(t-1)}\|_2^2 + 2L^2 \| \vec{y}^{(t)} - \vec{y}^{(t-1)}\|_2^2. \label{align:x-L}
\end{align}
Similarly, it follows from \eqref{eq:smooth-y} and Young's inequality that
\begin{align}
    \| \nabla_{\vec{y}}f(\vec{x}^{(t)}, \vec{y}^{(t)}) - \nabla_{\vec{y}}f(\vec{x}^{(t-1)}, \vec{y}^{(t-1)}) \|_2^2 \leq & 2 \| \nabla_{\vec{y}}f(\vec{x}^{(t)}, \vec{y}^{(t)}) - \nabla_{\vec{y}}f(\vec{x}^{(t)}, \vec{y}^{(t-1)}) \|_2^2 + \notag   \\
                                                                                                                         & 2 \| \nabla_{\vec{y}}f(\vec{x}^{(t)}, \vec{y}^{(t-1)}) - \nabla_{\vec{y}}f(\vec{x}^{(t-1)}, \vec{y}^{(t-1)}) \|_2^2 \notag \\
    \leq                                                                                                                 & 2L^2 \| \vec{x}^{(t)} - \vec{x}^{(t-1)}\|_2^2 + 2L^2 \| \vec{y}^{(t)} - \vec{y}^{(t-1)}\|_2^2. \label{align:y-L}
\end{align}
As a result, plugging \eqref{align:x-L} and \eqref{align:y-L} to the \rvu bound of \eqref{eq:rvu-unconstrained} gives us that $\reg_{\cX, \cL} + \reg_{\cY, \cL}$ can be upper bounded by
\begin{equation}
    \label{eq:uncon-sum}
    \frac{\Omega_{\cX} + \Omega_{\cY}}{\eta} + \left( 4\eta L^2 - \frac{1}{8\eta} \right) \sum_{t=1}^T \| \vec{x}^{(t)} - \vec{x}^{(t-1)} \|_2^2 + \left( 4\eta L^2 - \frac{1}{8\eta} \right) \sum_{t=1}^T \| \vec{y}^{(t)} - \vec{y}^{(t-1)} \|_2^2.
\end{equation}
Finally, letting $\eta = \frac{1}{8L}$ in \eqref{eq:uncon-sum} implies that
\begin{equation}
    \label{eq:uncon-sum-2}
    \reg_{\cX, \cL} + \reg_{\cY, \cL} \leq 8L (\Omega_{\cX} + \Omega_{\cY}) - \frac{L}{2} \sum_{t=1}^T \| \vec{x}^{(t)} - \vec{x}^{(t-1)}\|_2^2 - \frac{L}{2} \sum_{t=1}^T \|\vec{y}^{(t)} - \vec{y}^{(t-1)}\|_2^2.
\end{equation}
Moreover, we know from \Cref{proposition:positive_regret} that $\reg^T_{\cX} + \reg^T_{\cY} \geq 0$, implying that $\reg^T_{\cX, \cL} + \reg^T_{\cY, \cL} \geq 0$ by virtue of \eqref{eq:convex-linear}. As a result, we may conclude from \eqref{eq:uncon-sum-2} the following theorem.
\begin{theorem}
    \label{theorem:convex-concave}
    Let $f(\vec{x}, \vec{y})$ be a continuously differentiable convex-concave function satisfying the $L$-smoothness condition of \eqref{eq:smooth-x} and \eqref{eq:smooth-y}. If both players employ \eqref{eq:OMD} with Euclidean regularization and $\eta = \frac{1}{8L}$, then
    \begin{equation*}
        \sum_{t=1}^T \| \vec{x}^{(t)} - \vec{x}^{(t-1)}\|_2^2 + \sum_{t=1}^T \|\vec{y}^{(t)} - \vec{y}^{(t-1)}\|_2^2 \leq 16 (\Omega_{\cX} + \Omega_{\cY}).
    \end{equation*}
\end{theorem}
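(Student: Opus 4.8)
The plan is to reduce the convex-concave saddle-point problem to the linear setting covered by the \rvu machinery, and then to exploit smoothness to turn the prediction-error term of the \rvu bound into a self-bounding quantity. Concretely, I would first invoke the linearization trick \eqref{eq:convex-linear}: since each player observes a convex (resp.\ concave) objective, feeding the gradient $\nabla_{\vec{x}} f(\vec{x}^{(t)}, \vec{y}^{(t)})$ (resp.\ $\nabla_{\vec{y}} f$) as the linear utility to an \eqref{eq:OMD} instance yields a regret $\reg_{\cL}^T$ that upper bounds the true regret $\reg^T$. This lets me carry over all the $(\alpha, \beta, \gamma)$-\rvu guarantees of \Cref{proposition:rvu} verbatim to the linearized iterates, giving the two bounds in \eqref{eq:rvu-unconstrained}.

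The second step is to control the $\beta$-term $\sum_t \|\nabla f(\cdot^{(t)}) - \nabla f(\cdot^{(t-1)})\|_2^2$. Here I would split each gradient difference into its $\vec{x}$- and $\vec{y}$-increments, apply Young's inequality, and then use the Lipschitz hypotheses \eqref{eq:smooth-x} and \eqref{eq:smooth-y} to bound each piece by $2L^2 \|\vec{x}^{(t)} - \vec{x}^{(t-1)}\|_2^2 + 2L^2 \|\vec{y}^{(t)} - \vec{y}^{(t-1)}\|_2^2$, exactly as in \eqref{align:x-L} and \eqref{align:y-L}. Summing the two players' bounds, the prediction term contributes a coefficient $4\eta L^2$ to each of the two movement sums, which is to be weighed against the negative $-\frac{1}{8\eta}$ coefficient coming from the $\gamma$-term. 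Choosing $\eta = \frac{1}{8L}$ makes the net coefficient $4\eta L^2 - \frac{1}{8\eta} = -\frac{L}{2}$, collapsing the combined bound to \eqref{eq:uncon-sum-2}.

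The final step is a one-line rearrangement. By \Cref{proposition:positive_regret} (\Cref{item:convex-concave}) the sum of true regrets is nonnegative, $\reg_{\cX}^T + \reg_{\cY}^T \geq 0$, and hence by \eqref{eq:convex-linear} the sum of linearized regrets is also nonnegative. Substituting $\reg_{\cX, \cL}^T + \reg_{\cY, \cL}^T \geq 0$ into \eqref{eq:uncon-sum-2} and moving the movement sums to the left gives $\frac{L}{2}\big(\sum_t \|\vec{x}^{(t)} - \vec{x}^{(t-1)}\|_2^2 + \sum_t \|\vec{y}^{(t)} - \vec{y}^{(t-1)}\|_2^2\big) \leq 8L(\Omega_{\cX} + \Omega_{\cY})$, which is the claimed inequality after dividing by $L/2$.

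The main obstacle is conceptual rather than computational: it is the realization that nonnegativity of the regret sum transfers from the original objective to the linearized one. The \rvu bound alone only gives an $O(1)$ upper bound on the total regret; what makes the movement sums summable is the complementary lower bound $\reg_{\cX, \cL}^T + \reg_{\cY, \cL}^T \geq 0$, which here is inherited through the chain $\reg_{\cX, \cL}^T + \reg_{\cY, \cL}^T \geq \reg_{\cX}^T + \reg_{\cY}^T \geq 0$. A secondary point to get right is the factor-of-two loss incurred twice by Young's inequality when decoupling the $\vec{x}$- and $\vec{y}$-directions, which is why the learning rate must absorb a total coefficient of $4L^2$ rather than the naive $2L^2$, forcing the specific choice $\eta = \frac{1}{8L}$.
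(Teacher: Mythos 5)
Your proposal is correct and follows essentially the same route as the paper's proof: the linearization trick \eqref{eq:convex-linear}, the \rvu bounds of \eqref{eq:rvu-unconstrained}, the Young's-inequality split \eqref{align:x-L}--\eqref{align:y-L} yielding the coefficient $4\eta L^2 - \frac{1}{8\eta} = -\frac{L}{2}$ at $\eta = \frac{1}{8L}$, and finally the nonnegativity chain $\reg_{\cX,\cL}^T + \reg_{\cY,\cL}^T \geq \reg_{\cX}^T + \reg_{\cY}^T \geq 0$ from \Cref{proposition:positive_regret} to rearrange into the claimed bound. Your closing observation---that the lower bound on the regret sum, not the \rvu upper bound alone, is what makes the movement sums summable---is exactly the key point of the paper's argument.
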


This theorem combined with \eqref{eq:rvu-unconstrained} directly gives us that every player incurs $O(1)$ individual regret. Further, a bound on the number of iterations required to reach an approximate equilibrium can be derived similarly to \Cref{theorem:rate-OMD}.

\begin{remark}[Unconstrained Setting]
    \label{remark:unconstrained}
    While our framework based on regret minimization---and in particular \Cref{theorem:convex-concave}---requires $\Omega_{\cX}$ and $\Omega_{\cY}$ to be bounded, extensions are possible to the unconstrained setup as well. Indeed, as was pointed out by \citet[Remark 4]{Golowich20:Tight}, it suffices to use the fact that the iterates of (unconstrained) optimistic gradient descent remain within a bounded ball that depends only the initialization; the later property was shown in \citep[Lemma 4]{Mokhtari20:Convergence}. This can be combined with \Cref{theorem:convex-concave} to bound the norm of the gradients $\Lambda^{(t)} \defeq \| \nabla_{\Vec{x}} f(\Vec{x}^{(t)}, \Vec{y}^{(t)}) \| + \| \nabla_{\Vec{y}} f(\Vec{x}^{(t)}, \Vec{y}^{(t)}) \|$ at time $t$. More precisely, \Cref{theorem:convex-concave} implies that after $T$ iterations there exists time $t \in [T]$ such that $\Lambda^{(t)} = O(1/\sqrt{T})$; cf., see \citet{Golowich20:Tight}. We point out that $\Lambda^{(t)}$ is the most common measure for last-iterate convergence in min-max optimization~\citep{Diakonikolas:Potential}.
\end{remark}

\begin{remark}[Curvature Exploitation]
    The linearization trick we employed can be suboptimal as the regret minimization algorithm could fail to exploit the \emph{curvature} of the utility functions; \emph{e.g.}, this would be the case if the objective function $f$ is strongly-convex-strongly-concave. Extending our framework to address this issue is an interesting direction for the future.
\end{remark}

\subsection{Bilinear Saddle-Point Problems}
\label{appendix:EFGs}

In this subsection we show that our framework also has direct implications for extensive-form games (EFGs). A comprehensive overview of extensive-form games would lead us beyond the scope of this paper. Instead, we will focus on two-player zero-sum games wherein the computation of a Nash equilibrium can be formulated as a \emph{bilinear saddle-point problem} (BSPP). Namely, a BSPP can be expressed as
\begin{equation*}
    \min_{\vec{x} \in \cX} \max_{\vec{y} \in \cY} \Vec{x}^\top \mat{A} \Vec{y},
\end{equation*}
where $\cX$ and $\cY$ are convex and compact sets, and $\mat{A} \in \R^{n \times m}$. In the case of EFGs, $\cX$ and $\cY$ are the \emph{sequence-form strategy polytopes} of the sequential decision process faced by the two players, and $\mat{A}$ is a matrix with the leaf payoffs of the game. We first show the following.

\begin{proposition}
    \label{proposition:EFGs}
    Suppose that both players in a BSPP employ \eqref{eq:OMD} with Euclidean regularizer and $\eta \leq \frac{1}{4\|\mat{A}\|_2}$, where $\| \mat{A}\|_2$ is the spectral norm of $\mat{A}$. Then, 
    \begin{equation*}
        \sum_{t=1}^T \left( \| \Vec{x}^{(t)} - \Vec{x}^{(t-1)} \|_2^2 + \| \Vec{y}^{(t)} - \Vec{y}^{(t-1)} \|_2^2 \right) \leq 16 (\Omega_{\cX} + \Omega_{\cY}).
    \end{equation*}
\end{proposition}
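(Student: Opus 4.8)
The plan is to mirror the argument used for \Cref{theorem:convex-concave}, replacing the $L$-smoothness assumptions with the spectral-norm bound on $\mat{A}$. First I would observe that in the BSPP each player faces a \emph{linear} utility: with player $\cX$ the minimizer and player $\cY$ the maximizer, the utility vectors are $\vec{u}_{\cX}^{(t)} = -\mat{A}\vec{y}^{(t)}$ and $\vec{u}_{\cY}^{(t)} = \mat{A}^\top \vec{x}^{(t)}$. Since both players run \eqref{eq:OMD} with the Euclidean regularizer, \Cref{proposition:rvu} applies directly with parameters $(\alpha,\beta,\gamma) = (\Omega/\eta,\eta,1/(8\eta))$, giving the two \rvu bounds
\begin{align*}
\reg_{\cX}^T &\leq \frac{\Omega_{\cX}}{\eta} + \eta \sum_{t=1}^T \|\vec{u}_{\cX}^{(t)} - \vec{u}_{\cX}^{(t-1)}\|_2^2 - \frac{1}{8\eta}\sum_{t=1}^T \|\vec{x}^{(t)} - \vec{x}^{(t-1)}\|_2^2,\\
\reg_{\cY}^T &\leq \frac{\Omega_{\cY}}{\eta} + \eta \sum_{t=1}^T \|\vec{u}_{\cY}^{(t)} - \vec{u}_{\cY}^{(t-1)}\|_2^2 - \frac{1}{8\eta}\sum_{t=1}^T \|\vec{y}^{(t)} - \vec{y}^{(t-1)}\|_2^2.
\end{align*}

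Next I would control the prediction-error terms. Because $\vec{u}_{\cX}^{(t)} - \vec{u}_{\cX}^{(t-1)} = -\mat{A}(\vec{y}^{(t)} - \vec{y}^{(t-1)})$ and $\vec{u}_{\cY}^{(t)} - \vec{u}_{\cY}^{(t-1)} = \mat{A}^\top(\vec{x}^{(t)} - \vec{x}^{(t-1)})$, the definition of the spectral norm (together with $\|\mat{A}^\top\|_2 = \|\mat{A}\|_2$) yields $\|\vec{u}_{\cX}^{(t)} - \vec{u}_{\cX}^{(t-1)}\|_2^2 \leq \|\mat{A}\|_2^2 \|\vec{y}^{(t)} - \vec{y}^{(t-1)}\|_2^2$ and symmetrically for $\cY$. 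Summing the two \rvu bounds and collecting the quadratic terms, the coefficient in front of both $\sum_t \|\vec{x}^{(t)} - \vec{x}^{(t-1)}\|_2^2$ and $\sum_t \|\vec{y}^{(t)} - \vec{y}^{(t-1)}\|_2^2$ becomes $\eta\|\mat{A}\|_2^2 - \tfrac{1}{8\eta}$. The crux is then the elementary observation that for $\eta \leq \tfrac{1}{4\|\mat{A}\|_2}$ one has $\eta^2\|\mat{A}\|_2^2 \leq \tfrac{1}{16}$, hence $\eta\|\mat{A}\|_2^2 \leq \tfrac{1}{16\eta}$ and therefore $\eta\|\mat{A}\|_2^2 - \tfrac{1}{8\eta} \leq -\tfrac{1}{16\eta}$, so that
\begin{equation*}
\reg_{\cX}^T + \reg_{\cY}^T \leq \frac{\Omega_{\cX} + \Omega_{\cY}}{\eta} - \frac{1}{16\eta}\sum_{t=1}^T\left(\|\vec{x}^{(t)} - \vec{x}^{(t-1)}\|_2^2 + \|\vec{y}^{(t)} - \vec{y}^{(t-1)}\|_2^2\right).
\end{equation*}

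Finally, since a bilinear form over the convex compact sets $\cX,\cY$ is a convex-concave zero-sum game, the minimax theorem applies and \Cref{proposition:positive_regret} (cf.\ the convex-concave and minimax cases of \Cref{proposition:positive_regret-full}) gives $\reg_{\cX}^T + \reg_{\cY}^T \geq 0$. Substituting this into the displayed inequality and rearranging immediately yields the claimed bound. I do not anticipate a genuine obstacle: the argument is a direct transcription of the convex-concave case with the smoothness constant replaced by $\|\mat{A}\|_2$. The only points requiring care are the bookkeeping that makes the learning-rate threshold $\tfrac{1}{4\|\mat{A}\|_2}$ exactly cancel the positive $\beta$-term down to a strictly negative coefficient $-\tfrac{1}{16\eta}$, and confirming that $\Omega_{\cX},\Omega_{\cY}$ here are precisely the quantities $\sup_{\vec{x}} D_{\cR}(\vec{x}, \widehat{\vec{x}}^{(0)})$ appearing in \Cref{proposition:rvu}.
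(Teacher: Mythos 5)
Your proposal is correct and follows essentially the same route as the paper's proof: the same \rvu bounds from \Cref{proposition:rvu} with Euclidean regularization, the same spectral-norm bound $\|\mat{A}^\top\|_2 = \|\mat{A}\|_2$ on the utility differences, the same arithmetic showing $\eta\|\mat{A}\|_2^2 - \tfrac{1}{8\eta} \leq -\tfrac{1}{16\eta}$ when $\eta \leq \tfrac{1}{4\|\mat{A}\|_2}$, and the same final rearrangement using $\reg_{\cX}^T + \reg_{\cY}^T \geq 0$. The only cosmetic difference is that you justify the nonnegativity of the regret sum via the convex-concave/minimax items of \Cref{proposition:positive_regret-full}, whereas the paper invokes the two-player zero-sum case (whose proof already covers bilinear objectives over arbitrary convex compact sets); both are valid instances of the same lemma.
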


\begin{proof}
    Given that player $\cX$ employs \eqref{eq:OMD} with Euclidean regularization, \Cref{proposition:rvu} implies that\footnote{Although \Cref{proposition:rvu} was only stated for simplexes by \citet{Syrgkanis15:Fast}, the proof readily extends for arbitrary convex and compact sets.}
    \begin{align}
        \reg^T_{\cX} & \leq \frac{\Omega_{\cX}}{\eta} + \eta \sum_{t=1}^T \| \mat{A} \Vec{y}^{(t)} - \mat{A} \Vec{y}^{(t-1)} \|_2^2 - \frac{1}{8\eta} \sum_{t=1}^T \| \Vec{x}^{(t)} - \Vec{x}^{(t-1)} \|_2^2 \notag              \\
                     & \leq \frac{\Omega_{\cX}}{\eta} + \eta \| \mat{A}\|_2^2 \sum_{t=1}^T \|\Vec{y}^{(t)} - \Vec{y}^{(t-1)} \|_2^2 - \frac{1}{8\eta} \sum_{t=1}^T \| \Vec{x}^{(t)} - \Vec{x}^{(t-1)} \|_2^2. \label{align:regx}
    \end{align}
    Similarly, we have that
    \begin{align}
        \reg^T_{\cY} & \leq \frac{\Omega_{\cY}}{\eta} + \eta \sum_{t=1}^T \| \mat{A}^\top \Vec{x}^{(t)} - \mat{A}^\top \Vec{x}^{(t-1)} \|_2^2 - \frac{1}{8\eta} \sum_{t=1}^T \| \Vec{y}^{(t)} - \Vec{y}^{(t-1)} \|_2^2 \notag    \\
                     & \leq \frac{\Omega_{\cY}}{\eta} + \eta \| \mat{A}\|_2^2 \sum_{t=1}^T \|\Vec{x}^{(t)} - \Vec{x}^{(t-1)} \|_2^2 - \frac{1}{8\eta} \sum_{t=1}^T \| \Vec{y}^{(t)} - \Vec{y}^{(t-1)} \|_2^2, \label{align:regy}
    \end{align}
    where we used the fact that $\| \mat{A}^\top \|_2 = \| \mat{A} \|_2$. Thus, summing \eqref{align:regx} and \eqref{align:regy} yields that $\reg_{\cX}^T + \reg_{\cY}^T$ can be upper bounded as
    \begin{align*}
        &\frac{\Omega_{\cX} + \Omega_{\cY}}{\eta} + \left( \eta \| \mat{A}\|_2^2 - \frac{1}{8\eta} \right) \sum_{t=1}^T \| \Vec{x}^{(t)} - \Vec{x}^{(t-1)} \|_2^2 + \left( \eta \| \mat{A}\|_2^2 - \frac{1}{8\eta} \right) \sum_{t=1}^T \| \Vec{y}^{(t)} - \Vec{y}^{(t-1)} \|_2^2. \\
                                           & \leq \frac{\Omega_{\cX} + \Omega_{\cY}}{\eta} - \frac{1}{16\eta} \sum_{t=1}^T \| \Vec{x}^{(t)} - \Vec{x}^{(t-1)} \|_2^2 - \frac{1}{16\eta} \sum_{t=1}^T \| \Vec{y}^{(t)} - \Vec{y}^{(t-1)} \|_2^2,
    \end{align*}
    where we used the fact that $\eta \leq \frac{1}{4 \| \mat{A} \|_2}$. A rearrangement of the final bound along with the fact that $\reg_{\cX}^T + \reg_{\cY}^T \geq 0$ completes the proof.
\end{proof}

Moreover, we can employ the argument of \Cref{theorem:rate-OMD} to also bound the number of iterations required to reach an $\epsilon$-approximate Nash equilibrium of the BSPP. In the following claim we use the $O(\cdot)$ notation to suppress (universal) constants.

\begin{proposition}[Full Version of \Cref{proposition:rate-EFGs}]
    \label{proposition:rate-EFGs-full}
    Suppose that both players in a BSPP employ \eqref{eq:OMD} with Euclidean regularization and $\eta \leq \frac{1}{4\|\mat{A}\|_2}$. Then, after $T = O\left(\frac{\Omega_{\cX} + \Omega_{\cY}}{\epsilon^2}\right)$ iterations there is a joint strategy $(\Vec{x}^{(t)}, \Vec{y}^{(t)})$ with $t \in [T]$ which is an $O(\epsilon \max\{\Omega_{\cX}, \Omega_{\cY} \} \|\mat{A} \|_2)$-Nash equilibrium of the BSPP.
\end{proposition}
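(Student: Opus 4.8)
The plan is to replay the argument of \Cref{theorem:rate-OMD-full} in the two-player bilinear setting, but using the refined \rvu bound of \Cref{proposition:refined-rvu} rather than the coarse trajectory estimate already recorded in \Cref{proposition:EFGs} (which only controls consecutive iterate differences and not the finer quantities needed for the equilibrium argument). Treating player $\cX$ as maximizing the linear utility $\vec{u}_{\cX}^{(t)} = -\mat{A}\vec{y}^{(t)}$ and player $\cY$ as maximizing $\vec{u}_{\cY}^{(t)} = \mat{A}^\top \vec{x}^{(t)}$, I would apply \Cref{proposition:refined-rvu} to each player under Euclidean regularization (so $\|\cdot\|_* = \|\cdot\|_2$ and the regularizer is $1$-smooth). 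The utility increments are controlled by the spectral norm, $\|\vec{u}_{\cX}^{(t)} - \vec{u}_{\cX}^{(t-1)}\|_2 \leq \|\mat{A}\|_2\,\|\vec{y}^{(t)} - \vec{y}^{(t-1)}\|_2$ and symmetrically for $\cY$; substituting these, using the Young-type inequality $\|\vec{x}^{(t)} - \vec{x}^{(t-1)}\|_2^2 \leq 2\|\vec{x}^{(t)} - \widehat{\vec{x}}^{(t)}\|_2^2 + 2\|\vec{x}^{(t)} - \widehat{\vec{x}}^{(t-1)}\|_2^2$ to fold the positive terms into the negative VI-gap terms, and taking $\eta \leq \frac{1}{4\|\mat{A}\|_2}$ renders the coefficient of each squared-increment term nonpositive.

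Summing the two regret bounds and invoking $\reg_{\cX}^T + \reg_{\cY}^T \geq 0$ (guaranteed by \Cref{proposition:positive_regret}, since a BSPP is a convex-concave zero-sum game and hence obeys the minimax theorem) then yields a telescoped estimate of the form
\[
\sum_{t=1}^T \Big( \|\vec{x}^{(t)} - \widehat{\vec{x}}^{(t)}\|_2^2 + \|\vec{x}^{(t)} - \widehat{\vec{x}}^{(t-1)}\|_2^2 + \|\vec{y}^{(t)} - \widehat{\vec{y}}^{(t)}\|_2^2 + \|\vec{y}^{(t)} - \widehat{\vec{y}}^{(t-1)}\|_2^2 \Big) \leq c\,(\Omega_{\cX} + \Omega_{\cY}),
\]
for an absolute constant $c$, exactly as in the multiplayer proof. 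An averaging (pigeonhole) argument over the $T$ rounds then produces a time $t \in [T]$ at which the per-round sum is at most $\epsilon^2$, provided $T = O((\Omega_{\cX}+\Omega_{\cY})/\epsilon^2)$; at this $t$ one gets $\|\vec{x}^{(t)} - \widehat{\vec{x}}^{(t)}\|_2 \leq \epsilon$ and, via the triangle inequality, $\|\widehat{\vec{x}}^{(t)} - \widehat{\vec{x}}^{(t-1)}\|_2 \leq 2\epsilon$, and likewise for $\cY$.

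The remaining step converts these proximity bounds into a saddle-point-gap guarantee, specializing \Cref{claim:close-Nash}. Writing the optimality condition of the \eqref{eq:OMD} update as a variational inequality and using that the Euclidean regularizer is $1$-smooth, I would obtain, for player $\cY$, the approximate best-response inequality $(\vec{x}^{(t)})^\top \mat{A}\,\widehat{\vec{y}}^{(t)} \geq (\vec{x}^{(t)})^\top \mat{A}\,\vec{y} - \frac{2\epsilon\,\Omega_{\cY}'}{\eta}$ for every $\vec{y} \in \cY$, where $\Omega_{\cY}' \defeq \sup_{\vec{y},\vec{y}'}\|\vec{y}-\vec{y}'\|_2$, and symmetrically for $\cX$. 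Transferring from $\widehat{\vec{y}}^{(t)}$ to the actual iterate $\vec{y}^{(t)}$ costs at most $\|\mat{A}^\top \vec{x}^{(t)}\|_2\,\|\widehat{\vec{y}}^{(t)} - \vec{y}^{(t)}\|_2 \leq \|\mat{A}\|_2\,\Omega_{\cX}'\,\epsilon$ (and analogously on the $\cX$ side). Combining the two one-sided guarantees bounds the saddle-point gap $\max_{\vec{y}} (\vec{x}^{(t)})^\top \mat{A}\vec{y} - \min_{\vec{x}} \vec{x}^\top \mat{A}\vec{y}^{(t)}$, which is precisely the $\epsilon$-Nash condition of \Cref{def:Nash} for the BSPP; since $\eta = \Theta(1/\|\mat{A}\|_2)$, the factor $1/\eta$ contributes the leading $\|\mat{A}\|_2$, giving the claimed $O(\epsilon\,\max\{\Omega_{\cX},\Omega_{\cY}\}\,\|\mat{A}\|_2)$ bound.

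The main obstacle I anticipate is the bookkeeping of the two distinct origins of the $\|\mat{A}\|_2$ factor — one from bounding the dual norm of each linear utility, $\|\mat{A}\vec{y}^{(t)}\|_2 \leq \|\mat{A}\|_2\,\|\vec{y}^{(t)}\|_2$, the other from the $1/\eta$ term in the variational-inequality estimate — together with verifying that the diameter-type constants $\Omega_{\cX}',\Omega_{\cY}'$ and the iterate norms $\|\vec{x}^{(t)}\|_2,\|\vec{y}^{(t)}\|_2$ are uniformly bounded over the compact sets $\cX,\cY$ and collapse correctly into the single quantity $\max\{\Omega_{\cX},\Omega_{\cY}\}$ hidden by the $O(\cdot)$ notation. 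The structural steps are otherwise routine once the template of \Cref{theorem:rate-OMD-full} and \Cref{claim:close-Nash} is in hand.
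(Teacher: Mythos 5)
Your proposal is correct and is essentially the paper's own argument: the paper gives no separate proof for this proposition, stating only that one should ``employ the argument of \Cref{theorem:rate-OMD}'', and your proof does exactly that—refined \rvu bound (\Cref{proposition:refined-rvu}) with spectral-norm control of the utility increments, nonnegative sum of regrets from the convex-concave/minimax case of \Cref{proposition:positive_regret}, pigeonhole over the telescoped bound, and the specialization of \Cref{claim:close-Nash} with $G=1$ and $1/\eta = O(\|\mat{A}\|_2)$ to get the stated approximation. You also correctly identify the one point the paper glosses over, namely that the coarse bound of \Cref{proposition:EFGs} on consecutive iterate differences does not suffice and one needs the finer $\|\vec{x}^{(t)} - \widehat{\vec{x}}^{(t)}\|$-type quantities for the variational-inequality step.
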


\section{Proofs from \texorpdfstring{\Cref{section:potential}}{Section 4}}

In this section we give complete proofs for our results in \Cref{section:potential}. We commence with the simple proof of \Cref{proposition:potential}, which is recalled below.

\potential*

\begin{proof}
    By assumption, we know that there exists a (bounded) function $\Phi$ such that
    \begin{equation}
        \label{eq:potential}
        \Phi(\vec{a}) - \Phi(a_i', \vec{a}_{-i}) = \vec{w}_i ( u_i(\vec{a}) - u_i(a_i', \vec{a}_{-i})).
    \end{equation}
    That is, $\Gamma$ is a \emph{weighted potential game}: the difference in utility deriving from a deviation of a player $i \in [n]$ is translated to the exact same deviation in the potential function, modulo the scaling factor $\Vec{w}_i$. We will first show that the condition of \eqref{eq:potential} can be translated for the mixed extensions as well. As is common, in the sequel we slightly abuse notation by using the same symbols for the mixed extensions of the potential function and the utility function of each player.

    \begin{claim}
        \label{claim:mixedpotential}
        For any $(\Vec{x}_1, \dots, \Vec{x}_n) \in \prod_{i \in [n]} \Delta(\cA_i)$ it holds that
        \begin{equation}
            \Phi(\vec{x}) - \Phi(\vec{x}_i', \vec{x}_{-i}) = \vec{w}_i ( u_i(\vec{x}) - u_i(\vec{x}_i', \vec{x}_{-i})),
        \end{equation}
        where $\Phi(\vec{x}) \defeq \E_{\vec{a} \sim \vec{x}}[\Phi(\vec{a})] = \sum_{(a_1, \dots, a_n) \in \cA} \Phi(a_1, \dots, a_n) \prod_{i = 1}^n \vec{x}_i(a_i)$.
    \end{claim}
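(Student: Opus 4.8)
The plan is to lift the pure-strategy identity \eqref{eq:potential} to the mixed extension by exploiting the multilinearity of both $\Phi$ and $u_i$. The crucial observation is that, for a fixed profile $\vec{x}_{-i}$, each of the maps $\vec{x}_i \mapsto \Phi(\vec{x}_i, \vec{x}_{-i})$ and $\vec{x}_i \mapsto u_i(\vec{x}_i, \vec{x}_{-i})$ is a linear functional on $\Delta(\cA_i)$, since both are expectations over a product distribution and hence affine in each player's strategy coordinate separately. This linearity is exactly what will let an identity verified at the vertices (pure actions) propagate to the whole simplex.

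First I would fix an arbitrary reference action $a_i^0 \in \cA_i$ and apply the hypothesis \eqref{eq:potential} with deviation $a_i^0$, which gives, for every $\vec{a}_{-i}$ and every $a_i \in \cA_i$,
\begin{equation*}
    \Phi(a_i, \vec{a}_{-i}) - \Phi(a_i^0, \vec{a}_{-i}) = \vec{w}_i \big( u_i(a_i, \vec{a}_{-i}) - u_i(a_i^0, \vec{a}_{-i}) \big).
\end{equation*}
Taking the expectation of both sides over $\vec{a}_{-i} \sim \vec{x}_{-i}$ (legitimate because both sides are linear in the distribution of $\vec{a}_{-i}$) yields the same identity with $\vec{x}_{-i}$ replacing $\vec{a}_{-i}$, where $\Phi(a_i, \vec{x}_{-i})$ and $u_i(a_i, \vec{x}_{-i})$ denote the corresponding partially-mixed extensions.

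Next I would multiply this identity by $\vec{x}_i(a_i)$ and sum over $a_i \in \cA_i$. Using $\sum_{a_i} \vec{x}_i(a_i) = 1$ together with the multilinearity that collapses $\sum_{a_i} \vec{x}_i(a_i) \Phi(a_i, \vec{x}_{-i}) = \Phi(\vec{x}_i, \vec{x}_{-i})$ (and likewise for $u_i$), I obtain
\begin{equation*}
    \Phi(\vec{x}_i, \vec{x}_{-i}) - \Phi(a_i^0, \vec{x}_{-i}) = \vec{w}_i \big( u_i(\vec{x}_i, \vec{x}_{-i}) - u_i(a_i^0, \vec{x}_{-i}) \big).
\end{equation*}
Performing the identical manipulation with $\vec{x}_i'$ in place of $\vec{x}_i$ and subtracting the two resulting identities cancels the reference terms $\Phi(a_i^0, \vec{x}_{-i})$ and $u_i(a_i^0, \vec{x}_{-i})$, leaving precisely the claimed equation.

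I expect no genuine obstacle here: the argument is a routine multilinearity and expectation-exchange computation, and the only point demanding care is maintaining consistent notation for the partially-mixed extensions $\Phi(a_i, \vec{x}_{-i})$ and justifying that taking expectations commutes with the linear pure-strategy identity. The reference-action device is what makes the cancellation clean; an equivalent and perhaps more conceptual framing is to observe that the quantity $\Phi - \vec{w}_i u_i$ is, by \eqref{eq:potential}, independent of player $i$'s own action, and that this independence is preserved under taking expectations, so that $\Phi(\vec{x}) - \vec{w}_i u_i(\vec{x})$ does not depend on $\vec{x}_i$ at all.
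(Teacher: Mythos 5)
Your proof is correct and follows essentially the same route as the paper's: both lift the pure-strategy potential identity to the mixed extension via multilinearity, using the fact that $\Phi - \vec{w}_i u_i$ is independent of player $i$'s own action. The only difference is bookkeeping—the paper expands $\Phi(\vec{x}) - \Phi(\vec{x}_i', \vec{x}_{-i})$ directly and exploits that the coefficients $\vec{x}_i(a_i) - \vec{x}_i'(a_i)$ sum to zero, whereas you anchor at a reference action $a_i^0$ and subtract two instances, which is the same cancellation organized differently.
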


    \begin{proof}
        We have that
        \begin{align}
            \Phi(\vec{x}) - \Phi(\vec{x}_i', \vec{x}_{-i}) & = \sum_{\vec{a} \in \cA} \Phi(\vec{a}) (\vec{x}_i(a_i) - \vec{x}_i'(a_i) ) \prod_{j\neq i}  \vec{x}_j(a_j) \notag                                                                         \\
                                                           & = \sum_{a_i \in \cA_i} (\vec{x}_i(a_i) - \vec{x}_i'(a_i) ) \sum_{\vec{a}_{-i} \in \cA_{-i}} \Phi(a_i, \vec{a}_{-i}) \prod_{j \neq i}  \vec{x}_j(a_j) \notag                               \\
                                                           & = \vec{w}_i \sum_{a_i \in \cA_i} (\vec{x}_i(a_i) - \vec{x}_i'(a_i) ) \sum_{\vec{a}_{-i} \in \cA_{-i}} u_i(a_i, \vec{a}_{-i}) \prod_{j \neq i}  \vec{x}_j(a_j) \label{align:differ}        \\
                                                           & = \vec{w}_i \left( \sum_{\vec{a} \in \cA} u_i(\vec{a}) \prod_{j=1}^n \vec{x}_j(a_j) - \sum_{\vec{a} \in \cA} u_i(\vec{a}) \vec{x}_i'(a_i) \prod_{k\neq i}^n \vec{x}_k(a_k) \right) \notag \\
                                                           & =\vec{w}_i ( u_i(\vec{x}) - u_i(\vec{x}_i', \vec{x}_{-i})) \notag,
        \end{align}
        where \eqref{align:differ} follows from a rearrangement of the terms and the assumption of \eqref{eq:potential}.
    \end{proof}
    Now observe that from \Cref{claim:mixedpotential} we may conclude that

    \begin{align*}
        \frac{\partial \Phi(\vec{x})}{\partial \vec{x}_i(a_i)} = \vec{w}_i \frac{\partial u_i(\vec{x})}{\partial \vec{x}_i(a_i)} & = \vec{w}_i \frac{\partial}{\partial \vec{x}_i(a_i)} \left( \sum_{\vec{a} \in \cA} u_i(\vec{a}) \prod_{j=1}^n \vec{x}_j(a_j) \right) \notag \\
                                                                                                                                 & = \vec{w}_i \sum_{\vec{a}_{-i} \in \cA_{-i}} u_i(a_i, \vec{a}_{-i}) \prod_{j \neq i} \vec{x}_j(a_j) = \vec{w}_i u_i(a_i, \vec{x}_{-i}).
    \end{align*}
    This verifies condition \eqref{eq:global-align} from \Cref{def:weighted_potential}. Thus, it remains to establish the smoothness of the potential, in the sense of \eqref{eq:onesided-smooth}. To this end, we first recall the following simple fact.
    \begin{fact}
        \label{fact:smooth}
        Let $g : \R^d \to \R$ be a twice continuously differentiable function such that the spectral norm of the Hessian $\nabla^2 g$ is upper bounded by some $M > 0$. Then, for any $\vec{x}, \vec{y} \in \R^d$,
        \begin{equation*}
            \| \nabla g(\vec{x}) - \nabla g(\vec{y})\|_2 \leq M \| \vec{x} - \vec{y}\|_2.
        \end{equation*}
    \end{fact}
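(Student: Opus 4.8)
The plan is to reduce the statement to a one-dimensional integral along the segment joining $\vec{x}$ and $\vec{y}$, and then invoke the hypothesis on the spectral norm of the Hessian pointwise. First I would parametrize the segment by $\vec{z}(t) \defeq \vec{y} + t(\vec{x} - \vec{y})$ for $t \in [0,1]$ and consider the vector-valued function $t \mapsto \nabla g(\vec{z}(t))$. Since $g$ is twice continuously differentiable, the chain rule gives $\frac{d}{dt} \nabla g(\vec{z}(t)) = \nabla^2 g(\vec{z}(t)) (\vec{x} - \vec{y})$, and applying the fundamental theorem of calculus coordinatewise yields
\begin{equation*}
    \nabla g(\vec{x}) - \nabla g(\vec{y}) = \int_0^1 \nabla^2 g(\vec{z}(t)) (\vec{x} - \vec{y}) \, dt.
\end{equation*}

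Next I would take the Euclidean norm of both sides and pass it inside the integral by the triangle inequality for vector-valued integrals, and then bound the integrand for each fixed $t$ using the definition of the spectral norm:
\begin{equation*}
    \| \nabla g(\vec{x}) - \nabla g(\vec{y})\|_2 \leq \int_0^1 \| \nabla^2 g(\vec{z}(t)) (\vec{x} - \vec{y}) \|_2 \, dt \leq \int_0^1 \| \nabla^2 g(\vec{z}(t)) \|_2 \, \| \vec{x} - \vec{y} \|_2 \, dt.
\end{equation*}
By the standing hypothesis the spectral norm of the Hessian is at most $M$ everywhere, so the integrand is bounded by $M \| \vec{x} - \vec{y} \|_2$ uniformly in $t$; integrating this constant over $[0,1]$ gives exactly $M \| \vec{x} - \vec{y}\|_2$, which is the claimed bound.

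The argument is entirely routine, and I do not expect any genuine obstacle. The only point deserving a word of care is the justification for moving the norm inside the integral, which follows from the standard triangle inequality for Riemann (equivalently Bochner) vector-valued integrals, or coordinatewise from Jensen's inequality; and the fact that the integrals are well-defined ordinary Riemann integrals, which is guaranteed since the continuity of $\nabla^2 g$ makes the integrand continuous in $t$.
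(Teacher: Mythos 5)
Your proof is correct and complete: the parametrization of the segment, the chain-rule identity $\frac{d}{dt}\nabla g(\vec{z}(t)) = \nabla^2 g(\vec{z}(t))(\vec{x}-\vec{y})$, the fundamental theorem of calculus, and the pointwise spectral-norm bound pushed through the integral constitute the standard argument for this fact, and each step is justified (including the triangle inequality for vector-valued integrals, which you rightly flag as the only point needing care). For comparison: the paper does not prove this statement at all --- it is introduced with ``we first recall the following simple fact'' and used as a known ingredient inside the proof of its Proposition 4.3, so your write-up supplies exactly the routine verification the authors chose to omit.
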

    Hence, it suffices to bound the operator norm of the Hessian of $\Phi$. This is shown in the following lemma, where we crucially leverage the multilinearity of the (mixed) potential function.
    \begin{lemma}
        It holds that $\|\nabla^2 \Phi \|_2 \leq \phimax \sum_{i=1}^n |\cA_i|$.
    \end{lemma}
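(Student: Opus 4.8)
The plan is to exploit the \emph{multilinearity} of the mixed potential $\Phi(\vec{x}) = \sum_{\vec{a} \in \cA} \Phi(\vec{a}) \prod_{i=1}^n \vec{x}_i(a_i)$, which is affine in each block $\vec{x}_i$ separately. Indexing the Hessian by pairs $(i, a_i)$ and $(j, a_j)$, I would first record that whenever the two differentiation variables belong to the same block (that is, $i = j$), the corresponding second derivative vanishes, since $\Phi$ is linear in $\vec{x}_i$; writing $\Phi = \sum_{a_i} \vec{x}_i(a_i)\, c_{a_i}(\vec{x}_{-i})$ with coefficients $c_{a_i}$ independent of $\vec{x}_i$ makes this transparent. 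In particular every diagonal entry of $\nabla^2 \Phi$ is zero, and all diagonal blocks vanish identically.

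Next I would compute the cross-block entries. For $i \neq j$,
\begin{equation*}
    \frac{\partial^2 \Phi(\vec{x})}{\partial \vec{x}_i(a_i)\, \partial \vec{x}_j(a_j)} = \sum_{\vec{a}_{-\{i,j\}}} \Phi(a_i, a_j, \vec{a}_{-\{i,j\}}) \prod_{k \neq i,j} \vec{x}_k(a_k),
\end{equation*}
where the sum ranges over the action profiles of the players other than $i$ and $j$. Since $\prod_{k \neq i,j} \vec{x}_k(a_k)$ is a product of probability distributions and hence sums to one, the triangle inequality together with $|\Phi(\vec{a})| \leq \phimax$ shows that each such entry is at most $\phimax$ in absolute value.

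Finally, to pass from entrywise control to the spectral norm I would invoke that $\nabla^2 \Phi$ is symmetric, so $\|\nabla^2 \Phi\|_2 \leq \|\nabla^2\Phi\|_\infty$, the maximum absolute row sum (equivalently, one may apply Gershgorin's circle theorem, whose discs are all centred at the vanishing diagonal entries). Fixing a row $(i, a_i)$, the within-block contributions vanish while the $\sum_{j \neq i}|\cA_j|$ cross-block entries are each at most $\phimax$, so the row sum is bounded by $\phimax \sum_{j \neq i}|\cA_j| \leq \phimax \sum_{i=1}^n |\cA_i|$, which yields the claim. I anticipate no real obstacle here: the one genuine idea is that multilinearity forces the diagonal blocks to vanish, which is precisely what lets the crude entrywise bound $\phimax$ survive the conversion to a spectral-norm estimate; everything else is routine bookkeeping and a standard matrix-norm inequality.
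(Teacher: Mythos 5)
Your proof is correct, and its core is identical to the paper's: both exploit multilinearity to show that all within-block second derivatives vanish, compute the cross-block entries as an expectation of $\Phi$ over the other players' strategies, and bound each entry by $\phimax$ via the triangle inequality and the normalization of the product distribution. Where you diverge is only in the final conversion from the entrywise bound to the spectral norm: the paper bounds $\|\nabla^2 \Phi\, \vec{z}\|_2$ directly for an arbitrary vector $\vec{z}$, using the entrywise bound followed by Cauchy--Schwarz (which it calls Jensen's inequality) to get $\phimax\, d\, \|\vec{z}\|_2$ with $d = \sum_{i=1}^n |\cA_i|$, whereas you invoke symmetry of the Hessian together with $\|\mat{M}\|_2 \leq \|\mat{M}\|_\infty$ (equivalently, Gershgorin's theorem with all discs centred at zero) and bound the maximum row sum. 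Your route is marginally cleaner and in fact yields the slightly sharper constant $\phimax \max_{i} \sum_{j \neq i} |\cA_j|$, since a row indexed by player $i$ has no contribution from block $i$ itself; the paper's direct estimate does not see this refinement, though both are equally routine once the entrywise structure is in place, and both suffice for the stated claim.
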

    \begin{proof}
        First of all, it is easy to see that for all $i \in [n]$ and $a_i, a_i' \in \cA_i$,
        \begin{equation*}
            \frac{\partial^2 \Phi}{\partial \vec{x}_i(a_i) \partial \vec{x}_i(a_i')} = 0.
        \end{equation*}
        On the other hand, for $i \neq j \in [n]$ and $a_i \in \cA_i$ and $a_j \in \cA_j$ it holds that
        \begin{equation*}
            \frac{\partial^2 \Phi}{\partial \vec{x}_i(a_i) \partial \vec{x}_j(a_j)} = \sum_{\vec{a}_{-i, -j} \in \cA_{-i, -j}} \Phi(a_i, a_j, \vec{a}_{-i, -j}) \prod_{k \neq i, j} \vec{x}_k(a_k);
        \end{equation*}
        here we used the shorthand notation $\vec{a}_{-i, -j} \defeq \prod_{k \neq i, j} a_k$ and $\cA_{-i, -j} \defeq \prod_{k \neq i, j} \cA_k$. Thus, applying the triangle inequality yields that
        \begin{equation}
            \label{eq:phimax}
            \left| \frac{\partial^2 \Phi}{\partial \vec{x}_i(a_i) \partial \vec{x}_j(a_j)} \right| = \left| \sum_{\vec{a}_{-i, -j}} \Phi(a_i, a_j, \vec{a}_{-i, -j}) \prod_{k \neq i, j} \vec{x}_k(a_k) \right| \leq \phimax \sum_{\vec{a}_{-i, -j}} \prod_{k \neq i, j} \vec{x}_k(a_k) = \phimax,
        \end{equation}
        where the final equality holds by the normalization constraint of the induced product distribution. Thus, for any $\Vec{z} \in \R^{d}$, where $d = \sum_{i=1}^n |\cA_i|$, we have that
        \begin{equation*}
            \| \nabla^2 \Phi \Vec{z}\|_2 \leq \phimax \sqrt{d \left( \sum_{r = 1}^d \Vec{z}(r) \right)^2} \leq \phimax \sqrt{d^2 \sum_{r=1}^d \Vec{z}^2(r)} = \phimax d \|\Vec{z}\|_2,
        \end{equation*}
        where we used \eqref{eq:phimax} in the first bound, and Jensen's inequality in the second one. As a result, we have shown that $\|\nabla^2 \Phi\|_2 \leq \phimax \sum_{i=1}^n |\cA_i|$, concluding the proof.
    \end{proof}

    Finally, combining this lemma with \Cref{fact:smooth} we arrive at the following corollary, which concludes the proof of \Cref{proposition:potential}.

    \begin{corollary}
        \label{corollary:onesided}
        Let $L = \frac{1}{2} \phimax \sum_{i=1}^n |\cA_i|$. Then, for any $\Vec{x}, \widetilde{\Vec{x}} \in \prod_{i \in [n]} \Delta(\cA_i)$,
        \begin{equation*}
            \begin{split}
                \Phi(\widetilde{\vec{x}}) \leq \Phi(\vec{x}) + \langle \nabla_{\vec{x}} \Phi(\vec{x}), \widetilde{\vec{x}} - \vec{x} \rangle + L \| \widetilde{\vec{x}} - \vec{x} \|_2^2; \\
                - \Phi(\widetilde{\vec{x}}) \leq - \Phi(\vec{x}) - \langle \nabla_{\vec{x}} \Phi(\vec{x}), \widetilde{\vec{x}} - \vec{x} \rangle + L \| \widetilde{\vec{x}} - \vec{x} \|_2^2.
            \end{split}
        \end{equation*}
    \end{corollary}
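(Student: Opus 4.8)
The plan is to recognize this statement as the standard two-sided quadratic bound (the ``descent lemma'') that holds whenever a function has an $M$-Lipschitz gradient, instantiated with $M = 2L = \phimax \sum_{i=1}^n |\cA_i|$. Concretely, the preceding Lemma shows $\|\nabla^2 \Phi\|_2 \le \phimax \sum_{i=1}^n |\cA_i| = 2L$ at every point of $\prod_{i\in[n]} \Delta(\cA_i)$; combining this with \Cref{fact:smooth} immediately yields that $\nabla \Phi$ is $2L$-Lipschitz with respect to $\|\cdot\|_2$ on the product of simplices, i.e.\ $\|\nabla\Phi(\vec{x}) - \nabla\Phi(\vec{y})\|_2 \le 2L\,\|\vec{x}-\vec{y}\|_2$.

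From this Lipschitz property I would derive both inequalities via the fundamental theorem of calculus. Fixing $\vec{x}, \widetilde{\vec{x}} \in \prod_{i\in[n]} \Delta(\cA_i)$ and writing $\vec{x}_t \defeq \vec{x} + t(\widetilde{\vec{x}} - \vec{x})$, the exact identity
\[
    \Phi(\widetilde{\vec{x}}) - \Phi(\vec{x}) - \langle \nabla\Phi(\vec{x}), \widetilde{\vec{x}} - \vec{x} \rangle = \int_0^1 \langle \nabla\Phi(\vec{x}_t) - \nabla\Phi(\vec{x}),\, \widetilde{\vec{x}} - \vec{x} \rangle \, dt
\]
holds. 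Bounding the integrand by Cauchy--Schwarz and invoking the $2L$-Lipschitzness, namely $\|\nabla\Phi(\vec{x}_t) - \nabla\Phi(\vec{x})\|_2 \le 2L\,\|\vec{x}_t - \vec{x}\|_2 = 2Lt\,\|\widetilde{\vec{x}} - \vec{x}\|_2$, the right-hand side is at most $\int_0^1 2Lt\,\|\widetilde{\vec{x}} - \vec{x}\|_2^2\, dt = L\,\|\widetilde{\vec{x}} - \vec{x}\|_2^2$ in absolute value. This two-sided estimate $|\Phi(\widetilde{\vec{x}}) - \Phi(\vec{x}) - \langle \nabla\Phi(\vec{x}), \widetilde{\vec{x}} - \vec{x}\rangle| \le L\,\|\widetilde{\vec{x}} - \vec{x}\|_2^2$ is, after rearranging the two signs, exactly the pair of inequalities claimed in the corollary (noting that $\nabla_{\vec{x}}\Phi$ is the full gradient).

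The only point requiring care is that $\Phi$ lives on the product of simplices rather than on all of $\R^d$, so \Cref{fact:smooth} and the Hessian bound must be applied along the segment $\{\vec{x}_t : t \in [0,1]\}$ rather than at arbitrary points. Since each $\Delta(\cA_i)$ is convex, this segment stays inside $\prod_{i\in[n]} \Delta(\cA_i)$, which is precisely where the Lemma's Hessian bound is valid---its final equality relied on the normalization $\sum_{\vec{a}_{-i,-j}} \prod_{k \neq i,j} \vec{x}_k(a_k) = 1$, which holds only on the simplices. With this convexity observation the argument is routine, and I do not anticipate any genuine obstacle.
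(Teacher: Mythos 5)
Your proposal is correct and follows essentially the same route as the paper: combine the Hessian bound $\|\nabla^2 \Phi\|_2 \leq 2L$ from the preceding lemma with \Cref{fact:smooth} to get a $2L$-Lipschitz gradient, from which the two-sided quadratic bound with constant $L$ is the standard descent lemma. The paper leaves that final step implicit (stating the corollary as an immediate consequence), whereas you write it out via the fundamental theorem of calculus and additionally note, correctly, that the Hessian bound is only valid on the product of simplices so that convexity of the domain is needed to apply it along the connecting segment---a point the paper glosses over.
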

\end{proof}

\potentialmonotone*

\begin{proof}
    First of all, since every regularizer $\cR_i$ is $1$-strongly convex with respect to $\|\cdot\|_2$, we obtain from \Cref{def:weighted_potential} that
    \begin{align}
        \Phi(\vec{x}) & \leq \Phi(\widetilde{\vec{x}}) - \langle \nabla_{\vec{x}} \Phi(\vec{x}), \widetilde{\vec{x}} - \vec{x} \rangle + L \sum_{i=1}^n \| \widetilde{\vec{x}}_i - \vec{x}_i \|_2^2 \notag                    \\
                      & \leq \Phi(\widetilde{\vec{x}}) - \langle \nabla_{\vec{x}} \Phi(\vec{x}), \widetilde{\vec{x}} - \vec{x} \rangle + 2L \sum_{i=1}^n D_{\cR_i}(\widetilde{\vec{x}}_i, \vec{x}_i). \label{eq:new-onesided}
    \end{align}
    Moreover, we know from the update rule of \eqref{eq:MD} that for any player $i \in [n]$,
    \begin{equation}
        \label{eq:totalupdate}
        \vec{x}_i^{(t+1)} = \argmax_{\vec{x}_i \in \Delta(\cA_i) } \left\{ \left\langle \nabla_{\vec{x}_i} \Phi(\vec{x}^{(t)}), \vec{x}_i - \vec{x}_i^{(t)} \right\rangle - \frac{1}{\eta} D_{\mathcal{R}_i}(\vec{x}_i, \vec{x}_i^{(t)})  \right\},
    \end{equation}
    where we used the fact that $g(\vec{u}_i) = \nabla_{\vec{x}_i} \Phi(\vec{x})$ (\Cref{def:weighted_potential}). Also, from \eqref{eq:new-onesided} we obtain that
    \begin{align}
        \Phi(\vec{x}^{(t)}) & \leq \Phi(\vec{x}^{(t+1)}) - \left\langle \nabla_{\vec{x}} \Phi(\vec{x}^{(t)}), \vec{x}^{(t+1)} - \vec{x}^{(t)} \right\rangle + 2L \sum_{i=1}^n D_{\mathcal{R}_i} (\vec{x}_i^{(t+1)}, \vec{x}_i^{(t)}). \label{eq:smooth-pot}
    \end{align}
    Now the update rule of \eqref{eq:totalupdate} implies that
    \begin{equation}
        \label{eq:per-agent-bound}
        \left\langle \nabla_{\vec{x}_i} \Phi(\vec{x}^{(t)}), \vec{x_i}^{(t+1)} - \vec{x}_i^{(t)} \right\rangle - \frac{1}{\eta} D_{\mathcal{R}_i} (\vec{x}_i^{(t+1)}, \vec{x}_i^{(t)}) \geq \frac{1}{2\eta} \| \vec{x}_i^{(t+1)} - \vec{x}_i^{(t)} \|_2^2,
    \end{equation}
    for all $i \in [n]$, where we used the $1$-strong convexity of $\cR_i$ with respect to $\|\cdot\|_2$ (quadratic growth). As a result, summing \eqref{eq:per-agent-bound} for all $i \in [n]$ yields that
    \begin{equation*}
        \left\langle \nabla_{\vec{x}} \Phi(\vec{x}^{(t)}), \vec{x}^{(t+1)} - \vec{x}^{(t)} \right\rangle - \frac{1}{\eta} \sum_{i=1}^n D_{\mathcal{R}_i} (\vec{x}_i^{(t+1)}, \vec{x}_i^{(t)}) \geq \frac{1}{2\eta} \sum_{i=1}^n \| \vec{x}_i^{(t+1)} - \vec{x}_i^{(t)} \|_2^2.
    \end{equation*}
    Finally, plugging this bound with $\eta = \frac{1}{2L} $ to \eqref{eq:smooth-pot} and rearranging the terms concludes the proof.
\end{proof}

\regpot*

\begin{proof}
    It is well-known (\emph{e.g.}, see \citep{Shalev-Shwartz12:Online}) that the cumulative regret $\reg^T$ of \eqref{eq:MD} can be bounded as
    \begin{equation}
        \label{eq:regret-MD}
        \reg^T \leq \frac{\Omega}{\eta} + \sum_{t=1}^{T} \| \vec{u}^{(t)}\|_* \|\vec{x}^{(t)} - \vec{x}^{(t-1)}\|,
    \end{equation}
    where $\Omega \defeq \sup_{\vec{x} \in \cX} D_{\cR} (\vec{x}, \vec{x}^{(0)})$, and $\vec{u}^{(1)}, \dots, \vec{u}^{(t)}$ represents the sequence of utility vectors observed by the regret minimizer. Now we know from \Cref{corollary:boundedtraj-pot} that for $\eta = \frac{1}{2L}$,
    \begin{equation*}
        \sum_{t=2}^{T} \| \vec{x}^{(t)} - \vec{x}^{(t-1)}\|_2^2 \leq \sum_{i=1}^n \sum_{t=2}^{T} \| \vec{x}^{(t)} - \vec{x}^{(t-1)}\|_2^2  \leq \frac{2 \phimax}{L}.
    \end{equation*}
    Thus, an application of the Cauchy-Schwarz inequality implies that
    \begin{equation}
        \label{eq:single-traj}
        \sum_{t=2}^{T} \| \vec{x}^{(t)} - \vec{x}^{(t-1)} \|_2 \leq \sqrt{(T-1) \sum_{t=2}^{T} \| \vec{x}^{(t)} - \vec{x}^{(t-1)} \|_2^2} \leq \sqrt{T} \sqrt{\frac{2\phimax}{L}}.
    \end{equation}
    Finally, from \eqref{eq:regret-MD} it follows that for any player $i \in [n]$,
    \begin{equation*}
        \reg_i^T \leq \frac{\Omega_i}{\eta} + \|\Vec{u}_i\|_2 \sum_{t=1}^{T} \|\vec{x}^{(t)} - \vec{x}^{(t-1)}\|_2 = O(\sqrt{T}),
    \end{equation*}
    where we used the notation $\|\vec{u}_i \|_2 \defeq \max_{t \in [T]} \|\Vec{u}_i^{(t)}\|_2$. This concludes the proof.
\end{proof}

\optregpot*

\begin{proof}
    First of all, it is well-known that OMWU on the simplex can be expressed with the following update rule for $t \geq 1$:

    \begin{equation}
        \tag{OMWU}
        \label{eq:OMWU}
        \vec{x}_i^{(t+1)}(a_i) = \frac{\exp\left(2 \eta \vec{u}_i^{(t)}(a_i) - \eta \vec{u}_i^{(t-1)}(a_i)\right)}{\sum_{a_i' \in \cA_i} \exp\left(2 \eta \vec{u}_i^{(t)}(a_i') - \eta \vec{u}^{(t-1)}_i(a_i')\right) \Vec{x}_i^{(t)}(a_i') } \vec{x}_i^{(t)}(a_i),
    \end{equation}
    for any action $a_i \in \cA_i$ and player $i \in [n]$. By convention, $\vec{x}_i^{(0)}$ and $\Vec{x}_i^{(1)}$ are initialized from the uniform distribution over $\cA_i$, for all $i \in [n]$. Now we claim that the update rule of \eqref{eq:OMWU} is equivalent to
    \begin{equation*}
        \vec{x}_i^{(t+1)} = \argmax_{\vec{x}_i \in \Delta(\cA_i) } \left\{ \left\langle 2 \nabla_{\vec{x}_i} \Phi(\vec{x}^{(t)}) - \nabla_{\vec{x}_i} \Phi(\vec{x}^{(t-1)}) , \vec{x}_i - \vec{x}_i^{(t)} \right\rangle - \frac{1}{\eta} D_{\mathcal{R}_i}(\vec{x}_i, \vec{x}_i^{(t)})  \right\},
    \end{equation*}
    assuming that $\cR_i$ is the negative entropy DGF for all $i \in [n]$. By $1$-strong convexity of $\cR_i$, this implies that
    \begin{equation}
        \label{eq:opt-QG}
        \left\langle 2 \nabla_{\vec{x}_i} \Phi(\vec{x}^{(t)}) - \nabla_{\vec{x}_i} \Phi(\vec{x}^{(t-1)}) , \vec{x_i}^{(t+1)} - \vec{x}_i^{(t)} \right\rangle - \frac{1}{\eta} D_{\mathcal{R}_i} (\vec{x}_i^{(t+1)}, \vec{x}_i^{(t)}) \geq \frac{1}{2\eta} \| \vec{x}_i^{(t+1)} - \vec{x}_i^{(t)} \|_2^2.
    \end{equation}
    Moreover, the term $\left| \left\langle \nabla_{\vec{x}_i} \Phi(\vec{x}^{(t)}) - \nabla_{\vec{x}_i} \Phi(\vec{x}^{(t-1)}), \vec{x}_i^{(t+1)} - \vec{x}_i^{(t)} \right\rangle \right|$ can be upper bounded by
    \begin{align}
        &\| \nabla_{\vec{x}_i} \Phi(\vec{x}^{(t)}) - \nabla_{\vec{x}_i} \Phi(\vec{x}^{(t-1)}) \|_2 \| \vec{x}_i^{(t+1)} - \vec{x}_i^{(t)} \|_2 \label{align:cauchy-s}                \\
                                                                                                                                                                         & = \| \vec{u}_i^{(t)} - \vec{u}_i^{(t-1)} \|_2 \| \vec{x}_i^{(t+1)} - \vec{x}_i^{(t)} \|_2 \label{align:pot-def}                                                                  \\
                                                                                                                                                                         & \leq \sqrt{|\cA_i|} \| \vec{u}_i^{(t)} - \vec{u}_i^{(t-1)} \|_\infty \| \vec{x}_i^{(t+1)} - \vec{x}_i^{(t)} \|_2 \label{align:equiv-norms}                                       \\
                                                                                                                                                                         & \leq \sqrt{|\cA_i|} \sum_{j \neq i} \| \vec{x}_j^{(t)} - \vec{x}_j^{(t-1)}\|_2 \| \vec{x}_i^{(t+1)} - \vec{x}_i^{(t)} \|_2 \label{align:claim}                                   \\
                                                                                                                                                                         & \leq \frac{1}{2} \sqrt{|\cA_i|} \sum_{j \neq i} \left( \| \vec{x}_j^{(t)} - \vec{x}_j^{(t-1)}\|^2_2 + \| \vec{x}_i^{(t+1)} - \vec{x}_i^{(t)} \|_2^2 \right) \label{align:penult} \\
                                                                                                                                                                         & = \frac{1}{2} \sqrt{|\cA_i|} \left( (n-1) \| \vec{x}_i^{(t+1)} - \vec{x}_i^{(t)} \|_2^2 + \sum_{j \neq i} \|\vec{x}_j^{(t)} - \vec{x}_j^{(t-1)}\|^2_2 \right), \notag
    \end{align}
    where \eqref{align:cauchy-s} follows from Cauchy-Schwarz; \eqref{align:pot-def} is a consequence of \Cref{def:weighted_potential}; \eqref{align:equiv-norms} is derived from H\"older's inequality (equivalence of norms); \eqref{align:claim} uses \Cref{claim:util-infty}; and \eqref{align:penult} is an application of Young's inequality. Thus, combining this bound with \eqref{eq:opt-QG} yields that
    \begin{align*}
        \left\langle \nabla_{\vec{x}_i} \Phi(\vec{x}^{(t)}), \vec{x_i}^{(t+1)} - \vec{x}_i^{(t)} \right\rangle - \frac{1}{\eta} D_{\mathcal{R}_i} (\vec{x}_i^{(t+1)}, \vec{x}_i^{(t)}) \geq \left( \frac{1}{2\eta} - \frac{1}{2} \sqrt{|\cA_i|} (n-1)  \right) \| \vec{x}_i^{(t+1)} - \vec{x}_i^{(t)} \|_2^2 \\
        - \frac{1}{2} \sqrt{|\cA_i|} \sum_{j \neq i} \| \vec{x}_j^{(t)} - \vec{x}_j^{(t-1)} \|_2^2.
    \end{align*}
    Summing these inequalities for all $i \in [n]$ gives us that
    \begin{align*}
        \left\langle \nabla_{\vec{x}} \Phi(\vec{x}^{(t)}), \vec{x}^{(t+1)} - \vec{x}^{(t)} \right\rangle - \frac{1}{\eta} \sum_{i=1}^n D_{\mathcal{R}_i} (\vec{x}_i^{(t+1)}, \vec{x}_i^{(t)}) & \geq                                                                                                      \\
        \sum_{i=1}^n \left( \frac{1}{2\eta} - \frac{1}{2} \sqrt{|\cA_i|} (n-1)  \right) \| \vec{x}_i^{(t+1)} - \vec{x}_i^{(t)} \|_2^2
                                                                                                                                                                                              & - \frac{1}{2} \left( \sum_{j \neq i} \sqrt{|\cA_j|} \right) \| \vec{x}_i^{(t)} - \vec{x}_i^{(t-1)}\|_2^2.
    \end{align*}
    Thus, from \eqref{eq:new-onesided} for $\eta \leq \frac{1}{2L}$ we conclude that $\Phi(\vec{x}^{(t+1)}) - \Phi(\vec{x}^{(t)})$ can be lower bounded by
    \begin{equation*}
        \sum_{i=1}^n \left( \left( \frac{1}{2\eta} - \frac{1}{2} \sqrt{|\cA_i|} (n-1)  \right) \| \vec{x}_i^{(t+1)} - \vec{x}_i^{(t)} \|_2^2 - \frac{1}{2} \left( \sum_{j \neq i} \sqrt{|\cA_j|} \right) \| \vec{x}_i^{(t)} - \vec{x}_i^{(t-1)}\|_2^2 \right).
    \end{equation*}
    As a result, as long as
    \begin{equation*}
        \eta \leq \min \left\{ \frac{1}{2L}, \frac{1}{2 \sqrt{|\cA_i|} (n-1)} \right\},
    \end{equation*}
    for all $i \in [n]$, the previous bound implies that
    \begin{equation*}
        \Phi(\vec{x}^{(t+1)}) - \Phi(\vec{x}^{(t)}) \geq \sum_{i=1}^n \left( \frac{1}{4\eta} \| \vec{x}_i^{(t+1)} - \vec{x}_i^{(t)} \|_2^2 - \frac{1}{2} \left( \sum_{j \neq i} \sqrt{|\cA_j|} \right) \| \vec{x}_i^{(t)} - \vec{x}_i^{(t-1)}\|_2^2 \right).
    \end{equation*}
    Thus, a telescopic summation leads to the following conclusion.

    \begin{theorem}
        \label{theorem:bounded_traj-opt}
        Suppose that each player $i \in [n]$ employs \eqref{eq:OMWU} with learning rate $\eta > 0$ such that
        \begin{equation*}
            \eta \leq \min \left\{ \frac{1}{2L}, \frac{1}{2 \sqrt{|\cA_i|} (n-1)}, \frac{1}{4 \sum_{j \neq i} \sqrt{|\cA_j|}} \right\},
        \end{equation*}
        for all $i \in [n]$, where $L$ is defined as in \Cref{proposition:potential}. Then,
        \begin{equation*}
            \frac{1}{8\eta} \sum_{i=1}^n \sum_{t=1}^{T-1} \| \vec{x}_i^{(t+1)} - \vec{x}_i^{(t)}\|_2^2 \leq \Phi(\Vec{x}^{(T)}) - \Phi(\Vec{x}^{(1)}) \leq 2 \phimax.
        \end{equation*}
    \end{theorem}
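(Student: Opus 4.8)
The plan is to derive the claimed two-sided bound directly by telescoping the per-iteration inequality established immediately above the statement, namely that for every $t \geq 1$,
\[
\Phi(\vec{x}^{(t+1)}) - \Phi(\vec{x}^{(t)}) \geq \sum_{i=1}^n \left( \frac{1}{4\eta} \| \vec{x}_i^{(t+1)} - \vec{x}_i^{(t)} \|_2^2 - \frac{1}{2} \Big( \sum_{j \neq i} \sqrt{|\cA_j|} \Big) \| \vec{x}_i^{(t)} - \vec{x}_i^{(t-1)}\|_2^2 \right).
\]
Summing this over $t = 1, \dots, T-1$ makes the left-hand side collapse to $\Phi(\vec{x}^{(T)}) - \Phi(\vec{x}^{(1)})$, so the entire task reduces to producing a clean lower bound for the summed right-hand side, and then invoking $|\Phi| \leq \phimax$ for the upper estimate. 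Writing $d_i^{(t)} \defeq \| \vec{x}_i^{(t+1)} - \vec{x}_i^{(t)} \|_2^2 \geq 0$, the positive contribution for player $i$ is $\frac{1}{4\eta} \sum_{t=1}^{T-1} d_i^{(t)}$, while, after the reindexing $s = t-1$, the negative contribution is $\frac{1}{2}\big(\sum_{j\neq i}\sqrt{|\cA_j|}\big) \sum_{s=0}^{T-2} d_i^{(s)}$.

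The one delicate point is that the positive and negative ranges are shifted by one index, so I would resolve this using two observations. First, the $s=0$ term vanishes, since by the \eqref{eq:OMWU} initialization $\vec{x}_i^{(0)}$ and $\vec{x}_i^{(1)}$ are both the uniform distribution, hence $d_i^{(0)} = \|\vec{x}_i^{(1)} - \vec{x}_i^{(0)}\|_2^2 = 0$. Second, because every $d_i^{(s)} \geq 0$, extending the negative range up to $T-1$ only makes the subtracted quantity larger, giving the valid lower bound $\sum_{s=0}^{T-2} d_i^{(s)} = \sum_{s=1}^{T-2} d_i^{(s)} \leq \sum_{t=1}^{T-1} d_i^{(t)}$. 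Consequently the per-player contribution is at least $\big(\frac{1}{4\eta} - \frac{1}{2}\sum_{j\neq i}\sqrt{|\cA_j|}\big)\sum_{t=1}^{T-1} d_i^{(t)}$.

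It then remains to absorb the coefficient using the stated learning-rate restriction $\eta \leq \frac{1}{4\sum_{j\neq i}\sqrt{|\cA_j|}}$, which yields $\frac{1}{2}\sum_{j\neq i}\sqrt{|\cA_j|} \leq \frac{1}{8\eta}$ and hence $\frac{1}{4\eta} - \frac{1}{2}\sum_{j\neq i}\sqrt{|\cA_j|} \geq \frac{1}{8\eta}$ for each $i$. Summing over all players gives $\Phi(\vec{x}^{(T)}) - \Phi(\vec{x}^{(1)}) \geq \frac{1}{8\eta} \sum_{i=1}^n \sum_{t=1}^{T-1} \|\vec{x}_i^{(t+1)} - \vec{x}_i^{(t)}\|_2^2$, and the upper bound $\Phi(\vec{x}^{(T)}) - \Phi(\vec{x}^{(1)}) \leq 2\phimax$ follows from $\max_{\vec{x}} |\Phi(\vec{x})| \leq \phimax$ in \Cref{def:weighted_potential}. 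I expect essentially no genuine obstacle here beyond the bookkeeping of the index shift: the heavy lifting (strong convexity of the regularizers, the Cauchy--Schwarz/Young estimates, and invoking \Cref{proposition:potential} for the smoothness constant $L$) has already been absorbed into the per-step inequality, so the present step is purely a telescoping argument that crucially exploits nonnegativity of the movement terms together with the $\vec{x}^{(0)} = \vec{x}^{(1)}$ initialization.
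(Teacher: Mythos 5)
Your proposal is correct and is essentially the paper's own argument: the paper derives exactly the per-step inequality you quote and then simply asserts that ``a telescopic summation'' yields the theorem, and your write-up supplies precisely the omitted bookkeeping --- the index shift handled via the uniform initialization $\vec{x}_i^{(0)} = \vec{x}_i^{(1)}$ (so the $s=0$ term vanishes) and nonnegativity of the movement terms, followed by absorbing the coefficient with $\eta \leq \frac{1}{4\sum_{j\neq i}\sqrt{|\cA_i|}}$ to get the $\frac{1}{8\eta}$ constant, and the bound $|\Phi| \leq \phimax$ for the upper estimate. No gaps; this is the intended proof, just written out in full.
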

    Finally, \Cref{theorem:bounded_traj-opt} along with the \rvu bound (which holds for \eqref{eq:OMWU} as specified in \Cref{proposition:rvu}) and \Cref{claim:util-infty} conclude the proof of the theorem.
\end{proof}

Next, we proceed with the proof of \Cref{theorem:ratepotential}, the detailed version of which is given below.

\begin{theorem}[Full Version of \Cref{theorem:ratepotential}]
    \label{theorem:ratepotential-full}
    Suppose that each player $i$ employs \eqref{eq:MD} with $\cR_i(\vec{x})$ such that $\nabla \cR_i$ is $G$-Lipschitz continuous with respect to the $\ell_2$-norm, and $\eta = \frac{1}{2L}$, where $L$ is defined as in \Cref{proposition:potential}. Then, after $O(1/\epsilon^2)$ iterations there exists a joint strategy $\vec{x}^{(t)}$  which is an $\epsilon$-approximate Nash equilibrium.
\end{theorem}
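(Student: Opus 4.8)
The plan is to follow the architecture of the proof of \Cref{theorem:rate-OMD-full}, but substitute the \rvu-based bounded-trajectory guarantee with the potential-based one from \Cref{corollary:boundedtraj-pot}. As stipulated before \Cref{corollary:optreg-potential}, I take $g_i$ to be the identity. Then \Cref{corollary:boundedtraj-pot} with $\eta = \frac{1}{2L}$ yields
\begin{equation*}
    \sum_{t=1}^{T-1} \sum_{i=1}^n \| \vec{x}_i^{(t+1)} - \vec{x}_i^{(t)} \|_2^2 \leq 4 \eta \phimax = \frac{2 \phimax}{L}.
\end{equation*}
A pigeonhole argument shows that whenever $T - 1 > \frac{2 \phimax}{L \epsilon'^2}$ there is some index $t \in [T-1]$ with $\sum_{i=1}^n \| \vec{x}_i^{(t+1)} - \vec{x}_i^{(t)} \|_2^2 \leq \epsilon'^2$, and in particular $\| \vec{x}_i^{(t+1)} - \vec{x}_i^{(t)} \|_2 \leq \epsilon'$ for every player $i \in [n]$. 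This already pins down the iteration count as $T = O(1/\epsilon'^2)$.

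The second ingredient turns this short step into a Nash certificate. The maximization defining the \eqref{eq:MD} update has the first-order optimality (variational-inequality) form
\begin{equation*}
    \left\langle \eta \vec{u}_i^{(t)} - \left( \nabla \cR_i(\vec{x}_i^{(t+1)}) - \nabla \cR_i(\vec{x}_i^{(t)}) \right), \vec{x}_i - \vec{x}_i^{(t+1)} \right\rangle \leq 0, \quad \forall \vec{x}_i \in \Delta(\cA_i).
\end{equation*}
Rearranging, applying Cauchy--Schwarz, and using the $G$-Lipschitz continuity of $\nabla \cR_i$ gives $\langle \vec{u}_i^{(t)}, \vec{x}_i - \vec{x}_i^{(t+1)} \rangle \leq \frac{G}{\eta} \| \vec{x}_i^{(t+1)} - \vec{x}_i^{(t)} \|_2 \, \Omega_i' \leq \frac{G \Omega_i' \epsilon'}{\eta}$ for every $\vec{x}_i \in \Delta(\cA_i)$, where $\Omega_i' \defeq \sup_{\vec{x}, \vec{y} \in \Delta(\cA_i)} \| \vec{x} - \vec{y} \|_2$. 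In words, $\vec{x}_i^{(t+1)}$ is an approximate best response to the opponents' profile $\vec{x}_{-i}^{(t)}$.

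The only delicate point is that \Cref{def:Nash} is phrased in terms of the profile $\vec{x}^{(t)}$ (whose opponent block generates $\vec{u}_i^{(t)}$), whereas the inequality above concerns $\vec{x}_i^{(t+1)}$. To bridge this I would write, for any deviation $\vec{x}_i \in \Delta(\cA_i)$,
\begin{equation*}
    u_i(\vec{x}_i, \vec{x}_{-i}^{(t)}) - u_i(\vec{x}^{(t)}) = \langle \vec{u}_i^{(t)}, \vec{x}_i - \vec{x}_i^{(t+1)} \rangle + \langle \vec{u}_i^{(t)}, \vec{x}_i^{(t+1)} - \vec{x}_i^{(t)} \rangle,
\end{equation*}
bounding the first term by the variational inequality above and the second by $\| \vec{u}_i^{(t)} \|_2 \, \| \vec{x}_i^{(t+1)} - \vec{x}_i^{(t)} \|_2 \leq \sqrt{|\cA_i|}\, \epsilon'$, using the normalization $|\vec{u}_i(\cdot)| \leq 1$. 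Hence every unilateral deviation gains at most $\left( \frac{G \Omega_i'}{\eta} + \sqrt{|\cA_i|} \right) \epsilon'$, so choosing $\epsilon' = \Theta(\epsilon)$ (with the game-dependent factor absorbed into the constant) certifies that $\vec{x}^{(t)}$ is an $\epsilon$-approximate Nash equilibrium after $T = O(1/\epsilon^2)$ iterations. I expect the main obstacle to be precisely this smoothness step: the Lipschitz bound on $\| \nabla \cR_i(\vec{x}_i^{(t+1)}) - \nabla \cR_i(\vec{x}_i^{(t)}) \|_2$ is indispensable, which is exactly why---as remarked after \Cref{theorem:rate-OMD}---the argument fails for non-smooth regularizers such as negative entropy.
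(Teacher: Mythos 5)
Your proposal is correct and follows essentially the same route as the paper: the paper likewise combines the telescoped potential bound (\Cref{corollary:boundedtraj-pot}) with a pigeonhole step, and then proves a standalone claim (its \Cref{claim:approx-nash}) that converts a short step $\|\vec{x}^{(t+1)} - \vec{x}^{(t)}\|_2 \leq \epsilon'$ into a Nash certificate for $\vec{x}^{(t)}$ via exactly your variational-inequality, Cauchy--Schwarz, and $G$-Lipschitz-gradient argument, including the same $\sqrt{|\cA_i|}\,\epsilon'$ correction term bridging $\vec{x}_i^{(t+1)}$ back to $\vec{x}_i^{(t)}$. No gaps.
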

For the proof of this theorem we will use the following simple claim.
\begin{claim}
    \label{claim:approx-nash}
    Suppose that each player employs \eqref{eq:MD} with learning rate $\eta > 0$ and regularizer $\cR_i$ such that $\nabla \cR_i$ is $G$-Lipschitz continuous with respect to the $\ell_2$-norm. If $\|\vec{x}^{(t+1)} - \vec{x}^{(t)}\|_2 \leq \epsilon$, it holds that $\vec{x}^{(t)}$ is an
    \begin{equation*}
        \epsilon \left( \frac{G \Omega}{\eta} + \sqrt{|\cA|} \right)
    \end{equation*}
    approximate Nash equilibrium, where $\Omega \defeq \max_{i \in [n]} \sup_{\Vec{x}_i, \Vec{x}_i' \in \Delta(\cA_i)} \| \Vec{x}_i - \vec{x}_i'\|_2$, and $|\cA| \defeq \max_{i \in [n]} |\cA_i|$.
\end{claim}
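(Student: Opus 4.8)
The plan is to mimic the argument behind \Cref{claim:close-Nash}, but adapted to the non-optimistic \eqref{eq:MD} update and phrased directly in terms of the one-step movement $\|\vec{x}^{(t+1)} - \vec{x}^{(t)}\|_2$. Assuming $g_i$ is the identity map (as stipulated before the statement), the maximization defining $\vec{x}_i^{(t+1)}$ in \eqref{eq:MD} admits the first-order optimality (variational inequality) characterization
\begin{equation*}
    \left\langle \eta \vec{u}_i^{(t)} - \left( \nabla \cR_i(\vec{x}_i^{(t+1)}) - \nabla \cR_i(\vec{x}_i^{(t)}) \right), \vec{x}_i - \vec{x}_i^{(t+1)} \right\rangle \leq 0, \quad \forall \vec{x}_i \in \Delta(\cA_i),
\end{equation*}
for every player $i \in [n]$. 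First I would rearrange this to isolate $\eta \langle \vec{u}_i^{(t)}, \vec{x}_i - \vec{x}_i^{(t+1)} \rangle$ and bound the Bregman-gradient term by Cauchy--Schwarz, obtaining $\langle \nabla \cR_i(\vec{x}_i^{(t+1)}) - \nabla \cR_i(\vec{x}_i^{(t)}), \vec{x}_i - \vec{x}_i^{(t+1)} \rangle \leq \| \nabla \cR_i(\vec{x}_i^{(t+1)}) - \nabla \cR_i(\vec{x}_i^{(t)}) \|_2 \, \| \vec{x}_i - \vec{x}_i^{(t+1)}\|_2$. Invoking the $G$-Lipschitz continuity of $\nabla \cR_i$ together with $\|\vec{x}_i^{(t+1)} - \vec{x}_i^{(t)}\|_2 \leq \|\vec{x}^{(t+1)} - \vec{x}^{(t)}\|_2 \leq \epsilon$ and $\|\vec{x}_i - \vec{x}_i^{(t+1)}\|_2 \leq \Omega$, this yields $\langle \vec{u}_i^{(t)}, \vec{x}_i \rangle - \langle \vec{u}_i^{(t)}, \vec{x}_i^{(t+1)} \rangle \leq G \Omega \epsilon / \eta$ for all $\vec{x}_i \in \Delta(\cA_i)$.

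The remaining step is to transfer this inequality from the next iterate $\vec{x}_i^{(t+1)}$ back to the current iterate $\vec{x}_i^{(t)}$, which is the one we actually want to certify as an approximate equilibrium. I would split $\langle \vec{u}_i^{(t)}, \vec{x}_i \rangle - \langle \vec{u}_i^{(t)}, \vec{x}_i^{(t)} \rangle = \big( \langle \vec{u}_i^{(t)}, \vec{x}_i \rangle - \langle \vec{u}_i^{(t)}, \vec{x}_i^{(t+1)} \rangle \big) + \langle \vec{u}_i^{(t)}, \vec{x}_i^{(t+1)} - \vec{x}_i^{(t)} \rangle$ and control the second summand by Cauchy--Schwarz: since the utilities are normalized so that $|u_i(\cdot)| \leq 1$, every entry of $\vec{u}_i^{(t)}$ lies in $[-1,1]$, whence $\|\vec{u}_i^{(t)}\|_2 \leq \sqrt{|\cA_i|} \leq \sqrt{|\cA|}$, and combined with the hypothesis $\|\vec{x}_i^{(t+1)} - \vec{x}_i^{(t)}\|_2 \leq \epsilon$ we get $|\langle \vec{u}_i^{(t)}, \vec{x}_i^{(t+1)} - \vec{x}_i^{(t)} \rangle| \leq \sqrt{|\cA|}\, \epsilon$. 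Adding the two bounds gives
\begin{equation*}
    \langle \vec{u}_i^{(t)}, \vec{x}_i \rangle - \langle \vec{u}_i^{(t)}, \vec{x}_i^{(t)} \rangle \leq \epsilon \left( \frac{G \Omega}{\eta} + \sqrt{|\cA|} \right),
\end{equation*}
uniformly over $\vec{x}_i \in \Delta(\cA_i)$ and $i \in [n]$.

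Finally, I would recall that $\langle \vec{u}_i^{(t)}, \vec{x}_i \rangle = u_i(\vec{x}_i, \vec{x}_{-i}^{(t)})$ and $\langle \vec{u}_i^{(t)}, \vec{x}_i^{(t)} \rangle = u_i(\vec{x}^{(t)})$, since $\vec{u}_i^{(t)}$ is by definition the vector of expected utilities against the opponents' profile $\vec{x}_{-i}^{(t)}$; the displayed inequality is then precisely the defining condition of an $\epsilon(G\Omega/\eta + \sqrt{|\cA|})$-approximate Nash equilibrium in \Cref{def:Nash}. No step here is genuinely difficult; the only mildly delicate point---and the reason the statement is worth isolating---is the bookkeeping that lets one certify $\vec{x}^{(t)}$ rather than $\vec{x}^{(t+1)}$, for which the normalization bound $\|\vec{u}_i^{(t)}\|_2 \leq \sqrt{|\cA|}$ and the small-movement hypothesis are exactly what is needed.
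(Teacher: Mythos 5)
Your proposal is correct and follows essentially the same route as the paper's own proof: the first-order (variational inequality) characterization of the \eqref{eq:MD} update, Cauchy--Schwarz combined with the $G$-Lipschitzness of $\nabla \cR_i$ to bound the deviation gain at $\vec{x}_i^{(t+1)}$ by $\epsilon G \Omega / \eta$, and then the normalization bound $\|\vec{u}_i^{(t)}\|_2 \leq \sqrt{|\cA_i|}$ to transfer the guarantee from $\vec{x}_i^{(t+1)}$ back to $\vec{x}_i^{(t)}$. No gaps; the bookkeeping and constants match the paper's argument exactly.
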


\begin{proof}
    The argument proceeds similarly to the proof of \Cref{theorem:rate-OMD-full}. First, observe that for each $i \in [n]$ it holds that $\|\vec{x}_i^{(t+1)} - \vec{x}_i^{(t)}\|_{2} \leq \epsilon$. By definition of \eqref{eq:MD} we have that
    \begin{equation*}
        \vec{x}_i^{(t+1)} = \argmax_{\vec{x}_i \in \Delta(\cA_i)} \left\{ \langle \vec{x}_i, \vec{u}_i^{(t)} \rangle - \frac{1}{\eta} D_{\cR_i}(\Vec{x}_i, \Vec{x}_i^{(t)}) \right\}.
    \end{equation*}
    This maximization problem can be equivalently expressed in the following variational inequality form:
    \begin{equation*}
        \left\langle \vec{u}_i^{(t)} - \frac{1}{\eta} \left(\nabla \cR_i ( \vec{x}^{(t+1)}_i) - \nabla \cR_i ( \vec{x}_i^{(t)})\right), \widehat{\vec{x}}_i - \vec{x}_i^{(t+1)} \right\rangle \leq 0, \quad \forall \widehat{\vec{x}}_i \in \Delta(\cA_i),
    \end{equation*}
    for any $i \in [n]$. Thus, it follows that
    \begin{align}
        \langle \vec{u}_i^{(t)}, \widehat{\vec{x}}_i - \vec{x}_i^{(t+1)} \rangle & \leq \frac{1}{\eta} \langle \nabla \cR_i ( \vec{x}^{(t+1)}_i) - \nabla \cR_i ( \vec{x}_i^{(t)}), \widehat{\vec{x}}_i - \vec{x}_i^{(t+1)} \rangle \notag                     \\
                                                                                 & \leq \frac{1}{\eta} \| \nabla \cR_i ( \vec{x}^{(t+1)}_i) - \nabla \cR_i ( \vec{x}_i^{(t)}) \|_2 \| \widehat{\vec{x}}_i - \vec{x}_i^{(t+1)} \|_2 \label{align:cauchyschartz} \\
                                                                                 & \leq \epsilon \frac{G \Omega_i}{\eta}. \label{align:nash-approx}
    \end{align}
    where \eqref{align:cauchyschartz} follows from the Cauchy-Schwarz inequality, and \eqref{align:nash-approx} uses the fact that $\|\nabla \cR_i ( \vec{x}^{(t+1)}_i) - \nabla \cR_i ( \vec{x}_i^{(t)}) \|_2 \leq G \|\vec{x}_i^{(t+1)} - \vec{x}_i^{(t)}\|_{2} \leq \epsilon G$, which follows from the assumption that $\nabla \cR_i$ is $G$-Lipschitz continuous. Also note that \eqref{align:nash-approx} uses the notation $\Omega_i \defeq \sup_{\Vec{x}_i, \Vec{x}_i' \in \Delta(\cA_i)} \| \Vec{x}_i - \vec{x}_i'\|_2$. As a result, we have established that for any player $i \in [n]$ it holds that for any $\widehat{\vec{x}}_i \in \Delta(\cA_i)$,
    \begin{equation}
        \label{eq:approx-nash}
        \langle \vec{u}_i^{(t)}, \vec{x}_i^{(t+1)} \rangle \geq \langle \vec{u}_i^{(t)}, \widehat{\vec{x}}_i \rangle - \epsilon \frac{G \Omega_i}{\eta}.
    \end{equation}
    Moreover, it also follows that
    \begin{equation*}
        \left| \langle \Vec{u}_i^{(t)}, \Vec{x}_i^{(t+1)} - \Vec{x}_i^{(t)} \rangle \right| \leq \| \Vec{u}_i^{(t)} \|_2 \| \Vec{x}_i^{(t+1)} - \Vec{x}_i^{(t)}\|_2 \leq \sqrt{|\cA_i|} \epsilon,
    \end{equation*}
    where we used the fact that $\| \Vec{x}_i^{(t+1)} - \Vec{x}_i^{(t)}\|_2 \leq \epsilon$, and that $\| \Vec{u}_i^{(t)}\|_{\infty} \leq 1$ (by the normalization hypothesis). Plugging the last bound to \eqref{eq:approx-nash} gives us that
    \begin{equation}
        \langle \vec{u}_i^{(t)}, \vec{x}_i^{(t)} \rangle \geq \langle \vec{u}_i^{(t)}, \vec{x}_i^{(t+1)} \rangle - \epsilon \sqrt{|\cA_i|} \geq \langle \vec{u}_i^{(t)}, \widehat{\vec{x}}_i \rangle - \epsilon \frac{G \Omega_i}{\eta} - \epsilon \sqrt{|\cA_i|},
    \end{equation}
    for any $\widehat{\vec{x}}_i \in \Delta(\cA_i)$ and player $i \in [n]$. Finally, the proof follows by definition of approximate Nash equilibrium (\Cref{def:Nash}).
\end{proof}

\begin{proof}[Proof of \Cref{theorem:ratepotential-full}]
    Suppose that $\|\vec{x}^{(t+1)} - \vec{x}^{(t)}\|_2 > \epsilon$ for all $t \in [T]$. \Cref{corollary:boundedtraj-pot} implies that
    \begin{equation}
        4 \eta \phimax \geq \sum_{t=1}^{T-1} \| \Vec{x}^{(t+1)} - \Vec{x}^{(t)}\|_2^2 \geq (T-1) \epsilon^2 \implies T \leq \frac{4\eta \phimax}{\epsilon^2} + 1.
    \end{equation}
    Hence, for $T > \left\lceil \frac{4 \eta \phimax}{\epsilon^2} \right\rceil + 1$ it must be the case that there exists $t \in [T]$ such that $\|\vec{x}^{(t+1)} - \vec{x}^{(t)}\|_2 \leq \epsilon$. Thus, the theorem follows directly from \Cref{claim:approx-nash}.
\end{proof}

\concaverate*

\begin{proof}
    The proof proceeds similarly to that in \citep[Lemma 4]{Birnbaum11:Distributed}. We will first require an auxiliary lemma regarding the following optimization problem:
    \begin{equation}
        \label{eq:aux-opt}
        \begin{split}
            &\text{maximize} \quad g(\vec{x}) - D_{\cR}(\vec{x}, \vec{y}); \\
            &\text{subject to} \quad \vec{x} \in C,
        \end{split}
    \end{equation}
    where $g$ is a concave function on a convex and compact domain $C$.
    \begin{lemma}[\emph{e.g.}, see \citep{Chen93:Convergence}]
        \label{lemma:aux-opt}
        Let $\widehat{\vec{x}}$ be the optimal solution to the optimization problem \eqref{eq:aux-opt}. Then,
        \begin{equation*}
            g(\vec{x}) - D_{\cR}(\vec{x}, \vec{y}) \leq g(\widehat{\vec{x}}) - D_{\cR}(\widehat{\vec{x}}, \vec{y}) - D_{\cR}(\vec{x}, \widehat{\vec{x}}).
        \end{equation*}
    \end{lemma}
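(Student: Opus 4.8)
The plan is to reduce the claimed inequality to the first-order optimality condition for $\widehat{\vec{x}}$ by means of the standard three-point (generalized Pythagorean) identity for Bregman divergences. Writing the objective as $F(\vec{x}) \defeq g(\vec{x}) - D_{\cR}(\vec{x}, \vec{y})$, I first observe that $F$ is concave on $C$: $g$ is concave by hypothesis, and $\vec{x} \mapsto D_{\cR}(\vec{x}, \vec{y})$ is convex since $\cR$ is convex. Hence $\widehat{\vec{x}}$ is a genuine maximizer and the first-order condition for constrained concave maximization over the convex set $C$ is available.

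The first ingredient I would record is the three-point identity, valid for all $\vec{x}, \vec{y}, \widehat{\vec{x}}$:
\begin{equation*}
    D_{\cR}(\vec{x}, \vec{y}) = D_{\cR}(\vec{x}, \widehat{\vec{x}}) + D_{\cR}(\widehat{\vec{x}}, \vec{y}) + \langle \nabla \cR(\widehat{\vec{x}}) - \nabla \cR(\vec{y}), \vec{x} - \widehat{\vec{x}} \rangle.
\end{equation*}
This follows by expanding each divergence via $D_{\cR}(\vec{u}, \vec{v}) = \cR(\vec{u}) - \cR(\vec{v}) - \langle \nabla \cR(\vec{v}), \vec{u} - \vec{v}\rangle$ and cancelling the $\cR$-terms, a purely mechanical computation. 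Rearranged, it converts the target inequality into the equivalent, divergence-free statement
\begin{equation*}
    g(\vec{x}) - g(\widehat{\vec{x}}) \leq \langle \nabla \cR(\widehat{\vec{x}}) - \nabla \cR(\vec{y}), \vec{x} - \widehat{\vec{x}} \rangle,
\end{equation*}
since the right-hand side of the identity is precisely $D_{\cR}(\vec{x}, \vec{y}) - D_{\cR}(\widehat{\vec{x}}, \vec{y}) - D_{\cR}(\vec{x}, \widehat{\vec{x}})$.

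To establish this last inequality I would invoke optimality of $\widehat{\vec{x}}$. Since $\widehat{\vec{x}}$ maximizes the concave $F$ over the convex set $C$, there is a supergradient $\vec{s} \in \partial g(\widehat{\vec{x}})$ satisfying $\langle \vec{s} - (\nabla \cR(\widehat{\vec{x}}) - \nabla \cR(\vec{y})), \vec{x} - \widehat{\vec{x}} \rangle \leq 0$ for every $\vec{x} \in C$, i.e. $\langle \vec{s}, \vec{x} - \widehat{\vec{x}} \rangle \leq \langle \nabla \cR(\widehat{\vec{x}}) - \nabla \cR(\vec{y}), \vec{x} - \widehat{\vec{x}} \rangle$. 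Concavity of $g$ gives $g(\vec{x}) - g(\widehat{\vec{x}}) \leq \langle \vec{s}, \vec{x} - \widehat{\vec{x}} \rangle$, and chaining the two bounds yields exactly the divergence-free inequality, completing the argument.

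The one delicate point is that $g$ is assumed merely concave, not differentiable, so the objective need not be smooth at $\widehat{\vec{x}}$; this is why I phrase the optimality condition through a supergradient $\vec{s} \in \partial g(\widehat{\vec{x}})$ and the normal cone of $C$ rather than through $\nabla g$. In the intended applications $g$ is in fact smooth, in which case $\vec{s} = \nabla g(\widehat{\vec{x}})$ and the condition reduces to the familiar variational inequality. One should also confirm that $\nabla \cR(\widehat{\vec{x}})$ is well defined, which holds for the canonical regularizers (the Euclidean DGF is differentiable everywhere, while the barrier behaviour of the entropic DGF keeps $\widehat{\vec{x}}$ in the relative interior on which $\cR$ is differentiable), exactly paralleling the variational-inequality formulation of \eqref{eq:OMD} used in the proof of \Cref{theorem:rate-OMD-full}.
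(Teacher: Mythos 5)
Your proof is correct and is essentially the canonical argument: the paper offers no proof of \Cref{lemma:aux-opt} at all, deferring to \citet{Chen93:Convergence}, and the standard proof there is exactly your reduction via the three-point identity for Bregman divergences combined with the first-order optimality (variational inequality) condition at $\widehat{\vec{x}}$. Your care in phrasing optimality through a supergradient $\vec{s} \in \partial g(\widehat{\vec{x}})$ and in noting that $\nabla \cR(\widehat{\vec{x}})$ must be well defined addresses the only delicate points correctly.
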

    Next, we will apply this lemma to conclude that for all $i \in [n]$ it holds that
    \begin{align}
        \eta \langle \nabla_{\vec{x}_i} \Phi(\vec{x}^{(t)}), \vec{x}_i^* - \vec{x}_i^{(t)} \rangle - D_{\cR_i}(\vec{x}_i^*, \vec{x}_i^{(t)}) \leq \eta \langle \nabla_{\vec{x}_i} \Phi(\vec{x}^{(t)}), \vec{x}_i^{(t+1)} - \vec{x}_i^{(t)} \rangle 
        &- D_{\cR_i}(\vec{x}_i^{(t+1)}, \vec{x}_i^{(t)}) \notag \\
        &- D_{\cR_i}(\vec{x}_i^*, \vec{x}_i^{(t+1)}). \label{eq:5points}
    \end{align}
    Moreover, from \eqref{eq:new-onesided} it follows that
    \begin{align}
        - \Phi(\vec{x}^{(t+1)}) & \leq - \Phi(\vec{x}^{(t)}) - \langle \nabla_{\vec{x}} \Phi(\vec{x}^{(t)}), \vec{x}^{(t+1)} - \vec{x}^{(t)} \rangle + 2L \sum_{i=1}^n D_{\cR_i}(\vec{x}^{(t+1)}_i, \vec{x}^{(t)}_i) \notag                                                             \\
        &\leq\!\! -\Phi(\vec{x}^{(t)})\!-\!\langle \nabla_{\vec{x}} \Phi(\vec{x}^{(t)}), \vec{x}^*\!-\!\vec{x}^{(t)} \rangle\!+\!2L \sum_{i=1}^n D_{\cR_i}(\vec{x}^*, \vec{x}_i^{(t)})\!-\!2L \sum_{i=1}^n D_{\cR_i}(\vec{x}_i^*, \vec{x}_i^{(t+1)}) \label{align:5points} \\
        &\leq -\Phi(\vec{x}^*) + 2L \sum_{i=1}^n D_{\cR_i}(\vec{x}^*, \vec{x}_i^{(t)}) - 2L \sum_{i=1}^n D_{\cR_i}(\vec{x}_i^*, \vec{x}_i^{(t+1)}), \label{align:concavity}
    \end{align}
    where \eqref{align:5points} follows from \eqref{eq:5points} for $\eta = \frac{1}{2L}$, and \eqref{align:concavity} follows by concavity of $\Phi$. As a result, summing \eqref{align:concavity} for all $t \in [T]$ and removing the telescopic terms yields that
    \begin{align}
        \sum_{t=1}^T \left( \Phi(\vec{x}^*) - \Phi(\vec{x}^{(t+1)}) \right) &\leq 2L \sum_{i=1}^n D_{\cR_i}(\vec{x}_i^*, \vec{x}_i^{(1)}) - 2L \sum_{i=1}^n D_{\cR_i}(\vec{x}_i^*, \vec{x}_i^{(T+1)}) \notag \\
        &\leq 2L \sum_{i=1}^n D_{\cR_i}(\vec{x}_i^*, \vec{x}_i^{(1)}), \label{eq:penultimate}
    \end{align}
    where in the last inequality we used the well-known fact that $D_{\cR_i}(\cdot, \cdot) \geq 0$. Finally, \Cref{theorem:monotone} implies that
    \begin{equation*}
        \sum_{t=1}^T \Phi(\vec{x}^*) - \Phi(\vec{x}^{(t+1)}) \geq  T \Phi(\vec{x}^*) - T \Phi(\vec{x}^{(T+1)}),
    \end{equation*}
    and plugging this bound to \eqref{eq:penultimate} concludes the proof after a rearrangement.
\end{proof}

\subsection{Near-Potential Games}
\label{appendix:nearpotential}

In this section we extend some of the results established for (weighted) potential games to \emph{near-potential} games---in the sense of \citet{Candogan13:Dynamics}---by proving \Cref{theorem:nearpot}. Intuitively, a game $\Gamma$ is said to be near-potential if it is \emph{close} to some potential game. While there are many natural ways to measure the distance between two games, here we follow the one suggested by \citet{Candogan13:Dynamics}; namely, the \emph{maximum pairwise difference (MPD)}:

\begin{definition}[Maximum Pairwise Difference (MPD); \citep{Candogan13:Dynamics}]
    \label{definition:MPD}
    Let $\Gamma$ and $\widehat{\Gamma}$ be two (normal-form) games. The \emph{maximum pairwise difference} between $\Gamma$ and $\widehat{\Gamma}$ is defined as
    \begin{equation*}
        d(\Gamma, \widehat{\Gamma}) \triangleq \sup_{i \in [n], \vec{a} \in \cA} | (u_i(a_i, \vec{a}_{-i}) - u_i(a_i', \vec{a}_{-i})) - (\widehat{u}_i(a_i, \vec{a}_{-i}) - \widehat{u}_i(a_i', \vec{a}_{-i}))|,
    \end{equation*}
    where $u_i : \prod_{i \in [n]} \cA_i \to [-1,1] $ and $\widehat{u}_i : \prod_{i \in [n]} \cA_i \to [-1,1]$ are the utility functions associated with player $i$ in $\Gamma$ and $\widehat{\Gamma}$ respectively.
\end{definition}

This definition tacitly posits that the games are compatible in the sense that the set of actions available to each player coincide. MPD captures the difference between two games in terms of the maximum possible utility improvement through unilateral deviations. Different distance measures can be considered without qualitatively altering the rest of the analysis. Armed with \Cref{definition:MPD}, we are ready to introduce the concept of a near-potential game.

\begin{definition}[Near-Potential Game; \citep{Candogan13:Dynamics}]
    A game $\Gamma$ is $\delta$-\emph{near-potential} if there exists a (compatible) potential game $\widehat{\Gamma}$ such that $d(\Gamma, \widehat{\Gamma}) \leq \delta$.
\end{definition}

We remark that \citep{Candogan10:A,Candogan11:Flows} have devised a framework for efficiently finding the nearest potential game to a given game when the distance is measured in terms of the MPD. Specifically, they show that this can be formulated as a convex optimization problem \citep{Candogan10:A,Candogan11:Flows}. Nevertheless, it is clear that following a given update rule in a game $\Gamma$ does not require any sort of knowledge regarding the closest potential game $\widehat{\Gamma}$; the potential function of $\widehat{\Gamma}$ will only be used for the purpose of the analysis. We begin by pointing out the following simple claim.

\begin{claim}
    \label{claim:close-gradient}
    Let $\Gamma$ be any $\delta$-near-potential game with utilities $u_i : \prod_{i \in [n]} \cA_i \to [-1,1]$, for all $i \in [n]$. Moreover, let $\Phi$ be the potential function of a game $\widehat{\Gamma}$ such that $d(\Gamma, \widehat{\Gamma}) = \delta$. Then,
    \begin{equation}
        \label{eq:close-gradient}
        \frac{\partial \Phi(\vec{x})}{\partial \vec{x}_i(a_i)} = u_i(a_i, \vec{x}_{-i}) + e_i(a_i; \vec{x}_{-i}),
    \end{equation}
    where $|e_{i}(a_i, \vec{x}_{-i})| = O(\delta)$, for any $i \in [n]$, $a_i \in \cA_i$, and $\vec{x}_{-i} \in \prod_{j \neq i} \Delta(\cA_j)$.
\end{claim}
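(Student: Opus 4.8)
The plan is to read off the error term $e_i$ as the discrepancy between the utilities of $\Gamma$ and those of the nearby potential game $\widehat{\Gamma}$, and then to bound this discrepancy using the MPD. First, since $\widehat{\Gamma}$ is a standard potential game (unit weights) with potential $\Phi$, I would apply the mixed-extension identity established in the proof of \Cref{proposition:potential}---namely \Cref{claim:mixedpotential} followed by the differentiation of the multilinear extension---to $\widehat{\Gamma}$, obtaining
\begin{equation*}
    \frac{\partial \Phi(\vec{x})}{\partial \vec{x}_i(a_i)} = \widehat{u}_i(a_i, \vec{x}_{-i}), \qquad \forall\, i \in [n],\ a_i \in \cA_i,\ \vec{x}_{-i} \in \prod_{j \neq i} \Delta(\cA_j).
\end{equation*}
Consequently, setting $e_i(a_i; \vec{x}_{-i}) \defeq \widehat{u}_i(a_i, \vec{x}_{-i}) - u_i(a_i, \vec{x}_{-i})$ immediately yields the form \eqref{eq:close-gradient}, and the entire task reduces to showing that $|e_i(a_i; \vec{x}_{-i})| = O(\delta)$.

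Next, I would invoke \Cref{definition:MPD}. The point to keep in mind is that the MPD controls only \emph{differences} of deviations: writing $f_i(a_i, \vec{a}_{-i}) \defeq \widehat{u}_i(a_i, \vec{a}_{-i}) - u_i(a_i, \vec{a}_{-i})$, the hypothesis $d(\Gamma, \widehat{\Gamma}) = \delta$ gives $|f_i(a_i, \vec{a}_{-i}) - f_i(a_i', \vec{a}_{-i})| \leq \delta$ for every $a_i, a_i' \in \cA_i$ and every $\vec{a}_{-i} \in \cA_{-i}$. This bounds the \emph{oscillation} of $f_i(\cdot, \vec{a}_{-i})$ but not its magnitude, so the crux of the argument is to eliminate the remaining additive gauge freedom. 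Here I would use the fact that replacing each $\widehat{u}_i(a_i, \vec{a}_{-i})$ by $\widehat{u}_i(a_i, \vec{a}_{-i}) - c_i(\vec{a}_{-i})$, for any function $c_i$ of the opponents' actions alone, leaves the discrete potential condition $\Phi(\vec{a}) - \Phi(a_i', \vec{a}_{-i}) = \widehat{u}_i(\vec{a}) - \widehat{u}_i(a_i', \vec{a}_{-i})$ intact (the shift cancels) as well as the distance $d(\Gamma, \widehat{\Gamma})$; thus $\widehat{\Gamma}$ remains a valid $\delta$-witness with the same $\Phi$. Choosing $c_i(\vec{a}_{-i})$ to be the midpoint of the interval $[\min_{a_i} f_i(a_i, \vec{a}_{-i}),\ \max_{a_i} f_i(a_i, \vec{a}_{-i})]$, whose width is at most $\delta$, normalizes the discrepancy so that $|f_i(a_i, \vec{a}_{-i})| \leq \delta/2$ for all $a_i \in \cA_i$ and $\vec{a}_{-i} \in \cA_{-i}$.

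Finally, I would pass to mixed strategies by multilinearity. Since $e_i(a_i; \vec{x}_{-i}) = \E_{\vec{a}_{-i} \sim \vec{x}_{-i}}[f_i(a_i, \vec{a}_{-i})]$, the pointwise estimate $|f_i| \leq \delta/2$ propagates through the expectation to give $|e_i(a_i; \vec{x}_{-i})| \leq \delta/2 = O(\delta)$, completing the argument. I expect the main obstacle to be precisely this second step: because the MPD is invariant under (and only constrains) deviation differences, one cannot bound $|e_i|$ directly, and must first recognize and exploit the additive freedom in the choice of the witness $\widehat{\Gamma}$ in order to convert the $O(\delta)$ oscillation bound into an $O(\delta)$ absolute bound, all while preserving both the potential $\Phi$ and the near-potential distance. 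Once this normalization is in place, the remaining steps are routine.
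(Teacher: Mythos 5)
You have correctly identified the crux — and, notably, the paper itself offers no proof of this claim, so your attempt stands alone — but your resolution is internally inconsistent, and the inconsistency sits in the identity you start from. Your first step asserts $\frac{\partial \Phi(\vec{x})}{\partial \vec{x}_i(a_i)} = \widehat{u}_i(a_i, \vec{x}_{-i})$. The left-hand side equals $\Phi(a_i, \vec{x}_{-i})$, the multilinear extension of $\Phi$ evaluated at $(a_i, \vec{x}_{-i})$, and thus depends only on $\Phi$; the right-hand side changes by $-c_i(\vec{x}_{-i})$ under your own gauge substitution $\widehat{u}_i \mapsto \widehat{u}_i - c_i$, which keeps $\Phi$ fixed. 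Hence this identity can hold for at most one gauge representative (the identical-interest one, $\widehat{u}_i \equiv \Phi$); for a general witness the correct statement is $\frac{\partial \Phi(\vec{x})}{\partial \vec{x}_i(a_i)} = \widehat{u}_i(a_i, \vec{x}_{-i}) + h_i(\vec{x}_{-i})$, where $h_i(\vec{a}_{-i}) \defeq \Phi(a_i, \vec{a}_{-i}) - \widehat{u}_i(a_i, \vec{a}_{-i})$ is independent of $a_i$ by the potential condition. (\Cref{claim:mixedpotential} is an on-simplex identity, so it determines gradients only up to the all-ones direction; the paper's own differentiation step in \Cref{proposition:potential} glosses over the same point.) Consequently, after your centering step you can no longer identify the claim's error $e_i = \frac{\partial \Phi}{\partial \vec{x}_i(a_i)} - u_i(a_i, \vec{x}_{-i})$ with $\E_{\vec{a}_{-i} \sim \vec{x}_{-i}}[f_i']$: the discarded term $h_i + c_i$ reappears inside $e_i$, and nothing bounds it by $O(\delta)$.

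This is not merely a presentational gap: the pointwise bound $|e_i| = O(\delta)$ is false as stated, so no proof can close it. Take two players with $\cA_1 = \{a, b\}$, $\cA_2 = \{c, d\}$, $u_1(a_1, a_2) \defeq v(a_2)$ where $v(c) = 1$, $v(d) = -1$, and $u_2 \equiv 0$. Every unilateral deviation leaves both utilities unchanged, so this is an exact potential game ($\delta = 0$) whose potentials are precisely the constants $\Phi \equiv C$; then $\frac{\partial \Phi(\vec{x})}{\partial \vec{x}_1(a_1)} = C$ while $u_1(a_1, \vec{x}_2) = \vec{x}_2(c) - \vec{x}_2(d)$ sweeps all of $[-1, 1]$, so $\sup_{\vec{x}_2} |e_1| \geq 1$ for every choice of $C$, rather than $O(\delta) = 0$. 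What your centering construction does prove — correctly, and this is the real content — is the claim modulo $a_i$-independent shifts: $e_i(a_i; \vec{x}_{-i}) = c_i(\vec{x}_{-i}) + \epsilon_i(a_i; \vec{x}_{-i})$ with $|\epsilon_i| \leq \delta/2$, equivalently $|e_i(a_i; \vec{x}_{-i}) - e_i(a_i'; \vec{x}_{-i})| \leq \delta$ for all $a_i, a_i'$. That weaker, oscillation form is exactly what the paper's downstream argument needs: in the proof of \Cref{theorem:nearpot-full} the vector $\vec{e}_i^{(t)}$ enters only through $\langle \vec{e}_i^{(t)}, \vec{x}_i^{(t+1)} - \vec{x}_i^{(t)} \rangle$, and a difference of two probability vectors annihilates any constant shift. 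So the fix is to restate your first step with the $h_i(\vec{x}_{-i})$ correction and prove (and use) the oscillation version of the claim, rather than the pointwise one.
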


Now we are ready to prove \Cref{theorem:nearpot}, the full version of which is given below.

\begin{theorem}[Full Version of \Cref{theorem:nearpot}]
    \label{theorem:nearpot-full}
    Consider a $\delta$-nearly-potential game wherein every player employs \eqref{eq:MD} with learning rate $\eta = \frac{1}{2L}$, where $L$ is a parameter associated with the closest potential game, and regularizer $\cR_i$ which is $1$-strongly convex and $G$-smooth with respect to $\|\cdot\|_2$. Then, there exists a bounded potential function $\Phi$ which increases as long as $\vec{x}^{(t)}$ is not an $O(\sqrt{\delta})$-Nash equilibrium.
\end{theorem}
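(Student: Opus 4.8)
The plan is to take $\Phi$ to be the potential function of the exact potential game $\widehat{\Gamma}$ attaining $d(\Gamma,\widehat{\Gamma})\le\delta$, and to show that the one-step increment of $\Phi$ along the \eqref{eq:MD} trajectory is strictly positive whenever the current iterate is far from equilibrium. The starting observation is \Cref{claim:close-gradient}: the gradient of $\Phi$ agrees with the \emph{observed} utilities of $\Gamma$ up to a coordinate-wise error of order $\delta$, i.e.\ $\nabla_{\vec{x}_i}\Phi(\vec{x}^{(t)})=\vec{u}_i^{(t)}+\vec{e}_i^{(t)}$ with $\|\vec{e}_i^{(t)}\|_\infty=O(\delta)$. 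Since each player runs \eqref{eq:MD} against the true utilities $\vec{u}_i^{(t)}$ of $\Gamma$ (and not against $\nabla\Phi$), the entire difficulty is to control this discrepancy $\vec{e}_i^{(t)}$.

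First I would re-run the monotonicity argument of \Cref{theorem:monotone} verbatim, but substituting $\nabla_{\vec{x}_i}\Phi=\vec{u}_i^{(t)}+\vec{e}_i^{(t)}$. The one-sided smoothness bound \eqref{eq:new-onesided} together with the per-agent variational inequality \eqref{eq:per-agent-bound} (which only uses optimality of the \eqref{eq:MD} step against $\vec{u}_i^{(t)}$ and $1$-strong convexity of $\cR_i$) yields, with $\eta=\frac{1}{2L}$,
\begin{equation*}
  \Phi(\vec{x}^{(t+1)}) - \Phi(\vec{x}^{(t)}) \ge \frac{1}{\eta}\sum_{i=1}^n\Big(\eta\langle\vec{u}_i^{(t)},\Delta_i^{(t)}\rangle - D_{\cR_i}(\vec{x}_i^{(t+1)},\vec{x}_i^{(t)})\Big) + \sum_{i=1}^n\langle\vec{e}_i^{(t)},\Delta_i^{(t)}\rangle,
\end{equation*}
where $\Delta_i^{(t)}\defeq\vec{x}_i^{(t+1)}-\vec{x}_i^{(t)}$. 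Because the iterates lie in simplices, $\|\Delta_i^{(t)}\|_1\le 2$, so the error term is bounded below by $-\sum_i\|\vec{e}_i^{(t)}\|_\infty\|\Delta_i^{(t)}\|_1=-O(\delta)$.

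The heart of the argument is a ``sufficient progress'' lemma quantifying the bracketed term by the exploitability of $\vec{x}^{(t)}$. If $\vec{x}^{(t)}$ is not an $\epsilon$-Nash equilibrium of $\Gamma$, then by \Cref{def:Nash} there is a player $i$ and a deviation $\vec{x}_i'\in\Delta(\cA_i)$ with $\langle\vec{u}_i^{(t)},\vec{x}_i'-\vec{x}_i^{(t)}\rangle>\epsilon$. Using that $\vec{x}_i^{(t+1)}$ maximizes $\vec{x}\mapsto\eta\langle\vec{u}_i^{(t)},\vec{x}-\vec{x}_i^{(t)}\rangle-D_{\cR_i}(\vec{x},\vec{x}_i^{(t)})$, I would evaluate this objective at the segment point $\vec{x}_i^{(t)}+s(\vec{x}_i'-\vec{x}_i^{(t)})$ and invoke $G$-smoothness of $\cR_i$ to bound $D_{\cR_i}(\cdot,\vec{x}_i^{(t)})\le\frac{G}{2}s^2\Omega^2$, where $\Omega$ is the $\ell_2$-diameter of the simplices; optimizing over $s\in[0,1]$ gives $\eta\langle\vec{u}_i^{(t)},\Delta_i^{(t)}\rangle-D_{\cR_i}(\vec{x}_i^{(t+1)},\vec{x}_i^{(t)})\ge\frac{\eta^2\epsilon^2}{2G\Omega^2}$, while every other player contributes a nonnegative amount (evaluate at $s=0$). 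Substituting into the displayed inequality yields
\begin{equation*}
  \Phi(\vec{x}^{(t+1)}) - \Phi(\vec{x}^{(t)}) \ge \frac{\eta\epsilon^2}{2G\Omega^2} - O(\delta).
\end{equation*}

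Finally, this lower bound is strictly positive as soon as $\epsilon^2$ exceeds a constant multiple of $\delta$ (with the constant absorbing $\eta,G,\Omega$ and the hidden game-dependent factors), i.e.\ as soon as $\epsilon=\Omega(\sqrt{\delta})$. Taking the contrapositive, $\Phi$ strictly increases whenever $\vec{x}^{(t)}$ is not an $O(\sqrt{\delta})$-approximate Nash equilibrium; and $\Phi$ is bounded by $\phimax$ by construction, which is the claim. The main obstacle is precisely the progress lemma: one must extract a gain that is \emph{quadratic} in the exploitability $\epsilon$ (this is where $G$-smoothness of the regularizer and the one-dimensional optimization over $s$ enter), since it is the competition between this quadratic gain and the \emph{linear}-in-$\delta$ perturbation from $\vec{e}_i^{(t)}$ that forces the balance $\epsilon\sim\sqrt{\delta}$. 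A secondary technical point is the uniform $O(\delta)$ bound on the error term, for which boundedness of the simplex diameter is essential.
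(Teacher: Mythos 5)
Your proposal is correct, and it reaches the paper's conclusion by a genuinely different route in the decisive step. You share the paper's skeleton: take $\Phi$ to be the potential of the nearest potential game, use \Cref{claim:close-gradient} to write $\nabla_{\vec{x}_i}\Phi(\vec{x}^{(t)}) = \vec{u}_i^{(t)} + \vec{e}_i^{(t)}$ with $\|\vec{e}_i^{(t)}\|_\infty = O(\delta)$, pass through the one-sided smoothness bound \eqref{eq:new-onesided} with $\eta = \frac{1}{2L}$, and absorb the error term via the $\ell_1$-diameter bound $\|\vec{x}_i^{(t+1)}-\vec{x}_i^{(t)}\|_1 \le 2$. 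Where you diverge is in converting ``far from Nash'' into potential increase. The paper argues in two stages: quadratic growth from $1$-strong convexity gives $\Phi(\vec{x}^{(t+1)})-\Phi(\vec{x}^{(t)}) \ge \frac{1}{2\eta}\sum_{i=1}^n\|\vec{x}_i^{(t+1)}-\vec{x}_i^{(t)}\|_2^2 - 2Cn\delta$, and then the separate \Cref{claim:approx-nash} (variational-inequality characterization of \eqref{eq:MD}, Cauchy--Schwarz, and $G$-Lipschitzness of $\nabla\cR_i$) shows that a small step implies $\vec{x}^{(t)}$ is approximately Nash, so the theorem follows by contraposition through the intermediate step-size quantity. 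You instead keep the proximal objective values intact and lower-bound them directly by a line search toward a profitable deviation, using the other face of $G$-smoothness, namely $D_{\cR_i}(\cdot,\vec{x}_i^{(t)}) \le \frac{G}{2}s^2\Omega^2$ along the segment, and optimizing over $s$; this yields a gain quadratic in the exploitability without ever passing through the step size, while the non-deviating players contribute nonnegatively by optimality. Both routes hinge on the same quadratic-gain-versus-linear-in-$\delta$ tension and produce the $\sqrt{\delta}$ threshold with polynomial game-dependent constants. What the paper's route buys is reuse---\Cref{claim:approx-nash} also powers \Cref{theorem:ratepotential-full}, and the step-size criterion is a checkable termination condition---whereas yours is more self-contained and makes the $\epsilon^2$-versus-$\delta$ balance explicit in a single lemma. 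Two cosmetic points: your displayed inequality follows from \eqref{eq:new-onesided} and the gradient decomposition alone, so the parenthetical appeal to \eqref{eq:per-agent-bound} is misplaced (optimality of the \eqref{eq:MD} step enters only later, when you compare $\vec{x}_i^{(t+1)}$ against segment points); and the optimizer $s^* = \eta\epsilon/(G\Omega^2)$ must satisfy $s^*\le 1$, which holds in the relevant regime $\epsilon = \Theta(\sqrt{\delta})$ with $\delta$ small (for larger exploitability the $s=1$ evaluation still gives a positive, merely linear, gain). Neither affects correctness.
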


\begin{proof}
    Since players employ \eqref{eq:MD}, we know that the update rule of each player $i \in [n]$ takes the form
    \begin{equation*}
        \vec{x}_i^{(t+1)} = \argmax_{\vec{x}_i \in \Delta(\cA_i) } \left\{ \left\langle \vec{u}_i^{(t)}, \vec{x}_i - \vec{x}_i^{(t)} \right\rangle - \frac{1}{\eta} D_{\mathcal{R}_i}(\vec{x}_i, \vec{x}_i^{(t)})  \right\}.
    \end{equation*}
    By the $1$-strong convexity of $\cR_i$ with respect to $\|\cdot\|_2$, this implies that
    \begin{equation}
        \label{eq:delta-QG}
        \left\langle \vec{u}_i^{(t)}, \vec{x_i}^{(t+1)} - \vec{x}_i^{(t)} \right\rangle - \frac{1}{\eta} D_{\mathcal{R}_i} (\vec{x}_i^{(t+1)}, \vec{x}_i^{(t)}) \geq \frac{1}{2\eta} \| \vec{x}_i^{(t+1)} - \vec{x}_i^{(t)} \|_2^2.
    \end{equation}
    However, by \Cref{claim:close-gradient} we also know that
    \begin{equation*}
        \vec{u}_i^{(t)} = \nabla_{\vec{x}_i}\Phi(\vec{x}^{(t)}) + \vec{e}_i^{(t)},
    \end{equation*}
    where $\Phi$ is the potential function of the potential game within $\delta$ distance from the original game, and $\vec{e}_i^{(t)}$ is a vector such that $\| \vec{e}_i^{(t)}\|_{\infty} = C \delta$, for some parameter $C > 0$. Thus, combing this fact with \eqref{eq:delta-QG} yields that
    \begin{align}
        \left\langle \nabla_{\vec{x}_i}\Phi(\vec{x}^{(t)}), \vec{x_i}^{(t+1)} - \vec{x}_i^{(t)} \right\rangle - \frac{1}{\eta} D_{\mathcal{R}_i} (\vec{x}_i^{(t+1)}, \vec{x}_i^{(t)}) & \geq \frac{1}{2\eta} \| \vec{x}_i^{(t+1)} - \vec{x}_i^{(t)} \|_2^2 - \langle \vec{e}_i^{(t)}, \vec{x}_i^{(t+1)} - \vec{x}_i^{(t)} \rangle \notag       \\
                                                                                                                                                                                      & \geq \frac{1}{2\eta} \| \vec{x}_i^{(t+1)} - \vec{x}_i^{(t)} \|_2^2 - \| \vec{e}_i^{(t)} \|_{\infty} \| \vec{x}_i^{(t+1)} - \vec{x}_i^{(t)} \|_1 \notag \\
                                                                                                                                                                                      & \geq \frac{1}{2\eta} \| \vec{x}_i^{(t+1)} - \vec{x}_i^{(t)} \|_2^2 - 2 C \delta, \label{align:last-delta}
    \end{align}
    where we used the fact that the $\ell_1$ diameter of $\Delta(\cA_i)$ is $2$ in the last derivation. Summing \eqref{align:last-delta} for all $i \in [n]$ gives us that
    \begin{equation*}
        \left\langle \nabla_{\vec{x}}\Phi(\vec{x}^{(t)}), \vec{x}^{(t+1)} - \vec{x}^{(t)} \right\rangle - \frac{1}{\eta} \sum_{i=1}^n D_{\mathcal{R}_i} (\vec{x}_i^{(t+1)}, \vec{x}_i^{(t)}) \geq \frac{1}{2\eta} \sum_{i=1}^n \| \vec{x}_i^{(t+1)} - \vec{x}_i^{(t)} \|_2^2 - 2 C n \delta.
    \end{equation*}
    Thus, using the smoothness condition of \eqref{eq:new-onesided} with $\eta = \frac{1}{2L}$ implies that
    \begin{equation*}
        \Phi(\vec{x}^{(t+1)}) - \Phi(\vec{x}^{(t)}) \geq \frac{1}{2\eta} \sum_{i=1}^n \| \vec{x}_i^{(t+1)} - \vec{x}_i^{(t)} \|_2^2 - 2 C n \delta.
    \end{equation*}
    As a result, we conclude that $\Phi$ increases as long as
    \begin{equation*}
        \| \vec{x}^{(t+1)} - \vec{x}^{(t)} \|_2 \geq 2 \sqrt{\eta C n \delta},
    \end{equation*}
    and the theorem follows directly from \Cref{claim:approx-nash}.
\end{proof}

\subsection{Fisher Markets}
\label{appendix:Fisher}

In this subsection we illustrate how the framework we developed in \Cref{section:potential} unifies the work of \citet{Birnbaum11:Distributed} related to distributed dynamics in Fisher's classical market model. While the following exposition focuses on the linear counterpart of Fisher's model, we refer the interested reader to the work of \citet{Birnbaum11:Distributed} for an elegant extension to the \emph{spending constraints model}. Regarding the motivation for studying distributed dynamics in markets, let us quote from the work of \citet{Birnbaum11:Distributed}:
\begin{quote}
    \emph{``Algorithmic results in a centralized model of computation do not directly address the question of market dynamics: how might agents interacting in a market arrive at an equilibrium? Here, the quest is for simple and distributed algorithms that are guaranteed to converge fast. Such distributed algorithms are especially applicable when the agents involved are automated, and one has to prescribe a particular protocol for them to follow''}
\end{quote}

In this context, we commence by recalling the underlying model. The exposition in the sequel follows that in \citep{Birnbaum11:Distributed}. In the linear Fisher's market model there are $n$ agents (bidders) and $m$ (infinitely) \emph{divisible} goods. It is assumed---without any loss of generality---that there is a unit supply from each good. Every agent has an overall \emph{budget} $B_i$, circumscribing its buying power. The goal of each agent $i \in [n]$ is to maximize its own utility, which, for a given \emph{allocation vector} $\vec{x} \in \R_{\geq 0}^{n \times m}$, is defined as $\sum_{j} \vec{x}_i(j) \vec{u}_i(j)$, where $\vec{u}_i(j)$ represents the \emph{utility} of that agent for a unit of good $j$.

\paragraph{Equilibrium Conditions} Consider an allocation vector $\vec{x} \in \R_{\geq 0}^{n \times m}$ and a \emph{price vector} $\vec{p} \in \R_{\geq 0}^m$. The pair $(\vec{x}, \vec{p})$ is said to be an \emph{equilibrium} if the following conditions are met.
\begin{enumerate}
    \item \emph{Buyer optimality}: Each agent $i \in [n]$ maximizes its own utility subject to the budget constraints. Formally, for the given price vector $\vec{p}$, it has to be that the allocation vector $\vec{x}$ maximizes the following (linear) program:
          \begin{align*}
              \text{maximize} \quad   & \sum_{j \in [m]} \vec{u}_i(j) \vec{x}_i(j),        \\
              \text{subject to} \quad & \sum_{j \in [m]} \vec{p}(j) \vec{x}_i(j) \leq B_i; \\
                                      & \vec{x}_{i}(j) \geq 0, \quad \forall j \in [m].
          \end{align*}
    \item \emph{Market clearance}: It must be the case that $\sum_{i \in [n]} \vec{x}_i(j) = 1$, for all $j \in [m]$. \label{item:market-clear}
\end{enumerate}

\paragraph{Convex Formulation of Equilibria} Equilibria in linear Fisher markets can be determined via the seminal \emph{Eisenberg-Gale convex program} \citep{Eisenberg59:Consensus}. As in \citep{Birnbaum11:Distributed}, here we focus on an alternative formulation due to \citet{Shmyrev09:An}, which is given below.
\begin{equation}
    \label{eq:Smyrev}
    \begin{split}
        \text{maximize} \quad &\sum_{i, j} \vec{b}_i(j) \log \vec{u}_i(j) - \sum_{j \in [m]} \vec{p}(j) \log \vec{p}(j), \\
        \text{subject to} \quad &\sum_{i \in [n]} \vec{b}_i(j) = \vec{p}(j), \quad \forall j \in [m]; \\
        &\sum_{j \in [m]} \vec{b}_i(j) = B_i, \quad \forall \in [n]; \\
        &\vec{b}_i(j) \geq 0, \quad \forall i \in [n], j \in [m].
    \end{split}
\end{equation}
In this convex program each variable $\vec{b}_i(j)$ represents the amount spent by agent $i$ on good $j$, so that for a given solution to \eqref{eq:Smyrev}, the induced allocation $\vec{x}_i(j)$ is given by $\vec{b}_i(j)/\vec{p}(j)$. Observe that for any feasible solution to \eqref{eq:Smyrev} the markets always clears in the sense of \Cref{item:market-clear}. Moreover, \citet{Shmyrev09:An} showed that the optimal solution of \eqref{eq:Smyrev} is such that each buyer is allocated an optimal bundle of goods, guaranteeing both equilibrium conditions for the associated Fisher market. In the sequel, and for the sake of simplicity, we will let $B_i = 1$.

In this context, we will relate linear Fisher markets with the setting we presented in \Cref{section:potential}. To this end, let $\Phi(\vec{b})$ be the objective function of \eqref{eq:Smyrev}:

\begin{equation*}
    \Phi(\vec{b}) = \sum_{i, j} \vec{b}_i(j) \log \vec{u}_i(j) - \sum_{j \in [m]} \vec{p}(j) \log \vec{p}(j) = \sum_{i, j} \vec{b}_i(j) \log \left( \frac{\vec{u}_i(j)}{\vec{p}(j)} \right),
\end{equation*}
where we used the fact that $\vec{p}(j) = \sum_{i \in [n]} \vec{b}_i(j)$, which in turn follows by the feasibility constraints of \eqref{eq:Smyrev}. Now a direct calculation shows that for any $i \in [n]$, each component $j \in [m]$ on the corresponding gradient is such that
\begin{equation*}
    \left( \nabla \Phi(\vec{b}) \right)_{i, j} = \log \left( \frac{\vec{u}_i(j)}{\vec{p}(j)} \right) - 1.
\end{equation*}
Thus, $\Phi$ is not always smooth on the feasible region
\begin{equation*}
    S \defeq \left\{ \vec{b} \in \R^{n \times m} : \sum_{j \in [m]} \vec{b}_i(j) = B_i, \forall i \in [n], \vec{b}_i(j) \geq 0, \forall (i, j) \in [n] \times [m] \right\}.
\end{equation*}
Nevertheless, \citep[Lemma 7]{Birnbaum11:Distributed} establishes that $\Phi$ satisfies the weaker ``one-sided'' smoothness condition of \Cref{def:weighted_potential} with respect to the Kullback-Leibler divergence. Moreover, let $g : x \mapsto \log x - 1$ be a monotone transformation, so that
\begin{equation}
    \label{eq:per-utility}
    g^{-1} \left( ( \nabla \Phi(\vec{b}) )_{i, j} \right) = \frac{\vec{u}_i(j)}{\vec{p}(j)} = \frac{\vec{u}_i(j) \vec{x}_i(j)}{\vec{b}_i(j)}.
\end{equation}
But the latter expression can be thought of as the utility that agent $i$ received from item $j$ per fraction of budget invested to $j$. As a result, if every agent employs \eqref{eq:MD} with negative entropy DGF, monotone transformation $g : x \mapsto \log x - 1$, and utility vector the feedback suggested by \eqref{eq:per-utility}, it can be shown that bids are updated using the following update rule:
\begin{equation}
    \tag{PR}
    \label{eq:PR}
    \vec{b}^{(t+1)}_i(j) = \frac{\vec{u}_i(j) \vec{x}_i^{(t)}(j) }{\sum_{k \in [m]} \vec{u}_i(k) \vec{x}^{(t)}_i(k)} B_i.
\end{equation}
These dynamics are known us \emph{Proportional Response}, and they were introduced by \citet{Wu07:Proportional} (see also \citep{Zhang11:Proportional}). \eqref{eq:PR} can be seen as typical multiplicative weights update after applying the monotone transformation $g$. In this way, these distributed dynamics is tantamount to optimizing Smyrev's convex program \eqref{eq:Smyrev}, which converges to an equilibrium with $O(1/T)$ rate by concavity of $\Phi$; see \Cref{proposition:concaverate}.

An interesting direction would be to incorporate into this framework further and more general market models such as the \emph{Arrow-Debreu (exchange)} version; \emph{e.g.}, see \citep{Panageas:Combinatorial} for some recent developments related in spirit to the approach we presented in this section.

\section{Proofs from \texorpdfstring{\Cref{section:continuous}}{Section 5}}

In this section we provide the proofs omitted from \Cref{section:continuous}. From a technical standpoint, following~\citep{Anagnostides22:Frequency}, we will rely on
a fundamental tool from signal processing and control theory; namely, the $Z$\emph{-transform}. Recall that the (\emph{bilateral}) $Z$-transform of a sequence $(\Vec{x}^{(t)})$ in $\R^d$ is defined as
\begin{equation}
    \label{eq:Z-transform}
    \Vec{X}(z) \defeq \mathcal{Z} \left\{ \Vec{x}^{(t)} \right\} = \sum_{t=-\infty}^{\infty} \Vec{x}^{(t)} z^{-t},
\end{equation}
where $z \in \C^*$ is assumed to be in the \emph{region of convergence}:
\begin{equation*}
    \text{ROC} \defeq \left\{ z \in \C^* : \left| \sum_{t=-\infty}^{\infty} \Vec{x}^{(t)} z^{-t} \right| < \infty \right\}.
\end{equation*}

Observe that we define the $Z$-transform coordinate-wise. Our analysis will leverage the following well-known properties which follow directly from the definition of \eqref{eq:Z-transform}.

\begin{property}[Linearity]
    \label{property:linearity}
    Let $(\vec{x}^{(t)})_{t=-\infty}^{\infty}$ and $(\vec{y}^{(t)})_{t=-\infty}^{\infty}$ be sequences in $\R^d$. Then,
    \begin{equation*}
        \mathcal{Z} \left\{ \vec{x}^{(t)} + \vec{y}^{(t)} \right\} = \mathcal{Z} \left\{ \vec{x}^{(t)} \right\} + \mathcal{Z} \left\{ \vec{y}^{(t)} \right\}.
    \end{equation*}
\end{property}

\begin{property}[Time-Delay Property]
    \label{property:time-delay}
    Let $(\vec{x}^{(t)})_{t = -\infty}^{\infty}$ be a sequence in $\R^d$, and some $t_0 \in \R$. Then, it holds that
    \begin{equation*}
        \mathcal{Z} \left\{ \vec{x}^{(t - t_0)} \right\} = z^{-t_0} \mathcal{Z} \left\{ \vec{x}^{(t)} \right\}.
    \end{equation*}
\end{property}

We are now ready to prove \Cref{theorem:two_player-convergence}, the detailed version of which is given below.

\begin{theorem}[Detailed Version of \Cref{theorem:two_player-convergence}]
    \label{theorem:two_player-convergence-full}
    Let $\mat{A}$ and $\mat{B}$ be square and full-rank $d \times d$ matrices, and $\gamma \defeq \rho(\mat{A}^\top \mat{B})$, where $\rho(\cdot)$ denotes the spectral radius. If the matrix $\mat{A}^\top \mat{B}$ has strictly negative (real) eigenvalues, it holds that for any learning rate $\eta \leq \frac{1}{2\sqrt{\gamma}}$ \eqref{eq:OGD} converges with linear rate to an equilibrium.
\end{theorem}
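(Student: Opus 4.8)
The plan is to linearize the dynamics, pass to the frequency domain, and locate the poles of the resulting transfer function. Since each player's utility is bilinear, the gradients are $\nabla_{\vec x}u_1 = \mat A\vec y$ and $\nabla_{\vec y}u_2 = \mat B^\top\vec x$, so \eqref{eq:OGD} becomes the linear recurrence $\vec x^{(t+1)} = \vec x^{(t)} + \eta\mat A(2\vec y^{(t)} - \vec y^{(t-1)})$ together with $\vec y^{(t+1)} = \vec y^{(t)} + \eta\mat B^\top(2\vec x^{(t)} - \vec x^{(t-1)})$. Writing $\vec X(z),\vec Y(z)$ for the $Z$-transforms of the two trajectories and invoking \Cref{property:linearity} and \Cref{property:time-delay}, the recurrences turn into the algebraic system $(z-1)\vec X(z) = \eta(2 - z^{-1})\mat A\vec Y(z)$ and $(z-1)\vec Y(z) = \eta(2 - z^{-1})\mat B^\top\vec X(z)$. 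Eliminating $\vec Y$ yields $\big((z-1)^2\mat I - \eta^2(2-z^{-1})^2\mat A\mat B^\top\big)\vec X(z) = 0$, so the poles of the system are exactly the roots of $\det\big((z-1)^2\mat I - \eta^2(2-z^{-1})^2\mat A\mat B^\top\big) = 0$. Because $\mat A,\mat B$ are full rank, the spectrum of $\mat A\mat B^\top$ coincides with that of $\mat B^\top\mat A$, hence with that of its transpose $\mat A^\top\mat B$; thus the hypothesis that $\mat A^\top\mat B$ has strictly negative real eigenvalues means $\mat A\mat B^\top$ has eigenvalues $\lambda = -\mu$ with $\mu>0$, and $\max\mu = \gamma$.

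First I would decouple the determinant over the eigenvalues of $\mat A\mat B^\top$: for each $\lambda = -\mu$ the poles solve $(z-1)^2 = -\eta^2\mu(2-z^{-1})^2$, which after multiplying by $z^2$ reads $z^2(z-1)^2 = -\eta^2\mu(2z-1)^2$. Setting $c \defeq \eta\sqrt\mu$ and taking square roots (legitimate since $-\eta^2\mu = (ic)^2$) splits this quartic into the two conjugate quadratics $z^2 - (1\pm 2ic)z \pm ic = 0$. The key simplification is that the discriminant of the ``$+$'' branch collapses: $(1+2ic)^2 - 4ic = 1 - 4c^2$ is real. Hence its two roots are $z_\pm = \tfrac12\big(1\pm\sqrt{1-4c^2}\big) + ic$, and the ``$-$'' branch contributes their complex conjugates.

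Next I would bound the moduli. A direct computation using $r_+ + r_- = 1$ and $r_+ r_- = c^2$, where $r_\pm = \tfrac12(1\pm\sqrt{1-4c^2})$ are the real parts, gives the clean identities $|z_+|^2 = \tfrac12\big(1+\sqrt{1-4c^2}\big)$ and $|z_-|^2 = \tfrac12\big(1-\sqrt{1-4c^2}\big)$ when $c\le\tfrac12$. Both are strictly less than $1$ for every $c>0$, and at the boundary $c=\tfrac12$ they both equal $\tfrac12$; whereas for $c>\tfrac12$ the root $z_+ = \tfrac12 + i\big(c + \tfrac12\sqrt{4c^2-1}\big)$ has modulus exceeding $1$. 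Therefore all poles lie strictly inside the open unit disk precisely when $c = \eta\sqrt\mu \le \tfrac12$ for every eigenvalue, i.e. when $\eta\sqrt\gamma \le \tfrac12$, which is the hypothesis $\eta\le \tfrac{1}{2\sqrt\gamma}$. Note also that $z=1$ is never a pole, since plugging $z=1$ into the scalar equation forces $\mu=0$; hence there is no marginally stable mode and the unique equilibrium is the origin, consistent with \eqref{eq:equilibria-bi} for full-rank $\mat A,\mat B$.

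Finally, I would translate pole location back to the time domain: since every pole has modulus at most $\rho \defeq \max |z_\pm| < 1$, the inverse $Z$-transform expresses each coordinate of $(\vec x^{(t)},\vec y^{(t)})$ as a finite combination of terms $z_k^{t}$, possibly multiplied by polynomials in $t$ when $\mat A\mat B^\top$ has repeated eigenvalues or the critical double root $c=\tfrac12$ occurs; in every case $\|(\vec x^{(t)},\vec y^{(t)})\| = O(t^{d}\rho^{t})$, which decays geometrically and hence gives the claimed linear rate. The main obstacle is the root-location step, but it is tamed by the fortunate collapse of the discriminant to the real quantity $1-4c^2$, which simultaneously yields the closed-form root moduli and pins down the sharp threshold $\eta = \tfrac{1}{2\sqrt\gamma}$; the remaining steps are routine LTI and inverse-transform bookkeeping.
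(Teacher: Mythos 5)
Your proposal is correct and follows essentially the same route as the paper's proof: linearize \eqref{eq:OGD}, pass to the $Z$-transform, reduce to the characteristic equation $\det\big((z(z-1))^2\mat{I}_d - \eta^2(2z-1)^2\mat{A}^\top\mat{B}\big)=0$, decouple over the (negative) eigenvalues, exploit the collapse of the discriminant to the real quantity $1-4\eta^2\lambda$, and conclude that all roots have modulus strictly less than $1$. Your added observations (sharpness for $c>\tfrac12$, ruling out $z=1$, and the explicit handling of repeated roots via polynomial-times-geometric decay) are correct refinements of the same argument.
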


\begin{proof}
    First, observe that $\nabla_{\vec{x}}(\vec{x}^\top \mat{A} \vec{y}) = \mat{A} \vec{y}$ and $\nabla_{\vec{y}} (\vec{x}^\top \mat{B} \vec{y}) = \mat{B}^\top \vec{x}$. Thus, \eqref{eq:OGD} can be expressed as the following linear dynamical system:

    \begin{equation}
        \label{eq:linear-OGD}
        \begin{split}
            \vec{x}^{(t+1)} &= \vec{x}^{(t)} + 2\eta \mat{A} \vec{y}^{(t)} - \eta \mat{A} \vec{y}^{(t-1)}; \\
            \vec{y}^{(t+1)} &= \vec{y}^{(t)} + 2\eta \mat{B}^\top \vec{x}^{(t)} - \eta \mat{B}^\top \vec{x}^{(t-1)}.
        \end{split}
    \end{equation}

    To analyze its convergence properties, we will transfer \eqref{eq:linear-OGD} to the $z$-space. To this end, let $\vec{X}(z)$ and $\vec{Y}(z)$ be the $Z$-transform of the sequence $(\vec{x}^{(t)})$ and $(\vec{y}^{(t)})$ respectively; here it is tacitly assumed that $z$ belongs to the region of convergence. Thus, using linearity (\Cref{property:linearity}) and the time-delay property (\Cref{property:time-delay}), it follows from \eqref{eq:linear-OGD} that

    \begin{equation}
        \label{eq:z-OGD}
        \begin{split}
            z \vec{X}(z) &= \vec{X}(z) + 2\eta \mat{A} \vec{Y}(z) - \eta z^{-1} \mat{A} \vec{Y}(z); \\
            z \vec{Y} &= \vec{Y}(z) + 2\eta \mat{B}^\top \vec{X}(z) - \eta z^{-1} \mat{B}^\top \vec{X}(z).
        \end{split}
    \end{equation}
    Note that we also used the fact that $\mathcal{Z} \left\{ \mat{A} \Vec{y}^{(t)} \right\} = \mat{A} \vec{Y}(z)$ and $\mathcal{Z} \left\{ \mat{B}^\top \Vec{x}^{(t)} \right\} = \mat{B}^\top \vec{X}(z)$, which follow from linearity of the $Z$-transform. In this way, we have transferred the \emph{difference equation} \eqref{eq:linear-OGD} to the \emph{algebraic equation} \eqref{eq:z-OGD}. From the latter algebraic system, we may uncouple these equations to conclude that
    \begin{align}
            z^2 (z-1)^2 \vec{X}(z) = (2\eta z - \eta) \mat{A} z (z-1) \vec{Y}(z) &= (2\eta z - \eta) \mat{A} (2\eta z - \eta) \mat{B}^\top \vec{X}(z) \notag \\
            &= \eta^2 (2z - 1)^2 \mat{A} \mat{B}^\top \vec{X}(z); \label{eq:uncoupled-1}
    \end{align}
    and
    \begin{align}
    z^2 (z-1)^2 \vec{Y}(z) = (2\eta z - \eta) \mat{B}^\top z (z-1) \vec{X}(z) &= (2\eta z - \eta) \mat{B}^\top (2\eta z - \eta) \mat{A} \vec{Y}(z) \notag \\
    &= \eta^2 (2z - 1)^2 \mat{B}^\top \mat{A} \vec{Y}(z). \label{eq:uncoupled-2}   
    \end{align}
    As a result, we have derived the \emph{characteristic equation} of the dynamical system \eqref{eq:linear-OGD}:

    \begin{proposition}
        \label{proposition:characteristic}
        The characteristic equation of \eqref{eq:linear-OGD} can be expressed as
        \begin{equation}
            \label{eq:charact-det}
            \det(\mat{I}_d (z(z-1))^2 - \eta^2 (2z -1)^2 \mat{A}^\top \mat{B}) = 0.
        \end{equation}
        In particular, if $\chi(z)$ is the characteristic polynomial of matrix $\mat{A}^\top \mat{B}$, it follows that $z \in \C^*$ satisfies \eqref{eq:charact-det} if and only if
        \begin{equation}
            \label{eq:charact-chi}
            \chi\left( \left( \frac{1}{\eta} \frac{z(z-1)}{2z - 1} \right)^2 \right) = 0.
        \end{equation}
    \end{proposition}
    Note that we used the fact that the matrices $\mat{A} \mat{B}^\top$ and $\mat{B}^\top \mat{A}$ have identical spectrum, which in turn follows since $\mat{A}$ and $\mat{B}$ are assumed to be non-singular, as well as the property $\det(\mat{M}) = \det(\mat{M}^\top)$ in order to derive \eqref{eq:charact-det} from \eqref{eq:uncoupled-1} and \eqref{eq:uncoupled-2}. To see the second part of the claim in \Cref{proposition:characteristic}, first observe that $z = \frac{1}{2}$ is never a solution to \eqref{eq:charact-det}. Thus, from the multilinearity of the determinant,
    \begin{equation*}
        \det(\mat{I}_d (z(z-1))^2 - \eta^2 (2z -1)^2 \mat{A}^\top \mat{B}) = 0 \iff \det \left( \left( \frac{1}{\eta} \frac{z(z-1)}{2z - 1} \right)^2 \mat{I}_d - \mat{A}^\top \mat{B} \right) = 0.
    \end{equation*}
    As a result, the equivalence asserted in \eqref{eq:charact-chi} follows by recalling that $\chi(\lambda) = \det(\lambda \mat{I}_d - \mat{A}^\top \mat{B})$ since $\chi(\lambda)$ is the characteristic polynomial of matrix $\mat{A}^\top \mat{B}$.

    Having derived the characteristic equation of the dynamical system (\Cref{proposition:characteristic}), it remains to derive its solutions. To do this, let $- \lambda$ be such that $\chi(-\lambda) = 0$; by assumption, we know that $\lambda \in \R_{> 0}$. Now this eigenvalue induces solutions of the following form:
    \begin{equation}
        \label{eq:quadratic-pm}
        \left( \frac{1}{\eta} \frac{z(z-1)}{2z - 1} \right)^2 = - \lambda \iff \frac{z(z-1)}{2z - 1} = \pm \eta \sqrt{\lambda} j \iff
        \begin{cases}
            z^2 + z(-1 - 2 \eta \sqrt{\lambda} j) + \eta \sqrt{\lambda}j = 0; \\
            z^2 + z(-1 + 2 \eta \sqrt{\lambda} j) - \eta \sqrt{\lambda}j = 0.
        \end{cases}
    \end{equation}
    where $j \in \C$ denotes the imaginary unit. Now observe that $z^2 + z(-1 - 2\eta \sqrt{\lambda} j) + \eta \sqrt{\lambda} j = 0 \iff \Bar{z}^2 + \Bar{z}(-1 + 2\eta \sqrt{\lambda} j) - \eta \sqrt{\lambda} j = 0$, where $\Bar{z}$ denotes the complex conjugate of $z$. Hence, it suffices to solve only the first equation since their solutions are equivalent in terms of magnitude---in particular, they are complex conjugates. Thus, we obtain the following solutions:
    \begin{equation}
        z_{\pm} = \frac{1 + 2\eta \sqrt{\lambda} j \pm \sqrt{1 - 4\eta^2 \lambda}}{2}.
    \end{equation}
    Moreover, by assumption we know that $\eta \leq \frac{1}{2\sqrt{\gamma}}$, which in turn implies that $\eta \leq \frac{1}{2\sqrt{\lambda}} \iff 1 - 4\eta^2 \lambda \geq 0$; this holds by definition of $\gamma := \rho(\mat{A}^\top \mat{B})$. Thus, it follows that
    \begin{equation*}
        \begin{split}
            |z_+|^2 = \frac{1}{4} \left( (1 + \sqrt{1 - 4\eta^2 \lambda})^2 + (2\eta \sqrt{\lambda})^2 \right) = \frac{1}{2} \left( 1 + \sqrt{1 - 4\eta^2 \lambda} \right) < 1; \\
            |z_-|^2 = \frac{1}{4} \left( (1 - \sqrt{1 - 4\eta^2 \lambda})^2 + (2\eta \sqrt{\lambda})^2 \right) = \frac{1}{2} \left( 1 - \sqrt{1 - 4\eta^2 \lambda} \right) < 1.
        \end{split}
    \end{equation*}
    This implies that all of the solutions to the characteristic equation \eqref{eq:charact-chi} lie within the unit circle on the complex plane. Thus, the theorem follows from the fundamental theorem of linear dynamical systems.
\end{proof}

\inefficiency*

\begin{proof}
    We consider the game described with the following matrices:
    \begin{equation}
        \label{eq:game-inefficiency}
        \mat{A} \defeq
        \begin{bmatrix}
            1  & - 2 \\
            -1 & 1
        \end{bmatrix};
        \mat{B} \defeq
        \begin{bmatrix}
            1 & 1  \\
            1 & -1
        \end{bmatrix}.
    \end{equation}
    We also assume that the constraints sets are such that $\cX = \ball_1(\vec{0}, R)$ and $\cY = \ball_1(\vec{0}, R)$, for a sufficiently large radius $R > 0$. Here $\ball_1$ denotes the closed $\ell_1$-ball; we only use the $\ell_1$-norm for mathematical convenience. We will first consider the \emph{unconstrained dynamics}. In particular, we will show that the condition of \Cref{theorem:two_player-convergence} is met. Indeed, a direct calculation reveals that
    \begin{equation*}
        \mat{A}^\top \mat{B} =
        \begin{bmatrix}
            0  & 2  \\
            -1 & -3
        \end{bmatrix}.
    \end{equation*}
    Now the characteristic equation of $\mat{A}^\top \mat{B}$ reads $\lambda(\lambda+3) + 2 = 0 \iff \lambda_1 = -2, \lambda_2 = -1$. Thus, $\mat{A}^\top \mat{B}$ has negative (real) eigenvalues. As a result, \Cref{theorem:two_player-convergence} implies that for a sufficiently small learning rate \eqref{eq:OGD} converges under any initialization---assuming unconstrained domains (see \Cref{fig:efficiency}). Moreover, the following claim characterizing its limit points is immediate.

    \begin{claim}
        Suppose that \eqref{eq:linear-OGD} converges to a point $(\vec{x}^{(\infty)}, \vec{y}^{(\infty)})$. Then, it holds that $\mat{A} \vec{y}^{(\infty)} = \vec{0}$ and $\mat{B}^\top \vec{x}^{(\infty)} = \vec{0}$.
    \end{claim}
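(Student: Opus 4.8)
The plan is to pass to the limit directly in the fixed-point relation defined by \eqref{eq:linear-OGD}. The key elementary observation is that, since by hypothesis the full sequence $(\vec{x}^{(t)}, \vec{y}^{(t)})$ converges to $(\vec{x}^{(\infty)}, \vec{y}^{(\infty)})$, every one-step time-shift of it converges to the same limit; in particular $\vec{x}^{(t+1)}, \vec{x}^{(t)} \to \vec{x}^{(\infty)}$ and $\vec{y}^{(t)}, \vec{y}^{(t-1)} \to \vec{y}^{(\infty)}$. This is what allows the three terms in each update to collapse.

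Concretely, I would first take the $\vec{x}$-update $\vec{x}^{(t+1)} = \vec{x}^{(t)} + 2\eta \mat{A} \vec{y}^{(t)} - \eta \mat{A} \vec{y}^{(t-1)}$ and let $t \to \infty$ on both sides. The left-hand side and the leading right-hand term both tend to $\vec{x}^{(\infty)}$, so they cancel, and by continuity of the linear maps $\vec{y} \mapsto \mat{A}\vec{y}$ the remaining terms give
\begin{equation*}
    \vec{0} = 2\eta \mat{A} \vec{y}^{(\infty)} - \eta \mat{A} \vec{y}^{(\infty)} = \eta \mat{A} \vec{y}^{(\infty)}.
\end{equation*}
Dividing by $\eta > 0$ yields $\mat{A} \vec{y}^{(\infty)} = \vec{0}$. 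I would then repeat the identical argument for the $\vec{y}$-update, obtaining $\vec{0} = \eta \mat{B}^\top \vec{x}^{(\infty)}$ and hence $\mat{B}^\top \vec{x}^{(\infty)} = \vec{0}$.

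I do not anticipate any genuine obstacle: the whole content is this limiting computation, and the only point requiring care is the (essential but routine) fact that convergence of the sequence forces its time-shifts to share the same limit. Note that the spectral hypotheses of \Cref{theorem:two_player-convergence} are not invoked here—those guarantee \emph{that} convergence occurs, whereas the present claim only characterizes the limit point \emph{conditional} on convergence, so the argument is purely algebraic.
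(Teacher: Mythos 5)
Your proof is correct and is exactly the argument the paper has in mind: the paper states this claim without proof (calling it ``immediate''), and the intended justification is precisely your limit-passing computation in the recursion \eqref{eq:linear-OGD}, using that time-shifts of a convergent sequence share its limit and that $\eta > 0$. Nothing further is needed.
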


    In particular, given that the matrices $\mat{A}$ and $\mat{B}$ are full-rank, this claim implies that $\vec{x}^{(\infty)} = \vec{0}$ and $\vec{y}^{(\infty)} = \vec{0}$. Moreover, for a sufficiently large $R > 0$ we know that the projected dynamics on $\cX$ and $\cY$ (respectively) will coincide with the unconstrained dynamics. As a result, we have shown that \eqref{eq:OGD} constrained on $\cX$ and $\cY$ will converge to $(\vec{0}, \vec{0})$, which is clearly an equilibrium point. However, there exists a much more efficient equilibrium:

    \begin{claim}
        Let $\vec{x}^* = (R, 0) \in \ball_1(\vec{0}, R) $ and $\vec{y}^* = (R, 0) \in \ball_1(\vec{0}, R)$. Then, $(\vec{x}^*, \vec{y}^*)$ is an equilibrium of the game \eqref{eq:game-inefficiency}.
    \end{claim}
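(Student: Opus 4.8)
The plan is to verify directly that the candidate point $(\vec{x}^*, \vec{y}^*) = ((R,0),(R,0))$ satisfies both inequalities defining an equilibrium in \eqref{eq:equilibria-bi}. The single tool I need is the elementary fact that a linear functional $\vec{x} \mapsto \langle \vec{x}, \vec{c} \rangle$ maximized over the ball $\ball_1(\vec{0}, R)$ attains value $R \|\vec{c}\|_\infty$, and that its maximizers are exactly the signed scalings $R\,\mathrm{sign}(\vec{c}(j))\,\vec{e}_j$ over those coordinates $j$ with $|\vec{c}(j)| = \|\vec{c}\|_\infty$ (together with their convex combinations when ties occur). So the whole argument reduces to two matrix-vector products and a check that $(R,0)$ lies in the appropriate maximizer set.

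First I would handle player $\cX$. Computing $\mat{A}\vec{y}^* = (R, -R)^\top$, the objective for player $\cX$ becomes $\langle \vec{x}, (R,-R) \rangle = R\,\vec{x}(1) - R\,\vec{x}(2)$. Its maximum over $\ball_1(\vec{0}, R)$ is $R^2$, and since both coordinates of $(R,-R)$ have the same absolute value $R$, the vertex $\vec{x}^* = (R,0)$ is indeed a maximizer (the alternative maximizer $(0,-R)$ is irrelevant). This establishes the first line of \eqref{eq:equilibria-bi}. Symmetrically, for player $\cY$ I would compute $(\vec{x}^*)^\top \mat{B} = (R, R)$, so the objective is $\langle (R,R), \vec{y} \rangle = R\,\vec{y}(1) + R\,\vec{y}(2)$, again maximized at value $R^2$ with $\vec{y}^* = (R,0)$ among the maximizers, giving the second line. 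As a sanity check consistent with the surrounding \Cref{proposition:inefficiency}, I would also record that $\sw(\vec{x}^*, \vec{y}^*) = (\vec{x}^*)^\top \mat{A}\vec{y}^* + (\vec{x}^*)^\top \mat{B}\vec{y}^* = R^2 + R^2 = 2R^2$, confirming the efficient equilibrium value advertised in the proposition.

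There is essentially no deep obstacle here; the statement is a direct verification. The only subtlety worth flagging is that the maximizer of each player's best-response problem is \emph{not} unique (each linear objective ties across the two extreme points of the $\ell_1$-ball because the two entries of $\mat{A}\vec{y}^*$, and of $(\vec{x}^*)^\top\mat{B}$, have equal magnitude). Hence the point of the argument is not to identify \emph{the} best response but merely to confirm that the prescribed vertex $(R,0)$ belongs to the best-response set of each player against the other's fixed strategy, which is exactly what an equilibrium requires. This non-uniqueness is also what makes the example interesting: it is precisely the degeneracy that lets \eqref{eq:OGD} select the inefficient equilibrium $(\vec{0},\vec{0})$ established earlier, while the welfare-optimal equilibrium $(\vec{x}^*,\vec{y}^*)$ coexists.
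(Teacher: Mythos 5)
Your proposal is correct and follows essentially the same route as the paper's proof: both verify that $(R,0)$ attains the maximum value $R^2$ of each player's linear best-response objective over the $\ell_1$-ball, using the $\ell_1$/$\ell_\infty$ duality (the paper phrases this via H\"older's inequality, $\vec{x}^\top \mat{A}\vec{y}^* \leq \|\vec{x}\|_1 \|\mat{A}\vec{y}^*\|_\infty$, while you phrase it as the characterization of maximizers of a linear functional over the ball). Your additional remarks on non-uniqueness of the best response and the welfare value $2R^2$ are accurate but not needed for the claim itself.
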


    \begin{proof}
        When player $\cY$ plays $\vec{y}^*$ it follows that $\mat{A} \vec{y}^* = (R, -R)$. Thus, it follows that $\vec{x}^\top \mat{A} \vec{y}^* \leq \|\vec{x}\|_1 \| \mat{A} \vec{y}^*\|_{\infty} \leq R^2 = (\Vec{x}^*)^\top \mat{A} \Vec{y}^*$, for any $\vec{x} \in \ball_1(\vec{0}, R)$. Thus, $\vec{x}^*$ is indeed a best response to $\vec{y}^*$. Similarly, when player $\cX$ plays $\Vec{x}^*$ it follows that $\mat{B}^\top \Vec{x}^* = (R, R)$. Thus, we conclude that $(\Vec{x}^*)^\top \mat{B}\Vec{y} \leq \|\Vec{y}\|_1 \| \mat{B}^\top \Vec{x}^* \|_{\infty} \leq R^2 = (\Vec{x}^*)^\top \mat{B} \Vec{y}^*$, for any $\Vec{y} \in \ball_1(\Vec{0}, R)$. This means that $\Vec{y}^*$ is indeed a best response to $\Vec{x}^*$, verifying our claim.
    \end{proof}
    Finally, we have that $\sw(\vec{x}^*, \vec{y}^*) = (\vec{x}^*)^\top \mat{A} \vec{y}^* + (\vec{x}^*)^\top \mat{B} \vec{y}^* = 2R^2$. This concludes the proof.
\end{proof}

\robustness*

\begin{proof}
    Fix any $\epsilon > 0$. We consider the game $(\mat{A}, \mat{B})$ described with the following payoff matrices:
    \begin{equation}
        \label{eq:robustness}
        \mat{A} \defeq
        \begin{bmatrix}
            1 & 0          \\
            0 & \epsilon/2
        \end{bmatrix};
        \mat{B} \defeq
        \begin{bmatrix}
            -1 & 0          \\
            0  & \epsilon/2
        \end{bmatrix}.
    \end{equation}
    Observe that $\|\mat{A} + \mat{B}\|_F = \epsilon$.\footnote{Recall that the Frobenius norm of a matrix $\mat{M}$ is defined as $\| \mat{M} \|_F \defeq \| (\mat{M})^\flat \|_2$, where $(\mat{M})^\flat$ is the standard vectorization of $\mat{M}$.} Let us first argue about convergence of \eqref{eq:OGD} in the game $(\mat{A}, -\mat{A})$. To this end, observe that the matrix $\mat{A}^\top \mat{A}$ has only positive eigenvalues. Thus, we know from \Cref{theorem:two_player-convergence} that \eqref{eq:OGD} converges for $\eta \leq \frac{1}{2}$ since the spectral radius of $\mat{A}^\top \mat{A}$ is $1$. On the other hand, for the game $(\mat{A}, \mat{B})$ it holds that the matrix $\mat{A}^\top \mat{B}$ has a positive eigenvalue; namely, $\lambda = \epsilon^2/4$. But from \Cref{proposition:characteristic} it can be shown that this implies that the characteristic equation of the associated dynamical system has a solution $z \in \C^*$ with $|z| > 1$. In turn, this implies that \eqref{eq:OGD} will diverge under any non-trivial initialization; see \Cref{fig:robustness} for a graphical illustration and a discussion about this phenomenon.
\end{proof}

Next, we proceed with the proof of \Cref{theorem:one-many-convergence}, the detailed version of which is recalled below.

\begin{theorem}[Detailed Version of \Cref{theorem:one-many-convergence}]
    \label{theorem:one-many-convergence-full}
    Let $\{\mat{A}_{1, j}\}_{j = 2}^n$ and $\{ \mat{A}_{j, 1} \}_{j=2}^n$ be square matrices such that $\det(\mat{M}) \neq 0$, where $\mat{M} \defeq \sum_{j \neq 1} \mat{A}_{1, j} \mat{A}_{j, 1}$. Moreover, let $\gamma \defeq \rho(\mat{A})$, where $\rho(\cdot)$ denotes the spectral radius. If $\mat{M}$ has strictly negative (real) eigenvalues, it holds that for any learning rate $\eta \leq \frac{1}{2\sqrt{\gamma}}$ \eqref{eq:OGD} converges with linear rate to an equilibrium.
\end{theorem}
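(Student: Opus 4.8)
The plan is to replicate the frequency-domain argument used for \Cref{theorem:two_player-convergence-full}, exploiting the fact that the ``star-shaped'' interaction structure---one central player against $n-1$ satellites---collapses, after elimination, to a single matrix recurrence governed by $\mat{M}$. First I would write down the \eqref{eq:OGD} dynamics explicitly. Since $\nabla_{\vec{x}_1} u_1(\vec{x}) = \sum_{j \neq 1} \mat{A}_{1,j} \vec{x}_j$ and $\nabla_{\vec{x}_j} u_j(\vec{x}) = \mat{A}_{j,1} \vec{x}_1$ for $j \neq 1$, the update rule becomes the linear dynamical system
\begin{equation*}
\begin{split}
\vec{x}_1^{(t+1)} &= \vec{x}_1^{(t)} + 2\eta \sum_{j \neq 1} \mat{A}_{1,j} \vec{x}_j^{(t)} - \eta \sum_{j \neq 1} \mat{A}_{1,j} \vec{x}_j^{(t-1)}; \\
\vec{x}_j^{(t+1)} &= \vec{x}_j^{(t)} + 2\eta \mat{A}_{j,1} \vec{x}_1^{(t)} - \eta \mat{A}_{j,1} \vec{x}_1^{(t-1)}, \quad j \neq 1.
\end{split}
\end{equation*}

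Next I would transfer to the $z$-space via \Cref{property:linearity} and \Cref{property:time-delay}. Writing $\vec{X}_i(z) \defeq \mathcal{Z}\{\vec{x}_i^{(t)}\}$ and multiplying through by $z$, every equation assumes the common form $z(z-1) \vec{X}_i(z) = \eta(2z-1)(\cdots)$. The crux of the whole argument is the decoupling step: from the satellite equations one solves $\vec{X}_j(z) = \frac{\eta(2z-1)}{z(z-1)} \mat{A}_{j,1} \vec{X}_1(z)$ and substitutes back into the central equation, so that the sum $\sum_{j \neq 1} \mat{A}_{1,j} \mat{A}_{j,1} = \mat{M}$ emerges naturally and all satellite transforms are eliminated. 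This yields $[z(z-1)]^2 \vec{X}_1(z) = \eta^2 (2z-1)^2 \mat{M} \vec{X}_1(z)$, hence the characteristic equation
\begin{equation*}
\det\!\left( \mat{I}_d (z(z-1))^2 - \eta^2 (2z-1)^2 \mat{M} \right) = 0,
\end{equation*}
which is precisely \eqref{eq:charact-det} with $\mat{A}^\top \mat{B}$ replaced by $\mat{M}$. Here I would invoke the hypothesis $\det(\mat{M}) \neq 0$ to guarantee a unique equilibrium and that $z = \tfrac12$ is never a root, so that, exactly as in \Cref{proposition:characteristic}, the equation is equivalent to $\chi\big((\tfrac{1}{\eta}\tfrac{z(z-1)}{2z-1})^2\big) = 0$, where $\chi$ is the characteristic polynomial of $\mat{M}$.

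From this point the analysis is verbatim that of \Cref{theorem:two_player-convergence-full}: for each negative eigenvalue $-\lambda$ of $\mat{M}$ (with $\lambda > 0$ by hypothesis) one obtains the two conjugate quadratics of \eqref{eq:quadratic-pm}, whose roots $z_{\pm} = \tfrac12(1 + 2\eta\sqrt{\lambda}\,j \pm \sqrt{1 - 4\eta^2\lambda})$ satisfy $|z_\pm|^2 = \tfrac12(1 \pm \sqrt{1 - 4\eta^2 \lambda}) < 1$ precisely when $\eta \leq \frac{1}{2\sqrt{\gamma}}$ with $\gamma = \rho(\mat{M})$, since this choice ensures $1 - 4\eta^2\lambda \geq 0$. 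All roots therefore lie strictly inside the unit disk, and linear convergence to the equilibrium follows from the fundamental theorem of linear dynamical systems. The only genuinely new work relative to the two-player theorem is the elimination step producing $\mat{M}$; I expect that to be the main (though mild) obstacle, together with verifying that the non-singularity hypothesis is exactly what rules out spurious roots and pins down the limit point.
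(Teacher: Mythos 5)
Your proposal follows the paper's proof essentially step for step: the same $Z$-transform of the star-shaped dynamics, the same elimination of the satellite equations via $(z^2 - z)\vec{X}_j(z) = \eta(2z-1)\mat{A}_{j,1}\vec{X}_1(z)$ to produce $\mat{M}$, the same characteristic equation, and the same root analysis giving $|z_{\pm}| < 1$ whenever $\eta \leq \frac{1}{2\sqrt{\gamma}}$ with $\gamma = \rho(\mat{M})$. Two cosmetic points: the exclusion of $z = \tfrac{1}{2}$ follows from the determinant equation itself (at $z=\tfrac12$ the $(2z-1)^2$ term vanishes and the determinant is nonzero), not from $\det(\mat{M}) \neq 0$; and the paper additionally records that each satellite transform $\vec{X}_j(z)$ inherits the poles of $\vec{X}_1(z)$ modulo at most a single pole at $z = 1$, a detail worth stating explicitly since it is what lets the satellites converge (to possibly nonzero limits) rather than diverge.
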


\begin{proof}
    First of all, recall that we have that $u_1(\vec{x}) = \sum_{j \neq 1} \vec{x}_1^\top \mat{A}_{1, j} \vec{x}_j$, while $u_j(\vec{x}) = \vec{x}_j^\top \mat{A}_{j, 1} \vec{x}_1$ for $j \neq 1$. Thus, it follows that $\nabla_{\vec{x}_1} u_1(\vec{x}) = \sum_{j \neq 1} \mat{A}_{1, j} \vec{x}_j$, and $\nabla_{\vec{x}_j} u_j(\vec{x}) = \mat{A}_{j, 1} \vec{x}_1$ for $j \neq 1$. As a result, \eqref{eq:OGD} can be cast as
    \begin{equation}
        \label{eq:1-many}
        \begin{split}
            \vec{x}_1^{(t+1)} &= \vec{x}_1^{(t)} + 2\eta \sum_{j \neq 1} \mat{A}_{1, j} \vec{x}^{(t)}_j - \eta \sum_{j \neq 1} \mat{A}_{1, j} \vec{x}^{(t-1)}_j; \\
            \vec{x}_j^{(t+1)} &= \vec{x}_j^{(t)} + 2 \eta \mat{A}_{j, 1} \vec{x}^{(t)}_1 - \eta \mat{A}_{j, 1} \vec{x}_1^{(t-1)}, \forall j \neq 1.
        \end{split}
    \end{equation}
    Transferring these dynamics to the $z$-space yields that
    \begin{equation}
        \label{eq:1-many-z}
        \begin{split}
            z \vec{X}_1(z) &= \vec{X}_1(z) + 2\eta \sum_{j \neq 1} \mat{A}_{1, j} \vec{X}_j(z) - \eta z^{-1} \sum_{j \neq 1} \mat{A}_{1, j} \vec{X}_j(z); \\
            z \vec{X}_j(z) &= \vec{X}_j(z) + 2 \eta \mat{A}_{j, 1} \vec{X}_1(z) - \eta z^{-1} \mat{A}_{j, 1} \vec{X}_1(z), \forall j \neq 1.
        \end{split}
    \end{equation}
    From the second equation it follows that $(z^2 - z) \vec{X}_j(z) = \eta (2z - 1) \mat{A}_{j, 1} \vec{X}_1(z)$, for all $j \neq 1$. Thus, plugging this to the first equation of \eqref{eq:1-many-z} yields that
    \begin{align*}
        z^2 (z-1)^2 \vec{X}_1(z) = \eta^2 (2z - 1)^2 \sum_{j \neq 1} \mat{A}_{1, j} \mat{A}_{j, 1} \vec{X}_1(z) \\
        \iff \left( (z(z-1))^2 \mat{I}_d - \eta^2 (2z-1)^2 \sum_{j \neq 1} \mat{A}_{1, j} \mat{A}_{j, 1} \right) \vec{X}_1(z) = 0.
    \end{align*}
    From this, it is possible to conclude the characteristic equation of the associated dynamical system:

    \begin{proposition}
        The characteristic equation of the dynamical system \eqref{eq:1-many} can be expressed as
        \begin{equation*}
            \det \left( (z(z-1))^2 \mat{I}_d - \eta^2 (2z-1)^2 \sum_{j \neq 1} \mat{A}_{1, j} \mat{A}_{j, 1} \right) = 0.
        \end{equation*}
    \end{proposition}
    Observe that if all the poles of $\vec{X}_1(z)$ lie within the unit circle of the complex plane, the same holds for each $\Vec{X}_j(z)$, for all $j \neq 1$, modulo at most a single pole at $z = 1$. This follows given that $(z^2 - z) \vec{X}_j(z) = \eta (2z - 1) \mat{A}_{j, 1} \vec{X}_1(z)$, for all $j \neq 1$. Finally, the rest of the theorem follows analogously to \Cref{theorem:two_player-convergence-full}, while it is direct to verify that, assuming convergence, the limit points satisfy the equilibrium conditions.
\end{proof}

\begin{remark}
    \label{remark:game-matrix}
    More broadly, the convergence of \eqref{eq:OGD} can be determined in terms of the spectrum of the matrix
    \begin{equation*}
        \mat{M} \defeq
        \begin{pmatrix}
            \mat{0}_{d}    & \mat{A}_{1, 2} & \mat{A}_{1,3}  & \dots  & \mat{A}_{1, n} \\
            \mat{A}_{2, 1} & \mat{0}_{d}    & \mat{A}_{2, 3} & \dots  & \mat{A}_{2, n} \\
            \vdots         & \vdots         & \vdots         & \ddots & \vdots         \\
            \mat{A}_{n, 1} & \mat{A}_{n, 2} & \mat{A}_{n, 3} & \dots  & \mat{0}_{d}
        \end{pmatrix}.
    \end{equation*}
    Indeed, \Cref{theorem:one-many-convergence} can be seen as an instance of such a broader characterization for \emph{separable} continuous games. As we point out in \Cref{remark:tensor}, obtaining such results in multilinear settings would require different techniques.
\end{remark}

\begin{remark}[Beyond Polymatrix Games]
    \label{remark:tensor}
    We believe that results in unconstrained multilinear $n$-player games (akin to standard NFGs, but without constraints) can be obtained using recent advances from \emph{multilinear control theory} \citep{Multilinear21:Chen}. In particular, stability could be determined in terms of the tensor of the underlying game.
\end{remark}

Our final result for \Cref{section:continuous} shows that a generic class of first-order fails to guarantee stability in all general-sum games, even with two players. More precisely, we consider the following method.
\begin{equation}
    \label{eq:HGD}
    \tag{HGD}
    \begin{split}
        \vec{x}_i^{(t+1)} = \sum_{\tau=0}^{T} \alpha^{(\tau)} \vec{x}_i^{(t-\tau)} + \sum_{\tau=0}^T \beta^{(\tau)} \nabla_{\vec{x}_i} u_i(\vec{x}^{(t-\tau)}).\!\!\!\!\!\!\!\!\!
    \end{split}
\end{equation}
This method---which will be referred to as \emph{historical gradient descent (HGD)}~\citep{Anagnostides20:Robust}---can be described with the ordered set of coefficients $A \defeq (\alpha^{(0)}, \dots, \alpha^{(T)})$ and $B \defeq (\beta^{(0)},  \dots, \beta^{(T)})$. In particular, \eqref{eq:HGD} contains \eqref{eq:OGD} when $A = (1)$ and $B = (2\eta, -\eta )$, but it goes well-beyond this method. For the purpose of our analysis we are going to represent an \eqref{eq:HGD} algorithm using the following two polynomials.
\begin{equation*}
    \begin{split}
        S(z) \defeq \alpha^{(0)} + \alpha^{(1)} z^{-1} + \dots + \alpha^{(T)} z^{-T}; \\
        G(z) \defeq \beta^{(0)} + \beta^{(1)} z^{-1} + \dots + \beta^{(T)} z^{-T}.
    \end{split}
\end{equation*}
For example, \eqref{eq:OGD} is associated with polynomials $S(z) = 1$ and $G(z) = 2 \eta - \eta z^{-1}$. However, the considered class of dynamics contains certain trivial algorithms. For example, the update rule of an \eqref{eq:HGD} algorithm for which $G(z) \equiv 0$ would not depend on the observed utility at any iteration. To address such trivialities, we impose a condition which ensures that an \eqref{eq:HGD} algorithm is interesting from a game-theoretic standpoint. In particular, we say that an \eqref{eq:HGD} algorithm is \emph{regular} if $S(1) = 1$ and $G(1) \neq 0$. Then, it is easy to see that, if both players employ a regular \eqref{eq:HGD} algorithm and the dynamics converge, the limit points will be equilibria in the sense of \eqref{eq:equilibria-bi}. We will also assume without any loss that the polynomials $G(z)$ and $S(z) - z$ have no common roots. We are now ready to state and prove our main theorem on \eqref{eq:HGD} algorithms.

\begin{restatable}{theorem}{impossibility}
    \label{theorem:impossibility}
    For any regular algorithm in \eqref{eq:HGD} there exists a game for which the method will diverge under any non-trivial initialization.
\end{restatable}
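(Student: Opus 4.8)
The plan is to follow the $Z$-transform methodology used in the proof of \Cref{theorem:two_player-convergence-full}, specializing it to a two-player bilinear game $(\mat{A}, \mat{B})$ of my own choosing and arranging for the resulting characteristic equation to have a root strictly outside the unit circle. First I would instantiate \eqref{eq:HGD} in such a game: since $\nabla_{\vec{x}}(\vec{x}^\top \mat{A}\vec{y}) = \mat{A}\vec{y}$ and $\nabla_{\vec{y}}(\vec{x}^\top\mat{B}\vec{y}) = \mat{B}^\top\vec{x}$, taking the bilateral $Z$-transform (via \Cref{property:linearity,property:time-delay}) turns the two coupled difference equations into
\[
    (z - S(z))\,\vec{X}(z) = G(z)\,\mat{A}\vec{Y}(z), \qquad (z - S(z))\,\vec{Y}(z) = G(z)\,\mat{B}^\top\vec{X}(z).
\]
Uncoupling exactly as in \eqref{eq:uncoupled-1} yields the characteristic equation $\det\big((z - S(z))^2 \mat{I}_d - G(z)^2 \mat{A}^\top\mat{B}\big) = 0$; using the coprimality of $G(z)$ and $S(z) - z$ to divide safely (the analogue of the remark that $z = \tfrac{1}{2}$ is never a root in the OGD case), this is equivalent to $\chi\big(\phi(z)\big) = 0$, where $\chi$ is the characteristic polynomial of $\mat{A}^\top\mat{B}$ and $\phi(z) \defeq \big(\tfrac{z - S(z)}{G(z)}\big)^2$. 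Thus producing a divergent game reduces to choosing an eigenvalue $\mu_0$ of $\mat{A}^\top\mat{B}$ together with a root $z_0$ of $\phi(z) = \mu_0$ with $|z_0| > 1$.

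The key step is that I get to pick $z_0$ first and then define $\mu_0$. Because the algorithm is \emph{regular}, $G(1) \neq 0$, so $G \not\equiv 0$ and $z^T G(z)$ is a nonzero polynomial with finitely many roots; likewise $z^{T+1} - z^T S(z)$ has degree exactly $T+1$ (leading term $z^{T+1}$), so $z - S(z)$ vanishes at only finitely many points. Hence I may fix any real $r > 1$ avoiding these finitely many exceptional values, so that $G(r) \neq 0$ and $r - S(r) \neq 0$, and set $\mu_0 \defeq \phi(r) = \big(\tfrac{r - S(r)}{G(r)}\big)^2 \in (0, \infty)$. I then realize this eigenvalue by the scalar ($d = 1$) game $\mat{A} = (1)$, $\mat{B} = (\mu_0)$, which are square, full-rank, and have $\mat{A}^\top\mat{B} = \mu_0$ with unique equilibrium $(0,0)$; here $\chi(\lambda) = \lambda - \mu_0$, so the characteristic equation is precisely $\phi(z) = \mu_0$, of which $r$ is a root. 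Clearing denominators shows $r$ is a genuine root of the time-domain characteristic polynomial $(z^{T+1} - z^T S(z))^2 - \mu_0 (z^T G(z))^2$, and since $|r| > 1$ the system possesses an unstable mode.

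Finally, I would conclude divergence exactly as in the proof of \Cref{proposition:robustness}: an unstable pole of the transfer function forces the trajectory to grow, so by the fundamental theorem of linear dynamical systems the method diverges under any non-trivial initialization; regularity is also what guarantees (as noted before the statement) that any convergent trajectory would have to be an equilibrium, so divergence is the meaningful failure mode. The main obstacle is making ``any non-trivial initialization'' fully rigorous: a single root $r$ with $|r| > 1$ only guarantees that the set of initial histories whose trajectory stays bounded is a proper subspace (the stable manifold), so strictly speaking divergence holds for all but a measure-zero set of initializations. I would address this in the same spirit as \Cref{proposition:robustness} — diagonalizing the scalar game through the combination $p^{(t)} = x^{(t)} + \mu_0^{-1/2} y^{(t)}$, which obeys a scalar recurrence whose characteristic equation is $\tfrac{z - S(z)}{G(z)} = \sqrt{\mu_0}$ and hence has $r$ as a root, and arguing that this unstable mode is excited off the stable manifold — and would flag that this measure-zero caveat is the only gap between the exhibited unstable root and the literal reading of the statement.
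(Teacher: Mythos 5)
Your proposal follows essentially the same route as the paper's proof: the identical $Z$-transform derivation of the characteristic equation $\chi\bigl(\bigl((z-S(z))/G(z)\bigr)^2\bigr)=0$, and the identical trick of fixing a point $r>1$ (the paper takes $1+\epsilon$) at which $G$ and $z-S(z)$ do not vanish, then reverse-engineering a game whose matrix $\mat{A}^\top\mat{B}$ has the positive eigenvalue $\bigl((r-S(r))/G(r)\bigr)^2$, so that $r$ is a root of the characteristic equation outside the unit circle. Your two additions---realizing that eigenvalue concretely via the scalar game $\mat{A}=(1)$, $\mat{B}=(\mu_0)$, and flagging that a single unstable root only forces divergence off a proper stable subspace---are refinements rather than deviations: the paper leaves the game implicit and invokes the fundamental theorem of linear dynamical systems with exactly the same measure-zero looseness you identified, so your version is, if anything, more careful on both points.
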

\begin{proof}
    Consider a game $(\mat{A}, \mat{B})$. The dynamics of the underlying dynamical system in the $z$-space can be expressed as follows.
    \begin{equation*}
        \begin{split}
            z \vec{X}(z) &= S(z) \vec{X}(z) + G(z) \mat{A} \Vec{Y}(z); \\
            z \vec{Y}(z) &= S(z) \vec{Y}(z) + G(z) \mat{B}^\top \Vec{X}(z).
        \end{split}
        \iff
        \begin{split}
            (z - S(z)) \vec{X}(z) &= G(z) \mat{A} \Vec{Y}(z); \\
            (z - S(z)) \vec{Y}(z) &= G(z) \mat{B}^\top \Vec{X}(z).
        \end{split}
    \end{equation*}
    From this representation, we may conclude that the characteristic equation of the dynamical system can be expressed as
    \begin{equation*}
        \det \left( (z - S(z))^2 \mat{I}_d - (G(z))^2 \mat{A}^\top \mat{B} \right) = 0 \iff \chi \left( \left( \frac{z - S(z)}{G(z)} \right)^2 \right) = 0,
    \end{equation*}
    where $\chi(z)$ represents the characteristic polynomial of matrix $\mat{A}^\top \mat{B}$. Note that our previous derivation uses the assumption that the polynomials $S(z) - z$ and $G(z)$ do not share any common roots, in turn implying that no root of $G(z)$ can satisfy the equation $\det \left( (z - S(z))^2 \mat{I}_d - (G(z))^2 \mat{A}^\top \mat{B} \right) = 0$. Now consider any $\epsilon > 0$ so that $G(1 + \epsilon) \neq 0$ and $S(1 + \epsilon) \neq 1 + \epsilon$. If the game $(\mat{A}, \mat{B})$ is such that the matrix $\mat{A}^\top \mat{B}$ has the positive eigenvalue
    \begin{equation*}
        \lambda = \left( \frac{1 + \epsilon - S(1 + \epsilon)}{G(1+\epsilon)} \right)^2,
    \end{equation*}
    then it follows that $z = 1 + \epsilon$ is a solution to the characteristic equation of the system. But given that $|z| > 1$, this necessarily implies that the dynamics will diverge---under any non-trivial initialization---by the fundamental theorem of linear dynamical systems.
\end{proof}

While there is a plethora of control-theoretic techniques that could stabilize the dynamics, we have argued (\Cref{proposition:inefficiency}) that stability might be at odds with efficiency---social welfare maximization. Understanding the fundamental tension between different solution concepts is an important question for the future.

\section{Additional Experiments}
\label{appendix:experiments}

In this section we corroborate some our theoretical results through experiments on several games. 

\subsection{Normal-Form Games}

We start by illustrating the last-iterate convergence when players employ different and potentially advanced prediction mechanisms in normal-form games. We are first going to assume that both players employ \eqref{eq:OMD} with prediction $\Vec{m}^{(t)}$ set to either $H$-step recency bias (\Cref{item:H-step}), or geometrically $\delta$-discounted recency bias (\Cref{item:discounting}). We remark that while \citet{Syrgkanis15:Fast} established \rvu bounds only for OFTRL under such predictions, it is immediate to extend these bounds with qualitatively similar results for OMD. That is, the bounds on the learning rate obtained in \Cref{proposition:advanced} for OFTRL are analogous to the corresponding ones for OMD. Inspired by the work of \citet{Daskalakis21:Near}, we will also experiment with predictions inducing $H$-order differences in the \rvu bound:
\begin{itemize}
    \item[(i)] $1$-order: $\Vec{m}^{(t)} \defeq \Vec{u}^{(t-1)}$;
    \item[(ii)] $2$-order: $\Vec{m}^{(t)} \defeq 2 \Vec{u}^{(t-1)} - \Vec{u}^{(t-2)}$;
    \item[(iii)] $3$-order: $\Vec{m}^{(t)} \defeq 3 \Vec{u}^{(t-1)} - 3 \Vec{u}^{(t-2)} + \Vec{u}^{(t-3)}$;
    \item[(iv)] $4$-order: $\Vec{m}^{(t)} \defeq 4 \Vec{u}^{(t-1)} - 6 \Vec{u}^{(t-2)} + 4 \Vec{u}^{(t-3)} - \Vec{u}^{(t-4)}$.
\end{itemize}

We point out that our techniques immediate imply last-iterate convergence under $H$-order predictions, for a sufficiently small learning rate $\eta = \eta(H)$.

\paragraph{Zero-Sum Games} We first illustrate the behavior of the dynamics for two-player zero-sum games. We let $\mat{A}_j$ represent the cost matrix for player $\cX$---and subsequently the payoff matrix for player $\cY$---for $j \in \{1, 2, 3\}$, defined as follows.

\begin{equation}
    \label{eq:zero_sum}
    \mat{A}_1 \defeq \begin{bmatrix}
        1    & -1 & -1 \\
        -1   & -1 & 0  \\
        -0.5 & 0  & -1
    \end{bmatrix};
    \mat{A}_2 \defeq \begin{bmatrix}
        1    & -2 & -1 \\
        -1   & 1  & 0  \\
        -0.5 & 1  & -1
    \end{bmatrix};
    \mat{A}_3 \defeq \begin{bmatrix}
        -1  & 1    & -1   \\
        0   & 0.5  & -1   \\
        0.3 & -0.5 & -0.5
    \end{bmatrix}.
\end{equation}

The \eqref{eq:OMD} dynamics when \emph{both} players employ $H$-step recency bias are illustrated and discussed in \Cref{fig:H_step-both}. \Cref{fig:H_step-one} illustrates the behavior of the dynamics when only \emph{one} of the two players employs $H$-step recency bias. The geometrically $\delta$-discounting prediction mechanism is depicted in \Cref{fig:discounting}, while \Cref{fig:H_order} shows the dynamics under $H$-order predictions.

\begin{figure}[!ht]
    \centering
    \includegraphics[scale=0.7]{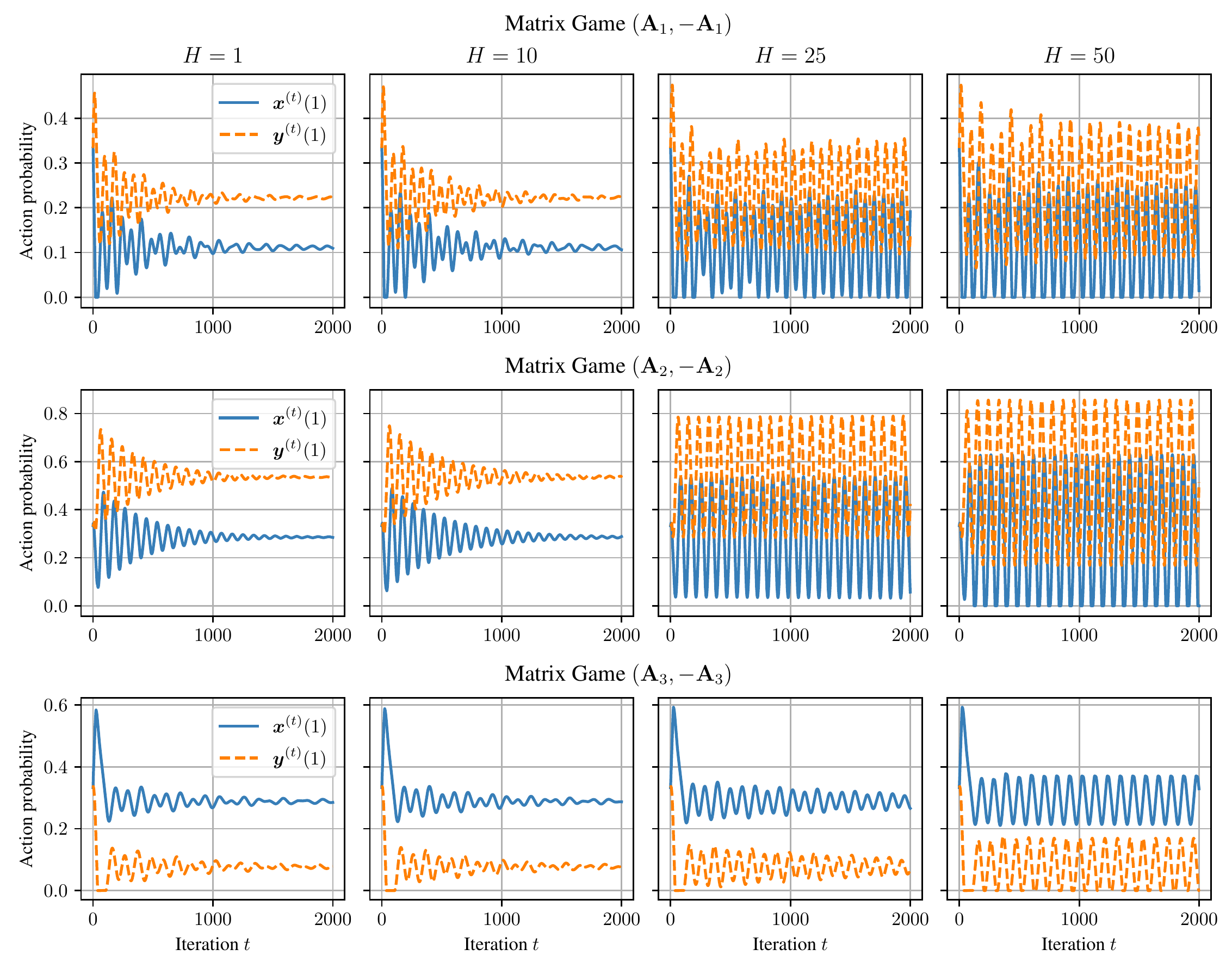}
    \caption{The \eqref{eq:OMD} dynamics in the zero-sum games described in \eqref{eq:zero_sum} for $2000$ iterations. The $y$-axis corresponds to the probability with which each player selects the first action $1$. Both players use Euclidean regularization with $\eta = 0.05$ and $H$-step recency bias (\Cref{item:H-step}) for different values of $H \in \{1, 10, 25, 50\}$. The dynamics under small values of the prediction window $H$ are qualitatively similar. On the other hand, as suggested by \Cref{proposition:advanced}, larger values of $H$ can lead to instability, confirming that the learning rate has to be adapted to $H$.}
    \label{fig:H_step-both}
\end{figure}

\begin{figure}[!ht]
    \centering
    \includegraphics[scale=0.7]{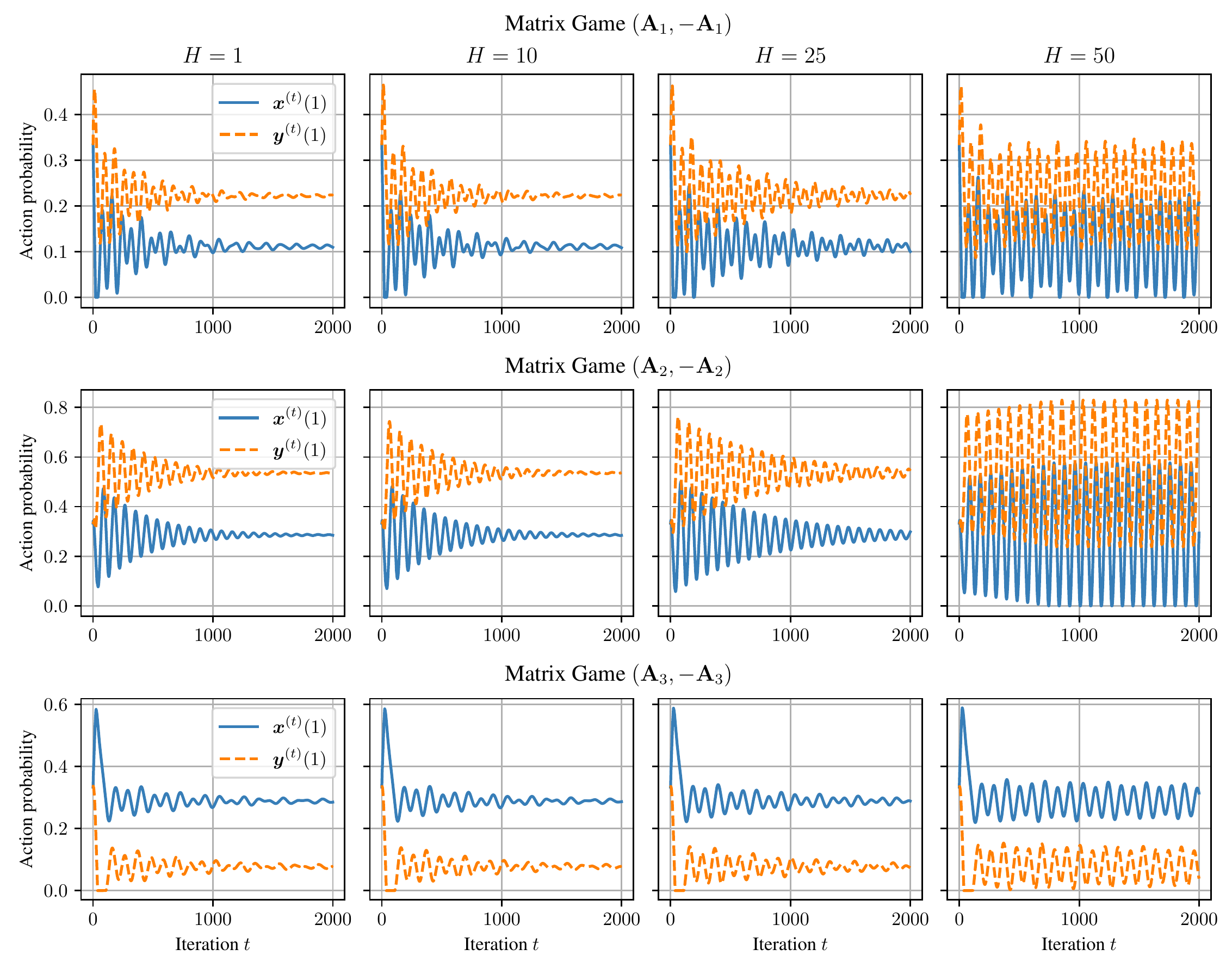}
    \caption{The \eqref{eq:OMD} dynamics in the zero-sum games described in \eqref{eq:zero_sum} for $2000$ iterations. The $y$-axis corresponds to the probability with which each player selects the first action $1$. Both players use Euclidean regularization with $\eta = 0.05$. Unlike \Cref{fig:H_step-both}, player $\cX$ uses one-step recency bias, while player $\cY$ continues to be using $H$-step recency bias for differnt values of $H \in \{1, 10, 25, 50\}$. We observe that larger values of $H$ could lead to unstable behavior if $\eta$ is not selected sufficiently small.}
    \label{fig:H_step-one}
\end{figure}

\begin{figure}[!ht]
    \centering
    \includegraphics[scale=0.7]{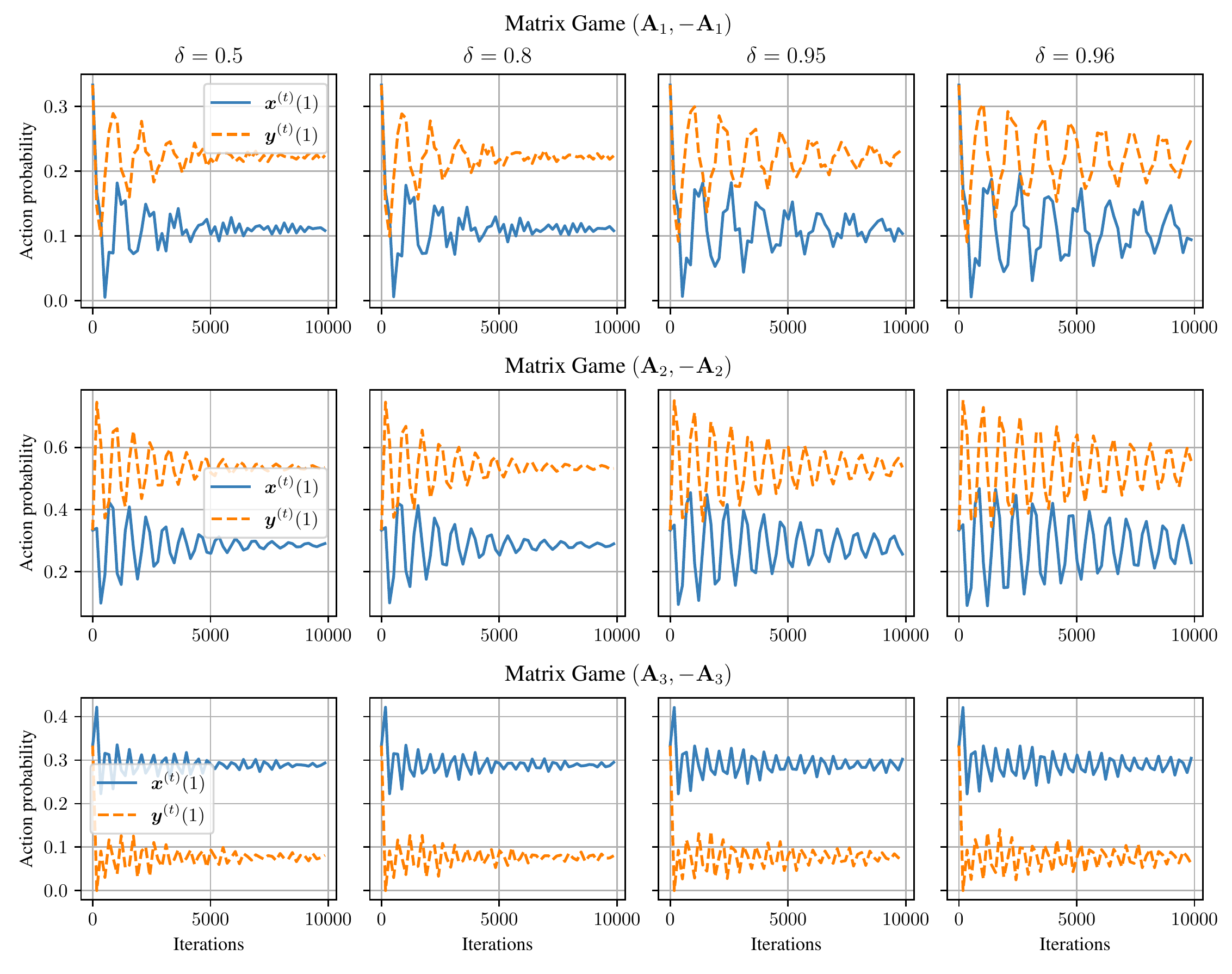}
    \caption{The \eqref{eq:OMD} dynamics in the zero-sum games described in \eqref{eq:zero_sum} for $10000$ iterations. The $y$-axis corresponds to the probability with which each player selects the first action $1$. Both players use Euclidean regularization with $\eta = 0.02$ and geometrically $\delta$-discounted predictions (\Cref{item:discounting}) for different values of $\delta \in \{0.5, 0.8, 0.95, 0.96\}$. We observe that while $\delta$ approaches to $1$ the dynamics become more oscillatory.}
    \label{fig:discounting}
\end{figure}

\begin{figure}[!ht]
    \centering
    \includegraphics[scale=0.7]{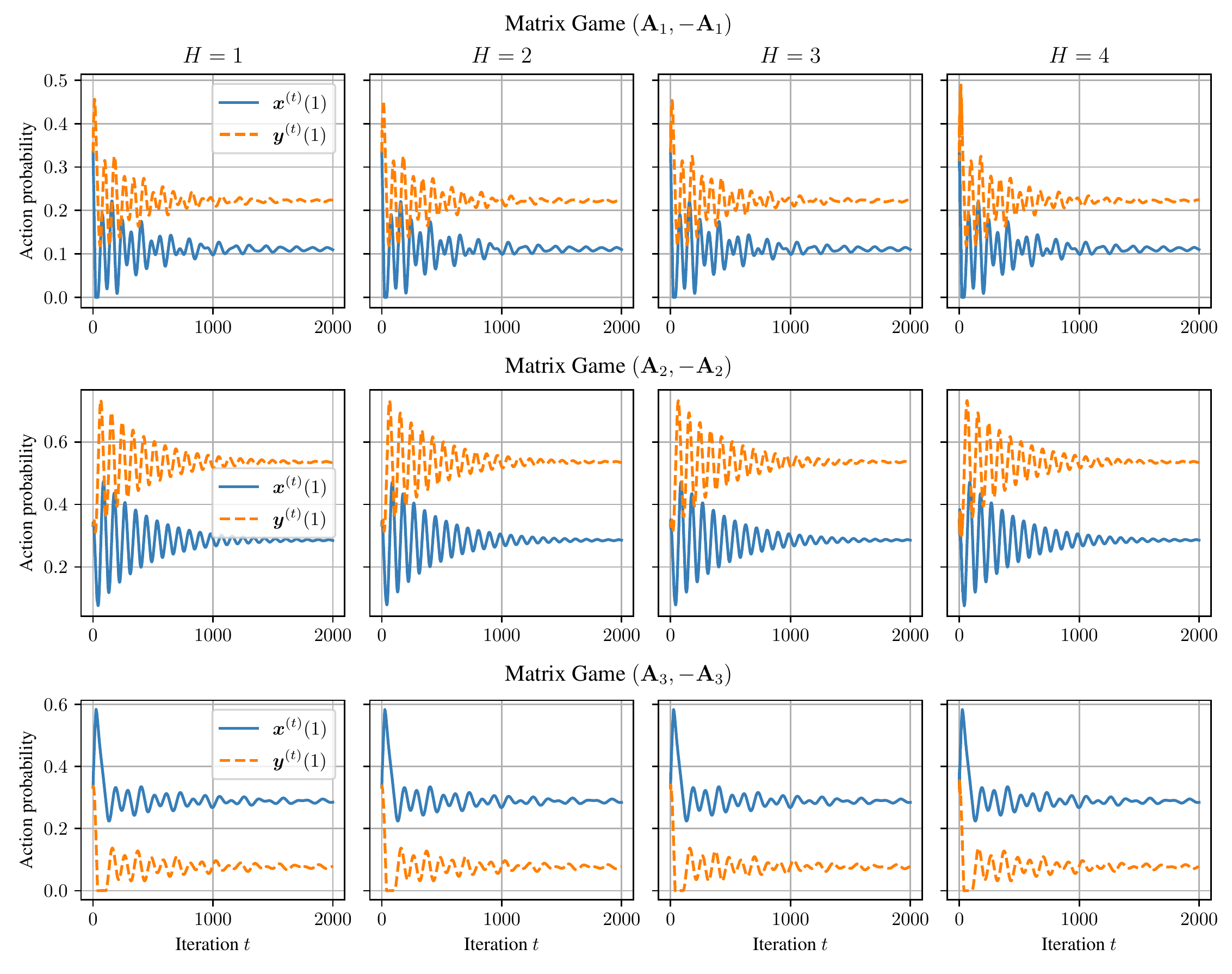}
    \caption{The \eqref{eq:OMD} dynamics in the zero-sum games described in \eqref{eq:zero_sum} for $2000$ iterations. The $y$-axis corresponds to the probability with which each player selects the first action $1$. Both players use Euclidean regularization with $\eta = 0.05$ and $H$-order predictions for different values of $H \in \{1, 2, 3, 4\}$. Interestingly, the dynamics exhibit almost identical behavior to that with one recency bias---equivalently, $1$-order predictions.}
    \label{fig:H_order}
\end{figure}

\paragraph{Strategically Zero-Sum Games} Next, we experiment with strategically zero-sum games (\Cref{def:strat-zero_sum}). The cost matrix for player $\cX$ in each game is still given by \eqref{eq:zero_sum}, but now we assume that the cost matrix of player $\cY$ are given as follows.

\begin{equation}
    \label{eq:strat-zero_sum}
    \mat{B}_1 \defeq \begin{bmatrix}
        -1    & 0.5 & 1  \\
        0     & .5  & .5 \\
        -0.25 & 0   & 1
    \end{bmatrix};
    \mat{B}_2 \defeq \begin{bmatrix}
        0.3   & 0    & 0.3  \\
        -0.2  & .25  & 0.3  \\
        -0.35 & 0.75 & 0.05
    \end{bmatrix};
    \mat{B}_3 \defeq \begin{bmatrix}
        0.7 & 0.5 & 0.56 \\
        0.4 & 0.4 & 0.7  \\
        0.5 & 0.6 & 0.5
    \end{bmatrix}.
\end{equation}

To verify that the games $(\mat{A}_1, \mat{B}_1)$, $(\mat{A}_2, \mat{B}_2)$, and $(\mat{A}_3, \mat{B}_3)$ are strategically zero-sum, observe that
\begin{align*}
    \mat{B}_1 & = - 0.5 \mat{A}_1 + \Vec{1}_3 \begin{bmatrix}
        -0.5 & 0 & 0.5
    \end{bmatrix};
    \\
    \mat{B}_2 & = - 0.5 \mat{A}_2 + \Vec{1}_3 \begin{bmatrix}
        -0.2 & 0.5 & -0.2
    \end{bmatrix}; \\
    \mat{B}_3 & = - 0.2 \mat{A}_3 + \Vec{1}_3 \begin{bmatrix}
        0.5 & 0.5 & 0.5
    \end{bmatrix}.
\end{align*}
where recall that $\Vec{1}_3$ is the all-ones vector in $\R^3$. Thus, the fact that these games are strategically zero-sum follows from the converse of \Cref{theorem:strat-zero_sum}. In particular, the game $(\mat{A}_3, \mat{B}_3)$ is strictly competitive (see \Cref{remark:strictly_competitive}). The \eqref{eq:OMD} dynamics in these games are illustrated in \Cref{fig:strat}.

\begin{figure}[!ht]
    \centering
    \includegraphics[scale=0.7]{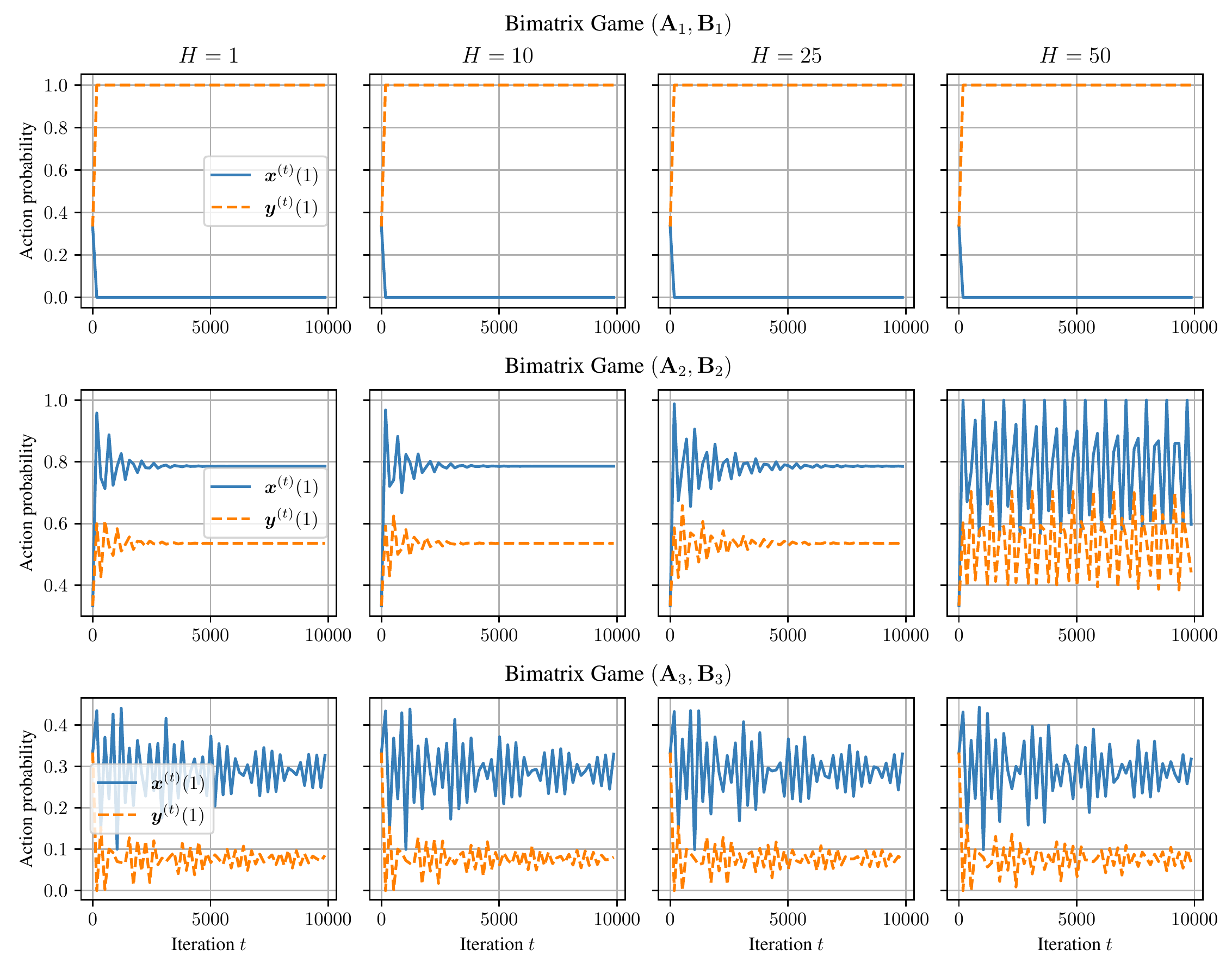}
    \caption{The \eqref{eq:OMD} dynamics in the strategically zero-sum cost-minimization games described in \eqref{eq:zero_sum} and \eqref{eq:strat-zero_sum} for $10000$ iterations. The $y$-axis corresponds to the probability with which each player selects the first action $1$. Both players use Euclidean regularization with $\eta = 0.05$ and $H$-step recency bias for different values of $H \in \{1, 10, 25, 50\}$. The dynamics under small values of the prediction window $H$ are qualitatively similar. In contrast, observe that larger values of $H$ can lead to instability.}
    \label{fig:strat}
\end{figure}

We also experiment with different players using different regularization. In particular, we assume that player $\cX$ uses the Euclidean DGF, while player $\cY$ uses the negative entropy DGF. While our last-iterate guarantees do not appear to apply for this case due to the lack of smoothness of the entropic regularizer (close to the boundary), \Cref{fig:diff} illustrates that the dynamics still converge.

\begin{figure}[!ht]
    \centering
    \includegraphics[scale=0.7]{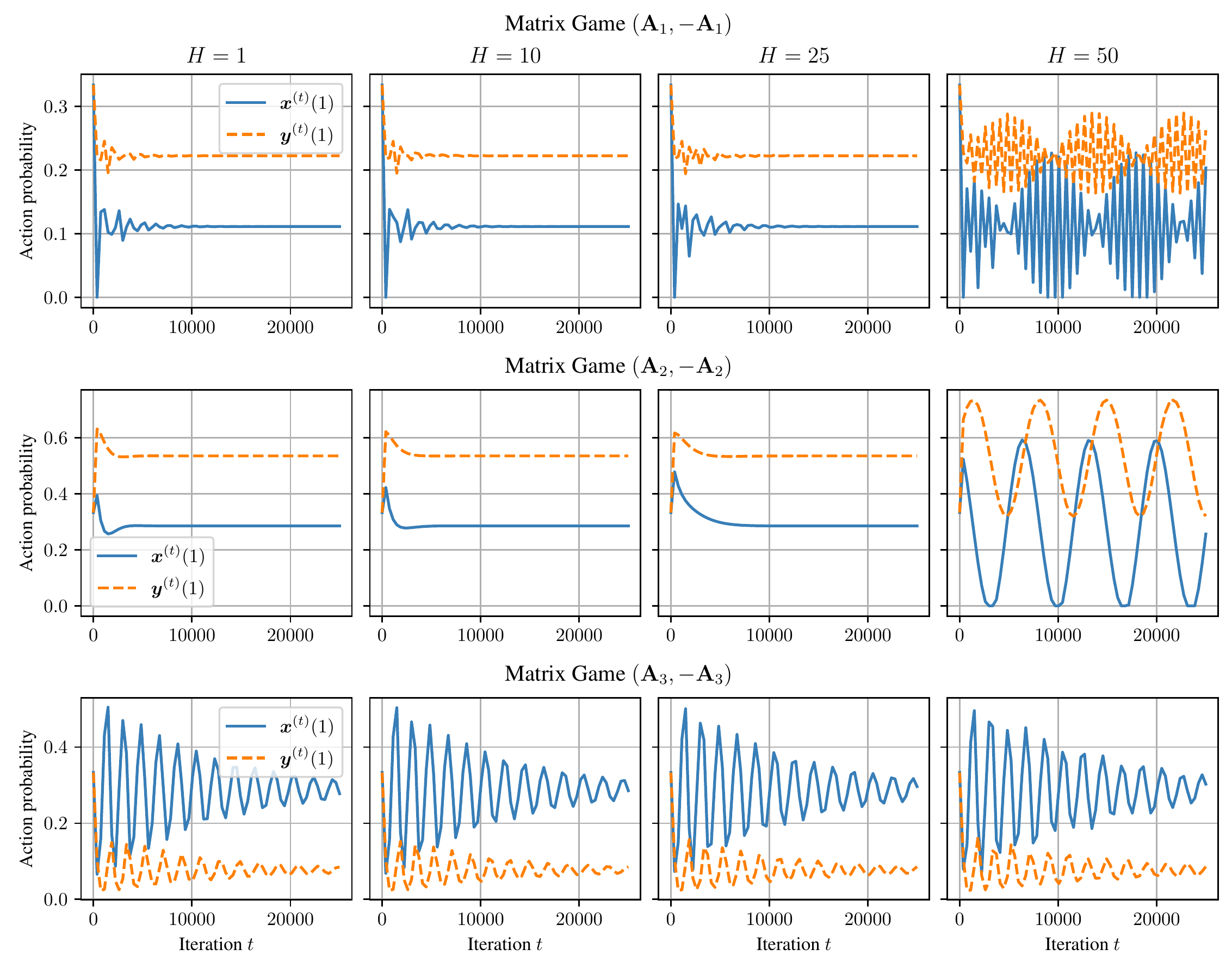}
    \caption{The \eqref{eq:OMD} dynamics in the zero-sum games described in \eqref{eq:zero_sum}. The $y$-axis corresponds to the probability with which each player selects the first action $1$. Player $\cX$ uses the Euclidean DGF, while player $\cY$ uses the negative entropy DGF, both with $\eta = 0.05$ and $H$-order predictions. We observe that the dynamics exhibit convergent behavior.}
    \label{fig:diff}
\end{figure}

\subsection{Continuous Games}

In this subsection with provide experiments on continuous unconstrained games. In particular, we illustrate the behavior of the dynamics for the games constructed for \Cref{proposition:inefficiency,proposition:robustness} in \Cref{fig:efficiency,fig:robustness} respectively.

\begin{figure}[!ht]
    \centering
    \includegraphics[scale=.7]{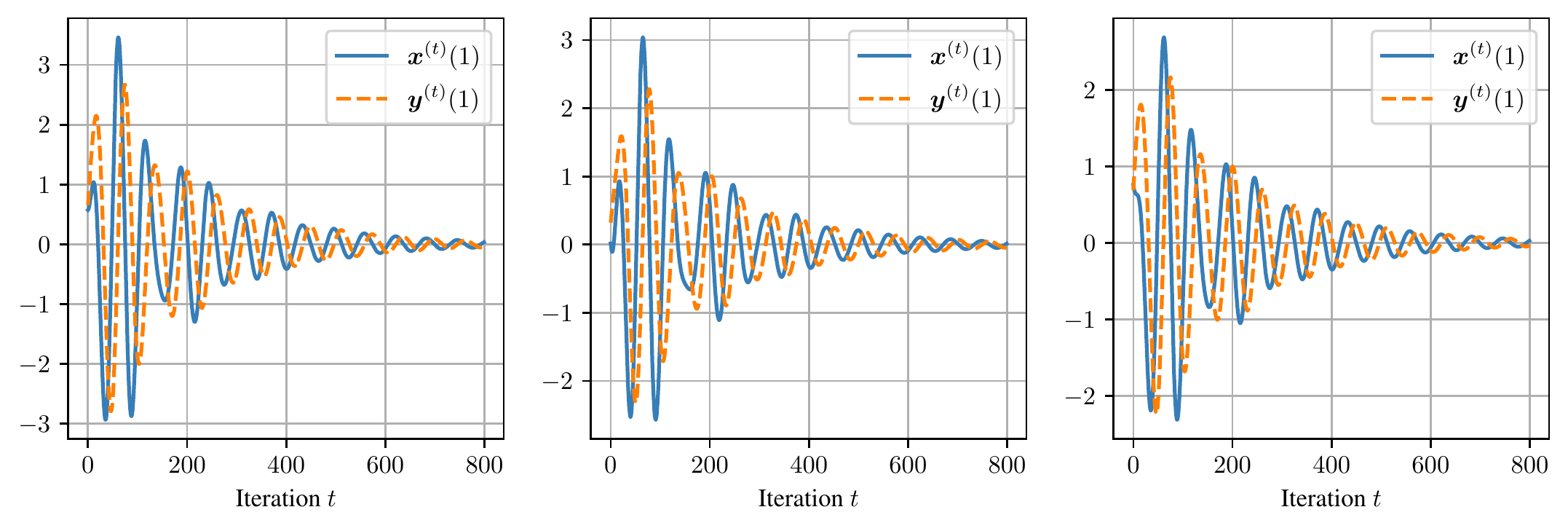}
    \caption{The \eqref{eq:OGD} dynamics for the continuous game described in \eqref{eq:game-inefficiency} for $800$ iterations. The $y$-axis illustrates the first coordinate of $\Vec{x} \in \R^2$ and $\Vec{y} \in \R^2$ respectively. Different columns correspond to different random initializations. As predicted by \Cref{theorem:two_player-convergence}, and subsequently \Cref{proposition:inefficiency}, the dynamics converge to the point $(\Vec{0}, \Vec{0})$, which leads to each player receiving $0$ utility. }
    \label{fig:efficiency}
\end{figure}

\begin{figure}[!ht]
    \centering
    \includegraphics[scale=0.7]{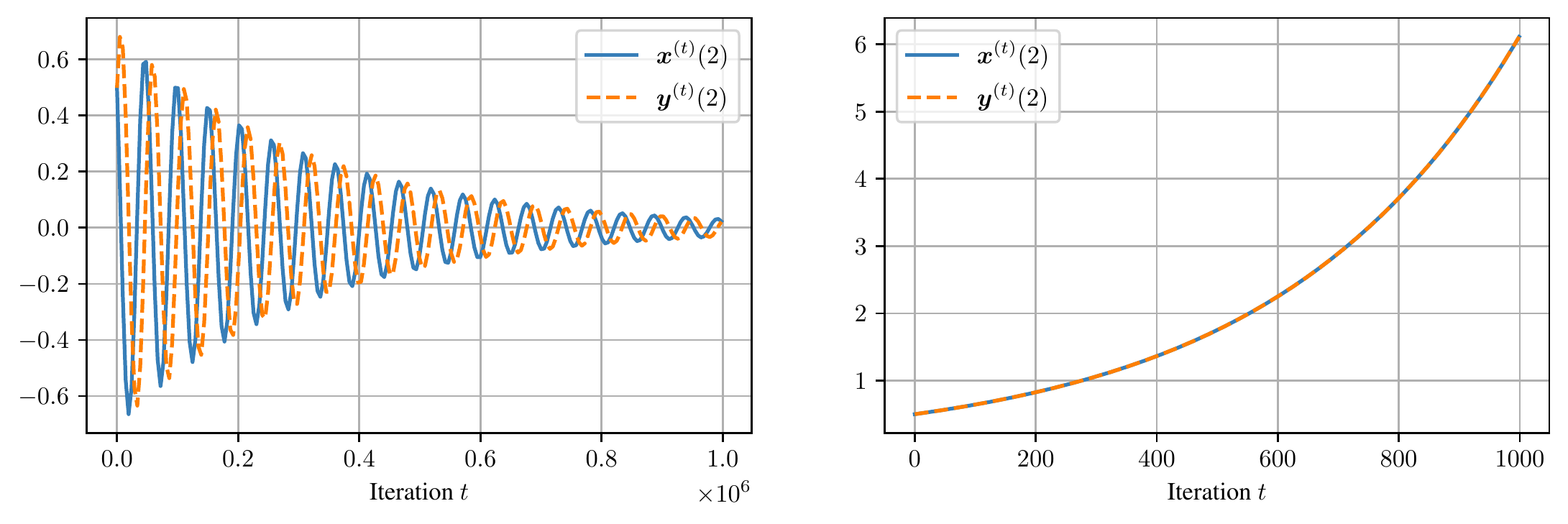}
    \caption{The \eqref{eq:OGD} dynamics for the continuous game described in \eqref{eq:robustness} for $\epsilon = 0.05$. In particular, the left image corresponds to the game $(\mat{A}, -\mat{A})$ for $10^6$ iterations, while the right one to the game $(\mat{A}, \mat{B})$ for $1000$ iterations. Although $\| \mat{A} + \mat{B} \|_F = \epsilon$, the two systems exhibit completely different behavior. This appears to be related to the fact that although the dynamics in the game $(\mat{A}, -\mat{A})$ converge linearly, the rate of convergence is very close to $1$. The same phenomenon occurs for any $\epsilon > 0$, as predicted by \Cref{proposition:robustness}.}
    \label{fig:robustness}
\end{figure}

\end{document}